\documentclass[journal,twocolumn]{IEEEtran}

\IEEEoverridecommandlockouts

\usepackage[cmex10]{amsmath} 
\usepackage{amsfonts,acronym,amssymb,amsthm,dsfont}

\usepackage{subcaption}
\usepackage{graphicx} 
\usepackage{url}
\usepackage{cite}
\usepackage{hyperref}
\usepackage{braket}
\usepackage{physics}
\usepackage{enumerate}
\usepackage{lipsum}
\usepackage{diagbox}
\usepackage[dvipsnames]{xcolor}
\usepackage{mathrsfs}
\usepackage{bm}
\usepackage{stackengine}
\usepackage{float}
\usepackage{dblfloatfix}

\makeatletter
\def\blfootnote{\xdef\@thefnmark{}\@footnotetext}
\makeatother

\allowdisplaybreaks
\allowbreak

\newcommand{\calA}{\mathcal{A}}
\newcommand{\bbA}{\mathbb{A}}

\newcommand{\calB}{\mathcal{B}}

\newcommand{\calC}{\mathcal{C}}
\newcommand{\bbC}{\mathbb{C}}

\newcommand{\calD}{\mathcal{D}}
\newcommand{\bbD}{\mathbb{D}}

\newcommand{\calE}{\mathcal{E}}
\newcommand{\bbE}{\mathbb{E}}

\newcommand{\calF}{\mathcal{F}}

\newcommand{\calG}{\mathcal{G}}

\newcommand{\calH}{\mathcal{H}}

\newcommand{\calI}{\mathcal{I}}
\newcommand{\bbI}{\mathbb{I}}

\newcommand{\calJ}{\mathcal{J}}

\newcommand{\calL}{\mathcal{L}}
\newcommand{\bbL}{\mathbb{L}}

\newcommand{\calM}{\mathcal{M}}
\newcommand{\bbM}{\mathbb{M}}

\newcommand{\calN}{\mathcal{N}}
\newcommand{\bbN}{\mathbb{N}}

\newcommand{\calP}{\mathcal{P}}
\newcommand{\bbP}{\mathbb{P}}

\newcommand{\calQ}{\mathcal{Q}}

\newcommand{\calR}{\mathcal{R}}
\newcommand{\bbR}{\mathbb{R}}

\newcommand{\calS}{\mathcal{S}}

\newcommand{\calT}{\mathcal{T}}
\newcommand{\bbT}{\mathbb{T}}

\newcommand{\calU}{\mathcal{U}}

\newcommand{\calV}{\mathcal{V}}

\newcommand{\calX}{\mathcal{X}}

\newcommand{\calZ}{\mathcal{Z}}

\newcommand\ubar[1]{\stackunder[1.1pt]{$#1$}{\rule{1.2ex}{.08ex}}}

\newcommand{\dent}{\mathds{h}}
\newcommand{\brk}[1]{\ensuremath{\big[{#1}\big]}}

\def\ot{\otimes}
\DeclareMathOperator{\tra}{Tr}
\DeclareMathOperator{\D}{D}
\DeclareMathOperator{\I}{I}

\DeclareMathOperator{\F}{F}
\DeclareMathOperator{\Pu}{P}

\newcommand{\den}[2]{\ensuremath{\ket{#1}{\hspace{-1.6mm}}\bra{#2}}}

\newtheorem{theorem}{Theorem}
\newtheorem{corollary}{Corollary}
\newtheorem{definition}{Definition}

\newtheorem{remark}{Remark}

\newtheorem{lemma}{Lemma}[]

\newcommand{\indic}[1]{\ensuremath{\mathds{1}}}
\newcommand{\card}[1]{\ensuremath{\left\lvert{#1}\right\rvert}}   

\newcommand{\sbra}[2]{\ensuremath{\left[{#1}{\,:\,}{#2}\right]}}%
\newcommand{\sbr}[1]{\ensuremath{\left[{#1}\right]}}%
\newcommand{\pr}[1]{\ensuremath{\left({#1}\right)}}%
\newcommand{\br}[1]{\ensuremath{\left\{{#1}\right\}}}%

\newcommand{\Squad}{\hspace{0.5em}}
\acrodef{ACDIS}[ACDIS]{Adaptive Communication Decision and Information Systems}
\acrodef{AEP}{Asymptotic Equipartition Property}
\acrodef{AoA}{Angle of Arrival}
\acrodef{AWGN}{Additive White Gaussian Noise}
\acrodef{AVC}[AVC]{Arbitrarily Varying Channel}
\acrodef{BER}{Bit-Error-Rate}
\acrodef{BEC}{Binary Erasure Channel}
\acrodef{BPSK}{Binary Phase-Shift Keying}
\acrodef{BSC}{Binary Symmetric Channel}
\acrodef{RV}{Random Variable}
\acrodefplural{RV}{Random Variables}
\acrodef{MP}{Multi-Parent}
\acrodef{JTE}{Joint Typical Encoder}
\acrodef{BICM}[BICM]{Bit-Interleaved Coded-Modulation}
\acrodef{CDF}[CDF]{Cumulative Distribution Function}
\acrodef{CGF}[CGF]{Cumulant Generating Function}
\acrodef{CLT}[CLT]{Central Limit Theorem}
\acrodef{CSI}[CSI]{Channel State Information}
\acrodef{DMC}[DMC]{Discrete Memoryless Channel}
\acrodef{DMS}[DMS]{Discrete Memoryless Source}
\acrodef{ERM}[ERM]{Empirical Risk Minimization}
\acrodef{FER}[FER]{Frame Error Rate}
\acrodef{ICA}[ICA]{Independent Component Analysis}
\acrodef{iid}[i.i.d.]{independent and identically distributed}
\acrodef{IoT}[IoT]{Internet of Things}
\acrodef{KKT}[KKT]{Karush-Kuhn Tucker}
\acrodef{LASSO}[LASSO]{Least Absolute Shrinkage and Selection Operator}
\acrodef{LPD}[LPD]{Low Probability of Detection}
\acrodef{LDPC}[LDPC]{Low-Density Parity-Check}
\acrodef{LLMS}[LLMS]{Linear Least Mean Square}
\acrodef{LMS}[LMS]{Least Mean Square}
\acrodef{MAC}[MAC]{multiple-access channel}
\acrodef{QMAC}[QMAC]{Quantum Multiple-Access Channel}
\acrodef{MGF}[MGF]{Moment Generating Function}
\acrodef{MLC}[MLC]{Multi-Level Coding}
\acrodef{MLE}[MLE]{Maximum Likelihood Estimate}
\acrodef{MIMO}[MIMO]{Multiple-Input Multiple-Output}
\acrodef{MISO}{Multiple-Input Single-Output}
\acrodef{MSD}[MSD]{Multi-Stage Decoding}
\acrodef{MMSE}[MMSE]{Minimum Mean-Square Error}
\acrodef{PAC}[PAC]{Probably Approximately Correct}
\acrodef{PCA}[PCA]{Principal Component Analysis}
\acrodef{PDF}[PDF]{Probability Density Function}
\acrodefplural{PDF}{Probability Density Functions}
\acrodef{PMF}[PMF]{Probability Mass Function}
\acrodefplural{PMF}{Probability Mass Functions}
\acrodef{PPM}[PPM]{Pulse Position Modulation}
\acrodef{PSD}{Power Spectral Density}
\acrodef{PSK}{Phase Shift Keying}
\acrodef{QKD}{Quantum Key Distribution}
\acrodef{ROC}{Receiver Operating Characteristic}
\acrodef{CVQKD}{Continuous-Variable \ac{QKD}}
\acrodef{QPSK}{Quadrature Phase-Shift Keying}
\acrodef{RV}{random variable}
\acrodef{SIMO}{Single-Input Multiple-Output}
\acrodef{SNR}{Signal-to-Noise Ratio}
\acrodef{SVM}[SVM]{Support Vector Machine}
\acrodef{POVM}{Positive Operator-Valued Measure}
\acrodefplural{POVM}{Positive Operator-Valued Measures}
\acrodef{wrt}[w.r.t.]{with respect to}
\acrodef{WSS}{Wide Sense Stationary}
\acrodef{RHS}{Right Hand Side}
\acrodef{LHS}{Left Hand Side}
\acrodef{CPTP}{Completely Positive and Trace Preserving}
\acrodef{ADSI}[ADSI]{Action-Dependent State Information}

\allowdisplaybreaks
\begin{document}

\title{Keyless Covert Communication Over Quantum MACs with General Message Sets}

\author{
\IEEEauthorblockN{Hassan ZivariFard and Xiaodong Wang}\\
\thanks{The authors are with the Department of Electrical Engineering, Columbia University, New York, NY 10027. This work is supported in part by the U.S. Office of Naval Research (ONR) under grant N000142112155. E-mails: \{hz2863, xw2008\}@columbia.edu. Part of this work is presented at the 2025 IEEE International Symposium on Information Theory~\cite{2M3Tx_ISIT}. Part of this work is submitted to the 2026 IEEE Information Theory Workshop.
}
\vspace{-1.0cm}
}
\maketitle
\date{}
\begin{abstract}
We study covert classical communication over quantum \acp{MAC} with general message sets. 
Specifically, we consider a fully quantum \ac{MAC} with arbitrary message sets and an arbitrary number of transmitters. 
We demonstrate the feasibility of achieving a positive covert rate over this channel and establish general one-shot and asymptotic achievable rate regions. For classical-quantum \acp{MAC} with general message sets, we establish the covert capacity,  when the transmitters are restricted to deterministic encoding. Our result recovers, as a special case, known results for classical communication over classical \acp{MAC} with general message sets, covert communication of a classical message over a classical channel with two transmitters, and classical communication over quantum \acp{MAC}. We provide three examples of \acp{MAC} to which our results can be applied, either directly or indirectly, to achieve positive covert rates. Specifically, we first study covert communication over a finite-dimensional \ac{MAC} with a helper. We then analyze a classical Gaussian \ac{MAC} with a helper and derive its covert capacity. Finally, we extend the analysis to a single-mode bosonic \ac{MAC} with a helper and show that positive covert rates can also be achieved in this setting. 
To the best of our knowledge, this is the first work to achieve positive-rate covert communication over both classical and quantum \acp{MAC}.
\end{abstract}

\section{Introduction}
\label{sec:Intro}
The objective of covert communication is to render the transmission of messages undetectable~\cite{LPD_on_AWGN,Reliable_Deniable_Comm,LPD_by_Resolvability,LPD_over_DMC}. In a point-to-point classical \ac{DMC}, it is well established that it is possible to reliably and covertly transmit at most $O(\sqrt{n})$ bits over $n$ channel uses \cite{LPD_by_Resolvability,LPD_over_DMC}, provided that the encoder and decoder share a secret key of size $O(\sqrt{n})$ bits \cite{LPD_by_Resolvability}. Also, covert communication over discrete memoryless \acp{MAC} is studied in \cite{MAC_LPD}, where each transmitter shares a secret key with the receiver. The authors show that each transmitter can transmit on the order of $\sqrt{n}$ reliable and covert bits per $n$ channel uses. 
We note that positive covert communication rates can be achieved for point-to-point classical \acp{DMC} \cite{LPD_over_DMC} only when the symbol $x_0\in\calX$, transmitted by the sender in the no-communication mode, is redundant, which implies that the distribution induced on the warden's channel observation by $x_0\in\calX$ can also be induced using the channel input symbols $\{x\in\calX:x\ne x_0\}$. Building on this result, it has been demonstrated that positive covert rates can also be achieved over classical channels under various scenarios: (i) when a friendly jammer is present and there is a secret key shared between the legitimate terminals \cite{UninformedJammer,ISIT22,ISIT21,MyDissertation}, (ii) when the transmitter has access to \ac{CSI} \cite{Covert_With_State,Keyless22}, (iii) when the transmitter has access to \ac{ADSI} \cite{Action_Covert}, (iv) when the warden has uncertainty about the statistical characteristics of its channel and the transmitter and the receiver share a secret key \cite{Lee15,Deniable_ITW14}, (v) and when there is a cooperative user who knows either the message or the transmitter's codeword \cite{ISIT22,MyDissertation,Action_Covert}. 

Existing works on covert communication over quantum channels show that optimal rates obey the square root law.  
Specifically, covert communication over bosonic channels is considered in \cite{Bash_15,LPD_on_bosonic,Wang23}, where it is shown that the covert capacity follows the square-root law, and that achieving this law requires a secret key shared between the transmitter and the receiver. Additionally, \cite{Wang16,Covert_Quantum16,Bullock_25} establishes that the covert capacity of classical–quantum point-to-point channels also follows the square-root law, provided that the transmitter and receiver share a secret key of sufficiently high rate. We note that the availability of entangled qubits at both the transmitter and the receiver does not enable a positive covert rate \cite{Entangled_Bosonic} when there is also a secret key of infinite rate shared between the legitimate terminals. However, very recent results show that a keyless positive covert rate can be achieved over quantum state-dependent channels, where the channel state is modeled as the transmitter sharing an entangled state with the channel \cite{ITW24}.

In this paper, we study keyless covert communication over quantum \acp{MAC} with general message sets and, in contrast to previous works, demonstrate the achievability of positive covert rates in this setting. In our model, a set of transmitters, each with access to a subset of messages, cooperate to communicate covertly at a positive rate with a legitimate receiver, see Fig.~\ref{fig:System_Model}. 
We present a one-shot achievable covert rate region for this problem, which relies on various R\'{e}nyi mutual information error exponents. This region is then extended to the asymptotic setting, where we show that our achievable rate region recovers, as a special case, several known results: communication over classical \acp{MAC} with general message sets \cite{Han_MAC79, GunduzSimeone10, Romero17}, covert communication over classical \acp{MAC} when both transmitters send the same message to the receiver covertly \cite{MyDissertation}, and communication over quantum \ac{MAC} \cite{QMAC, Entanglement_MAC}. We further show that our achievable rate region is optimal for classical-quantum \acp{MAC} with general message sets when the transmitters are limited to use deterministic encoders. We present several examples that illustrate the feasibility of positive-rate covert communication, where our general results are applied either directly or indirectly. We first study covert communication over a classical Gaussian \ac{MAC} with a helper--modeled as a three-user \ac{MAC} in which one transmitter has access to the private messages of the other two and assists them in covertly communicating with a legitimate receiver. We derive the covert capacity of this channel. As a second example, we study covert communication over a single-mode bosonic \ac{MAC} with a helper, for which we present an achievable rate region. 
To the best of our knowledge, our results are the first fundamental limit of covert communication with positive rates over both classical and quantum \acp{MAC}.

Our achievability scheme combines random codebook generation, superposition coding \cite{ElGamalKim}, simultaneous pinching \cite{QIT_Hayashi}, and channel resolvability. We begin by identifying a special message hierarchy for \acp{MAC} with general message sets, which allows the corresponding reliability \cite{GunduzSimeone10} and resolvability regions to be described using a limited number of auxiliary \acp{RV}. We then propose a general procedure for characterizing achievable rate regions under any common message structure. This is achieved by transforming the original message set into the identified special hierarchy via the introduction of additional virtual users \cite{GunduzSimeone10}. For the reliability analysis, we leverage simultaneous pinching to establish the existence of a simultaneous decoder in the one-shot regime. For the covert analysis, we employ channel resolvability and pinching techniques \cite{QIT_Hayashi, Hayashi02}, enabling the resolvability analysis required to guarantee covertness.

Of particular relevance to this paper, classical \acp{MAC} with two transmitters and two private messages, where ``private'' refers to messages being unknown to the other transmitter, not secure messages, and one common message, was first studied in \cite{SlepianWol_MAC}, where the authors established the capacity region of this \ac{MAC} and further conjectured the capacity region of a \ac{MAC} with three transmitters and seven messages, including three private messages, three pairwise common messages, and one common message shared by all transmitters. However, Prelov later provided a counterexample that invalidates this conjecture \cite{Prelov84}. The correct generalization of classical \acp{MAC} with general message sets was presented by Han in \cite{Han_MAC79}, where the capacity characterization requires one auxiliary \ac{RV} per message, with all auxiliaries being independent. A later work by G\"{u}nd\"{u}z and Simeone \cite{GunduzSimeone10} introduced a new scheme for characterizing the capacity region of \acp{MAC} with general message sets, requiring only a limited number of auxiliary \acp{RV} and far fewer defining inequalities to specify the achievable rate region compared to the scheme introduced in \cite{Han_MAC79}. Adopting a graph-based approach, similar to that in \cite{GunduzSimeone10}, considerably simplifies the numerical evaluation of the capacity region; however, because it employs correlated codewords across transmitters, the resulting characterization has a higher complexity of codebook construction compared to \cite{Han_MAC79}, as discussed in \cite{Romero17}. Romero and Varanasi in \cite{Romero17} studied \acp{MAC} with general message sets from an order-theoretic perspective and showed that the schemes in \cite{Han_MAC79} and \cite{GunduzSimeone10} are the two extremes of a wide range of achievability schemes, each with polymatroidal structure. In this paper, since we demonstrate the feasibility of achieving positive covert rates using our results through specific examples, and because fewer auxiliary random variables are desirable for numerical evaluation, we adopt and extend the scheme of \cite{GunduzSimeone10}. 

Some related works that do not consider the covertness constraint are discussed in the following. Secure communication over quantum \acp{MAC} is studied in \cite{QMAC_Security}. The problem of communication over \ac{MAC} with cribbing is studied in \cite{WillemsCribbing}, for classical channels, and in \cite{UziCribbing}, for quantum channels. Also, secure communication and channel resolvability over classical \acp{MAC} with cribbing is studied in \cite{Helal_Cribbibg}. Secure communication over quantum state-dependent channels is studied in \cite{AshnuHayashi2020}. The problem of transmitting classical information over three-receiver quantum broadcast channels via simultaneous pinching is studied in \cite{Salek25}.

\textit{Notation:}
Let $\bbN$ and $\bbN^+$ denote the set of integers and the set of positive integers, respectively, and let $\bbR$ and $\bbC$ denote the set of real numbers and the set of complex numbers, respectively. Define $\bbR^+ \triangleq \br{x \in \bbR : x \geq 0}$ as the set of non-negative real numbers, and $\bbR^{++} \triangleq \bbR^+ \setminus {0}$ as the set of strictly positive real numbers. For any $x, y\in\bbR$, define $\sbra{x}{y} \triangleq [\lfloor x\rfloor, \lceil y \rceil]\cap \bbN$ and $[x] \triangleq \sbra{1}{x}$. For a set of indices $\calI\subset\bbN$, $M_\calI$ denotes $\br{M_i}_{i\in\calI}$. We denote the set of positive semi-definite operators on a finite-dimensional Hilbert space $\calH$ by $\calP(\calH)$ and denote the set of quantum states by $\calD(\calH)\triangleq\{\rho\in\calP(\calH):\tra[\rho]=1\}$. 
We use $\ket{\tau}$ to denote a pure quantum state and use $\tau$ to denote the corresponding density matrix. For a finite set $\calJ \triangleq \br{j_1, j_2, \dots, j_k}$ with $k \in \bbN^+$ and quantum states $\rho_{j_1}, \rho_{j_2}, \dots, \rho_{j_k}$, we use the notation $\bigotimes\limits_{j \in \calJ} \rho_j$ to denote the product state $\rho_{j_1} \otimes \rho_{j_2} \otimes \cdots \otimes \rho_{j_k}$. 
We also denote the space of the bounded linear operators on $\calH$ with $\calL(\calH)$. The identity operator on some Hilbert space $\calH$ is denoted by $\bbI$. For $\rho,\sigma \in \calD(\calH)$, the fidelity distance \cite{Jozsa94,Uhlmann76} between these two quantum states is defined as $\F(\rho,\sigma)\triangleq\lVert\sqrt{\rho}\sqrt{\sigma}\rVert_1$, where $\lVert\rho\rVert_1\triangleq\tra\left[\sqrt{\rho^\dagger\rho}\right]$. The purified distance \cite{Rastegin02,Tomamichel10} between $\rho$ and $\sigma$ is then given by $\Pu(\rho,\sigma)\triangleq\sqrt{1-\F^2(\rho,\sigma)}$. Moreover, for $\rho,\sigma \in \calD(\calH)$ and $\alpha \in (-1,0)\cup(0,\infty)$, the sandwiched R\'{e}nyi relative entropy \cite{Wilde14} is defined as
\begin{subequations}\label{eq:Sand_alpha_diver}
\begin{align}
    \ubar{\D}_{1+\alpha}\big(\rho\lVert\sigma\big)\triangleq\frac{1}{\alpha}\log\tra\left[\left(\sigma^{-\frac{\alpha}{2(1+\alpha)}}\rho\sigma^{-\frac{\alpha}{2(1+\alpha)}}\right)^{1+\alpha}\right].
\end{align}Also, the quantum relative entropy is defined as $\bbD(\rho\lVert\sigma)=\tra\sbr{\rho\pr{\log\rho-\log\sigma}}$. 
Consider the following classical-quantum states
\begin{align*}
    \rho_{UVB}\triangleq\sum_{u,v}p_{UV}(u,v)\den{u}{u}_U\otimes\den{v}{v}_V\otimes\rho_{B\lvert u,v},\\
    \rho_{U-V-B}\triangleq\sum_{u,v}p_{UV}(u,v)\den{u}{u}_U\otimes\den{v}{v}_V\otimes\rho_{B\lvert v},
\end{align*}where $\rho_{B\lvert v}$ is the marginal of $\rho_{UVB}$ and the notation $\rho_{U-V-B}$ is to emphasize that we have a Markov chain $U-V-B$, i.e., the systems $B$ and $U$ are independent given the system $V$. We define the R\'{e}nyi mutual information quantities as
\begin{align}
&\ubar{\I}_{1-\alpha}\pr{UV;B}_{\rho_{UVB}\lvert \rho_{UV}}\triangleq\min_{\sigma_B\in\calH_b} \ubar{\D}_{1+\alpha}\pr{\rho_{UVB}\lVert\rho_{UV}\otimes\sigma_B},\label{eq:Renyi_MI}\\ 
&\ubar{\I}_{1-\alpha}\pr{U;B\lvert V}_{\rho_{UVB}\lvert \rho_{UV}}\triangleq\nonumber\\
&\qquad\min_{\sigma_{U-V-B}:\sigma_{UV}=\rho_{UV}} \ubar{\D}_{1+\alpha}\pr{\rho_{UVB}\lVert\sigma_{U-V-B}},\label{eq:Renyi_CMI}
\end{align}where $\alpha \in (-1,0)\cup(0,\infty)$. 
\end{subequations}

The remainder of this paper is organized as follows. Section~\ref{sec:Problem_Statement} introduces the problem statement. Section~\ref{sec:Special_Message_Hierarchy} presents the \ac{MAC} with a special message hierarchy, and Section~\ref{sec:Main_Results} provides the main results. Section~\ref{sec:General_Message_Hierarchy} extends these results to \acp{MAC} with a general message hierarchy. Section~\ref{sec:Examples} presents illustrative examples, and Section~\ref{sec:Conclusion} offers concluding remarks. Proofs of the results are deferred to the appendices.
\begin{figure}[t!]
\centering
\includegraphics[width=9.0cm]{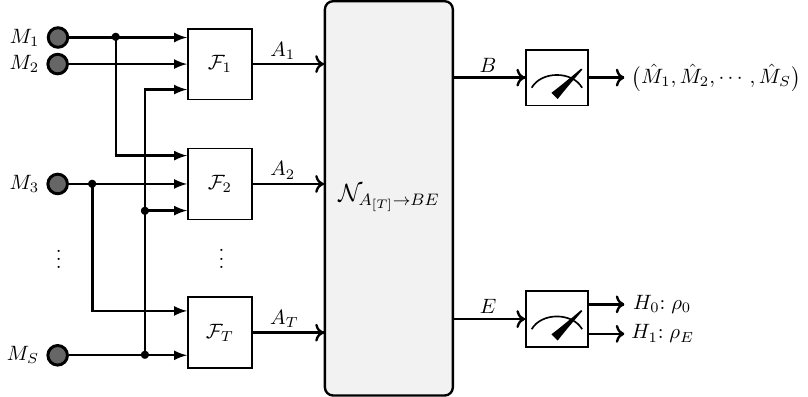}
\caption{Covert communication over a quantum \ac{MAC} with general message sets. 
}
\label{fig:System_Model}
\vspace{-0.7cm}
\end{figure}
\section{Problem Statement}
\label{sec:Problem_Statement}
Consider $S$ message sources indexed by $s \in [S]$, and denote the message set of the $s$-th source as $\calM_s$. A set of messages $\{M_s\}_{s\in[S]}$, where each $M_s \in \calM_s$, is to be transmitted over a quantum channel $\calN_{A_{[T]}\to BE}$, which is a completely positive and trace-preserving linear map taking an operator $\rho_{A_{[T]}}\in\calD\pr{\calH_1\otimes\cdots\otimes\calH_T}$ to an operator $\rho_{BE}\in\calD(\calH_b\otimes\calH_e)$. 
Each input to the channel corresponds to the output of an encoder, and the channel output $B$ is received by a legitimate receiver while the channel output $E$ is received by an adversary, which we refer to as warden, as illustrated in Fig.~\ref{fig:System_Model}. 
We assume that the messages are assigned to the encoders via a fixed, arbitrary mapping, and denote the set of the message source indices mapped to the $t^{\text{th}}$ transmitter~by 
\begin{align}
    \calI_t\triangleq\left\{{s_1^t},\dots,{s_{i_t}^t}\right\},\label{eq:Messages_at_TXt}
\end{align}where $t\in[T]$, $i_t\in\bbN^+$ is the number of the message sources accessible to Transmitter~$t$, and $s_j^t\in[S]$, for $j\in[i_t]$, is the index of the $j^{\text{th}}$ message source known by Transmitter~$t$. Without loss of generality, we assume that $\calI_t\ne\calI_{t'}$, for $t\ne t'$; otherwise, the two transmitters with the exact same set of message sources can be merged. Similarly, we assume that no two message sources are available for the same set of transmitters; otherwise, these message sources can be merged.

\subsection{Hypothesis Testing and Covertness Constraint}
Let $\phi_{0,t}\in\calD(\calH_t)$, for $t\in[T]$, denote the innocent input state transmitted by Transmitter~$t$ in the no-communication mode. The corresponding state induced at the warden's output in the one-shot regime, where the channel is used once, is given by
\begin{align}
    \rho_0\triangleq\tra_B\calN_{A_{[T]}\to BE}\left(\bigotimes_{t\in[T]}\phi_{0,t}\right).\label{eq:No_Comm}
\end{align}
Similarly, in the asymptotic regime, where the channel is used $n$ times independently, the corresponding state induced at the warden's output is
\begin{align}
    \rho_0^{\otimes n}\triangleq\tra_{B^n}\calN_{A_{[T]}^n\to B^nE^n}^{\otimes n}\left(\bigotimes_{t\in[T]}\phi_{0,t}^{\otimes n}\right).\label{eq:rho0_Asym}
\end{align}
Let $\tau_E$ denote the state induced at the warden's output by our coding scheme in the one-shot regime, and let $\tau_{E^n}$ denote the state induced at the warden's output by our coding scheme in the asymptotic regime. The warden's objective is to distinguish between the no-communication state, namely $\rho_0$ (or $\rho_0^{\otimes n}$ in the asymptotic regime), referred to as the null hypothesis $H_0$, and the communication state, namely $\tau_E$ (or $\tau_{E^n}$ in the asymptotic regime), referred to as the alternative hypothesis $H_1$. The warden may fail either by declaring communication when none occurs (false alarm) or by failing to detect an ongoing transmission (missed detection). Let $p_{\textup{FA}}= p(\textup{choose } H_1 \mid H_0 \textup{ is true})$ and $p_{\textup{MD}} = p(\textup{choose } H_0 \mid H_1 \textup{ is true})$ denote the probabilities of false alarm and missed detection, respectively. Assuming equal prior probabilities, $p(H_0)=p(H_1)=\frac{1}{2}$, the warden's average probability of error in the one-shot regime is $P_w=\frac{p_{\textup{FA}}+p_{\textup{MD}}}{2}$, and similarly, in the asymptotic regime, $P_w^{(n)}=\frac{p_{\textup{FA}}+p_{\textup{MD}}}{2}$. An uninformed detector that simply guesses between the two hypotheses achieves $P_w=\frac{1}{2}$ (or $P_w^{(n)}=\frac{1}{2}$) in the asymptotic regime. The goal of covert communication is to design a sequence of codes that forces the warden's optimal detector to perform arbitrarily close to this uninformed detector. 
By the Helstrom bound, the minimum achievable detection error probability is determined by the trace distance between the corresponding states \cite[Sec.~9.1.4]{Wilde_Book}:
\begin{subequations}\label{eq:pe_warden}
\begin{align}
\min P_w&=\frac{1}{2}\pr{1-\frac{1}{2}\left\lVert\tau_E-\rho_0\right\rVert_1},\label{eq:pe_warden_one_shot}\\
\min P_w^{(n)}&=\frac{1}{2}\pr{1-\frac{1}{2}\left\lVert\tau_{E^n}-\rho_0^{\otimes n}\right\rVert_1}.
\label{eq:pe_warden_asympt}
\end{align}
\end{subequations}
Note that the expressions in \eqref{eq:pe_warden} can be generalized to the case of non-uniform prior probabilities $p(H_0)$ and $p(H_1)$, see \cite[Sec.~II.B]{UninformedJammer}.

\subsection{One-Shot Regime} 
\begin{definition}
\label{defi:Encoder_Decoder}
A $(2^{R_1},\dots,2^{R_S},1)$ code for the quantum channel $\calN_{A_{[T]}\to BE}$ consists of the following:
\begin{itemize}
    \item message sets $\calM_s\triangleq\left[\left\lfloor2^{R_s}\right\rfloor\right]$, for $s\in[S]$;
    \item a encoding map at each transmitter, which is a quantum channel  $\calF^{(t)}:\calM_{\calI_t}\to\calD(\calH_t)$, for $t\in[T]$, that maps a set of messages $M_{\calI_t}\in\calM_{\calI_t}$ to a channel input $\rho_{A_t}\in\calD(\calH_t)$;
    \item a set of decoding \acp{POVM} $\left\{\calD^{\left(m_{[S]}\right)}_B\right\}_{m_{[S]}\in\calM_{[S]}}$, which maps a channel observation $\rho_B\in\calD(\calH_b)$ to $\hat{M}_{[S]}\in\calM_{[S]}$.
\end{itemize}
\end{definition}
The code is known by all the terminals, and the objective is to design a reliable and covert code. 
From Definition~\ref{defi:Encoder_Decoder} the input and output of the legitimate receiver's channel are given respectively by
\begin{subequations}\label{eq:Pe_rho0_CC_CSG}
\begin{align}
    \rho_{A_t}&=\calF^{(t)}(M_{\calI_t}),\quad\text{for}\quad t\in[T],\\
    \rho_B&=\tra_E\calN_{A_{[T]}\to BE}\left(\rho_{A_{[T]}}\right),\label{eq:Legi_Output}
\end{align}where the input states $\rho_{A_t}$, for $t\in[T]$, may in general be entangled across different transmitters. Therefore, from Definition~\ref{defi:Encoder_Decoder} and \eqref{eq:Legi_Output} the probability of error is defined as
\begin{align}
    P_e&\triangleq\bbP\left\{M_{[S]}\ne\hat{M}_{[S]}\right\}\nonumber\\
    &=\frac{1}{\card{\calM_{[S]}}}\sum_{m_{[S]}\in\calM_{[S]}}\tra\left[\left(\bbI-\calD^{\left(m_{[S]}\right)}_B\right)(\rho_B)\right].\label{eq:probaility_error}
\end{align}The code $(2^{R_1},\dots,2^{R_S},1)$ is reliable if
\begin{align}
    P_e&\le\epsilon.\label{eq:Pe}
\end{align}
\end{subequations}

Now from \eqref{eq:pe_warden_one_shot}, since the warden's average probability of error is only a function of the trace norm between $\tau_E$ and $\rho_0$, we define the covertness metric as,
    \begin{align}
        \left\lVert\tau_E- \rho_0\right\rVert_1&\le\delta.\label{eq:Covertness_Metric}
    \end{align}The metric in \eqref{eq:Covertness_Metric} indicates that the state induced at the output of the warden should approximate the state induced at the output of the warden when communication is not happening, i.e., communication is covert.

\begin{definition}[One-Shot Code]
\label{defi:Code}
    An $S$-tuple $(R_1,\dots,R_S)$ is said to be achievable for the quantum channel $\calN_{A_{[T]}\to BE}$ if there exists a sequence of $\epsilon$-reliable and $\delta$-covert $(2^{R_1},\dots,2^{R_S},1)$ codes that satisfy \eqref{eq:Pe} and \eqref{eq:Covertness_Metric}, respectively. 
\end{definition}
\subsection{Asymptotic Regime} 
We now define the problem in the asymptotic regime, where the channel is utilized $n$ times independently as $n$ approaches infinity. The corresponding $\left(2^{nR_1},\dots,2^{nR_S}, n\right)$ code for the channel $\calN_{A_{[T]}\to BE}^{\otimes n}$ is similarly defined as Definition~\ref{defi:Encoder_Decoder}, with the message set $\calM_s\triangleq\left[\left\lfloor{2^{nR_s}}\right\rfloor\right]$, for $s\in[S]$, encoding map at each transmitter $\calF_n^{(t)}:\calM_{\calI_t}\to\calD(\calH_t^n)$, for $t\in[T]$, and quantum states $\rho_{A_{[T]}^n}\in\calD(\calH_1^n\otimes\calH_2^n\otimes\cdots\otimes\calH_T^n),\rho_{B^n}\in\calD(\calH_b^n),\rho_{E^n}\in\calD(\calH_e^n)$. Then, the input and output of the legitimate receiver's channel are given respectively by
\begin{subequations}\label{eq:rho0_Pe_Asym}
\begin{align}
    \rho_{A_t^n}&=\calF_n^{(t)}(M_{\calI_t}),\quad\text{for}\quad t\in[T],\\
    \rho_{B^n}&=\tra_{E^n}\calN_{A_{[T]}^n\to B^nE^n}\left(\rho_{A_{[T]}^n}\right).\label{eq:Legi_Output_Asymp}
\end{align}Also, the probability of error is
\begin{align}
    P_e^{(n)}&\triangleq\bbP\left\{M_{[S]}\ne\hat{M}_{[S]}\right\}\nonumber\\
    &=\frac{1}{\card{\calM_{[S]}}}\sum_{m_{[S]}\in\calM_{[S]}}\tra\left[\left(\bbI-\calD^{\left(m_{[S]}\right)}_{B^n}\right)\left(\rho_{B^n}\right)\right].\label{eq:Pe_Asym}
\end{align}A sequence of codes $\left(2^{nR_1},\dots,2^{nR_S}, n\right)$ is reliable if
\begin{align}
    \lim_{n\to\infty}P_e^{(n)}&=0.\label{eq:Pe_Asymp}
\end{align}
\end{subequations}
It follows from \eqref{eq:pe_warden_asympt} that a sufficient condition for covert communication is
    \begin{align}
        \lim_{n\to\infty}\left\lVert\tau_{E^n}-\rho_0^{\otimes n}\right\rVert_1&=0,\label{eq:CC_CSK_Metric_Asymp}
    \end{align}which implies that the warden's optimal detector becomes asymptotically no better than random guessing.

\begin{definition}[Asymptotic Code]
\label{defi:Asymp_CC}
    An $S$-tuple $(R_1,\dots,R_S)$ is said to be achievable for the quantum channel $\calN_{A_{[T]}^n\to B^nE^n}$ if there exists a sequence of $\left(2^{nR_1},\dots,2^{nR_S},n\right)$ codes such that \eqref{eq:Pe_Asymp} and \eqref{eq:CC_CSK_Metric_Asymp} hold. The covert capacity region, denoted by $\calC_{\textup{CC-qMAC}}$, is defined as the closure of all achievable covert rate tuples.
\end{definition}

\subsection{Special Case: Classical Regime} 
\label{sec:Classical_Definition}
An important special case of the problem illustrated in Fig.~\ref{fig:System_Model} corresponds to covert communication over classical \acp{MAC} with general message sets. Here the channel is $\big(\calA_{[T]}, W_{BE|A_{[T]}},\calB,\calE\big)$, where $\calA_{[T]}\triangleq\calA_1\times\calA_2\times\dots\times\calA_T$ is the channel input alphabet, $W_{BE|A_{[T]}}$ is the channel law, and $\calB$ and $\calE$ are the channel output alphabets for the legitimate receiver and the warden, respectively. Then, for $t\in[T]$, the encoder $\calF^{(t)}$ takes as input a set of messages $M_{\calI_t}\in\calM_{\calI_t}$, and outputs the channel input $A_t\in\calA_t$; and the decoder $\calD_B$ takes as input the channel output $B$ and outputs an estimation of the messages, i.e.,  $\hat{M}_{[S]}$. 

In the classical case, $\rho_0$ in \eqref{eq:No_Comm} becomes the distribution induced at the warden's channel output when communication is not happening, and $\tau_E$ becomes the distribution induced by our code design. Then, for the covertness metrics defined in \eqref{eq:Covertness_Metric} and \eqref{eq:CC_CSK_Metric_Asymp}, we use the total variation distance instead of the trace distance.

Under the above problem setup, Definitions~\ref{defi:Encoder_Decoder} to \ref{defi:Asymp_CC} have their counterparts for the classical channels. In particular, the classical covert capacity region $\calC_{\textup{CC-MAC}}$ can be similarly defined.

\subsection{Cooperative Jamming} 
\label{sec:Cooperative_Jamming}
In general, a positive covert communication rate is achievable when the state induced at the warden's output in the no-communication mode can also be induced by non-innocent channel input states. The set of states that can be induced at the warden's output is determined solely by the channel (and, in our setting, by the mapping between messages and transmitters). Consequently, for channels in which the no-communication output state cannot be reproduced using non-innocent channel inputs, one may instead modify the no-communication output state through cooperative jamming \cite{UninformedJammer,ISIT21,ISIT22}.

Since our setup involves multiple transmitters and our schemes and proofs do not depend on the specific realization of the state induced at the warden's output in the no-communication mode, the framework can be naturally extended to a cooperative-jamming setting. In such a setting, during the no-communication mode, a subset of transmitters sends random states that convey no information to the legitimate receiver but serve to confuse the warden. This additional flexibility, however, comes at the cost of a non-vanishing energy expenditure in the no-communication mode.

Specifically, suppose that, in the no-communication mode, a subset $\calJ\subseteq[T]$ of the transmitters performs cooperative jamming. For each, $j\in\calJ$, Transmitter~$j$ privately generates a random density operator $\sigma_j\in\calD(\calH_{A_j})$ according to an arbitrary probability measure on $\calD(\calH_{A_j})$. For the asymptotic regime, Transmitter~$j$ generates a product state $\sigma_j^{\otimes n}\triangleq\sigma_{j,1}\otimes\sigma_{j,2}\otimes\dots\otimes\sigma_{j,n}$, where, for each $i\in[n]$, the density operator $\sigma_{j,i}\in\calD(\calH_{A_j})$ is generated according to an arbitrary probability measure on $\calD(\calH_{A_j})$. The realization of $\sigma_j$ (or $\sigma_j^{\otimes n}$ in the asymptotic regime) is known only to Transmitter~$j$. The warden's innocent state in \eqref{eq:No_Comm} and \eqref{eq:rho0_Asym} is then obtained by averaging over the randomness of the transmitted jamming states. Specifically,
\begin{subequations}\label{eq:No_Comm_Jamming}
\begin{align}
    \rho_0&\triangleq\bbE\sbr{\tra_B\calN_{A_{[T]}\to BE}\left(\bigotimes\limits_{j\in\calJ}\sigma_{j}\otimes\bigotimes\limits_{j\in\calJ^c}\phi_{0,j}\right)},\label{eq:No_Comm_Jamming_Oneshot}\\
    \rho_0^{\otimes n}&\triangleq\bbE\sbr{\tra_{B^n}\calN_{A^n_{[T]}\to B^nE^n}\left(\bigotimes\limits_{j\in\calJ}\sigma_{j}^{\otimes n}\otimes\bigotimes\limits_{j\in\calJ^c}\phi_{0,j}^{\otimes n}\right)},\label{eq:No_Comm_Jamming_Asymptotic}
\end{align}
\end{subequations}where the expectation is taken with respect to the probability measures used to generate the jamming states. 
The key difference between the innocent state $\phi_{0,j}$ and the jamming state $\sigma_j$, for $j\in\calJ$, is that $\phi_{0,j}$ is fixed and known to all terminals, including the warden, whereas $\sigma_j$ is a randomly generated density operator whose realization is known only to Transmitter~$j$. 
In this case, in Definitions~\ref{defi:Encoder_Decoder}, \ref{defi:Code}, and \ref{defi:Asymp_CC}, the states $\rho_0$ and $\rho_0^{\otimes n}$ are given by \eqref{eq:No_Comm_Jamming}, and all the results in Section~\ref{sec:Main_Results} remain valid.

\section{\texorpdfstring{\ac{MAC}}{MAC} with a Special Message Hierarchy}
\label{sec:Special_Message_Hierarchy}
Motivated by \cite{GunduzSimeone10}, we first consider \acp{MAC} with a specific message hierarchy. In Section~\ref{sec:General_Message_Hierarchy}, we show that this hierarchy can represent any \ac{MAC} with general message sets by introducing virtual transmitters.
\begin{definition}[Special Message Hierarchy]
\label{defi:Message_Hierarchy}
A \ac{MAC} with $T$ transmitters and $S$ messages, where Transmitter~$t$ has access to message indices $\calI_t$, for $t\in[T]$, is said to have a special message hierarchy if, for any $t_1,t_2\in[T]$ and $t_1\ne t_2$, the intersection set  $\calI_{t_1}\cap \calI_{t_2}$ is either an empty set or equal to $\calI_{t_3}$ for some $t_3\in[T]$.
\end{definition}This special message hierarchy imposes a specific structure on the message subsets $\calI_t$, for $t\in[T]$.  Now we begin by providing some essential definitions. 
We denote a graph $\Gamma$ by $\Gamma = (\calV,\calL)$, where $\calV$ is the vertex set and $\calL$ is the edge set, $\calL\subseteq\calV\times\calV$. $\Gamma$ is said to be a directed graph if it has two maps, $\text{init}:\calL\to\calV$ and $\text{ter}:\calL\to\calV$, assigning an initial vertex $\text{init}(\ell)$ and a terminal vertex $\text{ter}(\ell)$ to each edge $\ell\in\calL$, where $\text{init}(\ell)\ne\text{ter}(\ell)$. 
Then the edge $\ell$ is defined as being directed from $\text{init}(\ell)$ to $\text{ter}(\ell)$, and is represented by the ordered pair $\ell = \left(\text{init}(\ell),\text{ter}(\ell)\right)$. A sub-graph $P = (\calV',\calL')$, where
$\calV'= \{v_0,\dots, v_k\}\subseteq\calV$ and $\calL'= \{\ell_0,\dots,\ell_{k-1}\}\subseteq\calL$, is called a directed path on $\Gamma$ if each edge $\ell_i$ is directed from $v_i$ to $v_{i+1}$ for all $i<k$. 
A directed graph is referred to as a \textit{rooted directed graph} if there exists a directed path between a designated vertex, known as the root, and every other vertex in the graph. A directed cycle is a directed path that begins and ends at the same vertex. A directed graph is said to be \textit{acyclic} if it does not contain any directed cycles.

\begin{definition}[Message Graph]
    \label{defi:Message_Graph}
    A rooted, directed, acyclic graph $\Gamma=(\calV,\calL)$ is called a message graph if, for any edge $\ell\in\calL$, there is no directed path from  $\text{init}(\ell)$ to $\text{ter}(\ell)$ other than the edge $\ell$ itself.
\end{definition}
It is important to note that a message graph is not necessarily a (directed) tree, as multiple directed paths between two vertices that are not connected by the same edge may exist. Nonetheless, we can introduce terminology analogous to that used for trees, which will prove useful in the subsequent discussion. The \textit{parents} of a node $v_i$ are the nodes that are directly connected to $v_i$ and lie on the path from $v_i$ to the root. A child of a vertex $v_i$ is a vertex for which $v_i$ is the parent. The set $\bbD_i$ of \textit{descendants} of a vertex $v_i$ includes all of its children, the children of its children, and so on. The set $\bbA_i$ of \textit{ancestors} of a vertex $v_i$ includes all of its parents, the parents of its parents, and so on.  
A \textit{leaf} is a vertex with no children, meaning its descendant set is empty. 
\begin{figure*}[t!]
    \centering
    \hspace{-5mm}\begin{subfigure}[t]{0.5\textwidth}
        \centering
        \includegraphics[height=2.6in]{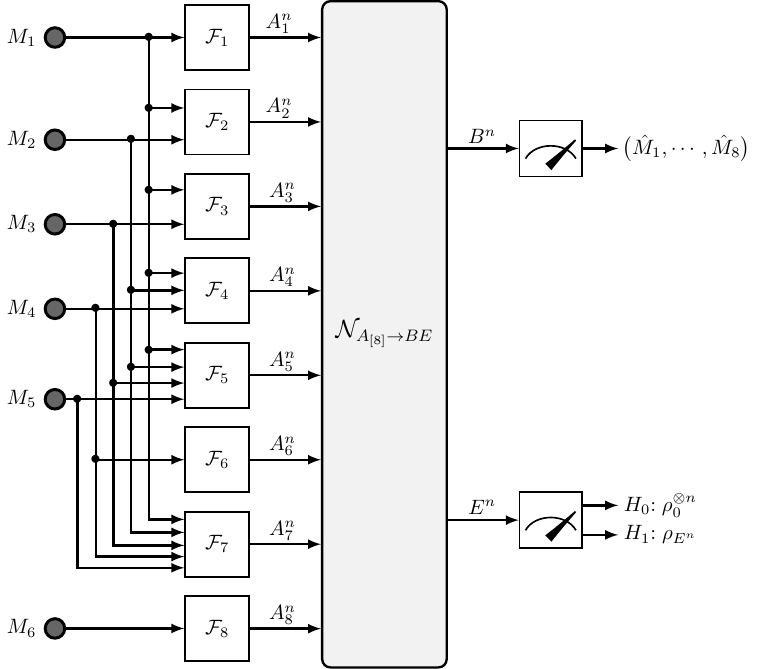}
        \caption{A \ac{MAC} with a special message hierarchy}
        \label{fig:4M6T}
    \end{subfigure}%
    ~ 
    \hspace{-5mm}\begin{subfigure}[t]{0.5\textwidth}
        \centering
        \includegraphics[height=2.6in]{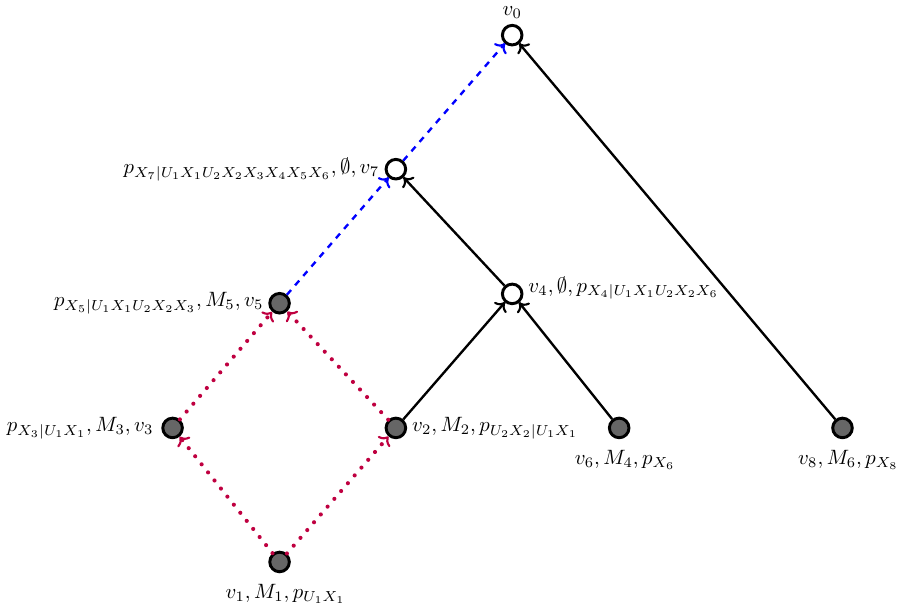}
        \caption{The associated message graph $\Gamma$}
        \label{fig:4M6T_Graph}
    \end{subfigure}
    \caption{A \ac{MAC} with a special message hierarchy and its corresponding message graph. In the message graph, each vertex is annotated with its private message and the distribution of the associated classical codebook. For example, $v_4,\emptyset,p_{X_4\lvert U_1X_1U_2X_2X_6}$ denotes that vertex $v_4$ has no private message and its codebook distribution is $p_{X_4\lvert U_1X_1U_2X_2X_6}$, while $v_3,M_3,p_{X_3\lvert U_1X_1}$ denotes that vertex $v_3$ has $M_3$ as its private message with codebook distribution $p_{X_3\lvert U_1X_1}$. In the associated message graph, the vertices with private messages, i.e., $\bbT=\{v_1, v_2, v_3, v_5,v_6,v_8\}$, are illustrated by filled vertices. Consider the subset $\{v_5\} \subset \bbT$ of vertices with private messages. The proper ancestral sub-graph associated with this subset, $\{v_5, v_7, v_0\}$, is illustrated with blue dashed edges, while the proper descendant sub-graph, $\{v_5, v_2, v_3, v_1\}$, is represented with purple dotted edges.}
    \label{fig:4M6T_Example}
    \vspace{-0.7cm}
\end{figure*}
\begin{definition}[Associated \texorpdfstring{\ac{MAC}}{MAC} Message Graph]
\label{defi:Graph_Representation}
Given a \ac{MAC} with a message structure as defined in \eqref{eq:Messages_at_TXt}, the associated graph $\Gamma=(\calV,\calL)$ is defined as follows. We have $\calV=\{v_0,v_1,\dots,v_T\}$, where the vertex $v_t$ corresponds to Transmitter~$t$ in the \ac{MAC}, for $t\in[T]$. We add a directed edge from $v_{t_1}$ to $v_{t_2}$ if $\calI_{t_1}\subset\calI_{t_2}$ and there does not exist any $\calI_{t_3}$, where $t_3\ne t_1,t_2$, such that $\calI_{t_1}\subset\calI_{t_3}\subset\calI_{t_2}$. Finally, we add the edges $(v_t, v_0)$ if $v_t$ has no parent. The root vertex $v_0$ does not correspond to a transmitter in the network, but it is included in the graph to make it a rooted graph. This step may be skipped if there is a transmitter observing all messages.
\end{definition}
By construction, the graph associated with a \ac{MAC} is a message graph. As an example, consider an 8-transmitter \ac{MAC} with 6 messages, as illustrated in Fig.~\ref{fig:4M6T}, and its corresponding message graph shown in Fig.~\ref{fig:4M6T_Graph}, which has the special message hierarchy. The graph consists of a total of 9 vertices, including one for each transmitter and a root vertex, $v_0$. For instance, note that vertex $v_4$ serves as the parent of $v_2$ and $v_6$ because Transmitter~4 has access to the messages available at Transmitter~2, namely $M_1$ and $M_2$, as well as the messages available at Transmitter~6, namely $M_4$. Vertex $v_1$ has two parents since $M_1$ is also known by Transmitter~2 and Transmitter~3. Moreover, examples of descendant sets include $\bbD_4 = \{v_1, v_2, v_6\}$ and $\bbD_5 = \{v_1, v_2, v_3\}$. 
\begin{definition}[Multi-Parent Vertices]
    In a message graph $\Gamma= (\calV,\calL)$, a vertex $v\in\calV$ is referred to as a \ac{MP} if it has more than one parent. The set of all such MP-vertices in $\calV$ is denoted by $\bbM$.
\end{definition}
Note that a message graph is a tree if and only if it contains no MP-vertices. In the example shown in Fig.~\ref{fig:4M6T_Example}, we have $\bbM = \{v_1,v_2\}$, indicating that the graph is not a tree. As we will see later, in characterizing the capacity region, users corresponding to MP-vertices require special treatment. Although the message graph is not a tree, we still refer to vertices without children as \textit{leaves} and denote this set by $\bbL$. In the example shown in Fig.~\ref{fig:4M6T_Example}, we have $\bbL = \br{v_1,v_6,v_8}$.
\begin{definition}[Private Messages]
    The private messages of Transmitter~$t$ are defined as the message indices in $\calI_t$ that are not accessible to any users in the descendant set $\bbD_t$ of user $t$. If $\bbD_t$ is empty, then all messages available to Transmitter~$t$ are considered its private messages. We denote the set of vertices with private messages by $\bbT$.
\end{definition}For example, in the message graph in Fig.~\ref{fig:4M6T_Example}, $\bbT=\br{v_1,v_2,v_3,v_5,v_6,v_8}$. The following lemma from \cite{GunduzSimeone10} plays a key role in our achievability proof; for completeness, we also provide its proof here.
\begin{lemma}[\hspace{-0.01mm}{\cite[Lemma~3.1]{GunduzSimeone10}}]
\label{lemma:One_Message_One_Vertex}
    If the underlying \ac{MAC} follows the special message hierarchy as in Definition~\ref{defi:Message_Hierarchy}, each Transmitter~$t$ (with an associated vertex $v_t \neq v_0$) can have at most one private message, denoted as $M(t)$. Furthermore, every message $M_s$, for $s\in[S]$, in the system serves as a private message for exactly one user, meaning $M_s = M(t)$ for a unique $t \in [T]$.
\end{lemma}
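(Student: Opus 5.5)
The plan is to handle the two claims separately, using only Definition~\ref{defi:Message_Hierarchy}, the standing assumptions that no two transmitters have the same message set and that no two message sources are available to exactly the same set of transmitters, and the structure of the associated graph in Definition~\ref{defi:Graph_Representation}. Throughout, for a message index $s\in[S]$, write $\calT_s\triangleq\{t\in[T]: s\in\calI_t\}$ for the set of transmitters that know $M_s$. Every $\calT_s$ is nonempty, since each message is assigned to some encoder.

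First, I show that each transmitter has at most one private message. Suppose instead that Transmitter~$t$ has two distinct private messages $s\neq s'$. By the no-merging assumption, $\calT_s\neq\calT_{s'}$. Without loss of generality, pick $t'\in\calT_s\setminus\calT_{s'}$; note $t'\neq t$. Then $s\in\calI_t\cap\calI_{t'}$, so the intersection is nonempty. By the special hierarchy it equals $\calI_{t_3}$ for some $t_3$. Since $s'\notin\calI_{t'}$, we have $s'\notin\calI_{t_3}$, so $\calI_{t_3}\subsetneq\calI_t$. Hence $t_3\neq t$, and $v_{t_3}$ is a descendant of $v_t$. Indeed, proper inclusion of message sets gives a directed path in the graph: a chain of covering relations $\calI_{t_3}\subset\cdots\subset\calI_t$ exists by finiteness, and each covering relation is an edge. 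But $s\in\calI_{t_3}$ with $v_{t_3}\in\bbD_t$, which contradicts $s$ being private to $t$. If instead $\calT_{s'}\setminus\calT_s$ is the nonempty difference, the argument is the same with the roles of $s$ and $s'$ swapped.

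Second, I show that every message is private for exactly one user. Fix $s$. Using the special hierarchy, I want to show that $\calT_s$ has a unique minimal element with respect to inclusion of message sets. The minimal elements exist by finiteness. If $t_1\neq t_2$ are both minimal in $\calT_s$, then $\calI_{t_1}\cap\calI_{t_2}\ni s$ is nonempty, so it equals $\calI_{t_3}$ with $t_3\in\calT_s$. This forces $\calI_{t_3}=\calI_{t_1}=\calI_{t_2}$, contradicting the assumption that distinct transmitters have distinct message sets. Call the unique minimal element $t^\star$.

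It remains to check three things. (i) $s$ is private for $t^\star$: every descendant $v_{t'}$ of $v_{t^\star}$ satisfies $\calI_{t'}\subsetneq\calI_{t^\star}$ (descendants in the graph have strictly smaller sets, since edges go from smaller to larger sets). So $s\in\calI_{t'}$ would contradict minimality. (ii) $s$ is not private for any other $t\in\calT_s$: as shown, $\calI_{t^\star}\cap\calI_t=\calI_{t^\star}$, so $\calI_{t^\star}\subsetneq\calI_t$, which puts $v_{t^\star}\in\bbD_t$. (iii) If $s\notin\calI_t$, then $s$ is trivially not private for $t$.

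The main subtlety I expect is the graph bookkeeping: I need that $v_{t'}\in\bbD_t$ holds exactly when $\calI_{t'}\subsetneq\calI_t$. This follows from Definition~\ref{defi:Graph_Representation}, because edges encode covering relations and the root $v_0$ lies outside $[T]$. I would state this equivalence explicitly before running both arguments.
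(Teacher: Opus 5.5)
Your proposal is correct and follows essentially the same route as the paper. In the first part you use the no-merging assumption to find a transmitter that knows one of the two messages but not the other, and the special hierarchy then yields a transmitter in $\bbD_t$ that knows a supposedly private message. In the second part you rely, as the paper does, on the fact that the transmitters knowing $M_s$ are closed under intersection. Your minimal-element formulation of the second part is slightly more complete than the paper's argument in two respects. First, it also proves that every message is private for some transmitter, which the paper leaves implicit. Second, it correctly handles the case $\calI_t\subsetneq\calI_j$; there the paper's claim that $v_\ell$ lies in both $\bbD_t$ and $\bbD_j$ is not literally true, although membership in one of them already gives the contradiction.
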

\begin{proof}
    Assume that $M_1$ and $M_2$ belong to $\calI_t$ but are not available to any transmitters in $\bbD_t$. Then, there needs to be at least one transmitter in the \ac{MAC} such that $M_1\in\calI_\ell$ but $M_2\notin\calI_\ell$, as otherwise, one can combine $M_1$ and $M_2$ into a single message. 
    But now $M_1\in\calI_t\cap\calI_\ell$, and therefore, by Definition~\ref{defi:Message_Hierarchy}, we must have $\calI_j=\calI_t\cap\calI_\ell$ for some Transmitter~$j\in[T]$. However, since $M_2\notin\calI_j$, it follows that $j\ne t$ and $v_j\in \bbD_t$. This leads to a contradiction with the initial assumption, and therefore, there can be at most one private message for each transmitter.

    Now let the message $M_s$, for $s\in[S]$, be the private message for two Transmitters $t$ and $j$. Therefore, $M_s\in\calI_t\cap\calI_j$ and by Definition~\ref{defi:Message_Hierarchy}, we must have $\calI_\ell=\calI_t\cap\calI_j$ for some Transmitter $\ell\in[T]$ and $M_s\in\calI_\ell$. Therefore, $v_\ell\in \bbD_t$ and $v_\ell\in \bbD_j$, which contradicts the assumption that $M_s$ is the private message for two Transmitters $t$ and $j$.
\end{proof}Private messages $M(t)$ are also included in Fig.~\ref{fig:4M6T_Example}. We denote the rate of the private messages of a Transmitter~$t$ with $R(t)$. For example, in Fig.~\ref{fig:4M6T_Example}, since $M(4)=M(7)=\emptyset$ we have $R(4)=R(7)=0$ whereas $M(8)=M_6$ and therefore $R(8)=R_6$.
In what follows, we introduce two definitions that are central to the presentation of our main results. The first, which generalizes \cite[Definition~3.5]{GunduzSimeone10}, is used to present the rate constraints arising in the reliability analysis, while the second is used to present the rate constraints arising in the resolvability analysis.
\begin{definition}[Proper Ancestral Sub-Graph]
\label{defi:Proper_Rooted_sub-graph}
    A graph $\Omega=(\calV',\calL')$ is defined as a proper ancestral sub-graph of the message graph $\Gamma=(\calV,\calL)$ if it satisfies the following conditions: 
    \begin{enumerate}[(i)]
        \item $\calV'\subseteq\calV$, $\calL'\subseteq\calL$, and  $v_0\in\calV'$;
        \item if a vertex belongs to the set $\calV'$ then all its ancestors also belong to the set $\calV'$, i.e., if $v_i\in\calV'$, for $i\in[T]$, then $\bbA_i\subseteq\calV'$;
        \item any vertex $v\in\calV'$ without an incoming edge is a vertex with a private message, i.e., if $v\in\calV'$ and there exists no $\ell\in\calL'$ such that ter$(\ell)=v$ then $v\in\bbT$.
    \end{enumerate}
\end{definition}
\begin{figure*}[b!]
    \hrulefill
    \setcounter{equation}{12}
\begin{subequations}\label{eq:Pe_Covertness}
\begin{align}
    \bbP\left\{\hat{M}\ne M\right\}&\le6\mu_{B,[T]}^\alpha2^{\alpha\pr{\sum\limits_{s=1}^SR_s-\ubar{\D}_{1-\alpha}\pr{\rho_{\tilde{X}_{[T]}B}\big\lVert\rho_{\tilde{X}_{[T]}}\otimes\rho_B}}}+6\sum\limits_{\calS\subset[S]}\mu_{B,\calT_\calS^c}^\alpha2^{\alpha\pr{\sum\limits_{s\in\calS}R_s-\ubar{\I}_{1-\alpha}\pr{\tilde{X}_{\calT_\calS};B\big\lvert \tilde{X}_{\calT_\calS^c}}}},\label{eq:Pe_Thm}\\
    \left\lVert\tau_E- \rho_0\right\rVert_1&\le\frac{2}{\sqrt{\alpha}}\left[\pr{\mu_{E,\card{\mathfrak{B}}}}^{\frac{\alpha}{2}}2^{\frac{\alpha}{2}\pr{-\sum\limits_{s=1}^SR_s+\ubar{\D}_{1+\alpha}\pr{\rho_{\tilde{X}_{[T]}E}\big\lVert\rho_{\tilde{X}_{[T]}}\otimes\rho_E}}}\right.\nonumber\\
    &\qquad\left.+\sum_{i=1}^{\card{\mathfrak{B}}}\pr{\mu_{E,i-1}}^{\frac{\alpha}{2}}2^{\frac{\alpha}{2}\pr{-\sum\limits_{s\in\calS_i}R_s+\ubar{\D}_{1+\alpha}\pr{\rho_{\tilde{X}\pr{\calS_i}E}\big\lVert\rho_{\tilde{X}\pr{\calS_i}}\otimes\rho_E}}}\right],\label{eq:Covertnes_Thm}
    \end{align}
    \end{subequations}
    \setcounter{equation}{10}
\end{figure*}
Intuitively, every set of vertices with private messages and all their ancestors corresponds to a proper ancestral sub-graph. Alternatively, since each message is the private message of only one vertex, for every subset $\calS\subset[S]$ of the messages, the set of all vertices that have access to at least one message that belongs to the subset $\calS$ corresponds to a proper ancestral sub-graph. Note that Definition~\ref{defi:Proper_Rooted_sub-graph} differs from the definition of a proper rooted sub-graph in \cite[Definition~3.5]{GunduzSimeone10}. Specifically, a proper rooted sub-graph in \cite[Definition~3.5]{GunduzSimeone10} is defined as a sub-graph that only satisfies the first two conditions of Definition~\ref{defi:Proper_Rooted_sub-graph}. The key motivation for adopting a different definition in this paper is to ensure the elimination of all redundant rate constraints, whereas in \cite{GunduzSimeone10}, only a subset of these redundant constraints is removed. Each rooted sub-graph corresponds to a set of transmitters, and this set is considered proper if, for every message available as a private message to a transmitter in the set, all the transmitters that have access to that private message are also included in the set. For instance, in the message graph $\Gamma$ of Fig.~\ref{fig:4M6T_Example}, the rooted
sub-graph $(\calV',\calL')$ with $\calV'=\{v_0,v_7,v_5,v_4,v_2\}$ and $\calL'=\{(v_7,v_0),(v_5,v_7),(v_4,v_7),(v_2,v_5),(v_2,v_4)\}$ is a proper ancestral sub-graph. Also, in the message graph $\Gamma$ of Fig.~\ref{fig:4M6T_Example}, the rooted sub-graph $(\calV^\star,\calL^\star)$ with $\calV^\star=\{v_0,v_7,v_5\}$ and $\calL'=\{(v_7,v_0),(v_5,v_7)\}$ is a proper ancestral sub-graph, which is illustrated with blue dashed edges in Fig.~\ref{fig:4M6T_Example}. Note that the sub-graph with $(\tilde{\calV},\tilde{\calL})$ with $\tilde{\calV}=\{v_0,v_7,v_5,v_2\}$ and $\tilde{\calL}=\{(v_7,v_0),(v_5,v_7),(v_2,v_5)\}$ is not a proper ancestral sub-graph since $(v_1,v_4)\notin\tilde{\calL}$. Also, the sub-graph with $(\tilde{\calV},\tilde{\calL})$ with $\tilde{\calV}=\{v_0,v_7,v_4\}$ and $\tilde{\calL}=\{(v_7,v_0),(v_4,v_7)\}$ is not a proper ancestral sub-graph since $v_4\notin\bbT=\{v_1,v_2,v_3,v_5,v_6,v_8\}$. Since each proper ancestral sub-graph $\Omega$ is defined by its set of vertices, for convenience, we will use the vertex set of $\Omega$ to denote $\Omega$. In the example shown in Fig.~\ref{fig:4M6T_Example}, there are 23 proper ancestral sub-graphs, which are listed as follows,
\begin{align}
    &\br{v_8,v_0},\br{v_6,v_4,v_7,v_0},\br{v_5,v_7,v_0},\br{v_8,v_5,v_7,v_0},\nonumber\\
    &\{v_8,v_6,v_4,v_7,v_0\},\{v_5,v_6,v_4,v_7,v_0\},\br{v_8,v_6,v_5,v_4,v_7,v_0},\nonumber\\
    &\br{v_3,v_5,v_7,v_0},\br{v_3,v_6,v_5,v_4,v_7,v_0},\br{v_3,v_8,v_5,v_7,v_0},\nonumber\\
    &\br{v_3,v_6,v_8,v_5,v_4,v_7,v_0},\br{v_2,v_5,v_4,v_7,v_0},\{v_2,v_3,v_5,v_4,\nonumber\\
    &\,\,\,v_7,v_0\},\br{v_2,v_6,v_5,v_4,v_7,v_0},\br{v_2,v_3,v_6,v_5,v_4,v_7,v_0},\nonumber\\
    &\br{v_2,v_8,v_5,v_4,v_7,v_0},\br{v_2,v_8,v_3,v_5,v_4,v_7,v_0},\{v_2,v_8,v_6,\nonumber\\
    &\,\,\,v_5,v_4,v_7,v_0\},\br{v_2,v_8,v_3,v_6,v_5,v_4,v_7,v_0},\{v_1,v_2,v_3,v_5,\nonumber\\
    &\,\,\,v_4,v_7,v_0\},\br{v_1,v_6,v_2,v_3,v_5,v_4,v_7,v_0},\{v_1,v_8,v_2,v_3,v_5,\nonumber\\
    &\,\,\,v_4,v_7,v_0\},\br{v_1,v_6,v_8,v_2,v_3,v_5,v_4,v_7,v_0}.\label{eq:Example_PASG}
\end{align}
\begin{definition}[Proper Descendant Sub-Graph]
\label{defi:Proper_Ance_Leaf_Path}
 A graph $\Xi=(\calV',\calL')$ is defined as a proper descendant sub-graph of the message graph $\Gamma=(\calV,\calL)$ if it satisfies the following conditions:
 \begin{enumerate}[(i)]
     \item $\calV'\subseteq\calV$ and $\calL'\subseteq\calL$ and the set $\calV'$ includes at least one leaf, i.e., $\exists v\in\calV'$ such that $v\in\bbL$;
     \item if a vertex belongs to the set $\calV'$ then all its descendant also belong to the set $\calV'$, i.e., if $v_i\in\calV'$, for $i\in[T]$, then $\bbD_i\subseteq\calV'$;
     \item if all the descendants with private messages of a vertex without a private message belong to the set $\calV'$, then this vertex must also belong to $\calV'$, i.e., if $v_i \in \calV - \bbT$, for $i \in [T]$, with $\bbD_i \triangleq \{\bbT', \bar{\bbT}'\}$, where $\bbT' \subseteq \bbT$ represents the vertices with private messages and $\bar{\bbT}'$ is the complement of $\bbT'$ \ac{wrt} $\bbD_i$, and if $\bbT' \subseteq \calV'$, then we must have $v_i \in \calV'$.
 \end{enumerate}
\end{definition}
Intuitively, every set of vertices with private messages, along with all their descendants, together with the vertices without private messages whose descendant sets are entirely contained within the descendants of this set, corresponds to a proper descendant sub-graph.  
Each proper descendant sub-graph corresponds to a set of transmitters that all have access to a subset of the private messages of the leaves. For example, in the message graph $\Gamma$ of Fig.~\ref{fig:4M6T_Example}, $\Xi\triangleq\{v_1,v_2,v_3\}$ or $\Xi\triangleq\{v_1,v_6\}$ are proper descendant sub-graphs, but $\Xi\triangleq\{v_1,v_3,v_5\}$ is not a proper descendant sub-graph since $v_2\in\bbD_5$ and $v_2\notin\Xi$, condition (ii) of Definition~\ref{defi:Proper_Ance_Leaf_Path} is violated. 
Also, in the message graph $\Gamma$ of Fig.~\ref{fig:4M6T_Example}, $\Xi^\star=\{v_5,v_3,v_2,v_1\}$ is a proper descendant sub-graph which is illustrated with purple dotted edges in Fig.~\ref{fig:4M6T_Example}, but $\Xi'=\{v_5,v_3,v_2,v_1,v_6,v_7\}$ is not a proper descendant sub-graph since all the descendants with private messages of $v_4$ belong to $\Xi'$ but $v_4\notin\Xi'$, condition (iii) of Definition~\ref{defi:Proper_Ance_Leaf_Path} is violated. 
In the message graph $\Gamma$ of Fig.~\ref{fig:4M6T_Example}, there are 23 proper descendant sub-graphs: 
\begin{align}
    &\br{v_1}, \br{v_1,v_3}, \br{v_1,v_2}, \br{v_1,v_2,v_3}, \br{v_1,v_2,v_3,v_5}, \nonumber\\
    &\br{v_1,v_6}, \br{v_1,v_6,v_2,v_4}, \br{v_1,v_6,v_3},\br{v_1,v_6,v_3,v_2,v_4},\nonumber\\
    & \br{v_1,v_6,v_3,v_2,v_4,v_5,v_7}, \br{v_1,v_8}, \br{v_1,v_8,v_2}, \br{v_1,v_8,v_3},\nonumber\\ &\br{v_1,v_8,v_2,v_3},\br{v_1,v_6,v_2,v_3,v_5}, \br{v_1,v_6,v_8}, \{v_1,v_6,v_8,\nonumber\\
    &\,\,\,v_2,v_4\}, \br{v_1,v_6,v_8,v_3}, \br{v_1,v_6,v_8,v_2,v_4,v_3},\{v_1,v_6,v_8,\nonumber\\
    &\,\,\,v_3,v_5,v_2,v_4,v_7\}, \br{v_6},\br{v_6,v_8},\br{v_8}.\label{eq:Example_PDSG}
\end{align}

\section{Main Results}
\label{sec:Main_Results}
In this section, we first establish a one-shot achievable rate region for \acp{MAC} with general message sets and subsequently extend this result to the asymptotic setting. We then show that the resulting rate region is optimal for classical-quantum channels.
\subsection{One-Shot Results}
\label{sec:One_Shot_Results}
\setcounter{equation}{13}
\begin{theorem}[One-shot Achievable Rate Region]
\label{thm:Achievable_One_Shot}
Given a quantum \ac{MAC} $\calN_{A_{[T]}\to BE}$, with a special message hierarchy as in Definition~\ref{defi:Message_Hierarchy}, and a classical-quantum state $\rho_{\tilde{X}_{[T]}A_{[T]}}=\sum\limits_{\tilde{x}_{[S]}}\prod\limits_{t=1}^Tp_{\tilde{X}_t\lvert X_{D_t}U_{D_t}}(\tilde{x}_t\lvert x_{D_t},u_{D_t})\den{\tilde{x}_1}{\tilde{x}_1}_{\tilde{X}_1}\otimes\cdots\otimes\den{\tilde{x}_T}{\tilde{x}_T}_{\tilde{X}_T}\otimes\theta_{A_1}^{x_1}\otimes\cdots\otimes\theta_{A_T}^{x_T}$, where for any multi-parent transmitter, i.e., $t\in\bbM$,  $\tilde{x}_t=(x_t,u_t)$ and otherwise $\tilde{x}_t=x_t$, such that $\rho_E=\rho_0$, with $\rho_{BE}=\tra_{\tilde{X}_{[T]}}\sbr{\bbI\otimes\calN_{A_{[T]}\to BE}\pr{\rho_{\tilde{X}_{[T]}A_{[T]}}}}$, there exists a $\pr{2^{R_1},2^{R_2},\cdots,2^{R_S},1}$ code such that \eqref{eq:Pe_Covertness}, given at the bottom of the previous page holds, where $\alpha\in\left(0,\frac{1}{2}\right)$; $\calT_\calS$, for $\calS\subset[S]$, denotes the set of channel input indices for which each channel input has access to at least one message $M_s$, for some $s \in \calS$; 
   $\mathfrak{B} \triangleq \pr{\calS_1, \calS_2, \dots, \calS_{\card{\mathfrak{B}}}}$ denotes an ordered collection of all non-empty strict subsets of $[S]$, arranged such that for any $1 \le i < j \le \card{\mathfrak{B}}$, we have $\card{\calS_i} \le \card{\calS_j}$; $\tilde{X}\pr{\calS_i}$, for $i\in[\card{\mathfrak{B}}]$, denotes the set of channel inputs for which each channel input has only access to a subset of messages $M_{\calS'}$, for some $\calS' \subseteq \calS_i$, $\rho_B$ and $\rho_E$ are the marginals of $\rho_{BE}$, and $\mu_{B,[T]}$, $\mu_{B,\calT_\calS^c}$, $\mu_E$, and $\mu_{E,i}$ are defined in~\eqref{eq:gs}.
\end{theorem}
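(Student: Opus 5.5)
We prove the theorem by a random-coding argument with a superposition codebook that follows the message graph $\Gamma$. We average both the error probability \eqref{eq:Pe_Thm} and the covertness metric \eqref{eq:Covertnes_Thm} over the random codebook ensemble. Then we use Markov's inequality, or simply linearity since both bounds carry a constant factor, to extract one deterministic codebook that meets both simultaneously.

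\emph{Codebook.} We process the vertices of $\Gamma$ in a topological order from the root toward the leaves, i.e., every vertex after all of its ancestors. For each vertex $v_t$ and each realization of the messages available at its ancestors, we draw codewords $\tilde{x}_t(m_{\calI_t})$ independently according to $p_{\tilde{X}_t\lvert X_{D_t}U_{D_t}}$, conditioned on the ancestor codewords. For $t\in\bbM$ we first draw the auxiliary $u_t$ and then superimpose $x_t$, exactly as in \cite{GunduzSimeone10}. Lemma~\ref{lemma:One_Message_One_Vertex} makes the indexing well defined: each message is the private message of exactly one vertex. Transmitter~$t$ then sends $\theta_{A_t}^{x_t}$. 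Under this construction the ensemble-averaged induced joint state is exactly $\rho_{\tilde{X}_{[T]}BE}$, and in particular the averaged warden state is $\rho_E=\rho_0$.

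\emph{Reliability.} We use a simultaneous pinching decoder. Let $\calE_{\rho_B}$ be the pinching map with respect to $\rho_B$, and let $\calE_{B\lvert\tilde{x}_{\calT_\calS^c}}$ be the pinching maps with respect to the conditional states $\rho_{B\lvert\tilde{x}_{\calT^c_\calS}}$; their spectral counts give $\mu_{B,[T]}$ and $\mu_{B,\calT_\calS^c}$ in \eqref{eq:gs}. The pinched operators all commute, so we can build a Hayashi--Nagaoka-type square-root POVM out of Neyman--Pearson projectors $\{\rho_{B\lvert\tilde{x}_{[T]}}\succeq 2^{\gamma_\calS}\rho_{B\lvert \tilde{x}_{\calT_\calS^c}}\}$, one for each $\calS\subseteq[S]$.

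The Hayashi--Nagaoka inequality splits the average error into two parts. The first is a term in which the correct codeword fails the tests. The second is a union over $\calS\subset[S]$ of confusion events in which exactly the messages in $\calS$ are wrong. A wrong $m_\calS$ changes precisely the codewords indexed by $\calT_\calS$, and by superposition these are conditionally independent of the true ones given $\tilde{x}_{\calT^c_\calS}$. Each term is then bounded by the standard Rényi tail bound $\tra[\rho\{\rho\preceq e^{\gamma}\sigma\}]+e^{\gamma}\tra[\sigma\{\rho\succ e^\gamma\sigma\}]\le e^{\alpha\gamma}\tra[\rho^{1-\alpha}\sigma^\alpha]$, applied after pinching. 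The cost of pinching is the factor $\mu^\alpha$, obtained from $\rho\le\mu\,\calE(\rho)$. Optimizing $\gamma$ gives the terms $2^{\alpha(\sum_{s\in\calS}R_s-\ubar{\I}_{1-\alpha})}$, which produces \eqref{eq:Pe_Thm} with the constant 6.

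\emph{Covertness.} This is the main obstacle. We bound $\bbE\lVert\tau_E-\rho_E\rVert_1$ by a layered resolvability argument. First we use Pinsker for the sandwiched divergence, or equivalently the bound $\lVert\cdot\rVert_1\le\sqrt{2\bbD}$, together with concavity, to reduce the problem to $\bbE\,\bbD(\tau_E\lVert\rho_E)$. Next we use $\bbD\le\frac1\alpha\log\tra[\tau^{1+\alpha}\rho^{-\alpha}]$ and expand $\tau_E$ as the average over all message tuples.

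We split each pair of message tuples $(m,m')$ according to the set $\calS_i$ of messages on which they agree. If they agree on $\calS_i$, then the codewords $\tilde{X}(\calS_i)$ coincide and the remaining codewords are independent. The expectation then splits into one piece per element of $\mathfrak{B}$, plus the all-distinct piece. Ordering $\mathfrak{B}$ by cardinality allows an induction: the pinching count accumulated through the first $i-1$ layers appears as $\mu_{E,i-1}$. In each piece we apply the operator inequality $(A+B)^\alpha\le A^\alpha+B^\alpha$ for $\alpha\in(0,1)$, together with pinching, which gives a contribution $\mu^{\alpha}2^{\alpha(-\sum_{s\in\calS_i}R_s+\ubar{\D}_{1+\alpha}(\rho_{\tilde{X}(\calS_i)E}\lVert\rho_{\tilde{X}(\calS_i)}\otimes\rho_E))}$. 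The final step is $\log(1+x)\le x$ followed by the square root, using $\sqrt{\sum a_i}\le\sum\sqrt{a_i}$. This yields the exponents $\alpha/2$ and the prefactor $2/\sqrt{\alpha}$.

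The delicate points are the careful bookkeeping of which codewords are shared in the correlated superposition structure, and ensuring that the pinching constants compose properly across layers. I would first verify both on the two-transmitter case before carrying out the general induction.
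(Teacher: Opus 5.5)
Your plan follows the paper's architecture: a superposition codebook, threshold tests at $2^{\sum_{s\in\calS}R_s}$ combined through the Hayashi--Nagaoka inequality with pinching costs $\mu^\alpha$, and a layered resolvability bound over $\mathfrak{B}$ using hierarchical pinching, $\log(1+x)\le x$ and $\sqrt{\sum a_i}\le\sum\sqrt{a_i}$. Replacing the purified distance by Pinsker is harmless; it even gives $\sqrt{2/\alpha}$ instead of $2/\sqrt{\alpha}$. However, your codebook is built in the wrong direction. In $\Gamma$ an edge goes from $v_{t_1}$ to $v_{t_2}$ when $\calI_{t_1}\subset\calI_{t_2}$, so the ancestors of $v_t$ know \emph{more} messages than Transmitter~$t$. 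If you draw $\tilde x_t$ for each realization of the ancestors' messages, conditioned on the ancestors' codewords, then Transmitter~$t$'s input depends on messages it does not have. This also contradicts the law $p_{\tilde X_t\lvert X_{D_t}U_{D_t}}$ you cite. The paper starts at the leaves and superimposes each vertex's codebook on every combination of its \emph{descendants'} codewords.

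The substantive gap is in the decoder. You assert that ``the pinched operators all commute.'' But you pinch separately with respect to $\rho_B$ and to each $\rho_{B\lvert\tilde x_{\calT_\calS^c}}$, and your tests compare the unpinched $\rho_{B\lvert\tilde x_{[T]}}$ with $2^{\gamma_\calS}\rho_{B\lvert\tilde x_{\calT_\calS^c}}$. These operators have unrelated eigenbases, so in general the tests do not commute. Commutation is exactly what the argument needs. First, $\Upsilon(m)$ is built from the product $\Pi^{([S])}\prod_{\calS}\Pi^{(\calS)}$, which must be a projector for the square-root measurement to be a POVM. Second, the operator union bound $\bbI-\Pi\preceq(\bbI-\Pi^{([S])})+\sum_{\calS}(\bbI-\Pi^{(\calS)})$ requires it. For a single test, the Audenaert inequality you quote already gives the tail bound with no pinching at all, so pinching is there only to enforce commutation.

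The paper nests the pinchings. It pinches $\rho_{B\lvert\tilde x}$ with respect to $\Delta_B(\rho_{B\lvert\tilde x_{\calT_\calS^c}})$ and compares with $\Delta_B$ of the Markov state. This is designed so that each test commutes with $\rho_B$ and with its own reference operator, which is what steps $(a)$ in the proof of Lemma~\ref{lemma:error_analysis} use. Even so, mutual commutation across different $\calS$, which the paper asserts, is not automatic. To make it rigorous you would pinch all conditional states into one common refinement, as the paper does on the warden side, or use a non-commutative union bound.

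In the covertness part, expand the sandwiched quantity $\tra[(\rho_E^{-\frac{\alpha}{2(1+\alpha)}}\tau_E\rho_E^{-\frac{\alpha}{2(1+\alpha)}})^{1+\alpha}]$ rather than $\tra[\tau^{1+\alpha}\rho^{-\alpha}]$. Only the former has the form $\tra[Y\,Y^\alpha]$ with $Y$ linear in the codeword states. That is what lets you condition on the shared codewords and apply operator Jensen to $Y^\alpha$.
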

The proof of Theorem~\ref{thm:Achievable_One_Shot}, outlined below and detailed in Appendix~\ref{proof:thm:Achievable_One_Shot}, relies on simultaneous pinching \cite{QIT_Hayashi}, superposition coding, and channel resolvability. Our codebook construction proceeds recursively from the leaves of the message graph toward the root.

For a transmitter corresponding to a multi-parent leaf, we first generate an auxiliary codebook and then superimpose a codeword on each auxiliary codeword. On the other hand, if a leaf has only a single parent, we directly generate its random codebook. For the parents of the leaves, if the corresponding vertex carries a private message, we construct its codebook via superposition coding over all combinations of its descendants’ codewords. If the vertex has no private message, we instead generate a single codeword through superposition coding for each such combination. This procedure is repeated until the root is reached. For example, in the \ac{MAC} illustrated in Fig.~\ref{fig:4M6T_Example}, the distribution of the codebook corresponding to each vertex is explicitly depicted within the message graph.

Our decoding quantum measurements are constructed using pinching maps defined with respect to the spectral decomposition of the receiver states \cite{QIT_Hayashi}, while the covertness analysis relies on a new channel resolvability lemma that builds on the properties of pinching maps, the sandwiched R\'{e}nyi relative entropy, and the purified distance. 

To clarify the notation in Theorem~\ref{thm:Achievable_One_Shot}, consider the \ac{MAC} illustrated in Fig.~\ref{fig:4M6T_Example}, where $S=6$ and $T=8$. For this specific \ac{MAC}, the number of non-empty exact subsets of $[6]$ is $2^6-2=62$. Consequently, the second term on the \ac{RHS} of \eqref{eq:Pe_Thm} and that of \eqref{eq:Covertnes_Thm} each consist of a sum of 62 terms. As an example, take the subset $\calS=\br{1,3,4}$. In this case, $\calT_\calS=\br{1,2,3,4,5,6,7}$, since each channel input in this set has access to at least one message $M_s$ with $s \in \calS$. Moreover, for $\calS_i=\br{1,3,4}$, we have $\tilde{X}(\calS_i)=(U_1,X_1,X_3,X_6)$, because Transmitters 1, 3, and 6 only have access to a subset of the messages in $\calS$, and among them, only Transmitter~1 corresponds to a multi-parent vertex.
\begin{corollary}[One-Shot Achievable Rate Region for \acp{MAC} with General Message Sets]
    By removing the covertness constraint $\rho_E=\rho_0$--and thus eliminating \eqref{eq:Covertnes_Thm}--the achievable rate region described in Theorem~\ref{thm:Achievable_One_Shot} reduces to a one-shot achievable rate region for quantum \acp{MAC} with general message sets.
\end{corollary}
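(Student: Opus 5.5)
The corollary follows by specializing Theorem~\ref{thm:Achievable_One_Shot}; nothing new is needed beyond checking where the covertness constraint actually enters. I would take the random coding scheme from the proof of Theorem~\ref{thm:Achievable_One_Shot} unchanged:
\begin{itemize}
\item superposition codebooks generated recursively from the leaves of the message graph toward the root, with an auxiliary layer $U_t$ at each multi-parent vertex $t\in\bbM$;
\item input states $\theta^{x_t}_{A_t}$ at each transmitter;
\item the simultaneous pinching-based decoding POVM at the receiver.
\end{itemize}

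The key observation is that the hypothesis $\rho_E=\rho_0$ is used in only one place. It appears in the resolvability step that bounds $\lVert\tau_E-\rho_0\rVert_1$, where the codebook-averaged warden state $\rho_E$ is identified with the innocent state. The reliability analysis never uses it. That analysis yields \eqref{eq:Pe_Thm} for an arbitrary classical-quantum state $\rho_{\tilde X_{[T]}A_{[T]}}$ of the stated product/superposition form. It does so by bounding the expected error, over the random codebook, using the Hayashi--Nagaoka type operator inequality combined with pinching with respect to the spectral decompositions of $\rho_B$ and of the conditional states $\rho_{B|\tilde x_{\calT_\calS^c}}$. The resulting error splits into one term for the case where all messages are wrong and one term for each $\calS\subset[S]$ of wrongly decoded messages. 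Each term is controlled by the sandwiched Rényi quantity with the pinching constants $\mu_{B,\cdot}$.

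The proof then runs as follows:
\begin{enumerate}
\item Drop the constraint $\rho_E=\rho_0$ and allow any input ensemble of the stated form.
\item Repeat the codebook generation and the decoder construction verbatim.
\item Invoke the reliability part of the proof of Theorem~\ref{thm:Achievable_One_Shot}. This shows that the ensemble-averaged error probability is bounded by the right-hand side of \eqref{eq:Pe_Thm}.
\item Conclude by a standard expurgation/averaging argument that some deterministic codebook attains an error at most this average. Such a codebook is an $\epsilon$-reliable $\pr{2^{R_1},\dots,2^{R_S},1}$ code whenever the right-hand side of \eqref{eq:Pe_Thm} is at most $\epsilon$.
\end{enumerate}

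Because no covertness requirement is imposed, there is no need to select a single codebook that is simultaneously good for both error and trace distance. This removes the union-bound/Markov step that Theorem~\ref{thm:Achievable_One_Shot} needs to control both quantities at once, so the argument is strictly simpler.

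The only step I expect to need care is confirming that no ingredient of the reliability analysis relies implicitly on $\rho_E=\rho_0$. Examples would be the choice of the pinching states, or a normalization tied to the warden's marginal. I would check this by tracing the definitions in \eqref{eq:gs}: $\mu_{B,[T]}$ and $\mu_{B,\calT_\calS^c}$ depend only on the $B$-marginals of $\calN(\rho_{\tilde X_{[T]}A_{[T]}})$, never on $E$. With that verified, the region defined by \eqref{eq:Pe_Thm} alone is a one-shot achievable region for quantum MACs with general message sets under the special message hierarchy.
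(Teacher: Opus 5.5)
Your proposal is correct and matches the paper's implicit argument. The paper states this corollary without proof because the reliability analysis in Appendix~\ref{proof:thm:Achievable_One_Shot} yields \eqref{eq:Pe_Thm} for any admissible classical-quantum state, with $\rho_E=\rho_0$ entering only through \eqref{eq:Covertnes_Thm}, so averaging over the random codebook suffices; to cover arbitrary message sets rather than only the special hierarchy, as the corollary's title claims, you should also invoke the virtual-transmitter reduction of Section~\ref{sec:General_Message_Hierarchy}.
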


\subsection{Asymptotic Results}
\label{sec:Asymp_Results}
In this section, we extend the one-shot achievable rate region established in Theorem~\ref{sec:One_Shot_Results} to the asymptotic setting, in which the transmitters are permitted to access the channel over multiple uses.
\begin{theorem}
\label{thm:Achievable_Asymp}
Let $\mathfrak{O}$ and $\mathfrak{X}$ denote the set of all proper ancestral sub-graphs and all proper descendant sub-graphs of the associated message graph $\Gamma$, respectively. Define the rate region $\calR_{\textup{CC-qMAC}}$ as follows,
\begin{subequations}
\begin{align}%
&\calR_{\textup{CC-qMAC}}\triangleq\bigcup_{\rho_{\tilde{X}_{[T]}A_{[T]}}\in\calG} \nonumber\\
&\left.\begin{cases}(R,R_2,\cdots,R_S):\\
  \sum\limits_{v\in\Omega}R(v)< I\left(X_\Omega;B\lvert X_{\Omega^c},U_{\Omega^c\cap\bbM}\right),\,\,\Omega\in\mathfrak{O}
\end{cases}\hspace{-3mm}\right\},\label{eq:inRnone}
\end{align}where
\begin{align}
  &\calG \triangleq \nonumber\\
  &\left.\begin{cases}\rho_{\tilde{X}_{[T]}A_{[T]}}:\\
  \rho_{\tilde{X}_{[T]}A_{[T]}}=\sum\limits_{\tilde{x}_{[S]}}\prod\limits_{t=1}^Tp_{\tilde{X}_t\lvert X_{D_t}U_{D_t}}(\tilde{x}_t\lvert x_{D_t},u_{D_t})\\
\times\den{\tilde{x}_1}{\tilde{x}_1}_{\tilde{X}_1}\otimes\dots\otimes\den{\tilde{x}_T}{\tilde{x}_T}_{\tilde{X}_T}\otimes\theta_{A_1}^{x_1}\otimes\dots\otimes\theta_{A_T}^{x_T},\\
\sum\limits_{v\in\Xi}R(v)> I\pr{X_\Xi;E},\,\Xi\in\mathfrak{X},\\
\rho_{BE}=\tra_{\tilde{X}_{[T]}}\sbr{\bbI\otimes\calN_{A_{[T]}\to BE}\pr{\rho_{\tilde{X}_{[T]}A_{[T]}}}},\\
\rho_E=\rho_0,\\
\end{cases}\hspace{-3mm}\right\},\label{eq:thm_S}
\end{align}
\end{subequations}$\tilde{x}_t=(x_t,u_t)$ if $t\in\bbM$ and $\tilde{x}_t=x_t$, otherwise,  $X_\Omega\triangleq\br{X_t:v_t\in\calV'}$ for a sub-graph $\Omega = (\calV',\calL')$, and $R(v)$ denotes the rate of the private message of the vertex $v$. 
An inner bound on the covert capacity region of a quantum \ac{MAC} $\calN_{A_{[T]}\to BE}$ with a special message hierarchy as defined in Definition~\ref{defi:Message_Hierarchy}, is
    \begin{align}
        \calC_{\textup{CC-qMAC}}\supseteq\calR_{\textup{CC-qMAC}}.\nonumber
    \end{align}
\end{theorem}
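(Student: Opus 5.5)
We obtain Theorem~\ref{thm:Achievable_Asymp} from the one-shot bound of Theorem~\ref{thm:Achievable_One_Shot}. We apply it to the $n$-fold channel $\calN_{A_{[T]}\to BE}^{\otimes n}$ with the i.i.d.\ input state $\rho_{\tilde{X}_{[T]}A_{[T]}}^{\otimes n}$ and rates $nR_s$. The covertness hypothesis transfers directly: $\rho_E=\rho_0$ implies $\rho_{E^n}=\rho_0^{\otimes n}$, which is the no-communication state \eqref{eq:rho0_Asym} (or \eqref{eq:No_Comm_Jamming_Asymptotic} under jamming). It then suffices to show that every exponential term on the right-hand sides of \eqref{eq:Pe_Thm} and \eqref{eq:Covertnes_Thm} vanishes as $n\to\infty$ whenever the rate tuple satisfies the strict inequalities in \eqref{eq:inRnone}--\eqref{eq:thm_S}.

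\textbf{Step 1: polynomial pinching constants.} The constants $\mu_{B,\cdot}$ and $\mu_{E,\cdot}$ in \eqref{eq:gs} are numbers of distinct eigenvalues of the relevant reference states. For $n$-fold tensor powers of a fixed state, the number of distinct eigenvalues grows at most like $(n+1)^{d}$ for a fixed dimension $d$. So each factor $\mu^{\alpha}$ or $\mu^{\alpha/2}$ is subexponential and disappears after taking $\frac1n\log$.

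\textbf{Step 2: additivity and $\alpha\to0$.} The sandwiched R\'enyi divergences between product states are additive. For the conditional R\'enyi mutual informations in \eqref{eq:Renyi_CMI}, evaluated on i.i.d.\ states, I will either invoke additivity or use the restricted product choice $\sigma^{\otimes n}$ in the minimization; the latter gives $\ubar{\I}_{1-\alpha}(\cdot)_{\rho^{\otimes n}}\ge n\,\ubar{\I}_{1-\alpha}(\cdot)_\rho$ only up to an inequality, which is the direction we need. The $P_e$ bound is then at most
\[
\mathrm{poly}(n)\sum_\calS 2^{n\alpha\pr{\sum_{s\in\calS}R_s-\ubar{\I}_{1-\alpha}}},
\]
with one term per $\calS$. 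Since $\ubar{\I}_{1-\alpha}\to I$ as $\alpha\to0^+$, for rates strictly inside the region we fix a small $\alpha\in(0,\frac12)$ that makes every exponent negative. The same argument applies to \eqref{eq:Covertnes_Thm}, with $\ubar{\D}_{1+\alpha}\to\bbD$ as $\alpha\to0^+$.

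\textbf{Step 3: matching the one-shot index sets with graph constraints.} The reliability conditions in the one-shot theorem are indexed by subsets $\calS\subseteq[S]$. By Lemma~\ref{lemma:One_Message_One_Vertex}, each message is the private message of exactly one vertex, so $\calT_\calS$ together with its ancestors is precisely a proper ancestral sub-graph $\Omega$, and $\sum_{s\in\calS}R_s=\sum_{v\in\Omega}R(v)$. Using the Markov structure of the superposition codebook, the quantity $I(\tilde X_{\calT_\calS};B\lvert\tilde X_{\calT_\calS^c})$ reduces to $I(X_\Omega;B\lvert X_{\Omega^c},U_{\Omega^c\cap\bbM})$. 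Subsets $\calS$ that do not yield a vertex set satisfying condition~(iii) of Definition~\ref{defi:Proper_Rooted_sub-graph} produce constraints that are redundant, being implied by the constraint for a smaller proper set, so they can be dropped.

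For resolvability, each $\calS_i$ determines $\tilde X(\calS_i)$, the inputs depending only on messages in $\calS_i$, which is a descendant-closed set. Applying closure condition~(iii) of Definition~\ref{defi:Proper_Ance_Leaf_Path} adds deterministic superposition layers that carry no private message. These layers leave the rate sum unchanged and can only increase $\bbD(\rho_{\tilde X E}\lVert\rho_{\tilde X}\otimes\rho_E)$, so the binding constraints are exactly $\sum_{v\in\Xi}R(v)>I(X_\Xi;E)$ for $\Xi\in\mathfrak{X}$. The full-set term with $\calS=[S]$ is covered by the $\Xi$ that contains all vertices.

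\textbf{Expected obstacle.} The main obstacle is this combinatorial correspondence, together with the handling of the multi-parent auxiliaries $U$ in the conditioning, i.e.\ showing that the extra terms in \eqref{eq:Pe_Thm}--\eqref{eq:Covertnes_Thm} are dominated by the graph constraints. I would first verify it on the example of Fig.~\ref{fig:4M6T_Example} and then argue by induction from the leaves to the root.
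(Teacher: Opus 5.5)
Your route is the paper's: apply Theorem~\ref{thm:Achievable_One_Shot} to the $n$-fold channel with the i.i.d.\ state, bound the pinching constants polynomially, use additivity, let $\alpha\to0$ (fixing $\alpha$ before $n\to\infty$, as you do, is the right order), and collapse the subset constraints into graph constraints. The concrete error is in Step~2: your fallback runs the wrong way.

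\textbf{Step 2.} $\ubar{\I}_{1-\alpha}(\tilde X^n_{\calT_\calS};B^n\lvert\tilde X^n_{\calT_\calS^c})$ is a \emph{minimum} over Markov states $\sigma$. The product states $\sigma^{\otimes n}$ are feasible candidates, so plugging them in only shows that the $n$-letter minimum is \emph{at most} $n\,\ubar{\I}_{1-\alpha}(\tilde X_{\calT_\calS};B\lvert\tilde X_{\calT_\calS^c})$. The error terms $2^{\alpha(n\sum_{s\in\calS}R_s-\ubar{\I}_{1-\alpha}(\cdot))}$ need a \emph{lower} bound on the $n$-letter quantity. That super-additive half is the nontrivial content of Lemma~\ref{eq:Ryini_Conditional_MI}, which the paper adapts from Anshu--Hayashi--Warsi. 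You must invoke it; the restricted-product argument gives nothing.

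\textbf{Step 1.} The ``tensor power of a fixed state'' reasoning covers only $\mu_B^{(n)}$ and $\mu_E^{(n)}$. The constants $\mu^{(n)}_{B,\calT}$ and $\mu^{(n)}_{E,i}$ involve pinchings of sequence-indexed conditional states $\bigotimes_i\rho_{B\lvert\tilde x_{\calT,i}}$. These need the separate type-counting bound of Lemma~\ref{eq:Lemma_Pincing_Eigen_Value_Bounds}.

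\textbf{Step 3, reliability.} The identity $\sum_{s\in\calS}R_s=\sum_{v\in\Omega}R(v)$ for $\Omega=\calT_\calS$ is false, because $\calT_\calS$ contains ancestors whose private messages lie outside $\calS$. For $\calS=\{2\}$ in Fig.~\ref{fig:4M6T_Example}, $\Omega=\{v_2,v_5,v_4,v_7,v_0\}$ and $\sum_{v\in\Omega}R(v)=R_2+R_5$. What holds, and suffices, is $\le$, with equality for the largest $\calS$ mapping to $\Omega$. The redundancy comes from many $\calS$ sharing one $\Omega$, hence one right-hand side. It does not come from some $\calS$ violating condition~(iii): every $\calT_\calS$ satisfies it, since a vertex of $\calT_\calS$ with no child in $\calT_\calS$ must own, as its private message, the message of $\calS$ it sees.

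\textbf{Step 3, resolvability.} Likewise, $\tilde X(\calS)$ is already closed under condition~(iii) of Definition~\ref{defi:Proper_Ance_Leaf_Path}: for $v\notin\bbT$, $\calI_v$ is the union of $\calI_w$ over its private-message descendants $w$. So closure adds nothing, and the rate sum is not ``unchanged''. The correct reduction is $\Xi=\tilde X(\calS)$ together with $\sum_{v\in\Xi}R(v)\le\sum_{s\in\calS}R_s$.

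\textbf{Auxiliaries in the covertness terms.} The one-shot covertness exponents involve $\tilde X(\calS_i)$, which contains the auxiliaries $U_t$ of multi-parent vertices, and ancestors' codewords are superposed on those $U_t$. Hence $U_t-X_t-E$ is not a Markov chain, and $I(\tilde X(\calS);E)$ can strictly exceed $I(X(\calS);E)$. For a virtual transmitter, $X_t=\emptyset$ while $I(U_t;E)>0$ is possible. Your passage from $\ubar{\D}_{1+\alpha}(\rho_{\tilde X E}\lVert\rho_{\tilde X}\otimes\rho_E)$ to $I(X_\Xi;E)$ is therefore unjustified. The paper makes the same unexplained move in \eqref{eq:Initial_Bound_Res}. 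As written, the argument only supports $\sum_{v\in\Xi}R(v)>I(X_\Xi,U_{\Xi\cap\bbM};E)$. This is where the auxiliaries you worried about actually bite; on the reliability side they vanish immediately, because $U_\Omega-(X_{[T]},U_{\Omega^c})-B$ is Markov.
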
Theorem~\ref{thm:Achievable_Asymp} is proved in Appendix~\ref{proof:thm:Achievable_Asymp}. Recall from Lemma~\ref{lemma:One_Message_One_Vertex} that for each $v$, $R(v) = R_s$ for some $s \in [S]$. To clarify the notation in Theorem~\ref{thm:Achievable_Asymp}, consider the \ac{MAC} illustrated in Fig.~\ref{fig:4M6T_Example}. Its associated message graph has 23 proper ancestral sub-graphs, listed in \eqref{eq:Example_PASG}, and 23 proper descendant sub-graphs, listed in \eqref{eq:Example_PDSG}. The rate constraint in \eqref{eq:inRnone} includes one rate constraint for each proper ancestral sub-graph, and the rate constraint in \eqref{eq:thm_S} includes one rate constraint for each proper descendant sub-graph. 
As an example, consider the proper ancestral sub-graphs $\br{v_3,v_6,v_5,v_4,v_7,v_0}$ and $\br{v_3,v_8,v_5,v_7,v_0}$. Their rate constraints are,
\begin{align}
    &R(v_3)+R(v_6)+R(v_5)+R(v_4)+R(v_7)\nonumber\\
    &\qquad=R_3+R_4+R_5+0+0\nonumber\\
    &\qquad<I(X_3,X_4,X_5,X_6,X_7;B\lvert U_1,X_1,U_2,X_2,X_8),\nonumber\\
    &R(v_3)+R(v_8)+R(v_5)+R(v_7)\nonumber\\
    &\qquad=R_3+R_6+R_5+0\nonumber\\
    &\qquad<I(X_3,X_5,X_7,X_8;B\lvert U_1,X_1,U_2,X_2,X_4,X_6),\nonumber
\end{align}respectively. Also, for the proper descendant sub-graphs $\br{v_1,v_8,v_3}$ and $\br{v_1,v_8,v_2,v_3}$, the corresponding rate constraints are
\begin{align}
    R(v_1)+R(v_8)+R(v_3)&=R_1+R_3+R_6\nonumber\\
    &>I(X_1,X_3,X_8;E),\nonumber\\ R(v_1)+R(v_8)+R(v_2)+R(v_3)&=R_1+R_2+R_3+R_6\nonumber\\
    &>I(X_1,X_2,X_3,X_8;E),\nonumber
\end{align}respectively. 
\begin{corollary}[Classical Communication Over a Quantum \ac{MAC} with General Message Sets]
    By removing the covertness constraint $\rho_E=\rho_0$--and thus eliminating the rate constraints in \eqref{eq:thm_S}--the achievable rate region described in Theorem~\ref{thm:Achievable_Asymp} reduces to the achievable rate region for classical communication over quantum \acp{MAC} with general message sets, generalizing a result presented in \cite[Theorem~3.2]{GunduzSimeone10} for classical channels.
\end{corollary}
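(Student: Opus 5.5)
The plan is to deduce this corollary directly from Theorem~\ref{thm:Achievable_Asymp}, or equivalently from the one-shot bound of Theorem~\ref{thm:Achievable_One_Shot} applied to $\calN^{\otimes n}$, by observing that the covertness requirement enters the proof only through \eqref{eq:Covertnes_Thm} and the constraints $\rho_E=\rho_0$ and $\sum_{v\in\Xi}R(v)>I(X_\Xi;E)$ in $\calG$. The reliability bound \eqref{eq:Pe_Thm} uses neither of these. Concretely, I would take an arbitrary state of the product-form \eqref{eq:thm_S} that satisfies none of the warden-related conditions. I would use the same recursive superposition codebook (leaves toward the root, with auxiliary codebooks $U_t$ for multi-parent vertices) and the same pinching-based simultaneous decoder. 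The bound \eqref{eq:Pe_Thm} then holds verbatim for the $n$-fold product state, since its derivation never invokes the warden's output.

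Next I would pass to the asymptotic regime as in Appendix~\ref{proof:thm:Achievable_Asymp}. Apply \eqref{eq:Pe_Thm} to $\rho^{\otimes n}$ with rates $nR_s$. The pinching constants $\mu_{B,\cdot}$ grow only polynomially in $n$, so their contribution vanishes after taking $\frac{1}{n}\log$. Additivity of the sandwiched R\'enyi quantities for product states, together with $\ubar{\I}_{1-\alpha}\to I$ as $\alpha\to 0$, makes each exponent strictly negative whenever the rates lie strictly inside the mutual-information constraints. It then remains to show that the constraints indexed by $\calS\subset[S]$ reduce to the constraints indexed by proper ancestral sub-graphs $\Omega\in\mathfrak{O}$ in \eqref{eq:inRnone}. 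For each $\calS$, the set $\calT_\calS$ of transmitters knowing some message in $\calS$ is exactly such a sub-graph, by Lemma~\ref{lemma:One_Message_One_Vertex} and the remark following Definition~\ref{defi:Proper_Rooted_sub-graph}. Its left-hand side equals $\sum_{v\in\Omega}R(v)$ and its right-hand side equals $I(X_\Omega;B\lvert X_{\Omega^c},U_{\Omega^c\cap\bbM})$. This shows $P_e^{(n)}\to 0$.

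Finally, I would relate the result to \cite[Theorem~3.2]{GunduzSimeone10}. When the channel is classical, i.e.\ $\calN$ acts as a classical channel $W_{B|A_{[T]}}$ and the states $\theta^{x_t}$ are diagonal, the region becomes exactly the Gündüz–Simeone region. The only difference is that our family of constraints is indexed by proper ancestral rather than proper rooted sub-graphs. The constraints for rooted sub-graphs that are not proper are implied by the others, because vertices without private messages contribute $R(v)=0$ while enlarging the conditioning can only change the mutual information consistently with the chain rule.

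The step I expect to be the main obstacle is this last redundancy argument and the exact matching between the $\calS$-indexed error exponents and the $\Omega$-indexed constraints. Multi-parent vertices complicate it, because their auxiliary $U_t$ appears in the conditioning. I would first verify it on the example of Fig.~\ref{fig:4M6T_Example}, and then argue in general via the superposition Markov structure, using $U_t\to X_t$ so that conditioning on $X_t$ subsumes $U_t$.
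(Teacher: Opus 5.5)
Your overall route is the paper's own. The corollary is stated there without a separate proof because it is just the reliability half of the proof of Theorem~\ref{thm:Achievable_Asymp}: the decoder and \eqref{eq:Pe_Thm} never use $\rho_E=\rho_0$, and dropping Lemma~\ref{lemma:Resolvability} removes the $\Xi$-constraints. What remains is the same $n$-fold, $\alpha\to 0$ limit and the same reduction from $\calS$-indexed to $\Omega$-indexed constraints as in Appendix~\ref{proof:thm:Achievable_Asymp}.

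Two claims in your reduction step are false as written, though both are easy to repair. First, for the proper ancestral sub-graph $\Omega$ formed by $\calT_\calS$ (plus $v_0$), the left-hand side $\sum_{s\in\calS}R_s$ need not equal $\sum_{v\in\Omega}R(v)$, because $\calS$ may omit private messages of vertices in $\Omega$. In Fig.~\ref{fig:4M6T_Example}, $\calS=\{2\}$ and $\calS=\{2,5\}$ both give $\Omega'=\{v_2,v_5,v_4,v_7,v_0\}$, with left-hand sides $R_2$ and $R_2+R_5$. What holds is $\sum_{s\in\calS}R_s\le\sum_{v\in\Omega}R(v)$ with an identical right-hand side. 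Hence each $\calS$-constraint is implied by the one for the maximal $\calS$ (all private messages of vertices in $\Omega$, whose $\calT_\calS$ is again $\Omega$). This is exactly how the paper discards the redundant constraints, and it is all achievability needs.

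Second, the relevant Markov structure is not that ``conditioning on $X_t$ subsumes $U_t$.'' The correct fact is that $B$ depends on $(U_{[T]},X_{[T]})$ only through $X_{[T]}$. This lets you drop $U_\Omega$ from the \emph{first} argument of $I(U_\Omega,X_\Omega;B\lvert U_{\Omega^c},X_{\Omega^c})$. The auxiliaries $U_{\Omega^c\cap\bbM}$ must stay in the conditioning: codewords in $\Omega$ are superposed on them, and $X_{\Omega^c}$ does not determine them. Deleting them would give $I(X_\Omega;B\lvert X_{\Omega^c})$, which is in general larger, so the resulting region would be unjustified.

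The same fact, with its direction made explicit, is what your G\"{u}nd\"{u}z--Simeone comparison needs. Suppose a proper rooted sub-graph $\Omega'$ in the sense of \cite{GunduzSimeone10} contains a vertex $w\notin\bbT$ with no child in $\Omega'$. Then $\Omega=\Omega'\setminus\{w\}$ is still ancestor-closed and has the same rate sum. With $Z=(X_{(\Omega')^c},U_{(\Omega')^c\cap\bbM})$ it satisfies $I(X_{\Omega'};B\lvert Z)\ge I(X_\Omega;B\lvert X_w,Z)\ge I(X_\Omega;B\lvert X_w,U_w,Z)$. The last step uses $I(U_w;B\lvert X_{[T]},Z)=0$. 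Iterating this removal until condition (iii) of Definition~\ref{defi:Proper_Rooted_sub-graph} holds shows that the extra G\"{u}nd\"{u}z--Simeone constraints are redundant, rather than leaving this to ``consistency with the chain rule.''
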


\begin{remark}[An Equivalent Characterization of the Achievable Rate Region in Theorem~\ref{thm:Achievable_Asymp}]
    \label{rem:Symmetry}
    If the rate constraints in \eqref{eq:inRnone} and \eqref{eq:thm_S} are symmetric--in the sense that for every proper ancestral sub-graph $\Omega \in \mathfrak{O}$ there exists a proper descendant sub-graph $\Xi \in \mathfrak{X}$, and for every proper descendant sub-graph $\Xi \in \mathfrak{X}$ there exists a proper ancestral sub-graph $\Omega \in \mathfrak{O}$--such that $\sum\limits_{v\in\Xi}R(v)=\sum\limits_{v\in\Omega}R(v)$, then the rate constraint $\sum\limits_{v\in\Xi}R(v)> I\pr{X_\Xi;E}$ in \eqref{eq:thm_S} can be rewritten as $I\left(X_\Omega;B\lvert X_{\Omega^c},U_{\Omega^c\cap\bbM}\right)> I\pr{X_\Xi;E}$, for all $\Xi \in \mathfrak{X}$ and $\Omega \in \mathfrak{O}$. For example, see Corollary~\ref{cor:qMAC_Helper}.
\end{remark}

\begin{remark}[Interpretation]
The rate constraints in \eqref{eq:inRnone} correspond to the communication rates for a \ac{MAC} with a general message set \cite{Han_MAC79,GunduzSimeone10,Romero17}. The existence of a secret shared key between the legitimate terminals plays a crucial role in covert communication. For instance, in point-to-point channels, the terminals must share a secret key of rate on the order of $\sqrt{n}$ bits in order to transmit on the order of $\sqrt{n}$ covert bits over $n$ channel uses, unless the legitimate receiver’s channel is strictly better than the warden’s channel \cite{LPD_by_Resolvability}. In the latter case, no secret key is required.

In our setting, as discussed in Remark~\ref{rem:Symmetry}, the rate constraints in \eqref{eq:thm_S}, when combined with those in \eqref{eq:inRnone}, can be interpreted as mutual-information constraints (see~Corollary~\ref{cor:qMAC_Helper}). Consequently, this combination can be viewed as conditions on the channel quality required to achieve keyless positive covert communication rates. Note that, by employing achievability techniques similar to those used in the proof of Theorem~\ref{thm:Achievable_Asymp}, one can show that if a secret shared key of sufficient rate is available, either between each transmitter and the receiver individually or jointly among all legitimate terminals, the rate constraints in \eqref{eq:thm_S} can be removed.
\end{remark}
\begin{remark}[Feasibility of Achieving a Positive Covert Rate]
The feasibility of achieving a positive covert rate under the conditions of Theorems~\ref{thm:Achievable_One_Shot} and~\ref{thm:Achievable_Asymp} is determined primarily by whether the covertness constraint $\rho_E=\rho_0$ can be satisfied. The state $\rho_0$, induced at the warden's output in the no-communication mode, depends only on the channel and the innocent input states $\phi_{0,t}$, $t\in[T]$. Consequently, the feasibility of positive-rate covert communication depends critically on whether cooperation among the transmitters can induce the same state $\rho_0$ at the warden while simultaneously conveying information to the legitimate receiver.

From a geometric perspective, the covertness constraint $\rho_E=\rho_0$ is satisfied if and only if $\rho_0$ belongs to the convex hull of the states that can be induced at the warden's output by non-innocent signaling. In a conventional \ac{MAC}, where each transmitter has access only to its own private message, this condition is generally not satisfied. However, when messages are assigned to multiple transmitters, cooperation may enlarge the set of output states that can be induced at the warden, potentially causing $\rho_0$ to lie within the achievable set.

Since precise channel knowledge is essential for controlling the state induced at the warden's output, unknown phase noise, dephasing, or fading may alter the induced state in a manner that prevents the equality $\rho_E=\rho_0$ from being maintained. In such scenarios, the positive covert rates guaranteed by the main results of this paper may disappear, consistent with known impossibility results for several channel models \cite[Theorem~1]{Bash_15}.

More generally, the feasibility of positive-rate covert communication depends jointly on the channel structure and the message assignment among the transmitters. For example, if some messages are available only at a single transmitter, then cooperation cannot be used to conceal those messages, and achieving positive covert rates for them may be impossible. Conversely, when the legitimate receiver's channel and the warden's channel exhibit sufficient asymmetry with respect to the channel inputs, transmitter cooperation may enable the covertness constraint to be satisfied and positive covert rates to be achieved, as illustrated by the examples in Section~\ref{sec:Main_Results}.
\end{remark}

\begin{remark}[Effect of an External Secret Shared Key]
The availability of an external secret shared key among the legitimate terminals may enlarge the achievable covert rate region in Theorem~\ref{thm:Achievable_Asymp} and, in some cases, enable positive covert rates even when the corresponding keyless system cannot. This is because a shared secret key can help satisfy the constraints
\begin{align}
\sum_{v\in\Xi}R(v) > I(X_{\Xi};E), \qquad \Xi\in\mathfrak{X},\label{eq:Consts_Key}
\end{align}
which appear in Theorem~\ref{thm:Achievable_Asymp}.

Since our framework involves multiple transmitters and general message sets, several secret-key-sharing models can be considered. One natural and arguably strongest scenario is the availability of a common secret key shared among all transmitters and the receiver. Another interesting scenario is to associate a separate secret key $K_s$, for $s\in[S]$, with each message source $M_s$, where every transmitter that has access to message $M_s$ also has access to the corresponding secret key $K_s$.

The amount of secret key required generally depends on the relative quality of the legitimate receiver's channel and the warden's channel. For example, when the legitimate receiver has a sufficiently strong advantage over the warden, positive covert rates may be achievable without any pre-shared secret key. In contrast, when the warden's channel is comparatively strong, additional secret-key resources may be required to satisfy \eqref{eq:Consts_Key}.

It is important to emphasize, however, that the availability of a secret key does not, in general, relax the covertness constraint $\rho_E=\rho_0$ \cite{LPD_on_AWGN,LPD_by_Resolvability,LPD_over_DMC}. A secret key can assist in meeting the resolvability requirements, but does not alter the fundamental requirement that the state induced at the warden during communication must be indistinguishable from that induced in the absence of communication.
\end{remark}
\begin{remark}[Deterministic vs.\ Stochastic Encoders]
The achievability schemes in Theorems~\ref{thm:Achievable_One_Shot} and \ref{thm:Achievable_Asymp} are developed for deterministic encoders. However, they can be readily extended to stochastic encoders via channel prefixing~\cite{CsiszarKorner}. Specifically, an auxiliary random variable is introduced at each transmitter, and the corresponding channel input is generated according to a conditional distribution induced by the channel prefixing operation~\cite{CsiszarKorner}. In Theorem~\ref{thm:Achievable_Asymp_cq}, we show that our results are optimal for classical-quantum channels when the transmitters are restricted to deterministic encoding.
\end{remark}
\begin{remark}[Detemistic Encoder vs Stochastic Encoder]
    The achievability schemes in Theorems~\ref{thm:Achievable_One_Shot} and \ref{thm:Achievable_Asymp} are based on deterministic encoding but they can be extended to stochastic encoders via channel prefixing~\cite{CsiszarKorner}. In this case, an auxiliary random variable is assigned to each transmitter, and then the channel input is generated by means of channel prefixing~\cite{CsiszarKorner}. In Theorem~\ref{thm:Achievable_Asymp_cq} we show that our results are optimal for classical-quantum channels when the transmitters are restricted to only employ deterministic encoding.
\end{remark}
\subsection{Capacity of Classical-Quantum Channels}
In this section, we show that our achievability scheme is optimal for covert communication over the classical-quantum \acp{MAC} with general message sets. For $t\in[T]$, let $\calX_t$ be a finite alphabet corresponding to the input system of Transmitter~t, and $\calH_b$ and $\calH_e$ be finite-dimensional Hilbert spaces corresponding to the output systems of the legitimate receiver and the warden. Let $\calN_{X_{[T]}\to BE}$ denote a classical-quantum channel, defined as a linear map that assigns classical inputs $X_{[T]}\in\calX_1\times\cdots\times\calX_T$ to a quantum state $\rho_{BE}\in\calD\left(\calH_b\otimes\calH_e\right)$.
\begin{theorem}
\label{thm:Achievable_Asymp_cq}
Let $\mathfrak{O}$ and $\mathfrak{X}$ denote the set of all proper ancestral sub-graphs and all proper descendant sub-graphs of the associated message graph $\Gamma$, respectively. The covert capacity of the classical-quantum \ac{MAC} $\calN_{X_{[T]}\to BE}$ with a special message hierarchy, as defined in Definition~\ref{defi:Message_Hierarchy}, when the transmitters are restricted to deterministic encoding~is
\begin{subequations}\label{eq:CJ_cq}
\begin{align}%
&\calC_{\textup{cq-MAC}}=\bigcup_{\rho_{\tilde{X}_{[T]}}\in\calG}
\nonumber\\
&\left.\begin{cases}(R,R_2,\cdots,R_S):\\
  \sum\limits_{v\in\Omega}R(v)< I\left(X_\Omega;B\lvert X_{\Omega^c},U_{\Omega^c\cap\bbM}\right),\,\,\Omega\in\mathfrak{O}
\end{cases}\hspace{-3mm}\right\},\label{eq:inRnone_cq}
\end{align}where
\begin{align}
  \calG \triangleq \left.\begin{cases}\rho_{\tilde{X}_{[T]}}:\\
  \rho_{\tilde{X}_{[T]}}=\sum\limits_{\tilde{x}_{[S]}}\prod\limits_{t=1}^Tp_{\tilde{X}_t\lvert X_{D_t}U_{D_t}}(\tilde{x}_t\lvert x_{D_t},u_{D_t})\\
\times\den{\tilde{x}_1}{\tilde{x}_1}_{\tilde{X}_1}\otimes\dots\otimes\den{\tilde{x}_T}{\tilde{x}_T}_{\tilde{X}_T},\\
\sum\limits_{v\in\Xi}R(v)\ge I\pr{X_\Xi;E},\Xi\in\mathfrak{X},\\
\rho_{BE}=\tra_{\tilde{X}_{[T]}}\sbr{\bbI\otimes\calN_{A_{[T]}\to BE}\pr{\rho_{\tilde{X}_{[T]}A_{[T]}}}},\\
\abs{\calU_i}\le S+\prod_{t\in\bbA_i}\abs{\calX_t},i\in\bbM,\\
\rho_E=\rho_0,\\
\end{cases}\hspace{-2mm}\right\},\label{eq:thm_S_cq}
\end{align} and $\tilde{x}_t=(x_t,u_t)$ if $t\in\bbM$ and $\tilde{x}_t=x_t$, otherwise. 
\end{subequations}
\end{theorem}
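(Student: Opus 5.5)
The proof splits into achievability and a converse, with cardinality bounds at the end. For achievability, I would specialize Theorem~\ref{thm:Achievable_Asymp} to the classical-quantum case by taking the input states $\theta^{x_t}_{A_t}=\den{x_t}{x_t}$, so that $\calN_{A_{[T]}\to BE}$ acts as the cq map $x_{[T]}\mapsto\rho_{BE}^{x_{[T]}}$. The region in \eqref{eq:inRnone}--\eqref{eq:thm_S} then coincides with \eqref{eq:CJ_cq}, apart from strict versus non-strict inequalities in the resolvability constraints. This difference disappears after taking closures, since the capacity region is defined as a closure. I would handle boundary points with $\sum_{v\in\Xi}R(v)=I(X_\Xi;E)$ by approximating them with points having slightly larger resolvability slack.

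For the converse, I would fix a sequence of deterministic codes with $P_e^{(n)}\to0$ and $\lVert\tau_{E^n}-\rho_0^{\otimes n}\rVert_1=\delta_n\to0$. With deterministic encoding, each codeword $x_t^n=f_t(M_{\calI_t})$ is a function of the messages. Following \cite{GunduzSimeone10}, I would define per-letter variables $X_{t,i}$, and auxiliaries $U_{t,i}$ for $t\in\bbM$. Natural choices are $U_{t,i}=(M_{\calI_t\text{-ancestor-common}}\text{ parts})$, i.e., the collection of messages or codewords of the parents that $X_t$ is superimposed on, together with a time-sharing variable $Q$ uniform on $[n]$.

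\textbf{Reliability bounds.} For each $\Omega\in\mathfrak{O}$, the messages $\{M(v)\}_{v\in\Omega}$ are independent of the messages outside $\Omega$. The sources outside $\Omega$ determine $X^n_{\Omega^c}$ and the auxiliaries $U^n_{\Omega^c\cap\bbM}$. Quantum Fano together with Holevo gives $n\sum_{v\in\Omega}R(v)\le I(M_\Omega;B^n|M_{\Omega^c})+n\epsilon_n$. Since the channel is memoryless and cq, this single-letterizes by chain rule and subadditivity to $\sum_i I(X_{\Omega,i};B_i|X_{\Omega^c,i},U_{\Omega^c\cap\bbM,i})$. I must check that the chosen $U$'s produce exactly the conditional structure $p_{\tilde X_t|X_{D_t}U_{D_t}}$ required in $\calG$.

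\textbf{Covertness and resolvability bounds.} For each $\Xi\in\mathfrak{X}$, deterministic encoding gives $H(M_\Xi)\ge I(M_\Xi;E^n)\ge I(X^n_\Xi;E^n)$. I would then lower bound $I(X^n_\Xi;E^n)$ by $\sum_i I(X_{\Xi,i};E_i)-n\gamma(\delta_n)$. Here I use $\bbD(\tau_{E^n}\|\rho_0^{\otimes n})$ being small in the sense of the standard covert converse \cite{LPD_over_DMC,Bullock_25}: write $I(X^n;E^n)=\bbD(\rho_{X^nE^n}\|\rho_{X^n}\otimes\rho_0^{\otimes n})-\bbD(\tau_{E^n}\|\rho_0^{\otimes n})$. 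The first term tensorizes to $\sum_i$ by superadditivity. The second needs continuity, for which I would use the Fannes--Audenaert inequality applied to the trace distance, or a bound showing $\frac1n\bbD\to0$.

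The covertness constraint itself follows as $\frac1n\sum_i\rho_{E_i}\to\rho_0$ by monotonicity of trace distance under partial trace and averaging. This yields $\rho_E=\rho_0$ for the time-shared single-letter state in the limit, using continuity and compactness once the alphabets are bounded. Finally, I would absorb $Q$ into the auxiliaries and apply Carath\'eodory/support lemma arguments to obtain $|\calU_i|\le S+\prod_{t\in\bbA_i}|\calX_t|$, preserving $\rho_E$ and the $S$ rate functionals.

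\textbf{Main obstacle.} I expect the hardest step to be the lower bound on $I(X^n_\Xi;E^n)$ by its single-letter sum. Unlike the reliability bound, the inequality goes the wrong way for a naive chain rule, and it relies on $\tau_{E^n}$ being close to the product state $\rho_0^{\otimes n}$. Trace-distance closeness with $\delta_n\to0$ gives only $o(n)$ control via Fannes-type continuity with a $\log\dim$ factor. I would first attempt the decomposition above, bounding $\bbD(\tau_{E^n}\|\rho_0^{\otimes n})$ by $o(n)$. If that fails, I would strengthen the covertness metric to relative entropy, as is standard in covert converses.
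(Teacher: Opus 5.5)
Your plan follows the paper's skeleton: achievability by specializing Theorem~\ref{thm:Achievable_Asymp} with $\theta^{x_t}_{A_t}=\den{x_t}{x_t}$; Fano plus chain rule for each $\Omega\in\mathfrak{O}$; $n\sum_{v\in\Xi}R(v)\ge I(X^n_\Xi;E^n)$ for each $\Xi\in\mathfrak{X}$; and a support lemma for $\abs{\calU_i}$. It differs in the two covertness-related single-letterizations.

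\emph{Resolvability bound.} The paper never uses relative entropy here. It writes $I(X^n_\Xi;E^n)=\sum_i[H(E_i|E^{i-1})-H(E_i|E^{i-1},X^n_\Xi)]$ and upper-bounds the second term by $H(E_i|X_{\Xi,i})$. It then controls the total correlation $\sum_iH(E_i)-H(E^n)$ by Fannes--Audenaert against $\rho_0^{\otimes n}$ (Lemma~\ref{lemma:Timing_Lemma}), with no assumption on $\rho_0$. Your decomposition gives exactly the same inequality if you keep the marginal terms. Superadditivity and joint convexity give $\bbD(\rho_{X^n_\Xi E^n}\|\rho_{X^n_\Xi}\otimes\rho_0^{\otimes n})\ge\sum_i[I(X_{\Xi,i};E_i)+\bbD(\tau_{E_i}\|\rho_0)]$. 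Moreover, $\bbD(\tau_{E^n}\|\rho_0^{\otimes n})-\sum_i\bbD(\tau_{E_i}\|\rho_0)=\sum_iH(E_i)-H(E^n)$. Equivalently, use $\bigotimes_i\tau_{E_i}$ as the reference state, which also removes support issues.

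If instead you discard the $\bbD(\tau_{E_i}\|\rho_0)$ terms and bound $\bbD(\tau_{E^n}\|\rho_0^{\otimes n})$ alone, you need $\rho_0$ full rank. In that case Fannes--Audenaert together with $\|\log\rho_0\|_\infty=\log(1/\lambda_{\min}(\rho_0))$ gives $\bbD(\tau_{E^n}\|\rho_0^{\otimes n})\le n\delta_n(\log d_E+\log(1/\lambda_{\min}(\rho_0)))+1=o(n)$, which suffices. So the obstacle you anticipate is not real. The fallback of switching to a relative-entropy covertness metric, which would prove a different theorem, is unnecessary.

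\emph{Covertness constraint.} For $\rho_E=\rho_0$ you argue that $\|\tau_{E_i}-\rho_0\|_1\le\delta_n$ for every $i$ by monotonicity under partial trace, then use convexity. This is Lemma~\ref{lemma:Lemma_Product} used directly. It is cleaner than the paper's chain through Pinsker, superadditivity and the Audenaert--Eisert reverse-Pinsker bound. As used there, that bound's constant involves the minimum eigenvalue of $\rho_0^{\otimes n}$, so the factor $\sqrt{8/(n\beta)}$ blows up exponentially in $n$. Your route avoids this entirely.

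Two smaller points.

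First, for $t\in\bbM$ the auxiliary should be the message tuple $U_{t,i}=M_{\calI_t}$: transmitter $t$'s private message together with those of its descendants, i.e., the paper's $M^{\bbD}$. It should not be per-letter codewords. Only conditioning on the messages can make the parents of $v_t$ conditionally independent, as the product form in $\calG$ requires. Also, in the message graph $X_t$ is superimposed on its descendants, not its parents.

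Second, ``absorbing $Q$'' is not free in this model. Appending $Q$ to the inputs of two different leaves makes them dependent, so the single-letter state leaves the product form required by $\calG$. The paper's identification $X_\Omega\triangleq(X_{\Omega,Q},Q)$ in \eqref{eq:Upper_R_cq} has the same feature. This step therefore needs either an explicit time-sharing variable in the region, or a separate argument that it can be removed. For the former, the per-letter bound $\|\tau_{E_i}-\rho_0\|_1\le\delta_n$ already gives $\rho_{E|Q=q}=\rho_0$ in the limit.
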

The achievability proof of Theorem~\ref{thm:Achievable_Asymp_cq} follows directly from that of Theorem~\ref{thm:Achievable_Asymp}, and the converse proof is provided in Appendix~\ref{proof:thm:Achievable_Asymp_cq}. The proof of the cardinality bounds follows the approach in \cite{Han_MAC79} and is omitted for brevity.

\section{\texorpdfstring{\ac{MAC}}{MAC} with a General Message Hierarchy}
\label{sec:General_Message_Hierarchy}
Now, we show that the characterization of the special message hierarchy provided in Section~\ref{sec:Special_Message_Hierarchy} can be applied to describe any \ac{MAC} with general message sets. 
Given a \ac{MAC} with any message structure, consider all possible pairs of sets $\calI_i$ and $\calI_j$, where $i\neq j$. If $\calI_i \cap \calI_j$ is neither empty nor equal to the message set of any existing transmitters, create a ``virtual transmitter'' that has access to the messages in $\calI_i \cap \calI_j$ but has no channel input. After considering all pairs of transmitters, apply the same procedure to the new \ac{MAC}, which includes the virtual transmitters, and repeat the process until no further virtual transmitters need to be created. Since the cardinality of the message sets assigned to virtual transmitters decreases with each iteration, this process will eventually terminate after a finite number of steps. Ultimately, this process results in a \ac{MAC} that adheres to the special message hierarchy. 
Note that, although the virtual transmitters do not have channel inputs, each virtual transmitter corresponds to an MP-vertex, and thus an auxiliary \ac{RV} is assigned to each of them.

As an example, consider the \ac{MAC} in Fig.~\ref{fig:4M6T} with the first transmitter removed. The resulting \ac{MAC} no longer satisfies the special message hierarchy property because the message sets of the second and third transmitters intersect in $\br{M_1}$, which is neither empty nor identical to the message set of any other transmitter. 
We then introduce a ``virtual" first transmitter, which has no channel input but has access to $M_1$. With this addition, the resulting \ac{MAC} satisfies the special message hierarchy property, and its message graph coincides with that in Fig.~\ref{fig:4M6T_Graph}. 

After introducing all virtual users as described above, the number of auxiliary variables can often be reduced. Specifically, for any virtual user without a private message, its auxiliary \ac{RV} can be set equal to that of its descendants, thereby eliminating the need for an additional auxiliary variable. For example, consider a 4 transmitter \ac{MAC} with $M_{\calI_1}=\br{M_1,M_2,M_3}$, $M_{\calI_2} = \br{M_2,M_3,M_4}$, $M_{\calI_3} = \br{M_2,M_5}$, $M_{\calI_4} = \br{M_3,M_6}$. Following the above algorithm, we introduce the following virtual users: $M_{\calI_5}=\br{M_2,M_3}$, $M_{\calI_6} = \br{M_2}$, $M_{\calI_7} = \br{M_3}$. Note that we assign auxiliary \acp{RV}, say $U_6$ and $U_7$, to the two virtual transmitters that are leaves of the message graph. For the virtual transmitter without a private message, i.e, $v_5$, we do not introduce an additional auxiliary \ac{RV}; instead, we set its auxiliary variable to $(U_6, U_7)$.

Using the above algorithm, Theorems~\ref{thm:Achievable_One_Shot}, \ref{thm:Achievable_Asymp}, and \ref{thm:Achievable_Asymp_cq} can be applied to any \ac{MAC} with a general message set, with the corresponding channel inputs of the virtual users taken to be empty.

\section{Examples: \texorpdfstring{\ac{MAC}}{MAC} with a Helper}
\label{sec:Examples}
In this section, we present three examples that demonstrate the feasibility of achieving a positive covert communication rate over channels with multiple transmitters. As illustrated in Fig.~\ref{fig:MAC_Helper}, consider a quantum \ac{MAC} with three transmitters and two messages, where the first and second transmitters aim to covertly transmit messages $M_1$ and $M_2$ to the receiver, respectively. The third transmitter, which has access to both $M_1$ and $M_2$, facilitates covert communication between the transmitters and the receiver. We begin by applying Theorem~\ref{thm:Achievable_Asymp} to derive an achievable rate region for the quantum \ac{MAC} under consideration. As a direct application of our results, we present a finite-dimensional quantum \ac{MAC} with three transmitters and two messages, and derive an achievable rate region for this channel. We next consider the classical Gaussian instance of this \ac{MAC} and derive its {\em capacity region}. We then extend the analysis to the quantum regime by studying the single-mode bosonic counterpart, for which we obtain an achievable rate region. 
\begin{figure*}[t!]
    \centering
    \hspace{-5mm}\begin{subfigure}[t]{0.5\textwidth}
        \centering
        \includegraphics[height=5.0cm]{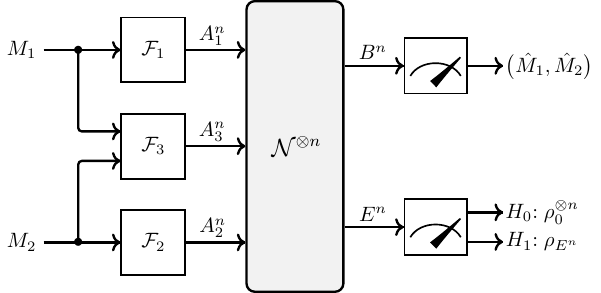}
        \caption{Covert Communication Over a  \ac{MAC} with a Helper}
        \label{fig:2M3T}
    \end{subfigure}%
    ~ 
    \hspace{-5mm}\begin{subfigure}[t]{0.7\textwidth}
        \centering
        \includegraphics[height=1.5in]{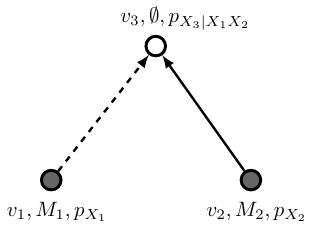}
        \caption{The associated message graph $\Gamma$}
        \label{fig:2M3T_Graph}
    \end{subfigure}
    \caption{A \ac{MAC} with a helper and its corresponding message graph.}
    \label{fig:MAC_Helper}
    \vspace{-0.7cm}
\end{figure*}
The following corollary establishes an achievable rate region for the quantum \ac{MAC} described above and depicted in Fig.~\ref{fig:MAC_Helper}.
\begin{corollary}
\label{cor:qMAC_Helper}
Define the rate region $\calR_{\textup{CC-qMAC}}$ as follows,
\begin{subequations}
\begin{align}%
&\calR_{\textup{CC-qMAC}}\triangleq\nonumber\\
&\bigcup_{\rho_{X_{[3]}A_{[3]}}\in\calG} %
\left\{ \begin{array}{rl}
  (R_1,R_2) \,:\;
    R_1&< I(X_1,X_3;B\lvert X_2),\\
    R_2&< I(X_2,X_3;B\lvert X_1),\\
    R_1+R_2&< I(X_1,X_2,X_3;B),
	\end{array}
\right\},\label{eq:MAC_Helper_Reliability}
\end{align}where
\begin{align}
  \calG \triangleq\left.\begin{cases}\rho_{X_{[3]}A_{[3]}}:\rho_{X_{[3]}A_{[3]}}=\sum\limits_{x_1}\sum\limits_{x_2}\sum\limits_{x_3}p_{X_1}(x_1)\\
  \times p_{X_2}(x_2)p_{X_3\lvert X_1X_2}(x_3\lvert x_1,x_2)\den{x_1}{x_1}_{X_1}\\
  \otimes\den{x_2}{x_2}_{X_2}\otimes\den{x_3}{x_3}_{X_3}\otimes\theta_{A_1}^{x_1}\otimes\theta_{A_2}^{x_2}\otimes\theta_{A_3}^{x_3},\\
I(X_1,X_3;B\lvert X_2)>I(X_1;E),\\
I(X_2,X_3;B\lvert X_1)>I(X_2;E),\\
I(X_1,X_2,X_3;B)>I(X_1,X_2,X_3;E),\\
\rho_{BE}=\tra_{X_{[3]}}\sbr{\bbI\otimes\calN_{A_{[3]}\to BE}\pr{\rho_{X_{[3]}A_{[3]}}}},\\
\rho_E=\rho_0,\\
\end{cases}\hspace{-2mm}\right\}.\label{eq:thm_qMAC_Helper_S}
\end{align} 
An inner bound for the covert capacity region of the quantum \ac{MAC} $\calN_{A_{[3]}\to BE}$, depicted in Fig.~\ref{fig:MAC_Helper},~is
    \begin{align}
        \calC_{\textup{CC-qMAC}}\supseteq\calR_{\textup{CC-qMAC}}.\nonumber
    \end{align}
    \end{subequations}
\end{corollary}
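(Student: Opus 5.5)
The plan is to obtain Corollary~\ref{cor:qMAC_Helper} as a direct specialization of Theorem~\ref{thm:Achievable_Asymp}. First I will check that the helper \ac{MAC} of Fig.~\ref{fig:MAC_Helper} has the special message hierarchy of Definition~\ref{defi:Message_Hierarchy}. Here $\calI_1=\{1\}$, $\calI_2=\{2\}$, $\calI_3=\{1,2\}$. The intersections are $\calI_1\cap\calI_2=\emptyset$, $\calI_1\cap\calI_3=\calI_1$ and $\calI_2\cap\calI_3=\calI_2$, so no virtual transmitters are needed.

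Next I will build the message graph per Definition~\ref{defi:Graph_Representation}, which has edges $(v_1,v_3)$, $(v_2,v_3)$, $(v_3,v_0)$. Every vertex has at most one parent, so $\bbM=\emptyset$ and no auxiliary variables $U$ appear. We also have $\bbL=\{v_1,v_2\}$, $\bbT=\{v_1,v_2,v_3\}$ with $M(1)=M_1$, $M(2)=M_2$, $M(3)=\emptyset$, and $R(v_3)=0$. I read $v_3$ as belonging to $\bbT$ only in the sense that it carries no private message, so it is treated as a vertex without a private message. The codebook distribution then reduces to $p_{X_1}p_{X_2}p_{X_3|X_1X_2}$, which matches $\calG$.

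I will then enumerate the proper ancestral sub-graphs using Definition~\ref{defi:Proper_Rooted_sub-graph}. Each must contain $v_0$ and be ancestor-closed, and every vertex without an incoming edge must carry a private message. This leaves $\{v_1,v_3,v_0\}$, $\{v_2,v_3,v_0\}$ and $\{v_1,v_2,v_3,v_0\}$. The set $\{v_3,v_0\}$ is excluded because $v_3$ has no private message. These three sets give exactly the constraints $R_1<I(X_1,X_3;B|X_2)$, $R_2<I(X_2,X_3;B|X_1)$ and $R_1+R_2<I(X_1,X_2,X_3;B)$.

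For the proper descendant sub-graphs (Definition~\ref{defi:Proper_Ance_Leaf_Path}), the candidates are $\{v_1\}$, $\{v_2\}$ and $\{v_1,v_2,v_3\}$. The set $\{v_1,v_2\}$ is excluded by condition (iii), since $v_3$ has no private message and all its descendants are included, so $v_3$ must be added. These give $R_1>I(X_1;E)$, $R_2>I(X_2;E)$ and $R_1+R_2>I(X_1,X_2,X_3;E)$.

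The final step, and the only nontrivial one, is to convert these resolvability rate constraints into the mutual-information conditions in \eqref{eq:thm_qMAC_Helper_S}, following Remark~\ref{rem:Symmetry}. The constraints are symmetric: $\{v_1\}\leftrightarrow\{v_1,v_3,v_0\}$, $\{v_2\}\leftrightarrow\{v_2,v_3,v_0\}$, and $\{v_1,v_2,v_3\}\leftrightarrow$ the full graph have equal rate sums. A rate pair satisfying both the lower and upper bounds therefore exists exactly when each upper bound exceeds the matching lower bound. More carefully, I need the polytope $\{I(X_1;E)<R_1<I(X_1,X_3;B|X_2),\ I(X_2;E)<R_2<\cdots,\ I(X_1X_2X_3;E)<R_1+R_2<I(X_1X_2X_3;B)\}$ to be nonempty. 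The three pairwise conditions, together with independence of $X_1,X_2$, give $I(X_1;E)+I(X_2;E)\le I(X_1,X_2;E)\le I(X_1,X_2,X_3;E)$, so the individual lower bounds are compatible with the sum lower bound. The sum upper bound is at most the sum of the individual upper bounds by the chain rule. A Fourier–Motzkin argument on this two-dimensional polytope then shows nonemptiness under the stated conditions.

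The obstacle is to phrase the region correctly. Rather than relying only on the existence of some admissible pair, I would present the region as the union over $\calG$ of the reliability polytope intersected with the resolvability constraints, and note that Theorem~\ref{thm:Achievable_Asymp} yields achievability of every pair in that intersection. By time-sharing/closure in the definition of capacity, together with the symmetric reformulation, this gives the stated inner bound.
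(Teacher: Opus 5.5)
Your route is the paper's: specialize Theorem~\ref{thm:Achievable_Asymp}, read off $\mathfrak{O}=\{\{v_1,v_3\},\{v_2,v_3\},\{v_1,v_2,v_3\}\}$ and $\mathfrak{X}=\{\{v_1\},\{v_2\},\{v_1,v_2,v_3\}\}$, and pair them through the symmetry of Remark~\ref{rem:Symmetry}. Your enumeration and the six resulting constraints are right. There is one slip: $\bbT$ is the set of vertices \emph{with} private messages, so $\bbT=\{v_1,v_2\}$, which is in fact how you use it. Your Fourier--Motzkin check is a useful addition that the paper omits. Using $X_1\perp X_2$ you correctly get $I(X_1;E)+I(X_2;E)\le I(X_1,X_2,X_3;E)$ and $I(X_1,X_2,X_3;B)\le I(X_1,X_3;B|X_2)+I(X_2,X_3;B|X_1)$. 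Hence the three conditions in $\calG$ are exactly the conditions for the two-sided rate polytope to be nonempty.

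The step that does not go through is your last sentence. Theorem~\ref{thm:Achievable_Asymp} only delivers pairs that satisfy the reliability bounds \emph{and} $R_1>I(X_1;E)$, $R_2>I(X_2;E)$, $R_1+R_2>I(X_1,X_2,X_3;E)$. The corollary's region keeps only the reliability bounds.

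\emph{Why time-sharing fails.} Concatenating covert codes (including the silent code, which yields $\rho_0^{\otimes n}$ exactly) gives only convex combinations. For a fixed $\rho$, the convex hull of the two-sided polytope and the origin is $\{\lambda r:\lambda\in[0,1]\}$ with $r$ in that polytope. If $I(X_2;E)>0$, take a pair with $R_1$ close to $I(X_1,X_3;B|X_2)$ and $R_2$ close to $0$. Reaching it forces $\lambda\approx 1$, hence $r_2\approx R_2<I(X_2;E)$, so the pair is not reached. Nothing in the argument guarantees that other distributions in $\calG$ fill in such points.

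\emph{Why padding fails.} You also cannot pad $M_2$ with dummy bits. Transmitter~2 shares no randomness with the helper, and the helper's codeword is indexed by $M_2$ itself (e.g., $X_3=X_1\oplus X_2$ in the finite example requires exact knowledge of $X_2$).

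The paper's proof passes over this same point by citing Remark~\ref{rem:Symmetry}, so your argument is no less complete than the paper's. What it actually establishes is the union over $\rho\in\calG$ of the two-sided polytopes, with the conditions in $\calG$ guaranteeing nonemptiness. Up to closure, this coincides with the stated region whenever $I(X_1;E)=I(X_2;E)=I(X_1,X_2,X_3;E)=0$, which holds in all three examples of Section~\ref{sec:Examples}. You should either state the corollary in the two-sided form or give a separate argument for rates below the warden-side terms.
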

\begin{proof}
    The proof follows from that of Theorem~\ref{thm:Achievable_Asymp}, noting that the associated message graph has three proper ancestral sub-graphs, $\mathfrak{O}=\br{\br{v_1,v_3}, \br{v_2,v_3}, \br{v_1,v_2,v_3}}$, and three proper descendant sub-graphs, $\mathfrak{X}=\br{\br{v_1}, \br{v_2}, \br{v_1,v_2,v_3}}$. Since $v_3$ has no private message, i.e., $R(v_3)=0$, the rate constraints associated with proper ancestral sub-graphs and those associated with proper descendant sub-graphs exhibit the symmetry described in Remark~\ref{rem:Symmetry}. Consequently, the two sets of constraints can be combined and expressed as the mutual information constraints in \eqref{eq:thm_qMAC_Helper_S}. 
\end{proof}
\begin{corollary}[Classical Communication Over a Quantum \ac{MAC}]
    By setting  $X_3=A_3=\emptyset$ and removing the covertness constraint $\rho_E=\rho_0$--and thus eliminating the mutual information constraints in \eqref{eq:thm_qMAC_Helper_S}--the achievable rate region described in Corollary~\ref{cor:qMAC_Helper} reduces to the achievable rate region for classical communication over quantum \acp{MAC}, presented in \cite[Theorem~9]{QMAC}.
\end{corollary}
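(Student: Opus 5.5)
The plan is to specialize Corollary~\ref{cor:qMAC_Helper} and then let $A_3$ become trivial and drop covertness, leaving the two-user region of \cite[Theorem~9]{QMAC}.

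\textbf{Step 1: set $X_3=A_3=\emptyset$.} Take $X_3$ deterministic, so $p_{X_3\lvert X_1X_2}$ is a point mass. Take $\calH_3$ trivial, so that $\theta_{A_3}^{x_3}$ is a fixed scalar state and $\calN_{A_{[3]}\to BE}$ reduces to a two-input channel $\calN_{A_1A_2\to BE}$. The admissible states in $\calG$ then become
\begin{align*}
\rho_{X_1X_2A_1A_2}=\sum_{x_1,x_2}p_{X_1}(x_1)p_{X_2}(x_2)\den{x_1}{x_1}\otimes\den{x_2}{x_2}\otimes\theta_{A_1}^{x_1}\otimes\theta_{A_2}^{x_2}.
\end{align*}
This is exactly the class of product input ensembles used in \cite{QMAC}.

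\textbf{Step 2: simplify the mutual informations.} Since $X_3$ is constant, the chain rule gives
\begin{align*}
I(X_1,X_3;B\lvert X_2)&=I(X_1;B\lvert X_2),\\
I(X_2,X_3;B\lvert X_1)&=I(X_2;B\lvert X_1),\\
I(X_1,X_2,X_3;B)&=I(X_1,X_2;B).
\end{align*}
These are evaluated on $\rho_{X_1X_2B}=\tra_E$ of the channel output, which is the quantity appearing in \cite[Theorem~9]{QMAC}.

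\textbf{Step 3: remove covertness.} Drop the constraint $\rho_E=\rho_0$ together with the three resolvability constraints in \eqref{eq:thm_qMAC_Helper_S}. This is justified by the corollary following Theorem~\ref{thm:Achievable_Asymp}. Those constraints arise only from the covertness analysis, that is, from \eqref{eq:Covertnes_Thm} and its asymptotic counterpart \eqref{eq:thm_S}. The reliability part of the proof, namely the simultaneous pinching decoder behind \eqref{eq:Pe_Thm}, does not use them. The union is then over all product input ensembles, and the region reads $R_1<I(X_1;B\lvert X_2)$, $R_2<I(X_2;B\lvert X_1)$, $R_1+R_2<I(X_1,X_2;B)$. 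This is \cite[Theorem~9]{QMAC}, with time sharing or convex hull handled as stated there.

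\textbf{Main obstacle: the message graph after Transmitter~3 is removed.} Once Transmitter~3 is gone, the message graph changes. The root $v_0$ now has children $v_1,v_2$ directly, with no multi-parent vertices. I would check that the proper ancestral sub-graphs $\{v_1\},\{v_2\},\{v_1,v_2\}$ (together with $v_0$) give exactly the three constraints above. Equivalently, setting $X_3$ constant in the three-transmitter graph yields the same inequalities, so the two routes agree. If one objects that Transmitter~3 must formally remain, it can be kept as a helper with a constant input. This changes neither the codebook, since superposition over a deterministic codeword is trivial, nor the decoder.
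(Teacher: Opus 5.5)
Your proposal is correct and takes the same approach the paper intends; the paper gives no argument beyond the specialization itself. Taking $X_3$ constant and $A_3$ trivial collapses the three bounds of Corollary~\ref{cor:qMAC_Helper} to $I(X_1;B\lvert X_2)$, $I(X_2;B\lvert X_1)$ and $I(X_1,X_2;B)$ over product input ensembles, and $\rho_E=\rho_0$ and the resolvability constraints can be dropped because they enter only through the covertness analysis. Your check on the two-transmitter message graph is a consistent but optional sanity check, and you are right to flag that any time sharing or convex hull is matched to \cite[Theorem~9]{QMAC} through closure.
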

\begin{corollary}[Communication Over a classical \ac{MAC} with a Helper]
\label{cor:Classical_MAC_Helper}
    By removing the covertness constraint $\rho_E=\rho_0$--and thus eliminating the mutual information constraints in \eqref{eq:thm_qMAC_Helper_S}--and assuming that the channel is classical, the achievable rate region described in Corollary~\ref{cor:qMAC_Helper} reduces to the capacity region for communication over a classical \ac{MAC} with a helper, presented in \cite[Problem~5.20]{ElGamalKim}, see also \cite{Han_MAC79}.
\end{corollary}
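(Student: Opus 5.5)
The statement to prove is Corollary~\ref{cor:Classical_MAC_Helper}. The plan is a specialization argument, carried out in three steps: specialize Corollary~\ref{cor:qMAC_Helper} to classical channels, drop the covertness constraints, and match the result with the known helper capacity region.

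\emph{Step 1 (classical channel).} Take the channel to be classical, $W_{B|X_1X_2X_3}$. Concretely, set $\theta_{A_t}^{x_t}=\den{x_t}{x_t}$ and let $\calN$ act diagonally, so every state in $\calG$ is a diagonal classical--quantum state. The quantum mutual informations in \eqref{eq:MAC_Helper_Reliability} then reduce to Shannon mutual informations under the joint law
\begin{align*}
p_{X_1}p_{X_2}p_{X_3|X_1X_2}W_{B|X_1X_2X_3}.
\end{align*}

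\emph{Step 2 (drop covertness).} Remove the constraint $\rho_E=\rho_0$. Because the three mutual-information inequalities in \eqref{eq:thm_qMAC_Helper_S} came from the resolvability constraints $\sum_{v\in\Xi}R(v)>I(X_\Xi;E)$ via Remark~\ref{rem:Symmetry}, they disappear together with \eqref{eq:Covertnes_Thm}, exactly as in the earlier corollaries. What remains is the union over $p_{X_1}p_{X_2}p_{X_3|X_1X_2}$ of
\begin{align*}
R_1&< I(X_1,X_3;B\lvert X_2),\\
R_2&< I(X_2,X_3;B\lvert X_1),\\
R_1+R_2&< I(X_1,X_2,X_3;B).
\end{align*}
This region is achievable by the achievability half of Theorem~\ref{thm:Achievable_Asymp}.

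\emph{Step 3 (identification with the known region).} Compare this region with \cite[Problem~5.20]{ElGamalKim}, i.e., Han's region \cite{Han_MAC79} for two private messages and a helper that knows both. That region is the union over the same product-then-conditional distributions, with the same three bounds. One may also check it directly as the Slepian--Wolf common-message MAC region with $X_3$ playing the role of the cooperative input and no common message. The standard converse shows it is also an outer bound.
\begin{itemize}
\item Apply Fano's inequality and define $X_{1i}=f_{1i}(M_1)$ and $X_{2i}=f_{2i}(M_2)$. Since $M_1$ and $M_2$ are independent, $X_{1i}$ and $X_{2i}$ are independent, and $X_{3i}$ is a function of $(M_1,M_2)$.
\item This gives
\begin{align*}
nR_1\le I(M_1;B^n|M_2)+n\epsilon_n\le \sum_i I(X_{1i},X_{3i};B_i|X_{2i})+n\epsilon_n,
\end{align*}
using memorylessness together with the fact that $M_2$ determines $X_2^n$. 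The bounds on $R_2$ and on $R_1+R_2$ follow in the same way.
\item Finish with a time-sharing variable $Q$. Because the region is convex in $p_{X_3|X_1X_2}$ for fixed product marginals, $Q$ can be absorbed into $X_1$ and $X_2$, e.g., by taking $(Q,X_1)$ as the new first input.
\end{itemize}

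The step I expect to need the most care is this last absorption of $Q$ while keeping $X_1$ independent of $X_2$. The remedy is to fold $Q$ into both inputs as $X_1'=(Q,X_1)$ and $X_2'=(Q,X_2)$. This creates dependence, so I would instead cite the known result \cite{ElGamalKim} for the converse and argue only that the two region expressions coincide, which is immediate from Step 2.
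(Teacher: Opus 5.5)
Your plan is the paper's argument: the paper gives no separate proof, only the specialization of Corollary~\ref{cor:qMAC_Helper} with the resolvability-derived inequalities in \eqref{eq:thm_qMAC_Helper_S} dropped. It takes the converse from \cite{ElGamalKim,Han_MAC79}, or, per the remark after the corollary, from Theorem~\ref{thm:Achievable_Asymp_cq}. So falling back on the citation puts you exactly where the paper is.

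The one thing to fix is the time-sharing variable $Q$. Your justification for absorbing it (concavity in $p_{X_3|X_1X_2}$ for fixed product marginals) is wrong, not merely delicate. It says nothing about mixing different marginals $p_{X_1|Q=q}\,p_{X_2|Q=q}$. Moreover, with $\abs{\calX_3}=1$ the problem is the ordinary two-user MAC, whose union of pentagons over product inputs is known not to be convex in general. Since the capacity region is convex, your Step-2 union cannot coincide with it literally in general, and the same caveat applies to the corollary as the paper states it. Replace ``immediate from Step~2'' with an identification up to convex hull, in two steps:
\begin{enumerate}[(i)]
\item Time sharing between codes makes the convex hull of the Step-2 region achievable.
\item Using $X_1\perp X_2$, we have $I(X_2,X_3;B\lvert X_1)\ge I(X_2;B\lvert X_1)=I(X_2;B,X_1)\ge I(X_2;B)$. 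Hence $I(X_1,X_3;B\lvert X_2)+I(X_2,X_3;B\lvert X_1)\ge I(X_1,X_2,X_3;B)$, so each region is a genuine pentagon. The $Q$-conditioned region over $p_Q\,p_{X_1|Q}\,p_{X_2|Q}\,p_{X_3|X_1X_2Q}$ then equals exactly that convex hull, because its corner points are $Q$-averages of the per-$q$ corner points.
\end{enumerate}

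Going through Theorem~\ref{thm:Achievable_Asymp_cq} would not avoid the issue. Step~$(g)$ in Appendix~\ref{proof:thm:Achievable_Asymp_cq} sets $X_\Omega\triangleq(X_{\Omega,Q},Q)$ and $X_{\Omega^c}\triangleq(X_{\Omega^c,Q},Q)$, which is precisely the double folding you rightly rejected because it destroys the independence of $X_1$ and $X_2$. The sound way to close the argument is the cited classical converse combined with this convexification.
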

\begin{corollary}[Covert Communication Over a classical \ac{MAC} with one message]
\label{cor:Covert_Classical_MAC_One_Message}
    By setting  $X_2=A_2=\emptyset$ and $M_2=\emptyset$, and assuming the channel is classical, the achievable rate region described in Corollary~\ref{cor:qMAC_Helper} reduces to the capacity region for covert communication over the classical \ac{MAC} with one message, presented in \cite[Theorem~36]{MyDissertation}, when there is no secret-shared key between the transmitter and the receiver. 
\end{corollary}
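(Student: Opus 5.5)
The plan is to specialize Corollary~\ref{cor:qMAC_Helper} to the classical, single-message setting and then match the resulting region with the one in \cite[Theorem~36]{MyDissertation}. Setting $X_2=A_2=\emptyset$ and $M_2=\emptyset$ reduces the message graph of Fig.~\ref{fig:2M3T_Graph} to two vertices. Vertex $v_1$ carries $M_1$, and $v_3$ has no private message but knows $M_1$. The input distribution in $\calG$ therefore factorizes as $p_{X_1}(x_1)p_{X_3\lvert X_1}(x_3\lvert x_1)$. Since $R_2=0$ and $X_2$ is trivial, the second reliability constraint in \eqref{eq:MAC_Helper_Reliability} reads $0<I(X_3;B\lvert X_1)$. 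That inequality does not restrict $R_1$. The remaining two constraints collapse to the single constraint $R_1<I(X_1,X_3;B)$.

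Next I will simplify the conditions in \eqref{eq:thm_qMAC_Helper_S} the same way. They become $I(X_1,X_3;B)>I(X_1;E)$ and $I(X_1,X_3;B)>I(X_1,X_3;E)$, together with the marginal condition $\rho_E=\rho_0$. The middle condition from $v_2$ either disappears or becomes vacuous. I have to be careful here, because the derivation in the proof of Corollary~\ref{cor:qMAC_Helper} combined reliability and resolvability constraints through Remark~\ref{rem:Symmetry}. So I will recompute the proper ancestral and proper descendant sub-graphs of the reduced graph directly and not rely on the merged form. The reduced graph has $\mathfrak{O}=\{\{v_1,v_3\}\}$, $\mathfrak{X}=\{\{v_1\},\{v_1,v_3\}\}$, and $R(v_3)=0$. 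This gives $R_1<I(X_1,X_3;B)$, $R_1>I(X_1;E)$ and $R_1>I(X_1,X_3;E)$. Because $I(X_1,X_3;E)\ge I(X_1;E)$ by the chain rule, the condition $R_1>I(X_1;E)$ is implied by the other one and can be dropped.

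In the classical case, with $B,E$ random variables and $W_{BE\lvert X_1X_3}$ the channel, the region is the union over $p_{X_1}p_{X_3\lvert X_1}$ with $P_E=Q_0$ of the set of $R_1$ satisfying $I(X_1,X_3;E)<R_1<I(X_1,X_3;B)$. Achievability then follows from Theorem~\ref{thm:Achievable_Asymp}. The converse is the step I expect to be the main obstacle, because Corollary~\ref{cor:qMAC_Helper} only gives an inner bound. For optimality I will cite the converse of \cite[Theorem~36]{MyDissertation} directly, as a classical special case, and check that its region has exactly this form.

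In that check I need to confirm two things. First, the dissertation's expression uses a joint $p_{X_1X_3}$, which is equivalent to $p_{X_1}p_{X_3\lvert X_1}$. Second, its keyless resolvability condition is stated with the joint input $(X_1,X_3)$. If instead it uses an auxiliary variable with channel prefixing, I will argue that deterministic encoding in our scheme matches it up to relabeling, or I will state the reduction modulo that prefixing, in line with the remark on stochastic encoders. A boundary subtlety with strict versus non-strict inequalities is handled by taking the closure.
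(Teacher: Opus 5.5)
Your proposal is correct and follows the paper's route. You specialize Corollary~\ref{cor:qMAC_Helper}, obtain the union over $p_{X_1X_3}$ with $\rho_E=\rho_0$ of $I(X_1,X_3;E)<R_1<I(X_1,X_3;B)$, and identify it with \cite[Theorem~36]{MyDissertation}. Recomputing from the reduced message graph is the right move, because literal substitution would leave the spurious strict condition $I(X_3;B\lvert X_1)>0$. The paper also notes that optimality can be recovered internally, for deterministic encoders, from the converse of Theorem~\ref{thm:Achievable_Asymp_cq} rather than cited.

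One small slip: by condition~(iii) of Definition~\ref{defi:Proper_Ance_Leaf_Path}, $\{v_1\}$ is not a proper descendant sub-graph of the reduced graph, since it contains all private-message descendants of $v_3$, so $\mathfrak{X}=\{\{v_1,v_3\}\}$. This only removes the constraint $R_1>I(X_1;E)$, which you had already discarded as redundant.
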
Note that the classical capacity results mentioned in Corollary~\ref{cor:Classical_MAC_Helper} and Corollary~\ref{cor:Covert_Classical_MAC_One_Message} can also be recovered using Theorem~\ref{thm:Achievable_Asymp_cq}.
\subsection{A Finite-Dimensional \texorpdfstring{\ac{MAC}}{MAC} with a Helper}
\label{sec:Example_Finite}
\setcounter{equation}{17}
Let $\calH_{A_1}=\calH_{A_2}=\calH_{A_3}=\calH_E=\bbC^2$ and $\calH_B=\bbC^8$, 
and let the innocent input states be $\phi_{0,1}
=\phi_{0,2}=\phi_{0,3}=\ketbra{0}{0}$. 
\begin{subequations}\label{eq:Finit_Exam_Channel}
Define the following isometry
\begin{align}
    W\triangleq \ket{0}_E\otimes\Pi_+ + \ket{1}_E\otimes\Pi_-,
\end{align}where $\Pi_+
=\frac{\bbI+V}{2}$, $\Pi_-=\frac{\bbI-V}{2}$, and $V = Z\otimes Z\otimes Z$, with $Z$ denoting the Pauli-$Z$ operator. Let the channel be as follows
\begin{align}
    &\calN_{A_{[3]}\to BE}(\rho_{A_{[3]}})=W\rho_{A_{[3]}}W^\dagger\nonumber\\
    &=\ketbra{0}{0}_E\otimes \Pi_+\rho_{A_{[3]}}\Pi_+ +\ketbra{1}{1}_E\otimes\Pi_-\rho_{A_{[3]}}\Pi_- \nonumber\\
    &\qquad+ \ketbra{0}{1}_E\otimes\Pi_+\rho_{A_{[3]}}\Pi_- + \ketbra{1}{0}_E\otimes\Pi_-\rho_{A_{[3]}}\Pi_+.
\end{align} 
\end{subequations}
Therefore, the warden's channel is
\begin{align*}
    \calN_{A_{[3]}\to E}(\rho_{A_{[3]}})&=\tra_B\pr{W\rho_{A_{[3]}}W^\dagger}\nonumber\\
&= \tra(\Pi_+\rho_{A_{[3]}})
\ketbra{0}{0} + \tra(\Pi_-\rho_{A_{[3]}})
\ketbra{1}{1},
\end{align*}where the second equality follows from the cyclicity of the trace operator and since $\Pi_+\Pi_-=\Pi_-\Pi_+=0$, $\Pi_+^2=\Pi_+$, and $\Pi_-^2=\Pi_-$. Observe that the computational-basis states $\ket{000}$, $\ket{011}$, $\ket{101}$, and $\ket{110}$ belong to the (+1)-eigenspace of $V=Z\otimes Z\otimes Z$. Consequently, each of the corresponding density operators induces the output state $\ketbra{0}{0}$ at the warden, whereas the remaining computational-basis states induce the output state $\ketbra{1}{1}$.  
Also, the legitimate receiver's channel is 
\begin{align*}
    \calN_{A_{[3]}\to B}(\rho_{A_{[3]}})&=\tra_E\pr{W\rho_{A_{[3]}}W^\dagger}\nonumber\\
&= \Pi_+\rho_{A_{[3]}}\Pi_+ + \Pi_-\rho_{A_{[3]}}\Pi_-.
\end{align*}
\begin{corollary}
\label{cor:Example_Finite}
An achievable rate region for the quantum \ac{MAC} $\calN_{A_{[3]}\to BE}$ defined in \eqref{eq:Finit_Exam_Channel} is
\begin{align}
     \calC_{\textup{CC-qMAC}}=\left\{
(R_1,R_2):
\;
R_1<1,\;
R_2<1
\right\}.\nonumber
\end{align}
\end{corollary}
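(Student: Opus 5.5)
We will obtain this region by a direct application of Corollary~\ref{cor:qMAC_Helper}, using classical computational-basis inputs. Concretely, we take $\theta_{A_t}^{x_t}=\ketbra{x_t}{x_t}$ with $x_t\in\{0,1\}$ for $t\in[3]$. We let $X_1$ and $X_2$ be independent and uniform on $\{0,1\}$. The helper's input is the deterministic function $X_3=X_1\oplus X_2$, so $p_{X_3\lvert X_1X_2}$ is a point mass.

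\textbf{Covertness.} With this choice every input triple $\ket{x_1x_2x_3}$ has even parity, so it lies in the $(+1)$-eigenspace of $V=Z\otimes Z\otimes Z$. By the computation of the warden's channel preceding the corollary, each such input induces $\ketbra{0}{0}$ at $E$. The innocent input $\ket{000}$ gives $\rho_0=\ketbra{0}{0}$, so $\rho_E=\rho_0$ holds exactly. Moreover, $\rho_E$ does not depend on $(x_1,x_2,x_3)$, hence $I(X_1;E)=I(X_2;E)=I(X_1,X_2,X_3;E)=0$.

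\textbf{Reliability.} For an even-parity input, $\Pi_-\ket{x_1x_2x_3}=0$, so the receiver obtains $\ketbra{x_1x_2x_3}{x_1x_2x_3}$. These are orthogonal pure states, so the cq-state on $X_{[3]}B$ is classical with $B$ a copy of $(X_1,X_2,X_3)$. Since $X_3$ is determined by $(X_1,X_2)$, we get $I(X_1,X_3;B\lvert X_2)=H(X_1X_3\lvert X_2)=1$. Likewise $I(X_2,X_3;B\lvert X_1)=1$ and $I(X_1,X_2,X_3;B)=H(X_1X_2)=2$.

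\textbf{Conclusion.} These values satisfy the strict inequalities defining $\calG$ in Corollary~\ref{cor:qMAC_Helper}: $1>0$, $1>0$, and $2>0$. The rate constraints become $R_1<1$, $R_2<1$, and $R_1+R_2<2$. The sum-rate constraint is implied by the two individual ones, which yields the stated region.

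A converse reading of the ``$=$'' in the statement would be sketched as follows. The receiver's output space restricted to $\calH_{A_1}\otimes\calH_{A_2}\otimes\calH_{A_3}$ has dimension $8$. Conditioning on the other transmitter caps each individual conditional mutual information by $\log\dim\calH_{A_t}$-type bounds. We would therefore argue $R_s\le 1$, bounding $I(X_1,X_3;B\lvert X_2)$ by the effective dimension $2$ of the degrees of freedom not fixed by $X_2$ under even parity. We expect this step to be the delicate one if a converse is intended. For the achievability claim itself, the main point to verify carefully is the orthogonality of the receiver's output states together with the exact satisfaction of $\rho_E=\rho_0$, both of which follow from the parity structure of $V$.
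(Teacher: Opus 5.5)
Your proposal is correct and is essentially the paper's proof: $X_1,X_2$ i.i.d.\ Bernoulli$(\tfrac12)$ with computational-basis encoding and the helper sending $X_3=X_1\oplus X_2$, so every input lies in the $+1$-eigenspace of $Z\otimes Z\otimes Z$. This gives $\rho_E=\rho_0=\ketbra{0}{0}$ and a noiseless channel to $B$, and the resulting mutual-information values are then substituted into Corollary~\ref{cor:qMAC_Helper}. The paper claims only achievability (the ``$=$'' is notational), so your converse sketch is not needed; it also could not rest on dimension counting alone, because it relies on the inputs being unentangled across transmitters. For instance, with a shared even-parity state $\frac12\sum_{x_1\oplus x_2\oplus x_3=0}\ket{x_1x_2x_3}$, Transmitters~1 and~3 could covertly send two bits of $M_1$ per use via $Z_1^aZ_3^b$.
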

\begin{proof}
For $t\in[3]$, let $\calX_t=\{0,1\}$, $\theta_{A_t}^{x_t}=\ketbra{x_t}{x_t}$. The helper chooses $X_3=X_1\oplus X_2$, hence the set of transmitted states is $\br{\ketbra{000}{000},\ketbra{011}{011},\ketbra{101}{101},\ketbra{110}{110}}$. 
With this choice, every transmitted codeword belongs to the $+1$-eigenspace of $V=Z\otimes Z\otimes Z$.
Hence, the warden always observes the state $\rho_E=\ketbra{0}{0}$. 
Since the innocent input state is $\ketbra{000}{000}$, the output of the warden in the no-communication mode is also $\rho_0=\ketbra{0}{0}$. Therefore, the covertness constraint is satisfied.

Furthermore, if $\ket{x_1x_2x_3}\in\br{\ket{000},\ket{011},\ket{101},\ket{110}}$ we have $\Pi_+\ket{x_1x_2x_3}=\ket{x_1x_2x_3}$ and $\Pi_-\ket{x_1x_2x_3}=0$ and therefore $\calN_{A_{[3]}\to B}(\rho_{A_{[3]}})=\rho_{A_{[3]}}$. Consequently,
\begin{subequations}\label{eq:Exam_Finite_MIs}
\begin{align}
&I(X_1;E)=I(X_2;E)=I(X_1,X_2,X_3;E)=0,\\
&I(X_1,X_3;B|X_2)=H(X_1|X_2)=1,\\
&I(X_2,X_3;B|X_1)=H(X_2|X_1)=1,\\
&I(X_1,X_2,X_3;B)=H(X_1,X_2)=2,
\end{align}
\end{subequations}
where the last equalities follow from choosing $X_1$ and $X_2$ independently according to the Bernoulli-$\frac{1}{2}$ distribution. 
Substituting \eqref{eq:Exam_Finite_MIs} into Corollary~\ref{cor:qMAC_Helper} completes the proof.
\end{proof}

\subsection{Classical Gaussian \texorpdfstring{\ac{MAC}}{MAC} with a Helper}
\label{sec:Example_Gaussian}
Let the channel in the problem depicted in Fig.~\ref{fig:MAC_Helper} be a classical Gaussian channel, where the channel outputs are given by
\begin{subequations}\label{eq:Gaussian_MAC_Helper}
\begin{align}
    B&=X_1+X_2+hX_3+N_B,\,\, E=X_1+X_2+X_3+N_E,\label{eq:Gaussian_Channel_Law}
\end{align}and $h\in\bbR$ is a constant known by all the terminals, $N_B\sim\calN(0,\sigma_B^2)$ and $N_E\sim\calN(0,\sigma_E^2)$ are independent white Gaussian noises, and both are independent of the channel inputs $(X_1,X_2,X_3)$. We assume that the channel inputs are subject to the following power constraints
\begin{align}
    \frac{1}{n}\sum_{i=1}^nx_{1,i}^2&\le P_1,\quad\frac{1}{n}\sum_{i=1}^nx_{2,i}^2\le P_2,\quad
    \frac{1}{n}\sum_{i=1}^nx_{3,i}^2\le P_3.\label{eq:Power_Const_3}
\end{align}
\end{subequations}
Also, let the innocent symbols be $x_{0,1}=x_{0,2}=x_{0,3}=0$; therefore, the distribution induced at the output of the warden in the no-communication mode is $q_0^{\otimes n}$, where $q_0=\calN(0,\sigma_E^2)$. We assume that the covertness constraint is as defined in Section~\ref{sec:Classical_Definition}, and define the capacity as the closure of all achievable rates, denoted by $\calC_{\textup{CC-GMAC}}$, which is characterized in the following theorem.
\begin{theorem}
\label{thm:Capacity_Gaussian}
    The covert capacity of the three-user \ac{MAC} described in \eqref{eq:Gaussian_MAC_Helper} is 
\begin{align}
&\calC_{\textup{CC-GMAC}}\triangleq
  \left\{ \begin{array}{rl}
  &\hspace{-4mm}(R_1,R_2):\\
&\hspace{-4mm}R_1<\frac{1}{2}\log\pr{1+\frac{(h-1)^2\tilde{P}_1}{\sigma_B^2}}\\
&\hspace{-4mm}R_2<\frac{1}{2}\log\pr{1+\frac{(h-1)^2\tilde{P}_2}{\sigma_B^2}}\\
&\hspace{-4mm}R_1+R_2<\frac{1}{2}\log\pr{1+\frac{(h-1)^2\tilde{P}_3}{\sigma_B^2}}\\
	\end{array}\nonumber
\right\},
\end{align}where $\tilde{P}_1\triangleq\min\br{P_1,P_3}$, $\tilde{P}_2\triangleq\min\br{P_2,P_3}$, and $\tilde{P}_3\triangleq\min\br{P_1+P_2,P_3}$.
\end{theorem}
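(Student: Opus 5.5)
The plan is to first show that the covertness constraint $\rho_E=\rho_0$ forces the helper to cancel the other two inputs at the warden. The channel then reduces to a scaled Gaussian \ac{MAC} seen by the receiver, with inputs $X_1,X_2$ and an effective joint power budget set by $P_3$. Achievability will follow from Corollary~\ref{cor:qMAC_Helper}, specialized to the classical setting of Section~\ref{sec:Classical_Definition}. The converse will combine Fano's inequality with a single-letter limiting argument on the warden's output. I assume $h\neq 1$ throughout; if $h=1$ the region is trivial.

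\emph{Achievability.} The key observation is that $E=S+N_E$ with $S\triangleq X_1+X_2+X_3$. Since the characteristic function of $N_E$ never vanishes, $p_E=q_0=\calN(0,\sigma_E^2)$ holds if and only if $S=0$ almost surely. I therefore choose $X_1\sim\calN(0,P_1')$ and $X_2\sim\calN(0,P_2')$ independent, and set $X_3=-(X_1+X_2)$. The powers must satisfy $P_1'\le P_1$, $P_2'\le P_2$ and $P_1'+P_2'\le P_3$. Then $E=N_E$ exactly, so every warden term in \eqref{eq:thm_qMAC_Helper_S} equals zero. The receiver sees $B=(1-h)(X_1+X_2)+N_B$, so that, for example,
\begin{align*}
I(X_1,X_3;B\lvert X_2)=\tfrac12\log\pr{1+\tfrac{(h-1)^2P_1'}{\sigma_B^2}},\quad I(X_1,X_2,X_3;B)=\tfrac12\log\pr{1+\tfrac{(h-1)^2(P_1'+P_2')}{\sigma_B^2}},
\end{align*}
and these are strictly positive, so the conditions in $\calG$ hold. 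Theorem~\ref{thm:Achievable_Asymp} is stated for finite alphabets, so I extend it to Gaussian inputs under power constraints by the standard discretization/quantization argument, with a small power backoff and an expurgation step to enforce \eqref{eq:Power_Const_3}. The last step is to check that the union of the resulting pentagons over admissible $(P_1',P_2')$, convexified by time sharing, equals the stated region. The individual bounds come from taking $P_1'=\tilde P_1$ or $P_2'=\tilde P_2$, and the sum bound from taking $P_1'+P_2'=\tilde P_3$. Corner points are obtained by splitting the total power, with time sharing covering the region between them.

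\emph{Converse.} Fano's inequality, with deterministic encoders so that $X_3^n$ is a function of $(M_1,M_2)$, gives
\begin{align*}
nR_1\le I(X_1^n,X_3^n;B^n\lvert X_2^n)+n\epsilon_n\le\sum_i\sbr{h(B_i\lvert X_{2,i})-h(N_B)}+n\epsilon_n,
\end{align*}
and analogous bounds hold for $R_2$ and $R_1+R_2$. Introducing a uniform time index $Q$ single-letterizes these bounds with a joint law $p^{(n)}_{X_1X_2X_3}$ satisfying the three average power constraints. On the covertness side, convexity of total variation gives $\lVert p^{(n)}_E-q_0\rVert_1\le\frac1n\sum_i\lVert p_{E_i}-q_0\rVert_1\le\lVert p_{E^n}-q_0^{\otimes n}\rVert_1\to0$.

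The main obstacle is passing to the limit, because total-variation closeness does not give $\bbE[S^2]\to0$ directly. I will handle this by tightness. The laws $p^{(n)}_{X_1X_2X_3}$ have uniformly bounded second moments, so a subsequence converges weakly to some $p^\star$. By Fatou, $p^\star$ still satisfies the power constraints. Weak convergence of $p^{(n)}_E$ to $q_0$ together with nonvanishing noise characteristic functions forces $S=0$ a.s. under $p^\star$.

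It remains to show that the conditional output entropies $h(B\lvert X_2)$ and the like converge. For this I will use continuity of differential entropy of Gaussian-noise-smoothed outputs under weak convergence with bounded second moments, in the style of Wu--Verd\'u; this is the step I expect to need the most care. Under $p^\star$ we have $B=(1-h)(X_1+X_2)+N_B$, and the Gaussian maximum-entropy bound then yields the stated inequalities. These involve the variances of $X_1$, $X_2$ and $X_1+X_2$. Because $\bbE[(X_1+X_2)^2]=\bbE[X_3^2]\le P_3$, each is bounded by $\tilde P_1$, $\tilde P_2$ and $\tilde P_3$ respectively. Maximizing each term separately under these constraints reproduces the region in Theorem~\ref{thm:Capacity_Gaussian}.
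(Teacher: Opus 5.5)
Your converse has a gap in its last step: the time-sharing mixture breaks the independence of $X_1$ and $X_2$, and the final variance bound needs it.

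When you introduce $Q$ and keep only the mixed law $p^{(n)}_{X_1X_2X_3}=\frac1n\sum_{i}p_{X_{1,i}X_{2,i}X_{3,i}}$, $X_1$ and $X_2$ are no longer independent. Under $p^{(n)}$ their covariance equals the empirical covariance of the mean profiles $(\bbE X_{1,i},\bbE X_{2,i})_{i\in[n]}$. This can be positive, and nothing prevents it from persisting in $p^\star$.

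Your claim that the variance of $X_1+X_2$ is at most $\tilde P_3=\min\{P_1+P_2,P_3\}$ therefore holds only for the $P_3$ part. Here is a concrete failure. Take $P_1=P_2=P$ and $P_3\ge 4P$. Set $X_{1,i}=X_{2,i}=\sqrt{P}$ on the first half of the block and $-\sqrt{P}$ on the second half, with $X_{3,i}=-(X_{1,i}+X_{2,i})$. All power constraints hold and $S\equiv 0$. Yet under $p^{(n)}$ we have $X_1+X_2=\pm2\sqrt{P}$ equiprobably, so $\mathrm{Var}(X_1+X_2)=4P>P_1+P_2$.

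Such known, helper-cancelled offsets can be superimposed on any code after a small power backoff. So when $P_3>P_1+P_2$, your chain $R_1+R_2\le \dent_{p^\star}(B)-\dent(N_B)$ bounds the variance only by $\min\{(\sqrt{P_1}+\sqrt{P_2})^2,P_3\}$. It does not give the theorem's sum-rate bound. The individual bounds have the same defect if you control $\dent(B|X_2)$ through the unconditional variance of $X_1$: anti-aligned offsets make $\mathrm{Var}(X_1)$ as large as $P_1$ even when $P_3=0$.

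\textbf{Repair.} Center each letter before mixing.
\begin{itemize}
\item Define $\bar X_{t,i}\triangleq X_{t,i}-\bbE[X_{t,i}]$. The entropy terms are unchanged, since $\dent(B_i)=\dent(\bar B_i)$ and $\dent(B_i|X_{2,i})=\dent(\bar B_i|\bar X_{2,i})$.
\item The centered mixture satisfies $\bbE[\bar X_1\bar X_2]=\frac1n\sum_i\bbE[\bar X_{1,i}]\,\bbE[\bar X_{2,i}]=0$. Hence $\bbE[(\bar X_1+\bar X_2)^2]\le P_1+P_2$, and this survives the weak limit by Fatou.
\item Covertness transfers. Your characteristic-function argument gives $S_Q\to0$ in distribution. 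Bounded second moments make $S_Q$ uniformly integrable, so $\frac1n\sum_i|\bbE S_i|\le\bbE|S_Q|\to0$, and the centered sum also tends to $0$.
\item For the individual rates, work under the limit law, where $\bar X_3=-(\bar X_1+\bar X_2)$. Use $\dent(\bar B|\bar X_2)\le\frac12\log 2\pi e\,((1-h)^2\,\bbE[\mathrm{Var}(\bar X_1|\bar X_2)]+\sigma_B^2)$ together with $\bbE[\mathrm{Var}(\bar X_1|\bar X_2)]\le\min\{\bbE[\bar X_1^2],\bbE[(\bar X_1+\bar X_2)^2]\}\le\min\{P_1,P_3\}$. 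This needs no independence at all.
\item For the entropy limit, upper semicontinuity suffices. Write $\dent(B|X_2)=\dent(B)-I(X_2;B)$. Here $\dent(B)$ is weakly continuous under Gaussian smoothing with bounded second moments, and $I(X_2;B)$ is weakly lower semicontinuous.
\end{itemize}

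With these changes your route goes through. Its passage from approximate to exact covertness (TV convexity, tightness, nonvanishing characteristic function) is more careful than the paper's. The paper writes the single-letter bounds through second moments and cross-correlations, treats $X_1$ and $X_2$ as independent, and simply asserts $p_E=q_0$ before matching variances.

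A minor achievability point: the helper's constraint $\lVert x_1(m_1)+x_2(m_2)\rVert^2\le nP_3$ couples the two codebooks, so violating pairs cannot be expurgated. Instead, let the helper send any feasible signal on the vanishing fraction of such pairs. This costs only $o(1)$ in average error probability and in total variation.
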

\begin{remark}
    When the third transmitter in \eqref{eq:Gaussian_Channel_Law} is removed, the problem reduces to covert communication over a two-user \ac{MAC} channel, as studied in \cite{MAC_LPD}. In that setting, the authors show that the covert capacity obeys the square root law and therefore vanishes as the block length grows. This behavior is also reflected in Theorem~\ref{thm:Capacity_Gaussian}: setting $P_3 = 0$ results in a covert capacity of zero. 
\end{remark}
The capacity region in Theorem~\ref{thm:Capacity_Gaussian} is plotted in Fig.~\ref{fig:Plot_Gaussian} for $\sigma_B^2 = 1$, $h = 3$, and $P_1 = P_2 = 5$, with six different values of $P_3\in\br{0,0.5,1,3,6,12}$. As shown in the figure, the capacity region enlarges as $P_3$ increases. Moreover, for this example, the capacity region remains unchanged for $P_3 \ge P_1 + P_2 = 10$, since in this case, both transmitters can already communicate with the legitimate receiver using their maximum power. 
\begin{figure}[t!]
\centering
\includegraphics[width=8.5cm]{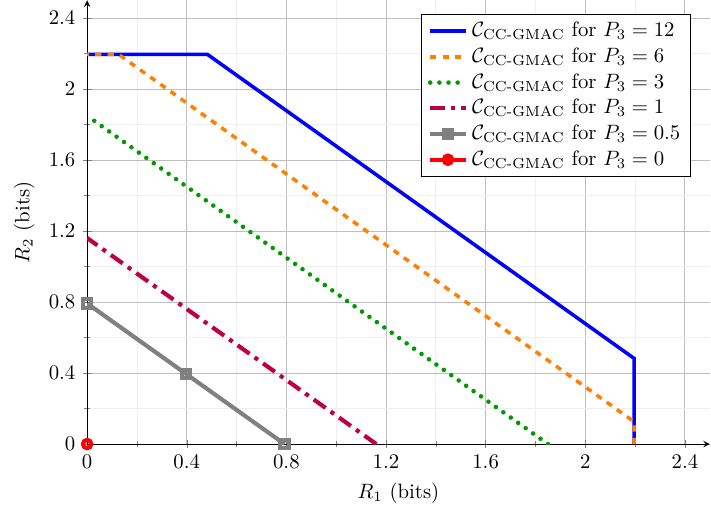}
\caption{The capacity region of the Gaussian \ac{MAC} with a helper shown for $\sigma_B^2 = 1$, $h = 3$, and $P_1 = P_2 = 5$, for $P_3\in\br{0,0.5,1,3,6,12}$.
}
\label{fig:Plot_Gaussian}
\vspace{-0.7cm}
\end{figure}
\begin{proof}
Similar to the arguments used in \cite[Section~8.6]{Cover_Book}, \cite[Section~VI.B]{Covert_With_State}, and \cite[Section~V.A]{Action_Covert}, one can show that the achievability proof of Corollary~\ref{cor:qMAC_Helper} holds when the channel is a classical \ac{AWGN} channel, as described in \eqref{eq:Gaussian_Channel_Law}, and the channel inputs are restricted to satisfy the power constraints in \eqref{eq:Power_Const_3}.

Here, we set the channel input of the Transmitter~$t$, for $t \in [3]$, to follow a zero-mean Gaussian distribution with variance $\tilde{P}_t \leq P_t$, chosen to satisfy the covertness constraint in \eqref{eq:thm_qMAC_Helper_S}. Specifically, because the distribution induced in the no-communication mode is $\calN(0,\sigma_E^2)$, it follows that $E \sim \calN(0,\sigma_E^2)$. Hence, the third transmitter should select its channel input as $X_3 = -(X_1 + X_2)$, thereby canceling the effect of the first and second transmitters' inputs at the warden. We now proceed to determine the optimal choices of $\tilde{P}_t$, for $t \in [3]$, that maximize the achievable rate region under the covertness constraint. As seen in Fig.~\ref{fig:Gaussian_Cases}, given a fixed $P_3$, there are five possible cases for $P_1$ and $P_2$. Therefore, power allocations for $X_1$ and $X_2$ are then chosen based on the relationship between $P_1$, $P_2$, and $P_3$ as follows:
\begin{enumerate}[(i).]
    \item If $P_1 + P_2 \le P_3$, the first and second transmitters use their full power budgets, i.e., $P_1$ and $P_2$, respectively. The third transmitter, having sufficient power, is able to cancel $X_1 + X_2$.

    \item If $P_1 + P_2 > P_3$ and both $P_1 \ge P_3$ and $P_2 \ge P_3$, the first and second transmitters transmit with powers $\lambda P_3$ and $(1-\lambda)P_3$, respectively, for some $\lambda \in [0,1]$. Their combined power does not exceed $P_3$, allowing cancellation by the third transmitter.

    \item If $P_1 + P_2 > P_3$, $P_1 < P_3$, and $P_2 \ge P_3$, then the first transmitter uses power $\lambda P_1$ and the second uses $P_3 - \lambda P_1$, for $\lambda \in [0,1]$, ensuring the total remains within $P_3$.

    \item If $P_1 + P_2 > P_3$, $P_1 \ge P_3$, and $P_2 < P_3$, then the second transmitter uses $\lambda P_2$ and the first uses $P_3 - \lambda P_2$, for $\lambda \in [0,1]$.

    \item If $P_1 + P_2 > P_3$, $P_1 < P_3$, and $P_2 < P_3$, the first transmitter uses power $\lambda(P_3 - P_2) + (1 - \lambda)P_1$, and the second uses $(1 - \lambda)(P_3 - P_1) + \lambda P_2$, for $\lambda \in [0,1]$.
\end{enumerate}
In each case, the power allocations are chosen such that the sum of the signals from the first and second transmitters can be exactly canceled by the third transmitter, satisfying the covertness constraint. Since the logarithm function is monotonically increasing in the input powers, it suffices to select the maximum allowable power for each transmitter when determining the boundary of the achievable region. This leads to the choices $\tilde{P}_1\triangleq\min\br{P_1,P_3}$, $\tilde{P}_2\triangleq\min\br{P_2,P_3}$, and $\tilde{P}_3\triangleq\min\br{P_1+P_2,P_3}$. Finally, by setting $X_1 \sim \calN(0, \tilde{P}_1)$, $X_2 \sim \calN(0, \tilde{P}_2)$, and $X_3 \sim \calN(0, \tilde{P}_3)$ in  \eqref{eq:MAC_Helper_Reliability}, we obtain the rate region $\calC_{\textup{CC-GMAC}}$.

We now prove the converse part of Theorem~\ref{thm:Capacity_Gaussian}. Let $\tilde{P}_1\triangleq\bbE[X_1^2]\le P_1$, $\tilde{P}_2\triangleq\bbE[X_2^2]\le P_2$, and $\tilde{P}_3\triangleq\bbE[X_3^2]\le P_3$. We have,
    \begin{subequations}\label{eq:Gaussian_1}
    \begin{align}
        nR_1&\mathop\le\limits^{(a)}nI(X_1,X_3;B\lvert X_2)+n\delta\nonumber\\
        &\mathop=\limits^{(b)}n\sbr{\dent(B\lvert X_2)-\dent(B\lvert X_{[3]})}+n\delta\nonumber\\
        &=n\sbr{\dent(X_1+X_2+hX_3+N_B\lvert X_2)-\dent(N_B)}+n\delta\nonumber\\
        &=n\sbr{\dent(X_1+hX_3+N_B\lvert X_2)-\dent(N_B)}+n\delta\nonumber\\
        &\mathop\le\limits^{(c)}n\left[\frac{1}{2}\log(2\pi e)^2\pr{\tilde{P}_2\pr{\tilde{P}_1+2h\bbE[X_1X_3]+h^2\tilde{P}_3+\sigma_B^2}}\right.\nonumber\\
        &\qquad-h^2\bbE^2[X_2X_3]-\frac{1}{2}\log2\pi e\pr{\tilde{P}_2}\nonumber\\
        &\left.\qquad-\frac{1}{2}\log2\pi e\sigma_B^2\right]+n\delta,\label{eq:Gaussian_R1_1}
    \end{align}where $(a)$ is proved through steps analogous to those used to prove \eqref{eq:Upper_R_cq}; $(b)$ follows since $\dent(\cdot)$ denotes the differential entropy; and $(c)$ follows from the maximum differential entropy lemma \cite{ElGamalKim}. Similarly, we have,
    \begin{align}
        &nR_2\le n\left[\frac{1}{2}\log(2\pi e)^2\pr{\tilde{P}_1\pr{\tilde{P}_2+2h\bbE[X_2X_3]+h^2\tilde{P}_3+\sigma_B^2}}\right.\nonumber\\
        &\left.\qquad-h^2\bbE^2[X_1X_3]-\frac{1}{2}\log2\pi e\pr{\tilde{P}_1}-\frac{1}{2}\log2\pi e\sigma_B^2\right]+n\delta,\label{eq:Gaussian_R2_1}\\
        &n(R_1+R_2)\le n\left[\frac{1}{2}\log2\pi e\left(\tilde{P}_1+\tilde{P}_2+h^2\tilde{P}_3+2h\bbE[X_1X_3]\right.\right.\nonumber\\
        &\left.+2h\bbE[X_2X_3]+\sigma_B^2\Big)-\frac{1}{2}\log2\pi e\sigma_B^2\right]+n\delta.\label{eq:Gaussian_R1R2_1}
    \end{align}
    \end{subequations}
    Similar to the proof in \eqref{eq:Covertness_C_Degraded} one can show that $\bbD\pr{P_{Z^n}\lVert q_0^\ot}$ leads to $p_Z=q_0$. The covertness constraint $p_Z=q_0$, where $q_0\sim\calN(0,\sigma_E^2)$, is satisfied if $\tilde{P}_1+\tilde{P}_2+\tilde{P}_3+2\bbE[X_1X_3]+2\bbE[X_2X_3]+\sigma_E^2=\sigma_E^2$, which is equivalent to
    \begin{align}
        &\bbE[X_1X_3]+\bbE[X_2X_3]=-\frac{\tilde{P}_1+\tilde{P}_2+\tilde{P}_3}{2}\nonumber\\
        &\Leftrightarrow\bbE[YX_3]=-\frac{\tilde{P}_1+\tilde{P}_2+\tilde{P}_3}{2},\nonumber
    \end{align}where $Y\triangleq X_1+X_2$, and since $X_1$ and $X_2$ are independent and normally distributed, $Y$ is also normally distributed with variance $\tilde{P}_1+\tilde{P}_2$. Therefore, the correlation coefficient between $Y$ and $X_3$ is equal to
    \begin{align}
        \rho_{YX_3}&=\frac{\bbE[YX_3]}{\sqrt{(\tilde{P}_1+\tilde{P}_2)\tilde{P}_3}}\nonumber\\
        &=-\frac{\tilde{P}_1+\tilde{P}_2+\tilde{P}_3}{2\sqrt{(\tilde{P}_1+\tilde{P}_2)\tilde{P}_3}}.\nonumber
    \end{align}Note that the correlation coefficient satisfies $-1\le\rho_{YX_3}\le1$, and since $\tilde{P}_i\ge0$, $i\in[3]$, we have $\rho_{YX_3}\le0$. Therefore, we must have $-1\le\rho_{YX_3}$, which is equivalent to $\tilde{P}_1+\tilde{P}_2+\tilde{P}_3\le2\sqrt{(\tilde{P}_1+\tilde{P}_2)\tilde{P}_3}\Leftrightarrow\pr{\sqrt{\tilde{P}_1+\tilde{P}_2}-\sqrt{\tilde{P}_3}}^2\le0$. This inequality is satisfied only when $\tilde{P}_1+\tilde{P}_2=\tilde{P}_3$, which means that we have $\rho_{YX_3}=-1$ and therefore $X_3=-Y=-(X_1+X_2)$. Therefore, substituting $X_3=-(X_1+X_2)$ and $\tilde{P}_3=\tilde{P}_1+\tilde{P}_2$ in \eqref{eq:Gaussian_R1_1} leads to
    \begin{subequations}
        \begin{align}
            R_1&\le \frac{1}{2}\log(2\pi e)^2\pr{\tilde{P}_2\pr{\tilde{P}_1-2h\tilde{P}_1+h^2\tilde{P}_3+\sigma_B^2}-h^2\tilde{P}_2^2}\nonumber\\
            &\quad-\frac{1}{2}\log2\pi e\pr{\tilde{P}_2}-\frac{1}{2}\log2\pi e\sigma_B^2+\delta\nonumber\\
            &=\frac{1}{2}\log2\pi e\pr{\tilde{P}_1-2h\tilde{P}_1+h^2\tilde{P}_3+\sigma_B^2-h^2\tilde{P}_2}\nonumber\\
            &\quad-\frac{1}{2}\log2\pi e\sigma_B^2+\delta\nonumber\\
            &=\frac{1}{2}\log\pr{1+\frac{(1-2h)\tilde{P}_1-h^2\tilde{P}_2+h^2\tilde{P}_3}{\sigma_B^2}}+\delta\nonumber\\
            &=\frac{1}{2}\log\pr{1+\frac{(1-2h)\tilde{P}_1+h^2\tilde{P}_1}{\sigma_B^2}}+\delta\nonumber\\
            &=\frac{1}{2}\log\pr{1+\frac{(1-h)^2\tilde{P}_1}{\sigma_B^2}}+\delta.\nonumber
        \end{align}
    \end{subequations}
    \setcounter{equation}{21}
Similarly, substituting $X_3=-(X_1+X_2)$ and $\tilde{P}_3=\tilde{P}_1+\tilde{P}_2$ in \eqref{eq:Gaussian_R2_1} and \eqref{eq:Gaussian_R1R2_1} leads to 
    \begin{align}
&\bigcup_{\substack{\tilde{P}_1\le P_1,\\ \tilde{P}_2\le P_2,\\\tilde{P}_3\le P_3,\\ \tilde{P}_1+\tilde{P}_2=\tilde{P}_3}}
  \left\{ \begin{array}{rl}
  &\hspace{-4mm}(R_1,R_2):\\
&\hspace{-4mm}R_1<\frac{1}{2}\log\pr{1+\frac{(1-h)^2\tilde{P}_1}{\sigma_B^2}}\\
&\hspace{-4mm}R_2<\frac{1}{2}\log\pr{1+\frac{(1-h)^2\tilde{P}_2}{\sigma_B^2}}\\
&\hspace{-4mm}R_1+R_2<\frac{1}{2}\log\pr{1+\frac{(1-h)^2\pr{\tilde{P}_1+\tilde{P}_2}}{\sigma_B^2}}\\
	\end{array}\label{eq:Outer_Gaussian_Union}
\right\}.
\end{align}
\begin{figure}[t!]
\centering
\includegraphics[width=8.0cm]{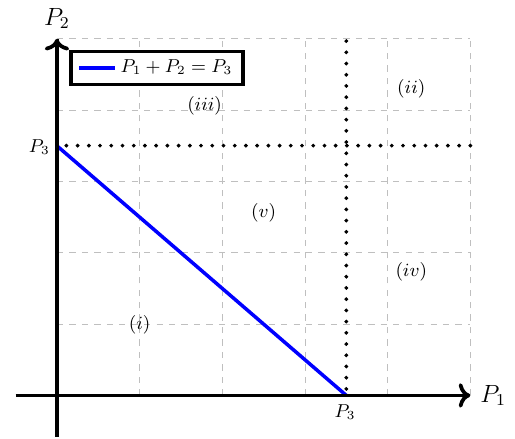}
\caption{Different possible cases for $P_1$ and $P_2$ when $P_3$ is fixed.
}
\label{fig:Gaussian_Cases}
\vspace{-0.7cm}
\end{figure}
The union in \eqref{eq:Outer_Gaussian_Union} is taken over all $\pr{\tilde{P}_1,\tilde{P}_2,\tilde{P}_3}$ such that $0\le\tilde{P}_1\le P_1,0\le\tilde{P}_2\le P_2,0\le\tilde{P}_3\le P_3$, and $\tilde{P}_1+\tilde{P}_2=\tilde{P}_3$. Eliminating $\tilde{P}_3$ gives the feasible power set 
\begin{align}
    \calZ\triangleq\br{\!\pr{\tilde{P}_1,\tilde{P}_2}\!: 0\le\tilde{P}_1\le P_1,0\le\tilde{P}_2\le P_2,\tilde{P}_1\!+\!\tilde{P}_2\le P_3\!}\!.\nonumber
\end{align}Therefore, the \ac{MAC} region in \eqref{eq:Outer_Gaussian_Union} is
\begin{align}
    \bigcup_{\pr{\tilde{P}_1,\tilde{P}_2}\in\calZ}\left\{ \begin{array}{rl}
  &\hspace{-4mm}(R_1,R_2):\\
&\hspace{-4mm}R_1<\frac{1}{2}\log\pr{1+\frac{(1-h)^2\tilde{P}_1}{\sigma_B^2}}\\
&\hspace{-4mm}R_2<\frac{1}{2}\log\pr{1+\frac{(1-h)^2\tilde{P}_2}{\sigma_B^2}}\\
&\hspace{-4mm}R_1+R_2<\frac{1}{2}\log\pr{1+\frac{(1-h)^2\pr{\tilde{P}_1+\tilde{P}_2}}{\sigma_B^2}}\\
	\end{array}\right\}.\label{eq:Outer_Gaussian_Union_2}
\end{align}By monotonicity of the logarithm functions in $\pr{\tilde{P}_1,\tilde{P}_2}$, the region in \eqref{eq:Outer_Gaussian_Union_2} is upper bounded by
\begin{align}
\left\{ \begin{array}{rl}
&\hspace{-4mm}(R_1,R_2):\\
&\hspace{-4mm}R_1<\frac{1}{2}\log\pr{1+\frac{(1-h)^2\min\br{P_1,P_3}}{\sigma_B^2}}\nonumber\\ &\hspace{-4mm}R_2<\frac{1}{2}\log\pr{1+\frac{(1-h)^2\min\br{P_2,P_3}}{\sigma_B^2}}\nonumber\\ &\hspace{-4mm}R_1+R_2<\frac{1}{2}\log\pr{1+\frac{(1-h)^2\min\br{P_1+P_2,P_3}}{\sigma_B^2}}
    \end{array}\right\}.\nonumber
\end{align} 
\end{proof}
\vspace{-1cm}
\subsection{Bosonic \texorpdfstring{\ac{MAC}}{MAC} with a Helper}
\label{sec:Bosonic}
We now consider the quantum counterpart of the classical Gaussian case of the \ac{MAC} in Fig.~\ref{fig:MAC_Helper}, which is the single-mode bosonic \ac{MAC} with a helper.  
We first briefly introduce notions related to the bosonic channels. For a comprehensive description of continuous-variable bosonic systems, we refer the readers to \cite{RevModPhys12}. 
Operators acting on quantum states are denoted using a hat notation, e.g., $\hat{a}$, $\hat{b}$, and $\hat{c}$. The single-mode Hilbert space is spanned by the Fock basis $\br{\ket{n}}_{n=0}^\infty$, where each $\ket{n}$ is an eigenstate of the number operator $\hat{n} = \hat{a}^\dagger \hat{a}$. Here, $\hat{a}$ denotes the bosonic field annihilation operator. In particular, $\ket{0}$ represents the vacuum state of the field. The creation operator $\hat{a}^\dagger$ raises the excitation level according to $\hat{a}^\dagger \ket{n} = \sqrt{n+1} \ket{n+1}$ for $n \ge 0$, while the annihilation operator $\hat{a}$ lowers it via $\hat{a} \ket{n} = \sqrt{n} \ket{n-1}$. 
A coherent state $\ket{\alpha}$, where $\alpha \in \bbC$, corresponds to an oscillation of the electromagnetic field, and it is obtained by applying the displacement operator $D(\alpha)$ to the vacuum state, i.e., $\ket{\alpha} = D(\alpha) \ket{0}$, where $D(\alpha) = \exp\left(\alpha \hat{a}^\dagger - \alpha^* \hat{a}\right)$.
This operation resembles the action of the creation operator but in a more physically meaningful, phase-space-shifting manner. 
The average density operator for a thermal state with average photon number $N\ge0$, defined as
\begin{align}
\tau(N)&\triangleq\int\limits_\bbC d^2\alpha\frac{e^{-\frac{\abs{\alpha}^2}{N}}}{\pi N}\den{\alpha}{\alpha}\nonumber\\
&=\frac{1}{N+1}\sum_{n=0}^\infty\pr{\frac{N}{N+1}}^n\den{n}{n}.\nonumber
\end{align}
\begin{figure}[t!]
\centering
\includegraphics[width=8.8cm]{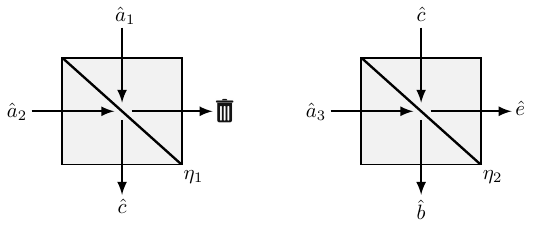}
\caption{The beam splitter relation of the single-mode bosonic \ac{MAC} with a helper. 
The Transmitter~$t$, for $t \in [2]$, encodes message $M_t$ into a coherent state and transmits it over the channel. With knowledge of the coherent states transmitted by the first two transmitters, the third transmitter encodes the message pair $(M_1, M_2)$ into a coherent state and transmits it over the channel. The communication channel $\calN_{A_{[3]} \to BE}$ consists of two beam splitters: the first, on the left, with transmissivity $\eta_1=\frac{1}{2}$, combines the states transmitted by the first two transmitters; the second, on the right, with transmissivity $\eta_2=\frac{1}{3}$, mixes the output of the first beam splitter with the state transmitted by the third transmitter.
}
\label{fig:Beam_Splitter}
\vspace{-0.7cm}
\end{figure}
A single-mode two-user bosonic \ac{MAC} is modeled by a beam splitter \cite{MAC_Bosonic_05}. Here we consider a bosonic \ac{MAC} with a helper, where the channel inputs are three electromagnetic field modes with annihilation operators $\hat{a}_1$, $\hat{a}_2$, and $\hat{a}_3$, and the outputs observed by the legitimate receiver and the warden are modes with annihilation operators $\hat{b}$ and $\hat{e}$, respectively. This is modeled by two beam splitters, as illustrated in Fig.~\ref{fig:Beam_Splitter}. As an illustrative example, we set the transmissivities of the beam splitters as 
$\eta_1 = \tfrac{1}{2}$ and $\eta_2 = \tfrac{1}{3}$ that depend on the length of the optical fiber and its absorption length \cite{Eisert07}. The input-output relation in the Heisenberg picture \cite{Holevo01} is given by:
\begin{subequations}\label{eq:Bosonic_Channel_Law}
\begin{align}
    \hat{c}&\triangleq\sqrt{\eta_1}\hat{a}_1+\sqrt{1-\eta_1}\hat{a}_2\nonumber\\
    &=\sqrt{\frac{1}{2}}\hat{a}_1+\sqrt{\frac{1}{2}}\hat{a}_2,\label{eq:Bosonic_intermediate}\\
    \hat{b}&\triangleq\sqrt{\eta_2}\hat{c}+\sqrt{1-\eta_2}\hat{a}_3\nonumber\\
    &=\sqrt{\eta_2\eta_1}\hat{a}_1+\sqrt{\eta_2(1-\eta_1)}\hat{a}_2+\sqrt{1-\eta_2}\hat{a}_3\nonumber\\
    &=\sqrt{\frac{1}{6}}\hat{a}_1+\sqrt{\frac{1}{6}}\hat{a}_2+\sqrt{\frac{2}{3}}\hat{a}_3,\label{eq:Bosonic_receiver}\\
    \hat{e}&\triangleq\sqrt{1-\eta_2}\hat{c}-\sqrt{\eta_2}\hat{a}_3\nonumber\\
    &=\sqrt{(1-\eta_2)\eta_1}\hat{a}_1+\sqrt{(1-\eta_2)(1-\eta_1)}\hat{a}_2-\sqrt{\eta_2}\hat{a}_3\nonumber\\
    &=\sqrt{\frac{1}{3}}\hat{a}_1+\sqrt{\frac{1}{3}}\hat{a}_2-\sqrt{\frac{1}{3}}\hat{a}_3,\label{eq:Bosonic_warden}
\end{align}
We assume that the transmitters use coherent state protocols with input constraints. The input state of the Transmitter~$t$, for $t\in[3]$, is a coherent state $\ket{X_t}$, where $X_t\in\bbC$, such that the codewords satisfy,
\begin{align}
    \frac{1}{n}\sum_{i=1}^n\abs{x_{1,i}}^2&\le N_1,\,\,\frac{1}{n}\sum_{i=1}^n\abs{x_{2,i}}^2\le N_2,\,\,
    \frac{1}{n}\sum_{i=1}^n\abs{x_{3,i}}^2\le N_3.\label{eq:Power_Const_Bosonic}
\end{align}The corresponding received states at the legitimate receiver and the warden are the coherent states $\ket{\sqrt{\frac{1}{6}}X_1+\sqrt{\frac{1}{6}}X_2+\sqrt{\frac{4}{6}}X_3}$ and $\ket{\sqrt{\frac{1}{3}}\pr{X_1+X_2-X_3}}$, respectively.
\end{subequations}
Also, let the innocent state for the transmitters be $\ket{x_{t,0}}=\ket{0}$, for $t\in[3]$, therefore, the state induced at the output of the warden in the no communication mode is $\rho_0=\den{0}{0}$. We assume that the covertness constraint is as defined in \eqref{eq:CC_CSK_Metric_Asymp}, and the capacity is defined as the closure of all achievable rates and is denoted by the $\calC_{\textup{CC-BMAC}}$.
\begin{theorem}
\label{eq:thm_Bosonicn}
    The covert capacity $\calC_{\textup{CC-BMAC}}$ of the three-user bosonic \ac{MAC} described in \eqref{eq:Bosonic_Channel_Law} is lower bounded by 
\begin{align}
&\calC_{\textup{CC-BMAC}}\supseteq
  \left\{ \begin{array}{rl}
  &\hspace{-4mm}(R_1,R_2):\\
&\hspace{-4mm}R_1<g\pr{\frac{3}{2}\tilde{N}_1}\\
&\hspace{-4mm}R_2<g\pr{\frac{3}{2}\tilde{N}_2}\\
&\hspace{-4mm}R_1+R_2<g\pr{\frac{3}{2}\tilde{N}_3}\\
	\end{array}\nonumber
\right\},
\end{align}where $g(N)$ is the Shannon entropy
of the Bose-Einstein probability distribution, given by 
\begin{align}
    g(N)\triangleq\begin{cases}
(N+1)\log(N+1)+N\log(N),&\text{if}\Squad N>0,\\
0,&\text{if}\Squad N=0,\\
\end{cases}\nonumber
\end{align}and $\tilde{N}_1\triangleq\min\br{N_1,N_3}$, $\tilde{N}_2\triangleq\min\br{N_2,N_3}$, and $\tilde{N}_3\triangleq\min\br{N_1+N_2,N_3}$.
\end{theorem}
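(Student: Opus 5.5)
We would prove the bound by specializing Corollary~\ref{cor:qMAC_Helper} to coherent-state inputs, after first extending it from finite dimensions to the infinite-dimensional bosonic setting with the average photon-number constraints in \eqref{eq:Power_Const_Bosonic}. That extension would follow the standard argument used for the classical Gaussian case in the proof of Theorem~\ref{thm:Capacity_Gaussian}: use a continuous input alphabet truncated to satisfy the cost constraint, and observe that the one-shot bounds of Theorem~\ref{thm:Achievable_One_Shot} only require the relevant R\'{e}nyi quantities to converge to their von Neumann limits as $\alpha\to0$.

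The encoding would take $X_1\sim\mathcal{CN}(0,\tilde N_1)$ and $X_2\sim\mathcal{CN}(0,\tilde N_2)$ independent, with $\theta^{x_t}_{A_t}=\den{x_t}{x_t}$ coherent states, and the helper choosing the deterministic function $X_3=X_1+X_2$. This function is allowed in $\calG$ because $p_{X_3|X_1X_2}$ may be degenerate.

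\textbf{Covertness.} With this choice the warden's output, from \eqref{eq:Bosonic_warden}, is the coherent state $\ket{\sqrt{1/3}(X_1+X_2-X_3)}=\ket{0}$ for every codeword. Hence $\rho_E=\den{0}{0}=\rho_0$, and every mutual information term with $E$ vanishes. The constraints in \eqref{eq:thm_qMAC_Helper_S} then reduce to positivity of the three $B$-quantities.

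\textbf{Power constraints.} The helper's power is $\tilde N_1+\tilde N_2$, and this is where I expect the main difficulty. We need $\tilde N_1+\tilde N_2\le N_3$ while still recovering the three separate minima. As in the five-case power allocation in the proof of Theorem~\ref{thm:Capacity_Gaussian}, I would take the union over all pairs $(\tilde N_1,\tilde N_2)$ in $\{\tilde N_1\le N_1,\ \tilde N_2\le N_2,\ \tilde N_1+\tilde N_2\le N_3\}$. The individual corner values $\min\{N_t,N_3\}$ and the sum-rate value $\min\{N_1+N_2,N_3\}$ are then attained at different points of this union. The stated region should be read in that sense, exactly as in the classical Gaussian theorem, and I would make this reading explicit.

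\textbf{Rates at the receiver.} The receiver's output is the coherent state $\ket{\beta}$ with
\[
\beta=\sqrt{1/6}\,X_1+\sqrt{1/6}\,X_2+\sqrt{2/3}\,(X_1+X_2)=\tfrac{3}{\sqrt6}(X_1+X_2),
\]
so $|\beta|^2=\tfrac32|X_1+X_2|^2$.
\begin{itemize}
\item Conditioned on all inputs, the output is pure, with zero entropy.
\item Conditioned on $X_2$, the output is a displaced thermal state of mean photon number $\tfrac32\tilde N_1$, with entropy $g(\tfrac32\tilde N_1)$.
\item Unconditionally, the output is thermal with mean photon number $\tfrac32(\tilde N_1+\tilde N_2)$.
\end{itemize}
This gives
\begin{align*}
I(X_1,X_3;B\lvert X_2)&=g\big(\tfrac32\tilde N_1\big),\\
I(X_2,X_3;B\lvert X_1)&=g\big(\tfrac32\tilde N_2\big),\\
I(X_1,X_2,X_3;B)&=g\big(\tfrac32(\tilde N_1+\tilde N_2)\big).
\end{align*}
Substituting these into \eqref{eq:MAC_Helper_Reliability} and taking the union over the feasible power set yields the stated region.
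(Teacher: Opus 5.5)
Your coding scheme and rate computations are the paper's: coherent-state inputs, circularly symmetric Gaussian $X_1,X_2$, the helper sending $X_3=X_1+X_2$ so that the warden sees the vacuum for every codeword, receiver amplitude $\sqrt{3/2}\,(X_1+X_2)$, and a union over power allocations (the paper's five-case $\lambda$-parametrization).

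\textbf{The gap: extending Corollary~\ref{cor:qMAC_Helper} to a bosonic mode.} You claim the one-shot bounds of Theorem~\ref{thm:Achievable_One_Shot} only need the R\'enyi quantities to converge as $\alpha\to0$. That is not true.
\begin{itemize}
\item The reliability bound \eqref{eq:Pe_Thm} carries the pinching factors $\mu_{B,[T]}$ and $\mu_{B,\calT_\calS^c}$. The asymptotic proof controls them only through Lemma~\ref{eq:Lemma_Pincing_Eigen_Value_Bounds}, i.e.\ through bounds $(n+1)^{\mathrm{poly}(d_B)}$ with $d_B<\infty$.
\item For a Gaussian coherent-state ensemble, $\rho_B$ is a thermal state. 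Its Fock-basis eigenvalues are infinitely many and all distinct, so these factors are infinite already at $n=1$ and the bound is vacuous.
\item Bounding the input alphabet to meet the cost constraint does not help: any continuum of coherent states has dense span in the Fock space.
\item In the classical Gaussian case all operators commute, the pinching acts trivially and the factors $\mu$ can be dropped. That is why the classical argument does not transfer.
\end{itemize}
What is needed is the discretization-and-limiting argument of Guha et al.\ that the paper invokes. First use a finite ensemble of coherent states obeying the photon-number constraints. All receiver outputs then lie in the finite-dimensional span of finitely many coherent states, so Corollary~\ref{cor:qMAC_Helper} applies as is. The warden causes no trouble, since its output is exactly $\rho_0^{\otimes n}$. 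Then let the discretization approach the Gaussian, and use continuity of the von Neumann entropy on energy-constrained states to recover the values $g\pr{\frac32\tilde N_t}$.

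\textbf{A smaller missing step: the union.} The stated pentagon need not be ``read'' as a union. It equals the union over $\br{\tilde N_1'\le N_1,\ \tilde N_2'\le N_2,\ \tilde N_1'+\tilde N_2'\le N_3}$, but you assert this rather than show it; the paper is equally terse here. Here is the argument, using the theorem's $\tilde N_1,\tilde N_2,\tilde N_3$.
\begin{itemize}
\item Every allocation $\pr{s,\tilde N_3-s}$ with $s\in[\tilde N_3-\tilde N_2,\tilde N_1]$ is feasible.
\item Take $(R_1,R_2)$ in the stated region. Either $s=\tilde N_3-\tilde N_2$ already works, or there is $s^\star$ in this interval with $g\pr{\frac32 s^\star}=R_1$.
\item Since $g$ is concave with $g(0)=0$, it is subadditive. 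Hence $g\pr{\frac32(\tilde N_3-s^\star)}\ge g\pr{\frac32\tilde N_3}-R_1>R_2$.
\item Increasing $s$ slightly above $s^\star$ then makes all three constraints strict.
\end{itemize}
This uses the thermal entropy $g(N)=(N+1)\log(N+1)-N\log N$, which is what your entropy computation produces. The $+N\log N$ in the statement is a sign typo; with it, $g$ would not even be monotone.
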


\begin{proof}
Using the discretization and limiting argument developed by Guha~et al.~\cite{Guha07}, we extend the finite-dimensional result in Corollary~\ref{cor:qMAC_Helper} to continuous-variable bosonic systems with infinite-dimensional Hilbert spaces. To prove Theorem~\ref{eq:thm_Bosonicn}, we set the random variables $X_t$, for $t \in [3]$, as circularly symmetric Gaussian distributed with zero mean and variance $\tilde{N}_t \le N_t$.  
To satisfy the covertness constraint in \eqref{eq:thm_qMAC_Helper_S}, the third transmitter chooses $X_3=X_1+X_2$, canceling the effects of the states transmitted by the first and second transmitters at the warden. Therefore, the state received by the legitimate receiver is a coherent state given by $\ket{\sqrt{\frac{3}{2}}(X_1 + X_2)}$. For a given $N_3$, the average photon number allocations for $\ket{X_1}$ and $\ket{X_2}$ are chosen based on the relationship between $N_1$, $N_2$, and $N_3$, which is the same as that in the achievability proof of Theorem~\ref{thm:Capacity_Gaussian} and shown in Fig.~\ref{fig:Gaussian_Cases}. 
This leads to the choices $\tilde{N}_1\triangleq\min\br{N_1,N_3}$, $\tilde{N}_2\triangleq\min\br{N_2,N_3}$, and $\tilde{N}_3\triangleq\min\br{N_1+N_2,N_3}$. Using these average photon numbers, we compute the mutual information terms in \eqref{eq:MAC_Helper_Reliability}, which leads to the achievable rate region presented in Theorem~\ref{eq:thm_Bosonicn}.

Note that, since the state induced at the warden's output in the no-communication mode is the vacuum state, which is not full rank (equivalently, its minimum eigenvalue is zero), step~$(d)$ in the single-letterization of the covertness constraint~\eqref{eq:Covertness_C_Degraded} no longer holds. Consequently, the converse proof for the Gaussian channel, presented in Section~\ref{sec:Example_Gaussian}, cannot be directly extended to the bosonic channel considered in this section.
\end{proof}

\section{Conclusions}
\label{sec:Conclusion}
This paper investigates covert communication at positive rates over \acp{MAC} with general message sets. We establish both one-shot and asymptotic achievable rate regions, and demonstrate that our achievability scheme is tight in the classical-quantum setting. To illustrate the applicability of our framework, we present several examples. First, we consider a finite-dimensional \ac{MAC} with a helper and derive an achievable covert rate region. We then study a classical Gaussian \ac{MAC} with a helper and characterize its covert capacity region. Finally, we extend the analysis to a single-mode bosonic \ac{MAC} with a helper and establish the achievability of positive covert rates in this setting.

\begin{appendices}

\section{Proof of Theorem~\ref{thm:Achievable_One_Shot}}
\label{proof:thm:Achievable_One_Shot} 
We show that for each $\delta_1,\dots,\delta_S,\epsilon>0$, there exists a $\pr{2^{R_1-\delta_1},\dots,2^{R_S-\delta_S},1,\epsilon}$ code for the quantum channel $\calN_{A_{[T]}\to BE}$ that satisfies both the reliability constraint and the covertness constraint.

\subsection{Random Codebook Generation}
\label{sec:Codebook_Cons} 
We start generating the code from the leaves of the corresponding message graph. 
For each leaf $v_t\in\bbL$ with private message $M(t)$, let $C_{U,t}\triangleq\br{U_t(m(t))}_{m(t)\in\calM(t)}$, where $t\in[T]$ and  $\calM(t)\triangleq\brk{2^{R(t)}}$, be a random codebook generated \ac{iid} according to $p_{U_t}$. Also, for each $u_t\in C_{U,t}$, let  $C_{X,t}\triangleq\br{X_t(m(t))}$ be a random sequence generated \ac{iid} according to $p_{X_t\lvert U_t}(\cdot\lvert u_t)$. 
A realization of $C_{U,t}$ and $C_{X,t}$ are denoted by $\calC_{U,t}\triangleq\br{u_t(m(t))}_{m(t)\in\calM(t)}$ and $\calC_{X,t}\triangleq\br{x_t(m(t))}$, respectively. Henceforth, for all $t \in [T]$, we set $U_t = \emptyset$ if $v_t \notin \bbM$, and define $\tilde{X}_t \triangleq (U_t, X_t)$.

For each parent of the leaves, we generate a separate codebook. For instance, for a parent $v_t$, for each combination of the sequences of its descendants codebooks, we generate a codebook $C_{U,t}\triangleq\br{U_t(m(t))}_{m(t)\in\calM(t)}$ and $C_{X,t}\triangleq\br{X_t(m(t))}$, where $\calM(t)\triangleq\brk{2^{R(t)}}$, and the codewords are generated \ac{iid} according to $p_{U_t\lvert U_{\bbD_t}X_{\bbD_t}}$ and $p_{X_t\lvert U_tU_{\bbD_t}X_{\bbD_t}}$, respectively. We proceed in the same manner until we reach $v_0$. Since $v_0$ does not correspond to any transmitter in the system, it has no channel input. Also, note that if a node $v_t\notin\bbM$, we have $C_{U,t}=\emptyset$ and $C_{X,t}\triangleq\br{X_t(m(t))}_{m(t)\in\calM(t)}$ is generated \ac{iid} according to $p_{X_t\lvert U_{\bbD_t}X_{\bbD_t}}$. Now let $C\triangleq\br{C_{U,t},C_{X,t}}_{t\in[T]}$ be the set of all the generated random codebooks and $\calC\triangleq\br{\calC_{U,t},\calC_{X,t}}_{t\in[T]}$ be a realization of~$C$.

\subsection{Encoding}To send the message $m_{\calI_t}$, the Transmitter~$t$ selects $u_t(m_{\calI_t})$ and then $x_t(m_{\calI_t})$, and prepares $\rho_{A_1}=\theta^{x_{t}}$ and transmits it over the channel. This encoding scheme induces the following state at the output of the warden:
\begin{align}
    &\tau_{E\lvert\calC}\triangleq\frac{1}{2^{\sum_{s=1}^SR_s}}\sum_{m_{[S]}}\rho_{E\lvert\tilde{x}_1\pr{m_{\calI_1}},\tilde{x}_2\pr{m_{\calI_2}},\dots,\tilde{x}_T\pr{m_{\calI_T}}}.\label{eq:Joint_Dist}
\end{align}
\subsection{Pinching}
\label{sec:Pinching}
Our decoding scheme is based on the simultaneous pinching method \cite{QIT_Hayashi}. Consider states $\sigma\in\calD(\calH)$ and $\rho\in\calD(\calH)$, and let $\sigma=\sum_{i}\lambda_i\den{z_i}{z_i}$, where $\lambda_j\ne\lambda_k$ for $j\ne k$, be the spectral decomposition of the state $\sigma$. The pinching operation of the state $\rho$ \ac{wrt} the spectral decomposition of the state $\sigma$ is defined as $\Delta_\sigma(\rho)\triangleq\sum_i\den{z_i}{z_i}\rho\den{z_i}{z_i}=\sum_i\bra{z_i}\rho\ket{z_i}\den{z_i}{z_i}$. Note that the state $\sigma$ and the state $\Delta_\sigma(\rho)$ commute. Now consider the following classical-quantum states
\begin{subequations}
\begin{align}
    &\rho_{U_{[T]}X_{[T]}B}\triangleq\sum_{u_{[T]},x_{[T]}}\prod_{t=1}^Tp_{U_tX_t\lvert U_{\bbD_t}X_{\bbD_t}}(u_t,x_t\lvert u_{\bbD_t},x_{\bbD_t})\nonumber\\
    &\times\den{u_1}{u_1}_{U_1}\otimes\den{x_1}{x_1}_{X_1}\otimes\cdots\otimes\den{u_T}{u_T}_{U_T}\nonumber\\
    &\otimes\den{x_T}{x_T}_{X_T}\otimes\rho_{B\lvert u_{[T]},x_{[T]}},\label{eq:general_joint_State}\\
    &\rho_{\pr{U_\calT,X_\calT}-\pr{U_{\calT^c},X_{\calT^c}}-B}\triangleq\nonumber\\
    &\sum_{u_{[T]},x_{[T]}}\prod_{t=1}^Tp_{U_tX_t\lvert U_{\bbD_t}X_{\bbD_t}}(u_t,x_t\lvert u_{\bbD_t},x_{\bbD_t})\den{u_{\bar{v}_1}}{u_{\bar{v}_1}}_{U_{\bar{v}_1}}\nonumber\\
    &\otimes\den{x_{\bar{v}_1}}{x_{\bar{v}_1}}_{X_{\bar{v}_1}}\otimes\cdots\otimes\den{u_{\bar{v}_{|\calT^c|}}}{u_{\bar{v}_{|\calT^c|}}}_{U_{\bar{v}_{|\calT^c|}}}\nonumber\\
    &\otimes\den{x_{\bar{v}_{|\calT^c|}}}{x_{\bar{v}_{|\calT^c|}}}_{X_{\bar{v}_{|\calT^c|}}}\otimes\rho_{U_\calT X_\calT\lvert u_{\calT^c},x_{\calT^c}}\otimes\rho_{B\lvert u_{\calT^c},x_{\calT^c}},\label{eq:PDSG_joint_State2}
\end{align}where, $\calT\subset[T]$ is a set of vertex indices and $\calT^c$ is the complement of $\calT$ \ac{wrt} $[T]$; $(u_{v_i},x_{v_i})_{i\in\abs{\calT^c}}$ are the auxiliary \ac{RV} and the channel input of the transmitter associated with the vertex $v_i$; and $\rho_{U_{\calT}X_{\calT}\lvert u_{\calT^c},x_{\calT^c}}$ and $\rho_{B\lvert u_{\calT^c},x_{\calT^c}}$ represent the appropriate marginals of the state $\rho_{U_{[T]}X_{[T]}B}$ given in \eqref{eq:general_joint_State}. 
\end{subequations}
Let $\Delta_B$ denote the pinching operation \ac{wrt} the spectral decomposition of the state $\rho_B$. 
Now, we define $\Delta_\calT$, for $\calT\subset[T]$, as 
\begin{align}
\Delta_\calT(\rho)&\triangleq\sum_{\tilde{x}_\calT}\den{\tilde{x}_\calT}{\tilde{x}_\calT}\otimes\Delta_{B\lvert\tilde{x}_\calT}\pr{\bra{\tilde{x}_\calT}\rho\ket{\tilde{x}_\calT}},\nonumber
\end{align}
where $\ket{\tilde{x}_\calT}\triangleq\bigotimes\limits_{j\in\calT}\ket{\tilde{x}_j}$,  $\ket{\tilde{x}_j}=\ket{u_j}\otimes\ket{x_j}$, and $\Delta_{B\lvert\tilde{x}_\calT}$ is the pinching map \ac{wrt} the spectral decomposition of the state $\Delta_B\pr{\rho_{B|\tilde{x}_\calT}}$. 
Now, we define the following constants,
\begin{subequations}\label{eq:gs}
    \begin{align}
        \mu_B&\triangleq\text{number of distinct eigenvalues of } \rho_B,\label{eq:g2}\\
        \mu_{B,\calT}&\triangleq\text{maximum number of distinct eigenvalues of } \nonumber\\
        &\qquad\left\{\Delta_B\pr{\rho_{B\lvert \tilde{x}_\calT}}\right\}_{\tilde{x}_\calT},\label{eq:g3}
    \end{align}where $\calT\subset[T]$ and the maximization in \eqref{eq:g3} is over $\tilde{x}_\calT$.
    
    Also, let $\Delta_E$ denote the pinching operation \ac{wrt} the spectral decomposition of the state $\rho_E$, and $\mathfrak{B} \triangleq \pr{\calS_1, \calS_2, \dots, \calS_{\card{\mathfrak{B}}}}$ be an ordered collection of all non-empty strict subsets of $[S]$, arranged such that for any $1 \le i < j \le \card{\mathfrak{B}}$, we have $\card{\calS_i} \le \card{\calS_j}$. Define, for each subset $\calS \subset [S]$, the set $\tilde{X}(\calS)$ as the collection of channel inputs that depend only on the messages whose indices belong to $\calS$. Also, define $\Delta_{E\mid\tilde{X}\pr{\calS_{i-1}}}\Big(\rho_{E\big\lvert\tilde{X}(\calS_i)}\Big)$ as the pinching map \ac{wrt} the spectral decomposition of the operator $\Delta_{E\mid\tilde{X}\pr{\calS_{i-2}}}\Big(\rho_{E\big\lvert\tilde{X}(\calS_{i-1})}\Big)$, with the convention $\tilde{X}(\calS_0)=\emptyset$ and therefore $\Delta_{E\mid\emptyset}=\Delta_E$. Also, for $i\in[\card{\mathfrak{B}}]$, define 
    \begin{align}
    \mu_E&\triangleq\text{number of distinct eigenvalues of } \rho_E,\label{eq:g1}\\
        \mu_{E,i}&\triangleq \text{maximum number of the distinct eigenvalues of the} \nonumber\\
        &\quad\text{operator}\,\,\br{\Delta_{E\mid\tilde{x}(\calS_{i-2})}\pr{\rho_{E\mid\tilde{x}(\calS_{i-1})}}}_{\pr{\tilde{x}(\calS_1),\dots,\tilde{x}(\calS_{i-1})}}.
    \end{align}
    This hierarchical pinching construction ensures that all resulting operators remain block-diagonal in the eigenbasis of $\rho_E$, with successive refinements within each block. Consequently, all such operators commute.
\end{subequations}

\subsection{Decoding and Error Probability Analysis}
For any two Hermitian matrices $A$ and $B$, we define the projection $\{A \succeq B\}$ as $\sum_{i:\lambda_i \geq 0} P_i$, where $A - B$ has the spectral decomposition $\sum_i \lambda_i P_i$, with $\lambda_i$ being the eigenvalues and $P_i$ being the projection onto the eigenspace corresponding to $\lambda_i$. Now, we define the following projection operators for each non-empty strict subset $\calS\subset[S]$ of the message indices, 
\begin{subequations}\label{eq:Decoding_Projections}
\begin{align}
    &\Pi_{U_{[T]}X_{[T]}B}^{([S])}\triangleq\br{\Delta_B\pr{\rho_{U_{[T]}X_{[T]}B}}\succeq2^{\sum_{s=1}^SR_s}\rho_{U_{[T]}X_{[T]}}\otimes\rho_B},\label{eq:Decoding_Projection_1}\\
    &\Pi_{U_{[T]}X_{[T]}B}^{(\calS)}\triangleq\Bigg\{\Delta_{\calT_\calS^c}\pr{\rho_{U_{[T]}X_{[T]}B}}\succeq\nonumber\\
    &\left.\qquad2^{\sum_{s\in\calS}R_s}\Delta_B\pr{\rho_{\pr{U_{\calT_\calS},X_{\calT_\calS}}-\pr{U_{\calT_\calS^c},X_{\calT_\calS^c}}-B}}\right\},\label{eq:Decoding_Projection_2}
\end{align}
\end{subequations}where, for each non-empty strict subset $\calS \subset [S]$, the set $\calT_\calS$ consists of the vertex indices that have access to at least one message $m_s$ with $s \in \calS$. The complement of this set is denoted by $\calT_\calS^c$. For example, for the \ac{MAC} depicted in Fig.~\ref{fig:4M6T_Example}, consider $\calS=\br{4,5}$ as a set of message indices. Since each transmitter in the set $\br{4,5,6,7}$ has access to at least one message in $\calS$, we have $\calT_\calS=\br{4,5,6,7}$ and $\calT_\calS^c=\br{1,2,3,8}$. 
Also, define $\Pi_{U_{[T]}X_{[T]}B} \triangleq \Pi_{U_{[T]}X_{[T]}B}^{([S])}\prod\limits_{\calS\in\mathfrak{S}}\Pi_{U_{[T]}X_{[T]}B}^{(\calS)}$, where $\prod\limits_{\calS\in\mathfrak{S}}\Pi_{U_{[T]}X_{[T]}B}^{(\calS)}\triangleq\Pi_{U_{[T]}X_{[T]}B}^{(\calS_1)}\Pi_{U_{[T]}X_{[T]}B}^{(\calS_2)}\dots\Pi_{U_{[T]}X_{[T]}B}^{(\calS_{\abs{\mathfrak{S}}})}$ and $\mathfrak{S}\triangleq\br{\calS_1,\calS_2,\cdots,\calS_{\abs{\mathfrak{S}}}}$ is the set of all the non-empty strict subsets of $[S]$. 
Note that the projection $\Pi_{U_{[T]}X_{[T]}B}^{([S])}$ and all projections in the collection $\br{\Pi_{U_{[T]}X_{[T]}B}^{(\calS)}}_{\calS\subset[S]}$ mutually commute with one another. 
Moreover, we define the following operator:
\begin{align}
    &\Upsilon\pr{m_{[S]}}\triangleq\tra_{U_{[T]}X_{[T]}}\Big[\Pi_{U_{[T]}X_{[T]}B}\big(\den{U_1(m_{\calI_1})}{U_1(m_{\calI_1})}\nonumber\\
    &\quad\otimes\den{X_1(m_{\calI_1})}{X_1(m_{\calI_1})}\otimes\den{U_2(m_{\calI_2})}{U_2(m_{\calI_2})}\nonumber\\
    &\quad\otimes\den{X_2(m_{\calI_2})}{X_2(m_{\calI_2})}\otimes\cdots\otimes\den{U_T(m_{\calI_T})}{U_T(m_{\calI_T})}\nonumber\\
    &\quad\otimes\den{X_T(m_{\calI_T})}{X_T(m_{\calI_T})}\otimes\bbI_ B\big)\Big].\label{eq:Gamma_Operator}
\end{align}To obtain a set of \ac{POVM} operators, we normalize \eqref{eq:Gamma_Operator} as follows:
\begin{align}
    &\Lambda\pr{m_{[S]}}\triangleq\nonumber\\
    &\pr{\sum_{m'_{[S]}}\Upsilon\pr{m'_{[S]}}}^{-1}\Upsilon\pr{m_{[S]}}\pr{\sum_{m'_{[S]}}\Upsilon\pr{m'_{[S]}}}^{-1}.\label{eq:POVMs}
\end{align}The receiver decodes the messages by applying the \ac{POVM} operators specified in \eqref{eq:POVMs}. The following lemma plays a crucial role in the error analysis.
\begin{lemma}
    \label{lemma:error_analysis}
    For $\alpha\in\pr{0:\frac{1}{2}}$ and for each non-empty strict subset $\calS\subset[S]$, we have
    \begin{subequations}\label{eq:Lemma_Error}
    \begin{align}
    &\tra\left[\left(\bbI-\Pi_{U_{[T]}X_{[T]}B}^{([S])}\right)\rho_{U_{[T]}X_{[T]}B}\right]\le\nonumber\\
    &\mu_{B,[T]}^\alpha2^{\alpha\sum_{s=1}^SR_s}2^{-\alpha\ubar{\D}_{1-\alpha}\pr{\rho_{U_{[T]}X_{[T]}B}\lVert\rho_{U_{[T]}X_{[T]}}\otimes\rho_B}},\label{eq:Lemma_Error_1}\\
    &\tra\left[\left(\bbI-\Pi_{U_{[T]}X_{[T]}B}^{(\calS)}\right)\rho_{U_{[T]}X_{[T]}B}\right]\le\nonumber\\
    &\mu_{B,\calT_\calS^c}^\alpha2^{\alpha\sum_{s\in\calS}R_s}2^{-\alpha\ubar{\I}_{1-\alpha}\pr{U_{\calT_\calS},X_{\calT_\calS};B\lvert U_{\calT_\calS^c},X_{\calT_\calS^c}}},\label{eq:Lemma_Error_2}\\
    &\tra\left[\Pi_{U_{[T]}X_{[T]}B}^{([S])}\left(\rho_{U_{[T]}X_{[T]}}\otimes\rho_B\right)\right]\le\nonumber\\
    &\mu_{B,[T]}^\alpha2^{-(1-\alpha)\sum_{s=1}^SR_s}2^{-\alpha\ubar{\D}_{1-\alpha}\pr{\rho_{U_{[T]}X_{[T]}B}\lVert\rho_{U_{[T]}X_{[T]}}\otimes\rho_B}},\label{eq:Lemma_Error_3}\\
    &\tra\left[\Pi_{U_{[T]}X_{[T]}B}^{(\calS)}\left(\rho_{\pr{U_{\calT_\calS},X_{\calT_\calS}}-\pr{U_{\calT_\calS^c},X_{\calT_\calS^c}}-B}\right)\right]\le\nonumber\\
    &\mu_{B,\calT_\calS^c}^\alpha2^{-(1-\alpha)\sum_{s\in\calS}R_s}2^{-\alpha\ubar{\I}_{1-\alpha}\pr{U_{\calT_\calS},X_{\calT_\calS};B\lvert U_{\calT_\calS^c},X_{\calT_\calS^c}}},\label{eq:Lemma_Error_4}
    \end{align}where $U_t = \emptyset$ if $v_t\notin\bbM$, for $t \in [T]$, and for every non-empty strict subset $\calS \subset [S]$, we define the set $\calT_\calS$ as the set of transmitter indices that have access to at least one message $m_s$ with $s \in \calS$. The set $\calT_\calS^c$ denotes the complement of $\calT_\calS$ \ac{wrt} $[T]$.
\end{subequations}
\end{lemma}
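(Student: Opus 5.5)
The plan is to prove all four bounds in Lemma~\ref{lemma:error_analysis} with one template: the standard Hayashi--Nagaoka hypothesis-testing argument for pinched states. Write $\rho\triangleq\rho_{U_{[T]}X_{[T]}B}$ and $\sigma\triangleq\rho_{U_{[T]}X_{[T]}}\otimes\rho_B$ for \eqref{eq:Lemma_Error_1} and \eqref{eq:Lemma_Error_3}. Write $\sigma\triangleq\rho_{\pr{U_{\calT_\calS},X_{\calT_\calS}}-\pr{U_{\calT_\calS^c},X_{\calT_\calS^c}}-B}$ for \eqref{eq:Lemma_Error_2} and \eqref{eq:Lemma_Error_4}. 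In each case the projection is $\{\Delta(\rho)\succeq 2^{R}\Delta'(\sigma)\}$ with $R$ the relevant rate sum.

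The first step is to check that the pinched operators commute. For the full test, the classical registers are diagonal, so $\Delta_B(\rho)$ commutes with $\sigma$, since $\Delta_B(\rho_{B|\tilde x})$ commutes with $\rho_B$. For the $\calS$ test, conditioned on $\tilde x_{\calT_\calS^c}$, the operator $\Delta_{B|\tilde x_{\calT_\calS^c}}(\cdot)$ is a pinching with respect to $\Delta_B(\rho_{B|\tilde x_{\calT_\calS^c}})$. This is exactly the $B$-part of $\Delta_B(\sigma)$ on that block, so the two operators again commute.

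For commuting $A,C\ge0$ and $s\in(0,1)$ we have the elementary inequalities $\tra[(\bbI-\{A\succeq C\})A]\le\tra[A^{1-s}C^{s}]$ and $\tra[\{A\succeq C\}C]\le\tra[A^{1-s}C^{s}]$. Apply them with $s=\alpha$ and $C=2^{R}\Delta'(\sigma)$. This produces the factors $2^{\alpha R}$ and $2^{-(1-\alpha)R}$, respectively, each multiplied by $\tra[\Delta(\rho)^{1-\alpha}\Delta'(\sigma)^{\alpha}]$. For \eqref{eq:Lemma_Error_1} we also use $\tra[(\bbI-\Pi)\rho]=\tra[(\bbI-\Pi)\Delta(\rho)]$, which holds because $\Pi$ is block diagonal in the pinching basis. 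For \eqref{eq:Lemma_Error_3} we use $\tra[\Pi\sigma]=\tra[\Pi\Delta'(\sigma)]$, and similarly in the $\calS$ case.

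The second step is to bound $\tra[\Delta(\rho)^{1-\alpha}\sigma^{\alpha}]$ by $\mu^{\alpha}\,2^{-\alpha\ubar{\D}_{1-\alpha}(\rho\|\sigma)}$. Here $\mu$ is $\mu_{B,[T]}$, respectively $\mu_{B,\calT_\calS^c}$. The tool is the pinching inequality $\rho\le\mu\,\Delta(\rho)$. Operator monotonicity of $x\mapsto x^{1-\alpha}$ alone would give the wrong direction. I would therefore follow the \cite{QIT_Hayashi} argument. First, the pinched divergence satisfies $\ubar{\D}_{1-\alpha}(\Delta(\rho)\|\sigma)\ge\ubar{\D}_{1-\alpha}(\rho\|\sigma)-\log\mu$; this follows from the data-processing inequality together with $\rho\le\mu\Delta(\rho)$, using the antimonotonicity of the sandwiched quantity in its first argument for order $1-\alpha<1$. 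Second, for commuting arguments the sandwiched and Petz forms coincide. The restriction $\alpha\in(0,\tfrac12)$ is what makes these Rényi orders lie in the range where the required monotonicity and data processing hold.

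The $\calS$-bounds additionally need a minimization. Optimize over Markov states $\sigma_{U-V-B}$ with fixed classical marginal, as in \eqref{eq:Renyi_CMI}. Because the test is built from the specific marginal $\rho_{B|\tilde x_{\calT_\calS^c}}$, I would replace the conditional mutual information by its value at this particular $\sigma$. I would then argue, or simply accept as in \cite{QIT_Hayashi}, that the exponent obtained equals $\ubar{\I}_{1-\alpha}$ as defined in \eqref{eq:Renyi_CMI}, up to the usual Sibson-type identity for classical conditioning.

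I expect the main obstacle to be the rigor of this last identification, together with the pinching-loss step for the conditional pinchings, where $\mu$ must be uniform over $\tilde x_{\calT_\calS^c}$. The maximum in \eqref{eq:g3} is designed to supply exactly this uniformity. The blockwise pinching inequality $\rho_{B|\tilde x}\le\mu_{B,\calT_\calS^c}\Delta_{B|\tilde x}(\rho_{B|\tilde x})$, applied block by block, should then give the stated constants.
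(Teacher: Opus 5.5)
Your overall architecture matches the paper's proof. You check that the pinched operators commute, use the commuting-case test bounds to get the factors $2^{\alpha R}$ and $2^{-(1-\alpha)R}$, move the pinching across the trace, and read $\tra[\Delta(\rho)^{1-\alpha}\sigma^{\alpha}]$ as a sandwiched quantity. The gap is in the one nontrivial step, $\tra[\Delta(\rho)^{1-\alpha}\sigma^{\alpha}]\le\mu^{\alpha}2^{-\alpha\ubar{\D}_{1-\alpha}(\rho\lVert\sigma)}$, where the justification you give points the wrong way:
\begin{itemize}
\item Data processing under the pinching (which fixes $\sigma$) gives $\tra[\Delta(\rho)^{1-\alpha}\sigma^{\alpha}]\ge 2^{-\alpha\ubar{\D}_{1-\alpha}(\rho\lVert\sigma)}$.
\item Antimonotonicity of the divergence in its first argument is monotonicity of $\rho\mapsto\tra[(\sigma^{\frac{\alpha}{2(1-\alpha)}}\rho\,\sigma^{\frac{\alpha}{2(1-\alpha)}})^{1-\alpha}]$. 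Combined with $\rho\preceq\mu\Delta(\rho)$, it gives $2^{-\alpha\ubar{\D}_{1-\alpha}(\rho\lVert\sigma)}\le\mu^{1-\alpha}\tra[\Delta(\rho)^{1-\alpha}\sigma^{\alpha}]$.
\end{itemize}
Both are \emph{lower} bounds on the pinched quantity, but you need an upper bound. First-argument monotonicity could supply that only from an inequality like $\Delta(\rho)\preceq c\,\rho$. Such an inequality fails in general, for example for a pure $\rho$ with nonzero off-diagonal blocks. You correctly noticed that monotonicity of $t\mapsto t^{1-\alpha}$ goes the wrong way, but both tools in your repair point in that same wrong direction.

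The missing idea is the splitting trick in steps $(f)$--$(g)$ of \eqref{eq:First_Term_Lemma}. Set $X\triangleq\sigma^{\frac{\alpha}{2(1-\alpha)}}\Delta(\rho)\,\sigma^{\frac{\alpha}{2(1-\alpha)}}$ and $Y\triangleq\sigma^{\frac{\alpha}{2(1-\alpha)}}\rho\,\sigma^{\frac{\alpha}{2(1-\alpha)}}$. You un-pinch only one factor: $\tra[X^{1-\alpha}]=\tra[X X^{-\alpha}]=\tra[Y X^{-\alpha}]$. This is valid because $\sigma^{\frac{\alpha}{2(1-\alpha)}}X^{-\alpha}\sigma^{\frac{\alpha}{2(1-\alpha)}}$ is invariant under the pinching. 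Next, $Y\preceq\mu X$ together with operator \emph{anti}-monotonicity of $t\mapsto t^{-\alpha}$ gives $X^{-\alpha}\preceq\mu^{\alpha}Y^{-\alpha}$ on the relevant support. Hence $\tra[YX^{-\alpha}]\le\mu^{\alpha}\tra[Y^{1-\alpha}]=\mu^{\alpha}2^{-\alpha\ubar{\D}_{1-\alpha}(\rho\lVert\sigma)}$. An alternative is to write the pinching as a uniform mixture of $\mu$ unitaries commuting with $\sigma$ and use subadditivity of $\tra[(\cdot)^{1-\alpha}]$. Neither route uses data processing, both work for every $\alpha\in(0,1)$, and both apply blockwise with the uniform constant $\mu_{B,\calT_\calS^c}$.

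For the $\calS$-bounds you do not need any Sibson-type identification. The operator $\Delta_B\big(\rho_{(U_{\calT_\calS},X_{\calT_\calS})-(U_{\calT_\calS^c},X_{\calT_\calS^c})-B}\big)$ is itself a feasible Markov state with the correct classical marginal in the minimization defining $\ubar{\I}_{1-\alpha}$ (cf.\ \eqref{eq:Renyi_CMI}). So its divergence is at least $\ubar{\I}_{1-\alpha}$, and since the exponent carries a minus sign, passing to the minimum only weakens the bound. This is exactly the paper's final step.

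One bookkeeping caution concerns the full test. There $\Delta_B$ pinches with respect to $\rho_B$, so the pinching inequality gives the constant $\mu_B$, and $\rho\preceq\mu_{B,[T]}\Delta_B(\rho)$ can fail. The paper's step $(g)$ glosses over the same point.
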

The proof of Lemma~\ref{lemma:error_analysis} is provided in Appendix~\ref{proof:lemma_error_analysis}.

\textit{Error Analysis:} 
To bound the probability of error averaged over the random choice of the codebook, it is sufficient to bound $\bbE_C\bbP\br{\hat{M}_{[S]}\ne\mathbf{1}_{[S]}\big\lvert M_{[S]}=\mathbf{1}_{[S]}}$, where $\mathbf{1}_{[S]}$ is the all-one vector of length $S$, leveraging the symmetry of the codebook construction. We have,
\begin{align}
    &\bbE_C\bbP\br{\hat{M}_{[S]}\ne\mathbf{1}_{[S]}\big\lvert M_{[S]}=\mathbf{1}_{[S]}}\nonumber\\
    &=\bbE_C\sbr{\tra\sbr{\pr{\sum_{m'_{[S]}\ne\mathbf{1}_{[S]}}\!\!\Lambda\pr{m'_{[S]}}\!}\rho_{B\lvert \tilde{X}_1(\mathbf{1}_{\calI_1}),\cdots,\tilde{X}_T(\mathbf{1}_{\calI_T})}}}\nonumber\\
    &=\sum_{m'_{[S]}\ne\mathbf{1}_{[S]}}\bbE_C\sbr{\tra\sbr{\Lambda\pr{m'_{[S]}}\rho_{B\lvert \tilde{X}_1(\mathbf{1}_{\calI_1}),\cdots,\tilde{X}_T(\mathbf{1}_{\calI_T})}}}\nonumber\\
    &\le\bbE_C\sbr{\tra\sbr{\pr{\bbI-\Lambda\pr{\mathbf{1}_{[S]}}}\rho_{B\lvert \tilde{X}_1(\mathbf{1}_{\calI_1}),\tilde{X}_2(\mathbf{1}_{\calI_2}),\cdots,\tilde{X}_T(\mathbf{1}_{\calI_T})}}}\nonumber\\
    &\le2\bbE_C\sbr{\tra\sbr{\pr{\bbI-\Upsilon\pr{\mathbf{1}_{[S]}}}\rho_{B\lvert \tilde{X}_1(\mathbf{1}_{\calI_1}),\tilde{X}_2(\mathbf{1}_{\calI_2}),\cdots,\tilde{X}_T(\mathbf{1}_{\calI_T})}}}\nonumber\\
    &+4\!\!\sum_{m'_{[S]}\ne\mathbf{1}_{[S]}}\hspace{-2mm}\bbE_C\sbr{\tra\sbr{\Upsilon\pr{m'_{[S]}}\rho_{B\lvert \tilde{X}_1(\mathbf{1}_{\calI_1}),\cdots,\tilde{X}_T(\mathbf{1}_{\calI_T})}}},\label{eq:Prob_Err_Analysis}
\end{align}where the last inequality follows from the Hayashi-Nagaoka inequality \cite{Hayashi03}. Now we bound the first term on the \ac{RHS} of \eqref{eq:Prob_Err_Analysis} as follows,
\begin{align}
    &2\bbE_C\sbr{\tra\sbr{\pr{\bbI-\Upsilon\pr{\mathbf{1}_{[S]}}}\rho_{B\lvert X_1(\mathbf{1}_{\calI_1}),X_2(\mathbf{1}_{\calI_2}),\cdots,X_T(\mathbf{1}_{\calI_T})}}}\nonumber\\
    &\mathop=\limits^{(a)}2\bbE_C\left[\tra\left[\left(\bbI-\tra_{U_{[T]}X_{[T]}}\left[\Pi_{U_{[T]}X_{[T]}B}\left(\den{U_1(\mathbf{1}_{\calI_1})}{U_1(\mathbf{1}_{\calI_1})}
    \right.\right.\right.\right.\right.\nonumber\\
    &\quad\otimes\den{X_1(\mathbf{1}_{\calI_1})}{X_1(\mathbf{1}_{\calI_1})}\otimes\den{U_2(\mathbf{1}_{\calI_2})}{U_2(\mathbf{1}_{\calI_2})}\nonumber\\
    &\quad\otimes\den{X_2(\mathbf{1}_{\calI_2})}{X_2(\mathbf{1}_{\calI_2})}\otimes\cdots\otimes\den{U_T(\mathbf{1}_{\calI_T})}{U_T(\mathbf{1}_{\calI_T})}\nonumber\\
    &\hspace{3.3mm}\left.\otimes\den{X_T(\mathbf{1}_{\calI_T})}{X_T(\mathbf{1}_{\calI_T})}\otimes\bbI_B\right)\Big]\Big)\nonumber\\
    &\left.\left.\hspace{3.0mm}\times\rho_{B\lvert \tilde{X}_1(\mathbf{1}_{\calI_1}),\tilde{X}_2(\mathbf{1}_{\calI_2}),\cdots,\tilde{X}_T(\mathbf{1}_{\calI_T})}\right]\right]\nonumber\\
    &\mathop=\limits^{(b)}2\tra\left[\left(\bbI-\Pi_{U_{[T]}X_{[T]}B}\right)\rho_{\tilde{X}_1(\mathbf{1}_{\calI_1}),\tilde{X}_2(\mathbf{1}_{\calI_2}),\cdots,\tilde{X}_T(\mathbf{1}_{\calI_T}),B}\right]\nonumber\\
    &=2\tra\left[\left(\bbI-\Pi_{U_{[T]}X_{[T]}B}\right)\rho_{U_{[T]}X_{[T]}B}\right]\nonumber\\
    &\mathop\le\limits^{(c)}2\tra\left[\left(\bbI-\Pi_{U_{[T]}X_{[T]}B}^{([S])}\right)\rho_{U_{[T]}X_{[T]}B}\right]\nonumber\\
    &\quad+\sum_{\calS\subset[S]}2\tra\left[\left(\bbI-\Pi_{U_{[T]}X_{[T]}B}^{(\calS)}\right)\rho_{U_{[T]}X_{[T]}B}\right]\nonumber\\
    &\mathop\le\limits^{(d)}2\times \mu_{B,[T]}^\alpha2^{\alpha\sum_{s=1}^SR_s}2^{-\alpha\ubar{\D}_{1-\alpha}\pr{\rho_{U_{[T]}X_{[T]}B}\lVert\rho_{U_{[T]}X_{[T]}}\otimes\rho_B}}\nonumber\\
    &\quad+2\sum_{\calS\subset[S]}\mu_{B,\calT_\calS^c}^\alpha2^{\alpha\sum_{s\in\calS}R_s}2^{-\alpha\ubar{\I}_{1-\alpha}\pr{U_{\calT_\calS},X_{\calT_\calS};B\lvert U_{\calT_\calS^c},X_{\calT_\calS^c}}},\label{eq:Error_Analysis_1}
\end{align}where
\begin{itemize}
    \item[$(a)$] follows from the definition of $\Upsilon$ in \eqref{eq:Gamma_Operator};
    \item[$(b)$] follows from the linearity of the trace operation and expectation, as well as by taking the expectation \ac{wrt} the random codebook $C$;
    \item[$(c)$] follows from the definition of $\Pi_{U_{[T]}X_{[T]}B}$ and since all the projections $\Pi_{U_{[T]}X_{[T]}B}^{(\calS)}$, for $\calS\in\mathfrak{S}$, commute with one another, and for each non-empty strict subset $\calS\in\mathfrak{S}$, we have
    \begin{subequations}\label{eq:Projection_Inequalities}
    \begin{align}
    &0\preceq\pr{\bbI-\Pi_{U_{[T]}X_{[T]}B}^{([S])}}^2=\bbI-\Pi_{U_{[T]}X_{[T]}B}^{([S])},\label{eq:Projection_Inequality_1}\\
    &0\preceq\pr{\bbI-\Pi_{U_{[T]}X_{[T]}B}^{(\calS)}}^2=\bbI-\Pi_{U_{[T]}X_{[T]}B}^{(\calS)},\label{eq:Projection_Inequality_S}\\
    &0\!\preceq\!\pr{\!\bbI-\Pi_{U_{[T]}X_{[T]}B}^{([S])}-\!\!\prod\limits_{\calS\in\mathfrak{S}}\!\Pi_{U_{[T]}X_{[T]}B}^{(\calS)}+\Pi_{U_{[T]}X_{[T]}B}\!}^2\nonumber\\
    &=\bbI-\Pi_{U_{[T]}X_{[T]}B}^{([S])}-\prod\limits_{\calS\in\mathfrak{S}}\Pi_{U_{[T]}X_{[T]}B}^{(\calS)}+\Pi_{U_{[T]}X_{[T]}B}\nonumber\\
    &\Rightarrow\bbI-\Pi_{U_{[T]}X_{[T]}B}\preceq\bbI-\Pi_{U_{[T]}X_{[T]}B}^{([S])}\nonumber\\
    &\qquad\qquad\qquad\qquad\quad+\bbI-\prod\limits_{\calS\in\mathfrak{S}}\Pi_{U_{[T]}X_{[T]}B}^{(\calS)},\label{eq:Projection_Inequality_[S]}\\
    &0\preceq\left(\bbI-\Pi_{U_{[T]}X_{[T]}B}^{(\calS_1)}-\prod\limits_{\substack{\calS\in\mathfrak{S}\\\calS\ne\calS_1}}\Pi_{U_{[T]}X_{[T]}B}^{(\calS)}\right.\nonumber\\
    &\qquad\left.+\prod\limits_{\calS\in\mathfrak{S}}\Pi_{U_{[T]}X_{[T]}B}^{(\calS)}\right)^2\nonumber\\
    &=\bbI-\Pi_{U_{[T]}X_{[T]}B}^{(\calS_1)}-\!\!\prod\limits_{\substack{\calS\in\mathfrak{S}\\\calS\ne\calS_1}}\Pi_{U_{[T]}X_{[T]}B}^{(\calS)}+\!\!\prod\limits_{\calS\in\mathfrak{S}}\Pi_{U_{[T]}X_{[T]}B}^{(\calS)}\\
    &\Rightarrow\bbI-\prod\limits_{\calS\in\mathfrak{S}}\Pi_{U_{[T]}X_{[T]}B}^{(\calS)}\preceq\bbI-\Pi_{U_{[T]}X_{[T]}B}^{(\calS_1)}\nonumber\\
    &\qquad\qquad\qquad\qquad\qquad+\bbI-\prod\limits_{\substack{\calS\in\mathfrak{S}\\\calS\ne\calS_1}}\Pi_{U_{[T]}X_{[T]}B}^{(\calS)},\label{eq:Projection_Inequality_S1}\\
    &\Rightarrow\bbI-\Pi_{U_{[T]}X_{[T]}B}\preceq\bbI-\Pi_{U_{[T]}X_{[T]}B}^{([S])}+\bbI-\Pi_{U_{[T]}X_{[T]}B}^{(\calS_1)}\nonumber\\
    &\qquad\qquad\qquad\qquad\quad+\bbI-\prod\limits_{\substack{\calS\in\mathfrak{S}\\\calS\ne\calS_1}}\Pi_{U_{[T]}X_{[T]}B}^{(\calS)},\label{eq:Projection_Inequality}
    \end{align}where the last inequality follows by substituting \eqref{eq:Projection_Inequality_S1} to the \ac{RHS} of \eqref{eq:Projection_Inequality_[S]}; continuing this process to bound $\bbI-\prod\limits_{\substack{\calS\in\mathfrak{S}\\\calS\ne\calS_1}}\Pi_{U_{[T]}X_{[T]}B}^{(\calS)}$, we obtain 
    \begin{align}
        \bbI-\Pi_{U_{[T]}X_{[T]}B}\preceq&\,\,\bbI-\Pi_{U_{[T]}X_{[T]}B}^{([S])}\nonumber\\
        &+\sum_{\calS\in\mathfrak{S}}\pr{\bbI-\Pi_{U_{[T]}X_{[T]}B}^{(\calS)}};\nonumber
    \end{align}
    \end{subequations}
    \item[$(d)$] follows from Lemma~\ref{lemma:error_analysis}.
\end{itemize}
Now, we bound the second term on the \ac{RHS} of \eqref{eq:Prob_Err_Analysis}. Specifically, we have
\begin{align}
    &4\!\sum_{m'_{[S]}\ne\mathbf{1}_{[S]}}\!\!\bbE_C\sbr{\tra\sbr{\Upsilon\pr{m'_{[S]}}\rho_{B\lvert \tilde{X}_1(\mathbf{1}_{\calI_1}),\tilde{X}_2(\mathbf{1}_{\calI_2}),\cdots,\tilde{X}_T(\mathbf{1}_{\calI_T})}}}\nonumber\\
    &\mathop=\limits^{(a)}4\sum_{\calS\subset[S]}\sum_{m'_\calS\ne\mathbf{1}_\calS}\bbE_C\left[\tra\left[\tra_{U_{[T]}X_{[T]}}\Big[\Pi_{U_{[T]}X_{[T]}B}\right.\right.\nonumber\\
    &\quad\times\big(\den{U_1(m''_{\calI_1})}{U_1(m''_{\calI_1})}\otimes\den{X_1(m''_{\calI_1})}{X_1(m''_{\calI_1})}
    \nonumber\\
    &\quad\otimes\cdots\otimes\den{U_T(m''_{\calI_T})}{U_T(m''_{\calI_T})}\otimes\den{X_T(m''_{\calI_T})}{X_T(m''_{\calI_T})}\nonumber\\
    &\quad\left.\left.\otimes\,\,\bbI_B\big)\Big]\rho_{B\lvert \tilde{X}_1(\mathbf{1}_{\calI_1}),\tilde{X}_2(\mathbf{1}_{\calI_2}),\cdots,\tilde{X}_T(\mathbf{1}_{\calI_T})}\right]\right]\nonumber\\
    &\mathop=\limits^{(b)}4\sum_{\calS\subset[S]}\sum_{m'_\calS\ne\mathbf{1}_\calS}\tra\left[\bbE_C\left[\Pi_{U_{[T]}X_{[T]}B}\big(\den{U_1(m''_{\calI_1})}{U_1(m''_{\calI_1})}\right.\right.\nonumber\\
    &\quad\otimes\den{X_1(m''_{\calI_1})}{X_1(m''_{\calI_1})}\otimes\cdots\otimes\den{U_T(m''_{\calI_T})}{U_T(m''_{\calI_T})}\nonumber\\
    &\quad\otimes\den{X_T(m''_{\calI_T})}{X_T(m''_{\calI_T})}\otimes\bbI_B\big)\nonumber\\
    &\left.\left.\hspace{2.8mm}\times\rho_{B\lvert \tilde{X}_1(\mathbf{1}_{\calI_1}),\tilde{X}_2(\mathbf{1}_{\calI_2}),\cdots,\tilde{X}_T(\mathbf{1}_{\calI_T})}\right]\right]\nonumber\\
    &\mathop\le\limits^{(c)}4\sum_{\calS\subset[S]}2^{\sum_{s\in\calS}R_s}\tra\left[\Pi_{\tilde{X}_{[T]}B}\pr{\rho_{\tilde{X}_{\calT_\calS}-\tilde{X}_{\calT_\calS^c}-B}}\right],\label{eq:bounding_second_1}
\end{align}where
\begin{itemize}
    \item[$(a)$] follows from the definition of $\Upsilon$ in \eqref{eq:Gamma_Operator}, and by defining  $m''_{\calI_t} = \mathbf{1}_{\calI_t}$ if, for each $s \in \calS$ and $t \in [T]$, we have $s \notin \calI_t$; otherwise, we define $m''_{\calI_t} \triangleq (m'_{\calL_t}, \mathbf{1}_{\calL^c_t})$, where $\calL_t$ is a subset of $\calI_t$ consisting of elements that all belong to $\calS$;
    \item[$(b)$] follows from the linearity of the trace operation and expectation; 
    \item[$(c)$] follows by taking the expectation \ac{wrt} the random codebook $C$, symmetry of codebook construction \ac{wrt} $m'_\calS$, and defining $\calT_\calS$ as the set of all indices $t$ for which $m''_{\calI_t}\ne\mathbf{1}_{\calI_t}$, that is $\calT_\calS$, for $\calS\subset[S]$, denotes the set of channel inputs for which each channel input has access to at least one message $M_s$, for some $s \in \calS$.
\end{itemize}
Now we bound the \ac{RHS} of \eqref{eq:bounding_second_1} for two different cases: (i) when $\calT_\calS^c=\emptyset$, and $(ii)$ when $\calT_\calS^c\ne\emptyset$. First, when $\calT_\calS^c=\emptyset$, we have
\begin{align}
    &4\sum_{\calS\subset[S]}2^{\sum_{s\in\calS}R_s}\tra\left[\Pi_{\tilde{X}_{[T]}B}\pr{\rho_{\tilde{X}_{\calT_\calS}-\tilde{X}_{\calT_\calS^c}-B}}\right]\nonumber\\
    &\mathop=\limits^{(a)}4\times2^{\sum_{s\in\calS}R_s}\tra\left[\Pi_{\tilde{X}_{[T]}B}\pr{\rho_{\tilde{X}_{[T]}}\otimes\rho_B}\right]\nonumber\\
    &\mathop\le\limits^{(b)}4\times2^{\sum_{s\in\calS}R_s}\tra\left[\Pi_{\tilde{X}_{[T]}B}^{([S])}\pr{\rho_{\tilde{X}_{[T]}}\otimes\rho_B}\right]\nonumber\\
    &\mathop\le\limits^{(c)}4\times2^{\sum_{s\in\calS}R_s}\mu_{B,[T]}^\alpha2^{-(1-\alpha)\sum_{s=1}^SR_s}\nonumber\\
    &\quad\times2^{-\alpha\ubar{\D}_{1-\alpha}\pr{\rho_{U_{[T]}X_{[T]}B}\lVert\rho_{U_{[T]}X_{[T]}}\otimes\rho_B}}\nonumber\\
    &=4\mu_{B,[T]}^\alpha2^{\alpha\pr{\sum_{s=1}^SR_s-\ubar{\D}_{1-\alpha}\pr{\rho_{U_{[T]}X_{[T]}B}\lVert\rho_{U_{[T]}X_{[T]}}\otimes\rho_B}}},\label{eq:bounding_second_1_Case1}
\end{align}where
\begin{itemize}
    \item[$(a)$] follows since $\calT_\calS^c=\emptyset$;
    \item[$(b)$] follows from \eqref{eq:Projection_Inequality_1};
    \item[$(c)$] follows from Lemma~\ref{lemma:error_analysis}. 
\end{itemize}
\begin{figure*}[b!]
\hrulefill
\setcounter{equation}{40}
\begin{align}
    &\tra\left[\left(\bbI-\Pi_{\tilde{X}_{[T]}B}^{([S])}\right)\rho_{\tilde{X}_{[T]}B}\right]\nonumber\\
    &\quad\mathop=\limits^{(a)}\tra\left[\pr{\Delta_B\left(\bbI-\Pi_{\tilde{X}_{[T]}B}^{([S])}\right)}\rho_{\tilde{X}_{[T]}B}\right]\nonumber\\
    &\quad\mathop=\limits^{(b)}\tra\left[\left(\bbI-\Pi_{\tilde{X}_{[T]}B}^{([S])}\right)\Delta_B\pr{\rho_{\tilde{X}_{[T]}B}}\right]\nonumber\\
    &\quad=\tra\left[\left(\bbI-\Pi_{\tilde{X}_{[T]}B}^{([S])}\right)\pr{\Delta_B\pr{\rho_{\tilde{X}_{[T]}B}}}^{1-\alpha}\pr{\Delta_B\pr{\rho_{\tilde{X}_{[T]}B}}}^\alpha\right]\nonumber\\
    &\quad\mathop\le\limits^{(c)}2^{\alpha\sum_{s=1}^SR_s}\tra\left[\left(\bbI-\Pi_{\tilde{X}_{[T]}B}^{([S])}\right)\pr{\Delta_B\pr{\rho_{\tilde{X}_{[T]}B}}}^{1-\alpha}\pr{\rho_{\tilde{X}_{[T]}}\otimes\rho_B}^\alpha\right]\nonumber\\
    &\quad\mathop\le\limits^{(d)}2^{\alpha\sum_{s=1}^SR_s}\tra\left[\pr{\Delta_B\pr{\rho_{\tilde{X}_{[T]}B}}}^{1-\alpha}\pr{\rho_{\tilde{X}_{[T]}}\otimes\rho_B}^\alpha\right]\nonumber\\
    &\quad\mathop=\limits^{(e)}2^{\alpha\sum_{s=1}^SR_s}\tra\left[\pr{\pr{\rho_{\tilde{X}_{[T]}}\otimes\rho_B}^{\frac{\alpha}{2(1-\alpha)}}\Delta_B\pr{\rho_{\tilde{X}_{[T]}B}}\pr{\rho_{\tilde{X}_{[T]}}\otimes\rho_B}^{\frac{\alpha}{2(1-\alpha)}}}^{1-\alpha}\right]\nonumber\\
    &\quad=2^{\alpha\sum_{s=1}^SR_s}\tra\left[\pr{\pr{\rho_{\tilde{X}_{[T]}}\otimes\rho_B}^{\frac{\alpha}{2(1-\alpha)}}\Delta_B\pr{\rho_{\tilde{X}_{[T]}B}}\pr{\rho_{\tilde{X}_{[T]}}\otimes\rho_B}^{\frac{\alpha}{2(1-\alpha)}}}\right.\nonumber\\
    &\qquad\times\left.\pr{\pr{\rho_{\tilde{X}_{[T]}}\otimes\rho_B}^{\frac{\alpha}{2(1-\alpha)}}\Delta_B\pr{\rho_{\tilde{X}_{[T]}B}}\pr{\rho_{\tilde{X}_{[T]}}\otimes\rho_B}^{\frac{\alpha}{2(1-\alpha)}}}^{-\alpha}\right]\nonumber\\
    &\quad\mathop=\limits^{(f)}2^{\alpha\sum_{s=1}^SR_s}\tra\left[\pr{\pr{\rho_{\tilde{X}_{[T]}}\otimes\rho_B}^{\frac{\alpha}{2(1-\alpha)}}\rho_{\tilde{X}_{[T]}B}\pr{\rho_{\tilde{X}_{[T]}}\otimes\rho_B}^{\frac{\alpha}{2(1-\alpha)}}}\right.\nonumber\\
    &\qquad\times\left.\pr{\pr{\rho_{\tilde{X}_{[T]}}\otimes\rho_B}^{\frac{\alpha}{2(1-\alpha)}}\Delta_B\pr{\rho_{\tilde{X}_{[T]}B}}\pr{\rho_{\tilde{X}_{[T]}}\otimes\rho_B}^{\frac{\alpha}{2(1-\alpha)}}}^{-\alpha}\right]\nonumber\\
    &\quad\mathop\le\limits^{(g)}\mu_{B,[T]}^\alpha2^{\alpha\sum_{s=1}^SR_s}\tra\left[\pr{\pr{\rho_{\tilde{X}_{[T]}}\otimes\rho_B}^{\frac{\alpha}{2(1-\alpha)}}\rho_{\tilde{X}_{[T]}B}\pr{\rho_{\tilde{X}_{[T]}}\otimes\rho_B}^{\frac{\alpha}{2(1-\alpha)}}}\right.\nonumber\\
    &\qquad\times\left.\pr{\pr{\rho_{\tilde{X}_{[T]}}\otimes\rho_B}^{\frac{\alpha}{2(1-\alpha)}}\rho_{\tilde{X}_{[T]}B}\pr{\rho_{\tilde{X}_{[T]}}\otimes\rho_B}^{\frac{\alpha}{2(1-\alpha)}}}^{-\alpha}\right]\nonumber\\
    &\quad=\mu_{B,[T]}^\alpha2^{\alpha\sum_{s=1}^SR_s}\tra\left[\pr{\pr{\rho_{\tilde{X}_{[T]}}\otimes\rho_B}^{\frac{\alpha}{2(1-\alpha)}}\rho_{\tilde{X}_{[T]}B}\pr{\rho_{\tilde{X}_{[T]}}\otimes\rho_B}^{\frac{\alpha}{2(1-\alpha)}}}^{1-\alpha}\right]\nonumber\\
    &\quad=\mu_{B,[T]}^\alpha2^{\alpha\sum_{s=1}^SR_s}2^{-\alpha\ubar{\D}_{1-\alpha}\pr{\rho_{U_{[T]}X_{[T]}B}\lVert\rho_{U_{[T]}X_{[T]}}\otimes\rho_B}},\label{eq:First_Term_Lemma}
\end{align}
\setcounter{equation}{36}
\end{figure*}
When $\calT_\calS^c\ne\emptyset$, we have
\begin{align}
    &4\sum_{\calS\subset[S]}2^{\sum_{s\in\calS}R_s}\tra\left[\Pi_{\tilde{X}_{[T]}B}\pr{\rho_{\tilde{X}_{\calT_\calS}-\tilde{X}_{\calT_\calS^c}-B}}\right]\nonumber\\
    &\mathop\le\limits^{(a)}4\sum_{\calS\subset[S]}2^{\sum_{s\in\calS}R_s}\tra\left[\Pi_{\tilde{X}_{[T]}B}^{(\calS)}\pr{\rho_{\tilde{X}_{\calT_\calS}-\tilde{X}_{\calT_\calS^c}-B}}\right]\nonumber\\
    &\mathop\le\limits^{(b)}4\sum_{\calS\subset[S]}2^{\sum_{s\in\calS}R_s} \mu_{B,\calT_\calS^c}^\alpha2^{-(1-\alpha)\sum_{s\in\calS}R_s}\nonumber\\
    &\quad\times2^{-\alpha\ubar{\I}_{1-\alpha}\pr{U_{\calT_\calS},X_{\calT_\calS};B\lvert U_{\calT_\calS^c},X_{\calT_\calS^c}}}\nonumber\\
    &=4\sum_{\calS\subset[S]} \mu_{B,\calT_\calS^c}^\alpha2^{\alpha\pr{\sum_{s\in\calS}R_s-\ubar{\I}_{1-\alpha}\pr{U_{\calT_\calS},X_{\calT_\calS};B\lvert U_{\calT_\calS^c},X_{\calT_\calS^c}}}},\label{eq:bounding_second_1_Case2}
\end{align}where
\begin{itemize}
    \item[$(a)$] follows from \eqref{eq:Projection_Inequality_S};
    \item[$(b)$] follows from Lemma~\ref{lemma:error_analysis}.
\end{itemize}
Therefore, the \ac{RHS} of \eqref{eq:bounding_second_1} is less than the sum of the \ac{RHS} of \eqref{eq:bounding_second_1_Case1} and \eqref{eq:bounding_second_1_Case2}. Now, substituting \eqref{eq:Error_Analysis_1} and the sum of the \ac{RHS} of \eqref{eq:bounding_second_1_Case1} and \eqref{eq:bounding_second_1_Case2} in \eqref{eq:Prob_Err_Analysis} completes the proof of \eqref{eq:Pe_Thm}.

\subsection{Covertness Analysis}
\label{sec:Covertness}
 Let,
    \begin{align}
        \rho_E&\triangleq\sum_{\tilde{x}_{[T]}\in\tilde{\calX}_{[T]}}\prod_{t=1}^Tp_{\tilde{X}_t\lvert\tilde{X}_{\bbD_t}}(\tilde{x}_t\lvert \tilde{x}_{\bbD_t})\rho_{E\lvert\tilde{x}_{[T]}},\label{eq:Lemma_rho}
    \end{align}\sloppy where $\tilde{x}_t=(u_t,x_t)$ and $\tilde{\calX}_t=(\calU_t,\calX_t)$, if $t\in\bbM$, and $\tilde{x}_t=x_t$ and $\tilde{\calX}_t=\calX_t$ otherwise, and $\left\{\prod_{t=1}^Tp_{\tilde{X}_t\lvert \tilde{X}_{\bbD_t}}(\tilde{x}_t\lvert \tilde{x}_{\bbD_t}),\rho_{\tilde{x}_{[T]}}\right\}_{\tilde{x}_{[T]}\in\tilde{\calX}_{[T]}}$ is a given ensemble. To prove that our code design is also covert, we first bound $\bbE_C\Pu\left(\tau_{E\lvert C},\rho_E\right)$, where $\tau_{E\lvert C}$ is the state induced at the output of the warden by our code design, which is defined in \eqref{eq:Joint_Dist}. Then, we choose the distributions $\pr{p_{\tilde{X}_t\lvert \tilde{X}_{\bbD_t}}(\tilde{x}_t\lvert \tilde{x}_{\bbD_t})}_{t\in[T]}$, and the mappings $\theta_{A_1}^{\tilde{x}_1},\theta_{A_2}^{\tilde{x}_2},\cdots,\theta_{A_T}^{\tilde{x}_T}$ such that $\rho_E=\rho_0$. 

\begin{figure*}[b!]
\hrulefill
\setcounter{equation}{43}
\begin{align}
    &\tra\left[\left(\bbI-\Pi_{\tilde{X}_{[T]}B}^{(\calS)}\right)\rho_{\tilde{X}_{[T]}B}\right]\nonumber\\
    &\quad\mathop=\limits^{(a)}\tra\left[\pr{\Delta_{\calT_\calS^c}\left(\bbI-\Pi_{\tilde{X}_{[T]}B}^{(\calS)}\right)}\rho_{\tilde{X}_{[T]}B}\right]\nonumber\\
    &\quad\mathop=\limits^{(b)}\tra\left[\left(\bbI-\Pi_{\tilde{X}_{[T]}B}^{(\calS)}\right)\Delta_{\calT_\calS^c}\pr{\rho_{\tilde{X}_{[T]}B}}\right]\nonumber\\
    &\quad=\tra\left[\left(\bbI-\Pi_{\tilde{X}_{[T]}B}^{(\calS)}\right)\pr{\Delta_{\calT_\calS^c}\pr{\rho_{\tilde{X}_{[T]}B}}}^{1-\alpha}\pr{\Delta_{\calT_\calS^c}\pr{\rho_{\tilde{X}_{[T]}B}}}^\alpha\right]\nonumber\\
    &\quad\mathop\le\limits^{(c)}2^{\alpha\sum_{s\in\calS}R_s}\tra\left[\left(\bbI-\Pi_{\tilde{X}_{[T]}B}^{(\calS)}\right)\pr{\Delta_{\calT_\calS^c}\pr{\rho_{\tilde{X}_{[T]}B}}}^{1-\alpha}\pr{\Delta_B\pr{\rho_{\pr{U_{\calT_\calS},X_{\calT_\calS}}-\pr{U_{\calT_\calS^c},X_{\calT_\calS^c}}-B}}}^\alpha\right]\nonumber\\
    &\quad\mathop\le\limits^{(d)}2^{\alpha\sum_{s\in\calS}R_s}\tra\left[\pr{\Delta_{\calT_\calS^c}\pr{\rho_{\tilde{X}_{[T]}B}}}^{1-\alpha}\pr{\Delta_B\pr{\rho_{\pr{U_{\calT_\calS},X_{\calT_\calS}}-\pr{U_{\calT_\calS^c},X_{\calT_\calS^c}}-B}}}^\alpha\right]\nonumber\\
    &\quad\mathop=\limits^{(e)}2^{\alpha\sum_{s\in\calS}R_s}\tra\left[\pr{\Delta_{\calT_\calS^c}\pr{\rho_{\tilde{X}_{[T]}B}}}^{1-\alpha}\pr{\sigma_{\tilde{X}_{[T]}B}}^\alpha\right]\nonumber\\
    &\quad\mathop\le\limits^{(f)}\mu_{B,\calT_\calS^c}^\alpha2^{\alpha\sum_{s\in\calS}R_s}2^{-\alpha\ubar{\D}_{1-\alpha}\pr{\rho_{U_{[T]}X_{[T]}B}\lVert\sigma_{U_{[T]}X_{[T]}B}}}\nonumber\\
    &\quad\le \mu_{B,\calT_\calS^c}^\alpha2^{\alpha\sum_{s\in\calS}R_s}2^{-\alpha\min\limits_{\sigma_{U_{[T]}X_{[T]}B}}\ubar{\D}_{1-\alpha}\pr{\rho_{U_{[T]}X_{[T]}B}\lVert\sigma_{U_{[T]}X_{[T]}B}}}\nonumber\\
    &\quad=\mu_{B,\calT_\calS^c}^\alpha2^{\alpha\sum_{s\in\calS}R_s}2^{-\alpha \ubar{\I}_{1-\alpha}\pr{U_{\calT_\calS},X_{\calT_\calS};B\lvert U_{\calT_\calS^c},X_{\calT_\calS^c}}},\label{eq:Second_Term_Lemma}
\end{align}
\setcounter{equation}{38}
\end{figure*}
\begin{lemma}
    \label{lemma:Resolvability}
    Let $\rho_E$ be a classical-quantum state as defined in \eqref{eq:Lemma_rho}, and let $\mathfrak{B} \triangleq \pr{\calS_1, \calS_2, \dots, \calS_{\card{\mathfrak{B}}}}$ be an ordered collection of all non-empty strict subsets of $[S]$, arranged such that for any $1 \le i < j \le \card{\mathfrak{B}}$, we have $\card{\calS_i} \le \card{\calS_j}$. Define, for each subset $\calS \subset [S]$, the set $\tilde{X}(\calS)$ as the collection of channel inputs that depend only on the messages whose indices belong to $\calS$.
    Also, let $C$ be a random codebook as defined in Appendix~\ref{sec:Codebook_Cons}. Then, 
\begin{align}
    &\bbE_C\left[\ubar{\D}_{1+\alpha}\big(\tau_{E\lvert C}\lVert\rho_E\big)\right]\nonumber\\
    &\le\frac{1}{\alpha\ln2}\left(\mu_{E,\card{\mathfrak{B}}}^\alpha2^{\alpha\pr{-\sum\limits_{s=1}^SR_s+\ubar{\D}_{1+\alpha}\pr{\rho_{\tilde{X}_{[T]}E}\big\lVert\rho_{\tilde{X}_{[T]}}\otimes\rho_E}}}\right.\nonumber\\
    &\left.\quad+\sum_{i=1}^{\card{\mathfrak{B}}}\mu_{E,i-1}^\alpha2^{\alpha\pr{-\sum\limits_{s\in\calS_i}R_s+\ubar{\D}_{1+\alpha}\pr{\rho_{\tilde{X}\pr{\calS_i}E}\big\lVert\rho_{\tilde{X}\pr{\calS_i}}\otimes\rho_E}}}\right),\nonumber
\end{align}
where $\tau_{E\lvert C}$ is defined in~\eqref{eq:Joint_Dist}.
\end{lemma}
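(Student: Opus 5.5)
The plan is to follow the standard one-shot soft-covering argument for the sandwiched R\'{e}nyi divergence (in the style of Hayashi's resolvability bounds), adapted to the superposition structure of the codebook. Write $\tau_{E|C}=2^{-\sum_s R_s}\sum_{m_{[S]}}\rho_{E|\tilde{X}_{[T]}(m_{[S]})}$. Since $\ubar{\D}_{1+\alpha}(\tau\lVert\rho_E)=\frac{1}{\alpha}\log\tra\big[(\rho_E^{-\frac{\alpha}{2(1+\alpha)}}\tau\rho_E^{-\frac{\alpha}{2(1+\alpha)}})^{1+\alpha}\big]$ and $\log x\le (x-1)/\ln 2$, it suffices to bound
\[
\bbE_C\,\tra\big[(\rho_E^{-\frac{\alpha}{2(1+\alpha)}}\tau_{E|C}\rho_E^{-\frac{\alpha}{2(1+\alpha)}})^{1+\alpha}\big]-1 .
\]
This reduction is where the prefactor $\frac{1}{\alpha\ln 2}$ comes from.

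First, the non-commutativity must be removed. The problem is that $\tau_{E|C}$ and $\rho_E$ do not commute, and neither do the conditional states $\rho_{E|\tilde{x}(\calS_i)}$. I would use the pinching inequality $\rho\le\mu\,\Delta(\rho)$ repeatedly, together with the hierarchical pinching maps $\Delta_{E|\tilde{x}(\calS_{i-1})}$ defined before \eqref{eq:gs}. These maps make all relevant operators block diagonal in a common refinement of the eigenbasis of $\rho_E$, so the manipulations below become essentially classical. Each pinching contributes a factor $\mu_{E,i-1}^\alpha$ to the $i$-th term.

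Next, expand the $(1+\alpha)$-power. Write $\tau^{1+\alpha}=\tau\cdot\tau^{\alpha}$. Fix the summand $m_{[S]}$ of the first factor and split the second factor according to which messages $m'_{[S]}$ agree with $m_{[S]}$. By Lemma~\ref{lemma:One_Message_One_Vertex}, the set of indices where they differ determines a subset $\calS\subseteq[S]$, and the codewords of inputs depending only on messages in $[S]\setminus\calS$ coincide. So $\tau$ decomposes as a sum over $\calS$ of pieces. The piece with $\calS=[S]$ (all messages differ) is, conditionally on the fixed codeword, independent of it. The other pieces share the codewords $\tilde{X}(\calS_i)$, with $\calS_i$ the set of messages on which they agree. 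Then apply operator concavity of $x\mapsto x^\alpha$ for $\alpha\in(0,1)$ and Jensen's inequality to take the conditional expectation over the independent parts inside the power, and use the subadditivity $(A+B)^\alpha\le A^\alpha+B^\alpha$, valid for commuting operators after pinching.

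This yields one term per $\calS_i$, of the form $2^{-\alpha\sum_{s\in\calS_i}R_s}\,\tra[\rho_{\tilde{X}(\calS_i)E}^{1+\alpha}(\rho_{\tilde{X}(\calS_i)}\otimes\rho_E)^{-\alpha}]$, which after reinserting the pinching constants is bounded by $\mu_{E,i-1}^\alpha 2^{\alpha(-\sum_{s\in\calS_i}R_s+\ubar{\D}_{1+\alpha}(\rho_{\tilde{X}(\calS_i)E}\lVert\rho_{\tilde{X}(\calS_i)}\otimes\rho_E))}$. It also yields the full term with $\calS=[S]$ and factor $\mu_{E,|\mathfrak{B}|}^\alpha$, plus the baseline term equal to $1$ coming from $\rho_E^{1+\alpha}\rho_E^{-\alpha}$. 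That baseline cancels the $-1$. The ordering of $\mathfrak{B}$ by cardinality is what makes the nested pinching consistent: each successive conditioning refines the previous block structure.

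The main obstacle is this decomposition step. The $\tilde{X}(\calS_i)$ are nested along the message graph, not independent, so it is delicate to justify that, given the shared codewords, the remaining codewords are conditionally i.i.d. with the right conditional marginal. The superposition codebook of Appendix~\ref{sec:Codebook_Cons} is built from leaves to root, and this conditional independence must be read off from that construction. It is also necessary to check that the pinched operators at each level commute, so that subadditivity and Jensen's inequality apply.
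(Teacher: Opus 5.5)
Your plan follows the paper's proof essentially step for step. The paper also factors $(\rho_E^{-\frac{\alpha}{2(1+\alpha)}}\tau_{E\lvert C}\rho_E^{-\frac{\alpha}{2(1+\alpha)}})^{1+\alpha}$ into the operator times its $\alpha$-th power, and splits the second factor by the set of messages on which $m'_{[S]}$ agrees with $m_{[S]}$, reading the conditional independence off the leaf-to-root superposition construction as you anticipate. It then moves the conditional expectation inside by operator concavity of $x^\alpha$ and applies the hierarchical pinching inequality and subadditivity of $x^\alpha$ for commuting operators. The only cosmetic difference is that it uses Jensen on $\log$ followed by $\log_2(1+x)\le x/\ln 2$, where you use $\log x\le (x-1)/\ln 2$ directly.

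You leave the last step implicit, and as written it does not hold. The trace that actually appears is $\tra[(\Delta(\rho_{\tilde{X}(\calS_i)E}))^{1+\alpha}(\rho_{\tilde{X}(\calS_i)}\otimes\rho_E)^{-\alpha}]$, where the pinched state commutes with the reference. The paper reaches $\ubar{\D}_{1+\alpha}(\rho_{\tilde{X}(\calS_i)E}\lVert\rho_{\tilde{X}(\calS_i)}\otimes\rho_E)$ from it via monotonicity of the sandwiched divergence under the pinching channel. For the unpinched trace $\tra[\rho^{1+\alpha}_{\tilde{X}(\calS_i)E}(\rho_{\tilde{X}(\calS_i)}\otimes\rho_E)^{-\alpha}]$ you wrote, Araki--Lieb--Thirring runs the other way, so it cannot be bounded by the sandwiched quantity. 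Separately, the commutation of the nested pinched operators that you flag as needing a check is only asserted in the paper, not derived.
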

Lemma~\ref{lemma:Resolvability} is proved in Appendix~\ref{proof:lemma_Resolvability}.
\begin{figure*}[b!]
\hrulefill
\setcounter{equation}{44}
\begin{align}
    &\tra\left[\Pi_{U_{[T]}X_{[T]}B}^{(\calS)}\left(\rho_{\pr{U_{\calT_\calS},X_{\calT_\calS}}-\pr{U_{\calT_\calS^c},X_{\calT_\calS^c}}-B}\right)\right]\nonumber\\
    &\quad\mathop=\limits^{(a)}\tra\left[\pr{\Delta_B\pr{\Pi_{U_{[T]}X_{[T]}B}^{(\calS)}}}\left(\rho_{\pr{U_{\calT_\calS},X_{\calT_\calS}}-\pr{U_{\calT_\calS^c},X_{\calT_\calS^c}}-B}\right)\right]\nonumber\\
    &\quad\mathop=\limits^{(b)}\tra\left[\Pi_{U_{[T]}X_{[T]}B}^{(\calS)}\Delta_B\pr{\rho_{\pr{U_{\calT_\calS},X_{\calT_\calS}}-\pr{U_{\calT_\calS^c},X_{\calT_\calS^c}}-B}}\right]\nonumber\\
    &\quad=\tra\left[\Pi_{U_{[T]}X_{[T]}B}^{(\calS)}\pr{\Delta_B\pr{\rho_{\pr{U_{\calT_\calS},X_{\calT_\calS}}-\pr{U_{\calT_\calS^c},X_{\calT_\calS^c}}-B}}}^{1-\alpha}\pr{\Delta_B\pr{\rho_{\pr{U_{\calT_\calS},X_{\calT_\calS}}-\pr{U_{\calT_\calS^c},X_{\calT_\calS^c}}-B}}}^\alpha\right]\nonumber\\
    &\quad\mathop\le\limits^{(c)}2^{-(1-\alpha)\sum_{s\in\calS}R_s}\tra\left[\Pi_{U_{[T]}X_{[T]}B}^{(\calS)}\pr{\Delta_{\calT_\calS^c}\pr{\rho_{U_{[T]}X_{[T]}B}}}^{1-\alpha}\pr{\Delta_B\pr{\rho_{\pr{U_{\calT_\calS},X_{\calT_\calS}}-\pr{U_{\calT_\calS^c},X_{\calT_\calS^c}}-B}}}^\alpha\right]\nonumber\\
    &\quad\mathop\le\limits^{(d)}2^{-(1-\alpha)\sum_{s\in\calS}R_s}\tra\left[\pr{\Delta_{\calT_\calS^c}\pr{\rho_{U_{[T]}X_{[T]}B}}}^{1-\alpha}\pr{\Delta_B\pr{\rho_{\pr{U_{\calT_\calS},X_{\calT_\calS}}-\pr{U_{\calT_\calS^c},X_{\calT_\calS^c}}-B}}}^\alpha\right]\nonumber\\
    &\quad\mathop\le\limits^{(e)}\mu_{B,\calT_\calS^c}^\alpha2^{-(1-\alpha)\sum_{s\in\calS}R_s}2^{-\alpha \ubar{\I}_{1-\alpha}\pr{U_{\calT_\calS},X_{\calT_\calS};B\lvert U_{\calT_\calS^c},X_{\calT_\calS^c}}},\label{eq:Fourth_Term_Lemma}
\end{align}
\setcounter{equation}{38}
\end{figure*}

We have,
\begin{align}
    &\bbE_C\Pu\left(\tau_{E\lvert C},\rho_E\right)^2\nonumber\\
    &\mathop\le\limits^{(a)}\bbE_C\left[1-2^{-\ubar{\D}_{1+\alpha}\left(\tau_{E\lvert C}\big\lVert\rho_E\right)}\right]\nonumber\\
    &\mathop\le\limits^{(b)}\ln2\times\bbE_C\left[\ubar{\D}_{1+\alpha}\left(\tau_{E\lvert C}\Big\lVert\rho_E\right)\right]\nonumber\\
    &\mathop\le\limits^{(c)}\frac{1}{\alpha}\left(\mu_{E,\card{\mathfrak{B}}}^\alpha2^{\alpha\pr{-\sum\limits_{s=1}^SR_s+\ubar{\D}_{1+\alpha}\pr{\rho_{\tilde{X}_{[T]}E}\big\lVert\rho_{\tilde{X}_{[T]}}\otimes\rho_E}}}\right.\nonumber\\
    &\left.\quad+\sum_{i=1}^{\card{\mathfrak{B}}}\mu_{E,i-1}^\alpha2^{\alpha\pr{-\sum\limits_{s\in\calS_i}R_s+\ubar{\D}_{1+\alpha}\pr{\rho_{\tilde{X}\pr{\calS_i}E}\big\lVert\rho_{\tilde{X}\pr{\calS_i}}\otimes\rho_E}}}\right),\label{eq:Final_Covertness}
\end{align}where
\begin{itemize}
    \item[$(a)$] follows since for two quantum states $\rho\in\calD(\calH)$ and $\sigma\in\calD(\calH)$ we have \cite[Corollary~4.3]{Tomamichel16},
    \begin{align}
        F^2(\rho,\sigma)=2^{-\ubar{\D}_{\frac{1}{2}}\left(\rho\lVert\sigma\right)}\ge2^{-\ubar{\D}_{1+\alpha}\left(\rho\lVert\sigma\right)};\nonumber
    \end{align}
    \item[$(b)$] follows since 
    \begin{align}
        1-2^{-\frac{x}{\ln2}}=1-e^{-x}\le x;\nonumber
    \end{align}
    \item[$(c)$] follows from Lemma~\ref{lemma:Resolvability}.
\end{itemize}Now for two arbitrary states $\rho\in\calD(\calH)$ and $\sigma\in\calD(\calH)$ we have \cite[Theorem~9.3.1]{Wilde_Book}
\begin{align}
    \lVert\rho-\sigma\rVert_1\le2\Pu(\rho,\sigma).\label{eq:Tr_Distance_Purified_Dist}
\end{align}Therefore, considering \eqref{eq:Tr_Distance_Purified_Dist} and the fact that for $a,b\in\bbR^+$ we have $\sqrt{a+b}\le\sqrt{a}+\sqrt{b}$, \eqref{eq:Final_Covertness} completes the proof of \eqref{eq:Covertnes_Thm}.

To make the proof steps more transparent, the special cases with four messages and six transmitters, as well as with two messages and three transmitters, are worked out in detail in \cite{4M6Tx} and \cite{2M3Tx}, respectively.
\section{Proof of Lemma~\ref{lemma:error_analysis}}
\label{proof:lemma_error_analysis}
To prove \eqref{eq:Lemma_Error_1}, assuming $\tilde{X}_t = (X_t, U_t)$ if $v_t \in \bbM$, and $\tilde{X}_t = X_t$ otherwise, we have \eqref{eq:First_Term_Lemma} provided at the bottom of the previous page, where
\begin{itemize}
    \item[$(a)$] follows since $\Pi^{([S])}_{\tilde{X}_{[T]}B}$ commutes with $\rho_B$ and therefore $\Delta_B\left(\bbI-\Pi_{\tilde{X}_{[T]}B}^{([S])}\right)=\bbI-\Pi_{\tilde{X}_{[T]}B}^{([S])}$;
    \item[$(b)$] follows since for three arbitrary states $\sigma\in\calD(\calH)$, with spectral decomposition $\sigma=\sum_i\lambda_i\den{x_i}{x_i}$, $\rho_1\in\calD(\calH)$, and $\rho_2\in\calD(\calH)$ we have
    \setcounter{equation}{41}
    \begin{align}
    \tra\left[\Delta_\sigma(\rho_1)\rho_2\right]&=\sum_i\bra{x_i}\rho_1\ket{x_i}\tra\left[\den{x_i}{x_i}\rho_2\right]\nonumber\\
    &=\sum_i\bra{x_i}\rho_1\ket{x_i}\tra\left[\bra{x_i}\rho_2\ket{x_i}\right]\nonumber\\
    &=\sum_i\tra\left[\bra{x_i}\rho_1\ket{x_i}\right]\bra{x_i}\rho_2\ket{x_i}\nonumber\\
    &=\sum_i\tra\left[\rho_1\den{x_i}{x_i}\right]\bra{x_i}\rho_2\ket{x_i}\nonumber\\
    &=\sum_i\tra\left[\rho_1\bra{x_i}\rho_2\ket{x_i}\den{x_i}{x_i}\right]\nonumber\\
    &=\tra[\rho_1\Delta_\sigma(\rho_2)];\label{eq:Pinching_Switching}
    \end{align}
    \item[$(c)$] follows from the definition of $\Pi_{\tilde{X}_{[T]}B}^{([S])}$ in \eqref{eq:Decoding_Projection_1} and since $f(x)=x^\beta$, for $\beta\in(0\,,1]$, is a matrix monotone function \cite[Section~1.5]{QIT_Hayashi};
    \item[$(d)$] follows from \eqref{eq:Projection_Inequality_1} and since for $A,B,C\in\calP(\calH)$ and $C\preceq B$ we have $\tra(CA)\le\tra(BA)$, since $\tra((B-C)A)\ge0$ \cite[Lemma~B.5.2]{RennerDissertation};
    \item[$(e)$] follows since $\Delta_B\pr{\rho_{\tilde{X}_{[T]}B}}$ and $\rho_{\tilde{X}_{[T]}}\otimes\rho_B$ commute;
    \item[$(f)$] follows since for two arbitrary states $\sigma\in\calD(\calH)$, with spectral decomposition $\sigma=\sum_i\lambda_i\den{x_i}{x_i}$, and $\rho\in\calD(\calH)$ we have $\tra\left[\Delta_\sigma(\rho)\sigma\right]=\tra\left[\rho\sigma\right]$ since
        \begin{align}
        \tra[\rho\sigma]&=\sum_i\lambda_i\tra\left[\rho\den{x_i}{x_i}\right]\nonumber\\
        &=\sum_i\lambda_i\tra\left[\bra{x_i}\rho\ket{x_i}\right]\nonumber\\
        &=\tra\left[\Delta_\sigma(\rho)\sigma\right];\nonumber
    \end{align}
    \item[$(g)$] follows from the following properties:  
    \begin{enumerate}[i.]
        \item For two arbitrary states $\sigma,\rho\in\calD(\calH)$, we have
        \begin{align}
            \rho\preceq \mu\Delta_\sigma(\rho),\label{eq:Pinching_Prop0}
        \end{align}where $\mu$ denotes the number of distinct eigenvalues of $\sigma$.
        \item For $A, B, C \in \calP(\calH)$ with $C \preceq B$, it holds that $\operatorname{Tr}(CA) \leq \operatorname{Tr}(BA)$ \cite[Lemma~B.5.2]{RennerDissertation}.  
        \item For any Hermitian matrix $A$ and $\rho \in \calP(\calH)$, the matrix $A\rho A$ remains non-negative \cite[Lemma~B.5.1]{RennerDissertation}. 
        \item The function $f(x) = x^{-\beta}$, for $\beta \in (0,1)$, is a matrix anti-monotone function \cite[Chapter~5]{Bhatia_Book}.
        \end{enumerate}
\end{itemize}
\begin{figure*}[b!]
    \hrulefill
    \setcounter{equation}{46}
\begin{align}
    &\bbE_C\sbr{2^{\alpha\ubar{\D}_{1+\alpha}\big(\tau_{E\lvert C}\lVert\rho_E\big)}}\nonumber\\
    &=\bbE_C\tra\left[\left(\rho_E^{-\frac{\alpha}{2(1+\alpha)}}\tau_{E\lvert C}\rho_E^{-\frac{\alpha}{2(1+\alpha)}}\right)^{1+\alpha}\right]\nonumber\\
    &=\bbE_C\tra\left[\left(\rho_E^{-\frac{\alpha}{2(1+\alpha)}}\frac{1}{2^{\sum_{s=1}^SR_s}}\sum_{m_{[S]}}\rho_{E\lvert\tilde{X}_1\pr{m_{\calI_1}},\tilde{X}_2\pr{m_{\calI_2}},\cdots,\tilde{X}_T\pr{m_{\calI_T}}}\rho_E^{-\frac{\alpha}{2(1+\alpha)}}\right)^{1+\alpha}\right]\nonumber\\
    &=\bbE_C\tra\left[\left(\rho_E^{-\frac{\alpha}{2(1+\alpha)}}\frac{1}{2^{\sum_{s=1}^SR_s}}\sum_{m_{[S]}}\rho_{E\lvert\tilde{X}_1\pr{m_{\calI_1}},\tilde{X}_2\pr{m_{\calI_2}},\cdots,\tilde{X}_T\pr{m_{\calI_T}}}\rho_E^{-\frac{\alpha}{2(1+\alpha)}}\right)\right.\nonumber\\
    &\left.\times\left(\rho_E^{-\frac{\alpha}{2(1+\alpha)}}\frac{1}{2^{\sum_{s=1}^SR_s}}\sum_{m'_{[S]}}\rho_{E\lvert\tilde{X}_1\pr{m'_{\calI_1}},\tilde{X}_2\pr{m'_{\calI_2}},\cdots,\tilde{X}_T\pr{m'_{\calI_T}}}\rho_E^{-\frac{\alpha}{2(1+\alpha)}}\right)^\alpha\right]\nonumber\\
    &\mathop=\limits^{(a)}\frac{1}{2^{\sum_{s=1}^SR_s}}\sum_{m_{[S]}}\bbE_C\tra\left[\left(\rho_E^{-\frac{\alpha}{2(1+\alpha)}}\rho_{E\lvert\tilde{X}_1\pr{m_{\calI_1}},\tilde{X}_2\pr{m_{\calI_2}},\cdots,\tilde{X}_T\pr{m_{\calI_T}}}\rho_E^{-\frac{\alpha}{2(1+\alpha)}}\right)\right.\nonumber\\
    &\left.\times\left(\rho_E^{-\frac{\alpha}{2(1+\alpha)}}\frac{1}{2^{\sum_{s=1}^SR_s}}\sum_{m'_{[S]}}\rho_{E\lvert\tilde{X}_1\pr{m'_{\calI_1}},\tilde{X}_2\pr{m'_{\calI_2}},\cdots,\tilde{X}_T\pr{m'_{\calI_T}}}\rho_E^{-\frac{\alpha}{2(1+\alpha)}}\right)^\alpha\right]\nonumber\\
    &\mathop=\limits^{(b)}\frac{1}{2^{\sum_{s=1}^SR_s}}\sum_{m_{[S]}}\bbE_C\tra\left[\left(\rho_E^{-\frac{\alpha}{2(1+\alpha)}}\rho_{E\lvert\tilde{X}_1\pr{m_{\calI_1}},\tilde{X}_2\pr{m_{\calI_2}},\cdots,\tilde{X}_T\pr{m_{\calI_T}}}\rho_E^{-\frac{\alpha}{2(1+\alpha)}}\right)\right.\nonumber\\
    &\times\left(\rho_E^{-\frac{\alpha}{2(1+\alpha)}}\frac{1}{2^{\sum_{s=1}^SR_s}}\left(\rho_{E\lvert\tilde{X}_1\pr{m_{\calI_1}},\tilde{X}_2\pr{m_{\calI_2}},\cdots,\tilde{X}_T\pr{m_{\calI_T}}}+\sum_{\calS\subset[S]}\sum_{m'_\calS\ne m_\calS}\rho_{E\big\lvert\tilde{X}(\calS^c),\tilde{X}^c(\calS^c)}\right.\right.\nonumber\\
    &\left.\left.\left.+\sum_{\pr{m'_1,m'_2,\cdots,m'_S}\ne\pr{ m_1,m_2,\cdots,m_S}}\rho_{E\lvert\tilde{X}_1\pr{m'_{\calI_1}},\tilde{X}_2\pr{m'_{\calI_2}},\cdots,\tilde{X}_T\pr{m'_{\calI_T}}}\right)\rho_E^{-\frac{\alpha}{2(1+\alpha)}}\right)^\alpha\right]\nonumber\\
    &\mathop\le\limits^{(c)}\frac{1}{2^{\sum_{s=1}^SR_s}}\sum_{m_{[S]}}\bbE_{\pr{m_1,\cdots,m_S}}\tra\left[\left(\rho_E^{-\frac{\alpha}{2(1+\alpha)}}\rho_{E\lvert\tilde{X}_1\pr{m_{\calI_1}},\tilde{X}_2\pr{m_{\calI_2}},\cdots,\tilde{X}_T\pr{m_{\calI_T}}}\rho_E^{-\frac{\alpha}{2(1+\alpha)}}\right)\right.\nonumber\\
    &\times\left(\rho_E^{-\frac{\alpha}{2(1+\alpha)}}\frac{1}{2^{\sum_{s=1}^SR_s}}\left(\rho_{E\lvert\tilde{X}_1\pr{m_{\calI_1}},\tilde{X}_2\pr{m_{\calI_2}},\cdots,\tilde{X}_T\pr{m_{\calI_T}}}+\sum_{\calS\subset[S]}\sum_{m'_\calS\ne m_\calS}\bbE_{\backslash m_\calS}\rho_{E\big\lvert\tilde{X}(\calS^c),\tilde{X}^c(\calS^c)}\right.\right.\nonumber\\
    &\left.\left.\left.+\sum_{\pr{m'_1,m'_2,\cdots,m'_S}\ne\pr{ m_1,m_2,\cdots,m_S}}\bbE_{\backslash\pr{m_1,\cdots,m_{[S]}}}\rho_{E\lvert\tilde{X}_1\pr{m'_{\calI_1}},\tilde{X}_2\pr{m'_{\calI_2}},\cdots,\tilde{X}_T\pr{m'_{\calI_T}}}\right)\rho_E^{-\frac{\alpha}{2(1+\alpha)}}\right)^\alpha\right],\label{eq:Resolv_Analysis_1}
    \end{align}
\end{figure*}
Now we prove \eqref{eq:Lemma_Error_2}, given in \eqref{eq:Second_Term_Lemma} at the bottom of the previous page, where
\begin{itemize}
    \item[$(a)$] follows since $\Pi^{(\calS)}_{\tilde{X}_{[T]}B}$ commutes with $\Delta_B\pr{\rho_{B|\tilde{x}_{\calT_\calS^c}}}$ and therefore 
 $\Delta_{\calT_\calS^c}\left(\bbI-\Pi_{\tilde{X}_{[T]}B}^{(\calS)}\right)=\bbI-\Pi_{\tilde{X}_{[T]}B}^{(\calS)}$;
    \item[$(b)$] follows from \eqref{eq:Pinching_Switching};
    \item[$(c)$] follows from the definition of $\Pi_{\tilde{X}_{[T]}B}^{([S])}$ in \eqref{eq:Decoding_Projection_2} and since $f(x)=x^\beta$, for $\beta\in(0\,,1]$, is a matrix monotone function \cite[Section~1.5]{QIT_Hayashi};
    \item[$(d)$] follows from \eqref{eq:Projection_Inequality_S} and since for $A,B,C\in\calP(\calH)$ and $C\preceq B$ we have $\tra(CA)\le\tra(BA)$, since $\tra((B-C)A)\ge0$ \cite[Lemma~B.5.2]{RennerDissertation};
    \item[$(e)$] follows by defining $\sigma_{\tilde{X}_{[T]}B}\triangleq\Delta_B\pr{\rho_{\pr{U_{\calT_\calS},X_{\calT_\calS}}-\pr{U_{\calT_\calS^c},X_{\calT_\calS^c}}-B}}$;
    \item[$(f)$] follows from similar steps to those that follow step $(d)$ in \eqref{eq:First_Term_Lemma}.
\end{itemize}
\setcounter{equation}{50}
To prove \eqref{eq:Lemma_Error_3} we have,
\begin{align}
    &\tra\left[\Pi_{\tilde{X}_{[T]}B}^{([S])}\left(\rho_{\tilde{X}_{[T]}}\otimes\rho_B\right)\right]\nonumber\\
    &\quad=\tra\left[\Pi_{\tilde{X}_{[T]}B}^{([S])}\left(\rho_{\tilde{X}_{[T]}}\otimes\rho_B\right)^{1-\alpha}\left(\rho_{\tilde{X}_{[T]}}\otimes\rho_B\right)^\alpha\right]\nonumber\\
    &\quad\mathop\le\limits^{(a)}2^{-(1-\alpha)\sum_{s=1}^SR_s}\nonumber\\
    &\quad\quad\times\tra\left[\Pi_{\tilde{X}_{[T]}B}^{([S])}\pr{\Delta_B\pr{\rho_{\tilde{X}_{[T]}B}}}^{1-\alpha}\left(\rho_{\tilde{X}_{[T]}}\otimes\rho_B\right)^\alpha\right]\nonumber\\
    &\quad\mathop\le\limits^{(b)}2^{-(1-\alpha)\sum_{s=1}^SR_s}\nonumber\\
    &\quad\quad\times\tra\left[\pr{\Delta_B\pr{\rho_{\tilde{X}{[T]}B}}}^{1-\alpha}\left(\rho_{\tilde{X}_{[T]}}\otimes\rho_B\right)^\alpha\right]\nonumber\\
    &\quad\mathop\le\limits^{(c)}\mu_{B,[T]}^\alpha2^{-(1-\alpha)\sum_{s=1}^SR_s}\nonumber\\
    &\quad\times2^{-\alpha\ubar{\D}_{1-\alpha}\pr{\rho_{U_{[T]}X_{[T]}B}\lVert\rho_{U_{[T]}X_{[T]}}\otimes\rho_B}},\nonumber
\end{align}where
\begin{itemize}
    \item[$(a)$] follows from the definition of $\Pi_{U_{[T]}X_{[T]}B}^{([S])}$ in \eqref{eq:Decoding_Projection_1} and since $f(x)=x^\beta$, for $\beta\in(0\,,1]$, is a matrix monotone function \cite[Section~1.5]{QIT_Hayashi};
    \item[$(b)$] follows from \eqref{eq:Projection_Inequality_1} and since for $A,B,C\in\calP(\calH)$ and $C \preceq B$ we have $\tra(CA)\le\tra(BA)$, since $\tra((B-C)A)\ge0$ \cite[Lemma~B.5.2]{RennerDissertation};
    \item[$(c)$] follows from similar steps to those that follow the step $(d)$ in \eqref{eq:First_Term_Lemma}.
\end{itemize}
We now prove \eqref{eq:Lemma_Error_4}, given in \eqref{eq:Fourth_Term_Lemma} at the bottom of the previous page,
where
\begin{itemize}
    \item[$(a)$] follows since $\Pi^{(\calS)}_{\tilde{X}_{[T]}B}$ commutes with $\rho_B$ and therefore $\Delta_B\pr{\Pi_{\tilde{X}_{[T]}B}^{(\calS)}}=\Pi_{\tilde{X}_{[T]}B}^{(\calS)}$;
    \item[$(b)$] follows from \eqref{eq:Pinching_Switching};
    \item[$(c)$] follows from \eqref{eq:Decoding_Projection_2} and since $f(x)=x^\beta$, for $\beta\in(0\,,1]$, is a matrix monotone function \cite[Section~1.5]{QIT_Hayashi};
    \item[$(d)$] follows from \eqref{eq:Projection_Inequality_S} and since for $A,B,C\in\calP(\calH)$ and $C \preceq B$ we have $\tra(CA)\le\tra(BA)$, since $\tra((B-C)A)\ge0$ \cite[Lemma~B.5.2]{RennerDissertation};
    \item[$(e)$] follows from similar steps to those that follow the step $(d)$ in \eqref{eq:Second_Term_Lemma}.
\end{itemize}

\begin{figure*}[b!]
    \hrulefill
    \setcounter{equation}{47}
\begin{align}
    &\text{\ac{RHS} of \eqref{eq:Resolv_Analysis_1}}=\frac{1}{2^{\sum_{s=1}^SR_s}}\sum_{m_{[S]}}\bbE_{\pr{m_1,\cdots,m_S}}\tra\left[\left(\rho_E^{-\frac{\alpha}{2(1+\alpha)}}\rho_{E\lvert\tilde{X}_1\pr{m_{\calI_1}},\tilde{X}_2\pr{m_{\calI_2}},\cdots,\tilde{X}_T\pr{m_{\calI_T}}}\rho_E^{-\frac{\alpha}{2(1+\alpha)}}\right)\right.\nonumber\\
    &\times\left(\rho_E^{-\frac{\alpha}{2(1+\alpha)}}\frac{1}{2^{\sum_{s=1}^SR_s}}\left(\rho_{E\lvert\tilde{X}_1\pr{m_{\calI_1}},\tilde{X}_2\pr{m_{\calI_2}},\cdots,\tilde{X}_T\pr{m_{\calI_T}}}\right.\right.\nonumber\\
    &\left.\left.\left.+\sum_{\calS\subset[S]}\sum_{m'_\calS\ne m_\calS}\rho_{E\big\lvert\tilde{X}(\calS^c)}+\sum_{\pr{m'_1,m'_2,\cdots,m'_S}\ne\pr{ m_1,m_2,\cdots,m_S}}\rho_E\right)\rho_E^{-\frac{\alpha}{2(1+\alpha)}}\right)^\alpha\right]\nonumber\\
    &\mathop\le\limits^{(d)}\frac{1}{2^{\sum_{s=1}^SR_s}}\sum_{m_{[S]}}\bbE_{\pr{m_1,\cdots,m_S}}\tra\left[\left(\rho_E^{-\frac{\alpha}{2(1+\alpha)}}\rho_{E\lvert\tilde{X}_1\pr{m_{\calI_1}},\tilde{X}_2\pr{m_{\calI_2}},\cdots,\tilde{X}_T\pr{m_{\calI_T}}}\rho_E^{-\frac{\alpha}{2(1+\alpha)}}\right)\right.\nonumber\\
    &\left.\times\left(\rho_E^{-\frac{\alpha}{2(1+\alpha)}}\frac{1}{2^{\sum_{s=1}^SR_s}}\left(\rho_{E\lvert\tilde{X}_1\pr{m_{\calI_1}},\tilde{X}_2\pr{m_{\calI_2}},\cdots,\tilde{X}_T\pr{m_{\calI_T}}}+\sum_{\calS\subset[S]}2^{\sum_{s\in\calS} R_s}\rho_{E\big\lvert\tilde{X}(\calS^c)}+2^{\sum_{s=1}^SR_s}\rho_E\right)\rho_E^{-\frac{\alpha}{2(1+\alpha)}}\right)^\alpha\right]\nonumber\\
    &\mathop=\limits^{(e)}\frac{1}{2^{\sum_{s=1}^SR_s}}\sum_{m_{[S]}}\bbE_{\pr{m_1,\cdots,m_S}}\tra\left[\left(\rho_E^{-\frac{\alpha}{2(1+\alpha)}}\rho_{E\lvert\tilde{X}_1\pr{m_{\calI_1}},\tilde{X}_2\pr{m_{\calI_2}},\cdots,\tilde{X}_T\pr{m_{\calI_T}}}\rho_E^{-\frac{\alpha}{2(1+\alpha)}}\right)\right.\nonumber\\
    &\times\left.\left(\rho_E^{-\frac{\alpha}{2(1+\alpha)}}\frac{1}{2^{\sum_{s=1}^SR_s}}\left(\rho_{E\lvert\tilde{X}_1\pr{m_{\calI_1}},\tilde{X}_2\pr{m_{\calI_2}},\cdots,\tilde{X}_T\pr{m_{\calI_T}}}+\sum_{\calS\subset[S]}2^{\sum_{s\in\calS^c} R_s}\rho_{E\big\lvert\tilde{X}(\calS)}+2^{\sum_{s=1}^SR_s}\rho_E\right)\rho_E^{-\frac{\alpha}{2(1+\alpha)}}\right)^\alpha\right]\nonumber\\
    &\mathop=\limits^{(f)}\frac{1}{2^{\sum_{s=1}^SR_s}}\sum_{m_{[S]}}\bbE_{\pr{m_1,\cdots,m_S}}\tra\left[\left(\rho_E^{-\frac{\alpha}{2(1+\alpha)}}\rho_{E\lvert\tilde{X}_1\pr{m_{\calI_1}},\tilde{X}_2\pr{m_{\calI_2}},\cdots,\tilde{X}_T\pr{m_{\calI_T}}}\rho_E^{-\frac{\alpha}{2(1+\alpha)}}\right)\right.\nonumber\\
    &\times\left.\left(\rho_E^{-\frac{\alpha}{2(1+\alpha)}}\frac{1}{2^{\sum_{s=1}^SR_s}}\left(\rho_{E\lvert\tilde{X}_1\pr{m_{\calI_1}},\tilde{X}_2\pr{m_{\calI_2}},\cdots,\tilde{X}_T\pr{m_{\calI_T}}}+\sum_{i=1}^{\card{\mathfrak{B}}}2^{\sum_{s\in\calS^c_i} R_s}\rho_{E\big\lvert\tilde{X}(\calS_i)}+2^{\sum_{s=1}^SR_s}\rho_E\right)\rho_E^{-\frac{\alpha}{2(1+\alpha)}}\right)^\alpha\right]\nonumber\\
    &\mathop\le\limits^{(g)}\frac{1}{2^{\sum_{s=1}^SR_s}}\sum_{m_{[S]}}\bbE_{\pr{m_1,\cdots,m_S}}\tra\left[\left(\rho_E^{-\frac{\alpha}{2(1+\alpha)}}\rho_{E\lvert\tilde{X}_1\pr{m_{\calI_1}},\tilde{X}_2\pr{m_{\calI_2}},\cdots,\tilde{X}_T\pr{m_{\calI_T}}}\rho_E^{-\frac{\alpha}{2(1+\alpha)}}\right)\right.\nonumber\\
    &\times\left(\rho_E^{-\frac{\alpha}{2(1+\alpha)}}\frac{1}{2^{\sum_{s=1}^SR_s}}\left(\mu_{E,\card{\mathfrak{B}}}\Delta_{E\mid\tilde{X}_{[T]}}\pr{\rho_{E\lvert\tilde{X}_1\pr{m_{\calI_1}},\tilde{X}_2\pr{m_{\calI_2}},\cdots,\tilde{X}_T\pr{m_{\calI_T}}}}\right.\right.\nonumber\\
    &\left.\left.\left.+\sum_{i=1}^{\card{\mathfrak{B}}}\mu_{E,i-1}2^{\sum_{s\in\calS^c_i} R_s}\Delta_{E\mid\tilde{X}\pr{\calS_{i-1}}}\Big(\rho_{E\big\lvert\tilde{X}(\calS_i)}\Big)+2^{\sum_{s=1}^SR_s}\rho_E\right)\rho_E^{-\frac{\alpha}{2(1+\alpha)}}\right)^\alpha\right]\nonumber\\
    &\mathop\le\limits^{(h)}\frac{1}{2^{\sum_{s=1}^SR_s}}\sum_{m_{[S]}}\bbE_{\pr{m_1,\cdots,m_S}}\tra\left[\left(\rho_E^{-\frac{\alpha}{2(1+\alpha)}}\rho_{E\lvert\tilde{X}_1\pr{m_{\calI_1}},\tilde{X}_2\pr{m_{\calI_2}},\cdots,\tilde{X}_T\pr{m_{\calI_T}}}\rho_E^{-\frac{\alpha}{2(1+\alpha)}}\right)\right.\nonumber\\
    &\times\left(\rho_E^{-\frac{\alpha}{2(1+\alpha)}}\frac{1}{2^{\sum_{s=1}^SR_s}}\left(\mu_{E,\card{\mathfrak{B}}}^\alpha\pr{\Delta_{E\mid\tilde{X}_{[T]}}\pr{\rho_{E\lvert\tilde{X}_1\pr{m_{\calI_1}},\tilde{X}_2\pr{m_{\calI_2}},\cdots,\tilde{X}_T\pr{m_{\calI_T}}}}}^\alpha\right.\right.\nonumber\\
    &\left.\left.\left.+\sum_{i=1}^{\card{\mathfrak{B}}}\mu_{E,i-1}^\alpha2^{\alpha\sum_{s\in\calS^c_i} R_s}\pr{\Delta_{E\mid\tilde{X}\pr{\calS_{i-1}}}\Big(\rho_{E\big\lvert\tilde{X}(\calS_i)}\Big)}^\alpha+2^{\alpha\sum_{s=1}^SR_s}\rho_E^\alpha\right)\rho_E^{-\frac{\alpha^2}{2(1+\alpha)}}\right)\right],\label{eq:Resolv_Analysis_2}
    \end{align}
\end{figure*}

\setcounter{equation}{45}
\section{Proof of Lemma~\ref{lemma:Resolvability}}
\label{proof:lemma_Resolvability}
From the concavity of the $\log$ function and Jensen's inequality, we have
\begin{align}
    \bbE_C\left[\ubar{\D}_{1+\alpha}\big(\tau_{E\lvert C}\lVert\rho_E\big)\right]&\le\frac{1}{\alpha}\log_2\left(\bbE_C\left[2^{\alpha\ubar{\D}_{1+\alpha}\big(\tau_{E\lvert C}\lVert\rho_E\big)}\right]\right).\label{eq:Jensen_Res}
\end{align}
Now, we bound the argument of the $\log$ function in \eqref{eq:Jensen_Res} as provided in \eqref{eq:Resolv_Analysis_1} at the bottom of the page, where
\begin{itemize}
    \item[$(a)$] follows from the linearity of the trace operation and the expectation;
    \item[$(b)$] follows by defining, for each subset $\calS \subset [S]$, the set $\tilde{X}(\calS^c)$ as the collection of channel inputs that are \emph{only} functions of message indices in $\calS^c$, and defining $\tilde{X}^c(\calS^c)$ as the complement of $\tilde{X}(\calS^c)$ with respect to the set of all channel inputs, i.e., $\tilde{X}_{[T]}([S])$. Note that $\tilde{X}^c(\calS^c)$ represents the set of channel inputs for which each input depends on at least one message $M'_s$ with $s \in \calS$;
    \item[$(c)$] follows from the linearity of the expectation and the trace operation and Jensen's inequality.
    \end{itemize}
    \begin{figure*}[b!]
    \hrulefill
    \setcounter{equation}{48}
\begin{align}
    &\text{\ac{RHS} of \eqref{eq:Resolv_Analysis_2}}\mathop=\limits^{(i)}1+\bbE_{\tilde{X}_{[T]}}\tra\left[\frac{\mu_{E,{\card{\mathfrak{B}}}}^\alpha}{2^{\alpha\sum_{s=1}^SR_s}}\rho_{E\lvert\tilde{X}_{[T]}}\pr{\Delta_{E\mid\tilde{X}_{[T]}}\pr{\rho_{E\lvert\tilde{X}_{[T]}}}}^\alpha\rho_E^{-\alpha}\right.\nonumber\\
    &\left.+\sum_{i=1}^{\card{\mathfrak{B}}}\frac{\mu_{E,i-1}^\alpha}{2^{\alpha\sum_{s\in\calS_i}R_s}}\rho_{E\lvert\tilde{X}_{[T]}} \pr{\Delta_{E\mid\tilde{X}\pr{\calS_{i-1}}}\Big(\rho_{E\big\lvert\tilde{X}(\calS_i)}\Big)}^\alpha\rho_E^{-\alpha}\right]\nonumber\\
    &\mathop=\limits^{(j)}1+\bbE_{\tilde{X}_{[T]}}\tra\left[\frac{\mu_{E,\card{\mathfrak{B}}}^\alpha}{2^{\alpha\sum_{s=1}^SR_s}}\rho_{E\lvert\tilde{X}_{[T]}}\Delta_{E\mid\tilde{X}_{[T]}}\pr{\pr{\Delta_{E\mid\tilde{X}_{[T]}}\pr{\rho_{E\lvert\tilde{X}_{[T]}}}}^\alpha\rho_E^{-\alpha}}\right]\nonumber\\
    &+\bbE_{\tilde{X}_{[T]}}\tra\left[\sum_{i=1}^{\card{\mathfrak{B}}}\frac{\mu_{E,i-1}^\alpha}{2^{\alpha\sum_{s\in\calS_i}R_s}}\rho_{E\lvert\tilde{X}_{[T]}} \Delta_{E\mid\tilde{X}\pr{\calS_{i-1}}}\pr{\pr{\Delta_{E\mid\tilde{X}\pr{\calS_{i-1}}}\Big(\rho_{E\big\lvert\tilde{X}(\calS_i)}\Big)}^\alpha\rho_E^{-\alpha}}\right]\nonumber\\
    &=1+\bbE_{\tilde{X}_{[T]}}\tra\left[\frac{\mu_{E,\card{\mathfrak{B}}}^\alpha}{2^{\alpha\sum_{s=1}^SR_s}}\rho_{E\lvert\tilde{X}_{[T]}}\Delta_{E\mid\tilde{X}_{[T]}}\pr{\pr{\Delta_{E\mid\tilde{X}_{[T]}}\pr{\rho_{E\lvert\tilde{X}_{[T]}}}}^\alpha\rho_E^{-\alpha}}\right]\nonumber\\
    &+\bbE_{\tilde{X}(\calS_i)}\tra\left[\sum_{i=1}^{\card{\mathfrak{B}}}\frac{\mu_{E,i-1}^\alpha}{2^{\alpha\sum_{s\in\calS_i}R_s}}\rho_{E\lvert\tilde{X}(\calS_i)} \Delta_{E\mid\tilde{X}\pr{\calS_{i-1}}}\pr{\pr{\Delta_{E\mid\tilde{X}\pr{\calS_{i-1}}}\Big(\rho_{E\big\lvert\tilde{X}(\calS_i)}\Big)}^\alpha\rho_E^{-\alpha}}\right]\nonumber\\
    &\mathop=\limits^{(k)}1+\bbE_{\tilde{X}_{[T]}}\tra\left[\frac{\mu_{E,\card{\mathfrak{B}}}^\alpha}{2^{\alpha\sum_{s=1}^SR_s}}\pr{\Delta_{E\mid\tilde{X}_{[T]}}\pr{\rho_{E\lvert\tilde{X}_{[T]}}}}^{1+\alpha}\rho_E^{-\alpha}\right]\nonumber\\
    &+\bbE_{\tilde{X}(\calS_i)}\tra\left[\sum_{i=1}^{\card{\mathfrak{B}}}\frac{\mu_{E,i-1}^\alpha}{2^{\alpha\sum_{s\in\calS_i}R_s}} \pr{\Delta_{E\mid\tilde{X}\pr{\calS_{i-1}}}\Big(\rho_{E\big\lvert\tilde{X}(\calS_i)}\Big)}^{1+\alpha}\rho_E^{-\alpha}\right]\nonumber\\
    &\mathop=\limits^{(\ell)}1+\tra\left[\frac{\mu_{E,\card{\mathfrak{B}}}^\alpha}{2^{\alpha\sum_{s=1}^SR_s}}\pr{\Delta_{\rho_{\tilde{X}_{[T]}}\otimes\rho_E}\pr{\rho_{\tilde{X}_{[T]}E}}}^{1+\alpha}\pr{\rho_{\tilde{X}_{[T]}}\otimes\rho_E}^{-\alpha}\right]\nonumber\\
    &+\sum_{i=1}^{\card{\mathfrak{B}}}\tra\left[\frac{\mu_{E,i-1}^\alpha}{2^{\alpha\sum_{s\in\calS_i}R_s}}\pr{\Delta_{\rho_{\tilde{X}(\calS_{i-1})}\otimes\rho_E}\pr{\rho_{\tilde{X}(\calS_i)E}}}^{1+\alpha}\pr{\rho_{\tilde{X}(\calS_i)}\otimes\rho_E}^{-\alpha}\right]\nonumber\\
    &\mathop\le\limits^{(m)}1+\frac{\mu_{E,\card{\mathfrak{B}}}^\alpha}{2^{\alpha\sum_{s=1}^SR_s}}2^{\alpha\ubar{\D}_{1-\alpha}\pr{\rho_{\tilde{X}_{[T]}E}\big\lVert\rho_{\tilde{X}_{[T]}}\otimes\rho_E}}+\sum_{i=1}^{\card{\mathfrak{B}}}\frac{\mu_{E,i-1}^\alpha}{2^{\alpha\sum_{s\in\calS_i}R_s}}2^{\alpha\ubar{\D}_{1+\alpha}\pr{\rho_{\tilde{X}\pr{\calS_i}E}\big\lVert\rho_{\tilde{X}\pr{\calS_i}}\otimes\rho_E}},\label{eq:Resolv_Analysis}
\end{align}
\end{figure*}
Now the \ac{RHS} of \eqref{eq:Resolv_Analysis_1} can be bounded further as \eqref{eq:Resolv_Analysis_2}, provided at the bottom of the page, where
\begin{itemize}
    \item[$(d)$] follows since $f(x)=x^\alpha$, for $\alpha\in(0\,,1]$, is a matrix monotone function \cite[Section~1.5]{QIT_Hayashi};
    \item[$(e)$] follows by the change of variables, and since the summation is taken over all non-empty strict subsets $\calS \subset [S]$, the sum remains unchanged under this reindexing;
    \item[$(f)$] follows by defining $\mathfrak{B} \triangleq \pr{\calS_1, \calS_2, \dots, \calS_{\card{\mathfrak{B}}}}$ as an ordered collection of all non-empty strict subsets of $[S]$, arranged such that for any $1 \le i < j \le \card{\mathfrak{B}}$, we have $\card{\calS_i} \le \card{\calS_j}$;
    \item[$(g)$] follows from \eqref{eq:Pinching_Prop0} and defining $\Delta_{E\mid\tilde{X}\pr{\calS_{i-1}}}\Big(\rho_{E\big\lvert\tilde{X}(\calS_i)}\Big)$ as the pinching map \ac{wrt} the spectral decomposition of the operator $\Delta_{E\mid\tilde{X}\pr{\calS_{i-2}}}\Big(\rho_{E\big\lvert\tilde{X}(\calS_{i-1})}\Big)$, with the convention $\tilde{X}(\calS_0)=\emptyset$ and therefore $\Delta_{E\mid\emptyset}=\Delta_E$. Also, for $i\in[\card{\mathfrak{B}}]$, define $\mu_{E,i}$ as the maximum number of the distinct eigenvalues of the operator $\br{\Delta_{E\mid\tilde{x}(\calS_{i-2})}\pr{\rho_{E\mid\tilde{x}(\calS_{i-1})}}}_{\pr{\tilde{x}(\calS_1),\dots,\tilde{x}(\calS_{i-1})}}$. This hierarchical pinching construction ensures that all resulting operators remain block-diagonal in the eigenbasis of $\rho_E$, with successive refinements within each block. Consequently, all such operators commute;
    \item[$(h)$] follows because the terms inside the second parenthesis of the trace operation commute; additionally, it follows from the inequality $(a+b)^\alpha \le a^\alpha + b^\alpha$ for $\alpha < 1$, and from the property that for $A, B, C \in \calP(\calH)$ with $C \preceq B$, we have $\tra(AC) \le \tra(AB)$, which holds since $\tra(A(B - C)) \ge 0$ \cite[Lemma~B.5.2]{RennerDissertation}.
    \end{itemize}
    \begin{figure*}[b!]
\hrulefill
\setcounter{equation}{51}
\begin{subequations}\label{eq:Pe_Covertness_2}
\begin{align}
    \bbP\left\{\hat{M}\ne M\right\}&\le2\pr{\mu_{B,[T]}^{(n)}}^\alpha2^{\alpha\pr{n\sum\limits_{s=1}^SR_s-\ubar{\D}_{1-\alpha}\pr{\rho_{\tilde{X}_{[T]}B}^{\otimes n}\big\lVert\rho_{\tilde{X}_{[T]}}^{\otimes n}\otimes\rho_B^{\otimes n}}}}\!+6\!\!\sum\limits_{\calS\subset[S]}\!\pr{\mu_{B,\calT_\calS^c}^{(n)}}^\alpha2^{\alpha\pr{n\sum\limits_{s\in\calS}R_s-\ubar{\I}_{1-\alpha}\pr{\tilde{X}_{\calT_\calS}^n;B^n\big\lvert \tilde{X}_{\calT_\calS^c}^n}}},\label{eq:Pe_Thm_2}\\
    \left\lVert\tau_{E^n}- \rho_0^{\otimes n}\right\rVert_1&\le\frac{2}{\sqrt{\alpha}}\left[\pr{\mu_{E,\card{\mathfrak{B}}}^{(n)}}^{\frac{\alpha}{2}}2^{\frac{\alpha}{2}\pr{-n\sum\limits_{s=1}^SR_s+\ubar{\D}_{1+\alpha}\pr{\rho_{\tilde{X}_{[T]}E}^{\otimes n}\big\lVert\rho_{\tilde{X}_{[T]}}^{\otimes n}\otimes\rho_E^{\otimes n}}}}\right.\nonumber\\
    &\qquad\left.+\sum_{i=1}^{\card{\mathfrak{B}}}\pr{\mu_{E,i-1}^{(n)}}^{\frac{\alpha}{2}}2^{\frac{\alpha}{2}\pr{-n\sum\limits_{s\in\calS_i}R_s+\ubar{\D}_{1+\alpha}\pr{\rho_{\tilde{X}\pr{\calS_i}E}^{\otimes n}\big\lVert\rho_{\tilde{X}\pr{\calS_i}}^{\otimes n}\otimes\rho_E^{\otimes n}}}}\right],\label{eq:Covertnes_Thm_2}
    \end{align}
\end{subequations}
\setcounter{equation}{49}
\end{figure*}
To bound the \ac{RHS} of \eqref{eq:Resolv_Analysis_2}, we have \eqref{eq:Resolv_Analysis}, given at the bottom of the page, where
\begin{itemize}
    \item[$(i)$] follows from the linearity of the trace operation, the symmetry of the codebook construction \ac{wrt} the messages, and since $\Delta_{E\mid \tilde{X}_{[T]}}\pr{\rho_{E\lvert\tilde{X}_{[T]}}}$, $\br{\Delta_{E\mid\tilde{X}\pr{\calS_{i-1}}}\Big(\rho_{E\big\lvert\tilde{X}(\calS_i)}\Big)}_{i\in[\card{\mathfrak{B}}]}$, and  $\rho_E$ commute;
    \item[$(j)$] follows from the linearity of the trace operation and the expectation and since $\Delta_{E\mid \tilde{X}_{[T]}}\pr{\rho_{E\lvert\tilde{X}_{[T]}}}$ and $\br{\Delta_{E\mid\tilde{X}\pr{\calS_{i-1}}}\Big(\rho_{E\big\lvert\tilde{X}(\calS_i)}\Big)}_{i\in[\card{\mathfrak{B}}]}$ have the same orthonormal basis as $\rho_E$ and therefore
    \begin{align}
        &\pr{\Delta_{E\mid\tilde{X}_{[T]}}\pr{\rho_{E\lvert\tilde{X}_{[T]}}}}^\alpha\rho_E^{-\alpha}\nonumber\\
        &=\Delta_{E\mid\tilde{X}_{[T]}}\pr{\pr{\Delta_{E\mid\tilde{X}_{[T]}}\pr{\rho_{E\lvert\tilde{X}_{[T]}}}}^\alpha\rho_E^{-\alpha}},\nonumber\\
        &\pr{\Delta_{E\mid\tilde{X}\pr{\calS_{i-1}}}\Big(\rho_{E\big\lvert\tilde{X}(\calS_i)}\Big)}^\alpha\rho_E^{-\alpha}\nonumber\\
        &=\Delta_{E\mid\tilde{X}\pr{\calS_{i-1}}}\pr{\pr{\Delta_{E\mid\tilde{X}\pr{\calS_{i-1}}}\Big(\rho_{E\big\lvert\tilde{X}(\calS_i)}\Big)}^\alpha\rho_E^{-\alpha}};\nonumber
    \end{align}
    \item[$(k)$] follows from \eqref{eq:Pinching_Switching};
     \item[$(\ell)$] follows from the definition of pinching operation $\Delta_E$, and since the involved states are classical-quantum states;
    \item[$(m)$] follows from the definition of $\ubar{\D}_{1+\alpha}\big(\cdot\lVert\cdot\big)$ and since for two quantum states $\rho,\sigma\in\calD(\calH)$ and quantum operation $\Gamma(\cdot):\calL(A)\to\calL(B)$, we have \cite{Branum96,Frank13},
    \begin{align}
        \ubar{\D}_{1+\alpha}\big(\Gamma(\rho)\lVert\Gamma(\sigma)\big)\le\ubar{\D}_{1+\alpha}\big(\rho\lVert\sigma\big).\label{eq:Monotonicity_SRenyi}
    \end{align}
\end{itemize}
Now, substituting \eqref{eq:Resolv_Analysis} into \eqref{eq:Jensen_Res} and using the inequality $\log_2(1+x) \le \frac{x}{\ln 2}$ completes the proof of Lemma~\ref{lemma:Resolvability}.

\section{Proof of Theorem~\ref{thm:Achievable_Asymp}}
\label{proof:thm:Achievable_Asymp}
We begin by bounding $\mu_B$, $\mu_{B,\calT}$, $\mu_E$, and $\mu_{E,i}$, for $i\in[\abs{\mathfrak{B}}]$, as defined in \eqref{eq:gs}, under the assumption that the transmitters use the channel $n$ times in an \ac{iid} manner. That is, we consider the scenario in which there are $n$ \ac{iid} copies of the states $\rho_{U_{[T]}X_{[T]}B}$ and $\rho_{\pr{U_\calT,X_\calT}-\pr{U_{\calT^c},X_{\calT^c}}-B}$. Let,
\begin{subequations}\label{eq:gs_n}
    \begin{align}
        \mu_B^{(n)}&\triangleq\text{number of distinct eigenvalues of } \rho_B^{\otimes n},\label{eq:g2_n}\\
        \mu_{B,\calT}^{(n)}&\triangleq\text{maximum number of distinct eigenvalues of } \nonumber\\
        &\qquad\left\{\Delta_{B^n}\pr{\rho_{B^n\lvert \tilde{x}_\calT^n}}\right\}_{\tilde{x}_\calT^n},\label{eq:g3_n}\\
        \mu_E^{(n)}&\triangleq\text{number of distinct eigenvalues of } \rho_E^{\otimes n},\label{eq:g1_n}\\
        \mu_{E,i}^{(n)}&\triangleq \text{maximum number of the distinct eigenvalues of the} \nonumber\\
        &\hspace{-6mm}\text{operator}\,\,\br{\Delta_{E^n\mid\tilde{x}^n(\calS_{i-2})}\pr{\rho_{E^n\mid\tilde{x}^n(\calS_{i-1})}}}_{\pr{\tilde{x}^n(\calS_1),\dots,\tilde{x}^n(\calS_{i-1})}}\!,\label{eq:g4_n}
    \end{align}where $\calT\subset[T]$, $\rho_{B^n\lvert \tilde{x}_\calT^n}\triangleq\rho_{B\lvert \tilde{x}_{\calT,1}}\otimes\rho_{B\lvert\tilde{x}_{\calT,2}}\otimes\cdots\otimes\rho_{B\lvert\tilde{x}_{\calT,n}}$, the maximization in \eqref{eq:g3_n} is over $\tilde{x}_\calT^n$, and the maximization in \eqref{eq:g4_n} is over $\pr{\tilde{x}^n(\calS_1),\dots,\tilde{x}^n(\calS_{i-1})}$.
\end{subequations}
The following lemmas are essential in our proof.
\begin{lemma}
\label{eq:Lemma_Pincing_Eigen_Value_Bounds}
Let $ \Delta_{E^n} $ and $ \Delta_{B^n} $ be the pinching maps \ac{wrt} the spectral decomposition of $\rho_E^{\otimes n}$ and $\rho_B^{\otimes n}$, respectively, and let $ \mu_B^{(n)} $, $ \mu_{B,\calT}^{(n)} $, $ \mu_E^{(n)} $, and $\mu_{E,i}^{(n)}$ be as defined in \eqref{eq:gs_n}. For $t\in[T]$, denote by $ d_{U_t} $, $ d_{X_t} $, and $ d_B $ the dimensions of the Hilbert spaces $\calH_{U_t}$, $\calH_{X_t}$, and $\calH_b$, respectively. Then we have the following bounds
\begin{align*}
    \mu_B^{(n)} &\le (n+1)^{d_B - 1},\\
    \mu_{B,\calT}^{(n)} &\le (n+1)^{\frac{\pr{\prod\limits_{t\in\calT}d_{U_t}d_{X_t}}\pr{d_B + 2}\pr{d_B - 1}}{2}},\\
    \mu_E^{(n)} &\le (n+1)^{d_E - 1},\\
    \mu_{E,i}^{(n)} &\le (n+1)^{\frac{\pr{\prod\limits_{j=1}^{i-1}\prod\limits_{\tilde{x}_t\in\tilde{x}(\calS_j)}d_{U_t}d_{X_t}}\pr{d_E + 2}\pr{d_E - 1}}{2}},
\end{align*}where $\calT\subset[T]$ and $i\in\sbr{\abs{\mathfrak{B}}}$.
\end{lemma}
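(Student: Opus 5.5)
The plan is to reduce all four bounds to two standard facts. The first is the type-counting bound on the number of distinct eigenvalues of an i.i.d. state. The second is the group-representation bound of Hayashi \cite{QIT_Hayashi} on the number of distinct eigenvalues of a pinched product state. Throughout, ``number of distinct eigenvalues'' will be bounded by the number of distinct eigenspace projections, which is the quantity that enters the pinching inequality \eqref{eq:Pinching_Prop0}.

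\textbf{Step 1: $\mu_B^{(n)}$ and $\mu_E^{(n)}$.} Write $\rho_B=\sum_{j=1}^{d_B}\lambda_j\den{z_j}{z_j}$. The eigenvalues of $\rho_B^{\otimes n}$ are the products $\prod_j\lambda_j^{n_j}$, and each is determined by the type $(n_1,\dots,n_{d_B})$ of the index sequence. The number of types with $\sum_j n_j=n$ is at most $(n+1)^{d_B-1}$, since $n_1,\dots,n_{d_B-1}$ each range over $\{0,\dots,n\}$ and they determine $n_{d_B}$. Hence $\mu_B^{(n)}\le (n+1)^{d_B-1}$. The identical argument applied to $\rho_E$ gives $\mu_E^{(n)}\le (n+1)^{d_E-1}$.

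\textbf{Step 2: $\mu_{B,\calT}^{(n)}$.} Fix $\tilde{x}_\calT^n$ and group the positions $i\in[n]$ by the symbol $\tilde{x}_{\calT,i}$. There are at most $\prod_{t\in\calT}d_{U_t}d_{X_t}$ distinct symbols. On the block of positions carrying symbol $a$, the state $\rho_{B\lvert a}^{\otimes n_a}$ is permutation invariant, and so is the pinching $\Delta_{B^{n_a}}$ with respect to $\rho_B^{\otimes n_a}$. Hence $\Delta_{B^{n_a}}(\rho_{B\lvert a}^{\otimes n_a})$ lies in the commutant of the $S_{n_a}$-action and in the algebra generated by $U(d_B)$-type representations. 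By Schur--Weyl duality its distinct eigenvalues can be counted by irreducible components together with their multiplicities, following the standard estimate \cite[Lemma~3.7]{QIT_Hayashi}. This yields at most $(n_a+1)^{(d_B+2)(d_B-1)/2}$ distinct eigenvalues for each block.

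The pinching $\Delta_{B^n}$ factorizes across blocks, because $\rho_B^{\otimes n}$ is a tensor product over blocks and its eigenprojections are sums of products of the blocks' eigenprojections. It therefore suffices to note that the pinched product has at most the product of the per-block counts. Using $n_a\le n$, the product over at most $\prod_{t\in\calT}d_{U_t}d_{X_t}$ blocks gives the claimed exponent. This step is the main obstacle. Pinching with respect to the global $\rho_B^{\otimes n}$, rather than blockwise, merges eigenspaces across blocks. I would handle this by first bounding via the finer blockwise pinching and then observing that coarsening the pinching cannot increase the count of distinct eigenvalues beyond the product of block counts. I expect some care is needed here, and would verify it using the commuting block-diagonal structure.

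\textbf{Step 3: $\mu_{E,i}^{(n)}$.} I will argue by induction over the hierarchical pinching in \eqref{eq:g4_n}. The operator $\Delta_{E^n\mid\tilde{x}^n(\calS_{i-2})}(\rho_{E^n\mid\tilde{x}^n(\calS_{i-1})})$ is a product over positions of states indexed by the joint symbol $(\tilde{x}(\calS_1),\dots,\tilde{x}(\calS_{i-1}))$ at that position. Its alphabet size is at most $\prod_{j=1}^{i-1}\prod_{\tilde{x}_t\in\tilde{x}(\calS_j)}d_{U_t}d_{X_t}$. Each successive pinching is with respect to an operator that is itself permutation invariant within these finer blocks. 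So Step 2 applies verbatim with $d_B$ replaced by $d_E$ and the enlarged alphabet, giving the stated bound.
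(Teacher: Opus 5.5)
Your Step~1 is correct, and type counting plus Schur--Weyl duality are the right ingredients (the paper omits this proof and defers to \cite[Proposition~12]{Salek25}). However, Step~2 has a genuine gap, and Step~3 inherits it.

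The pinching $\Delta_{B^n}$ does \emph{not} factorize across blocks, and the reason you give actually shows the opposite. An eigenprojection of $\rho_B^{\otimes n}$ is $\sum_{(\lambda_a)_a:\,\prod_a\lambda_a=\lambda}\bigotimes_a P^{(a)}_{\lambda_a}$. So $\Delta_{B^n}$ keeps the cross terms $\bigl(\bigotimes_a P^{(a)}_{\lambda_a}\bigr)X\bigl(\bigotimes_a P^{(a)}_{\lambda'_a}\bigr)$ with $(\lambda_a)_a\neq(\lambda'_a)_a$ but equal products, and the blockwise pinching removes exactly these terms.

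Your proposed repair, that coarsening the pinching cannot raise the number of distinct eigenvalues beyond the product of the block counts, is not a valid principle. Here is a counterexample:
\begin{itemize}
\item Take $d_B=2$ and the ensemble $\rho_{B|a}=\den{+}{+}$, $\rho_{B|b}=\den{-}{-}$ (weight $q$ each), $\rho_{B|c}=\den{0}{0}$ (weight $1-2q$). Then $\rho_B=\mathrm{diag}(1-q,q)$, which is nondegenerate for $0<q<\tfrac12$.
\item Let $n=2$ and $\tilde{x}^2=(a,b)$. The blockwise pinching gives $\bbI/4$, which has a single eigenvalue.
\item In contrast, $\Delta_{B^2}(\den{+}{+}\otimes\den{-}{-})$ retains the coherence between $\ket{01}$ and $\ket{10}$, and its eigenvalues are $\tfrac14,\tfrac14,\tfrac12,0$.
\end{itemize}
This output is also entangled across the two positions. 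So the premise of Step~3, that the pinched operator is a product over positions, is false as well.

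The fix is to drop the comparison with the blockwise pinching and use the symmetry of the globally pinched operator directly.
\begin{itemize}
\item Every eigenprojection of $\rho_B^{\otimes n}$ commutes with every permutation unitary $W_\pi$, so $\Delta_{B^n}(W_\pi XW_\pi^\dagger)=W_\pi\Delta_{B^n}(X)W_\pi^\dagger$.
\item If $\pi$ fixes the sequence $\tilde{x}_\calT^n$, then $W_\pi\rho_{B^n|\tilde{x}_\calT^n}W_\pi^\dagger=\rho_{B^n|\tilde{x}_\calT^n}$. Hence $\Delta_{B^n}(\rho_{B^n|\tilde{x}_\calT^n})$ commutes with the group $S_{n_1}\times\cdots\times S_{n_k}$ of permutations preserving $\tilde{x}_\calT^n$.
\item Schur--Weyl duality on each block gives $(\bbC^{d_B})^{\otimes n}\cong\bigoplus_{\lambda_1,\dots,\lambda_k}\bigl(\bigotimes_a\mathsf{U}_{\lambda_a}\bigr)\otimes\bigl(\bigotimes_a\mathsf{V}_{\lambda_a}\bigr)$. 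Here the $\mathsf{U}_{\lambda_a}$ are irreducible $\mathrm{U}(d_B)$-modules and the $\mathsf{V}_{\lambda_a}$ are irreducible $S_{n_a}$-modules. The commutant of the group is $\bigoplus_{\lambda_1,\dots,\lambda_k}\calL\bigl(\bigotimes_a\mathsf{U}_{\lambda_a}\bigr)\otimes\bbI$.
\item Any invariant Hermitian operator therefore has at most $\prod_a\sum_{\lambda_a}\dim\mathsf{U}_{\lambda_a}\le\prod_a(n_a+1)^{(d_B+2)(d_B-1)/2}$ distinct eigenvalues. The exponent counts at most $(n_a+1)^{d_B-1}$ Young diagrams, each with $\dim\mathsf{U}_{\lambda_a}\le(n_a+1)^{d_B(d_B-1)/2}$.
\item This is at most $(n+1)^{k(d_B+2)(d_B-1)/2}$ with $k\le\prod_{t\in\calT}d_{U_t}d_{X_t}$.
\end{itemize}
This gives the product structure you wanted without any factorization of $\Delta_{B^n}$.

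For Step~3, show by induction that the level-$i$ operator is invariant under all permutations fixing the joint sequence $(\tilde{x}^n(\calS_1),\dots,\tilde{x}^n(\calS_{i-1}))$. The reference operator of its pinching is invariant under a larger group, so that pinching is covariant under this one. Meanwhile $\rho_{E^n|\tilde{x}^n(\calS_{i-1})}$ is itself invariant under it. The same count, with $d_E$ and the joint alphabet, then yields the fourth bound. Your Step~3 remark about invariance of the successive reference operators is the right tool; it has to replace the factorization argument, not feed into it.
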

The proof of Lemma~\ref{eq:Lemma_Pincing_Eigen_Value_Bounds} is similar to the proof of \cite[Proposition~12]{Salek25} and is omitted for brevity.
\begin{lemma}
\label{eq:Ryini_Conditional_MI}
For each $\calT\subset[T]$ and $\alpha\in\pr{\frac{-1}{2}:0}\cup\pr{0:\infty}$, we have
\begin{align*}
    &\ubar{\I}_{1-\alpha}\pr{U_\calT^n,X_\calT^n;B^n\lvert U_{\calT^c}^n,X_{\calT^c}^n}_{\rho_{U_{[T]}X_{[T]}B}^{\otimes n}\lvert\rho_{U_{[T]}X_{[T]}}^{\otimes n}}\nonumber\\
    &=n\ubar{\I}_{1-\alpha}\pr{U_\calT,X_\calT;B\lvert U_{\calT^c},X_{\calT^c}}_{\rho_{U_{[T]}X_{[T]}B}\lvert\rho_{U_{[T]}X_{[T]}}}.
\end{align*}
\end{lemma}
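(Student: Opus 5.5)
The plan is to prove additivity of the conditional sandwiched R\'enyi mutual information by showing ``$\le$'' and ``$\ge$'' separately. Write $Y\triangleq(U_\calT,X_\calT)$ and $Z\triangleq(U_{\calT^c},X_{\calT^c})$ for brevity. The state is then
\begin{align*}
\rho_{YZB}=\sum_{y,z}p(y,z)\den{y}{y}\otimes\den{z}{z}\otimes\rho_{B|y,z},
\end{align*}
and its $n$-fold tensor power is again classical-quantum, with product conditional states.

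First I would reduce the minimization in \eqref{eq:Renyi_CMI} over Markov states $\sigma_{Y-Z-B}$ with $\sigma_{YZ}=\rho_{YZ}$. Such a state has the form
\begin{align*}
\sigma_{Y-Z-B}=\sum_{y,z}p(y,z)\den{y}{y}\otimes\den{z}{z}\otimes\sigma_{B|z}.
\end{align*}
Because both arguments are block diagonal in the classical registers, the sandwiched divergence splits blockwise:
\begin{align*}
2^{\alpha\ubar{\D}_{1+\alpha}}=\sum_{z}p(z)\sum_{y}p(y|z)\,\tra\Big[\big(\sigma_{B|z}^{-\frac{\alpha}{2(1+\alpha)}}\rho_{B|y,z}\sigma_{B|z}^{-\frac{\alpha}{2(1+\alpha)}}\big)^{1+\alpha}\Big].
\end{align*}
Each inner term depends only on the single operator $\sigma_{B|z}$. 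The minimization therefore decouples over $z$, and for each $z$ it can be solved in closed form by the quantum Sibson identity (as in Hayashi--Tomamichel). The result is an explicit expression of the form
\begin{align*}
\frac{1}{\alpha}\log\sum_{z}p(z)\Big(\tra\big[\big(\textstyle\sum_{y}p(y|z)\,\cdot\,\big)^{\cdots}\big]\Big)^{\cdots},
\end{align*}
or, at minimum, the statement that the per-$z$ optimizer is unique and given by a fixed formula. In the sign convention of this paper the range $\alpha\in(-\tfrac12,0)\cup(0,\infty)$ corresponds to orders $1+\alpha>1/2$, which is where data processing holds and where the Sibson-type formula is valid.

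Next I would apply this formula to $\rho^{\otimes n}$. The conditioning variable becomes $z^n$, with $p(z^n)=\prod_i p(z_i)$ and $\rho_{B^n|y^n,z^n}=\bigotimes_i\rho_{B|y_i,z_i}$. For fixed $z^n$, the inner sum over $y^n$ factorizes into the tensor product $\bigotimes_i\sum_{y_i}p(y_i|z_i)(\cdots)$. Traces of powers of tensor products are products of traces, so the per-$z^n$ quantity is a product over $i$. The outer average over $z^n$ of a product of i.i.d.\ factors then equals the product of the single-letter averages. Taking $\frac{1}{\alpha}\log$ turns this product into $n$ times the single-letter value, which gives equality directly.

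A more robust alternative avoids needing the exact closed form. For ``$\le$'', restrict to product states $\sigma^{\otimes n}$; the divergence is additive on tensor products. For ``$\ge$'', use the block-diagonal decoupling to reduce to the unconditional Sibson additivity for each fixed $z^n$. That unconditional additivity is the known result $\min_{\sigma_{B^n}}\ubar{\D}(\rho_{YB}^{\otimes n}\|\rho_Y^{\otimes n}\otimes\sigma_{B^n})=n\min_{\sigma_B}\ubar{\D}(\rho_{YB}\|\rho_Y\otimes\sigma_B)$, which follows from the unique-optimizer characterization together with permutation invariance, or from the duality representation.

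The main obstacle is this ``$\ge$'' direction: proving that non-product $\sigma_{B^n|z^n}$ cannot do better. The outer averaging over $z$ is not linear inside the logarithm, so the product of optimizers must be shown optimal for the combined objective. The closed-form Sibson expression handles this cleanly. The step I expect to need the most care is verifying that the closed form holds for both signs of $\alpha$, where the roles of minimization and of the power exponents flip.
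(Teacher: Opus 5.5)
Your reduction is correct. For cq Markov states the objective splits as $\frac{1}{\alpha}\log\sum_z p(z)\sum_y p(y|z)\,\tra[(\sigma_{B|z}^{-\gamma}\rho_{B|y,z}\sigma_{B|z}^{-\gamma})^{1+\alpha}]$ with $\gamma=\frac{\alpha}{2(1+\alpha)}$. The optimization decouples in $z$ for either sign of $\alpha$, and
\[
\ubar{\I}_{1-\alpha}(Y;B|Z)=\frac{1}{\alpha}\log\sum_z p(z)\,2^{\alpha I_z},\qquad I_z\triangleq\min_{\sigma_B}\ubar{\D}_{1+\alpha}\big(\rho_{YB|z}\big\|\rho_{Y|z}\otimes\sigma_B\big).
\]
Your ``$\le$'' direction via product $\sigma$'s is also fine.

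The route you say handles the hard direction cleanly fails, however: there is no quantum Sibson closed form for the \emph{sandwiched} divergence. Sibson's identity is a fact about the Petz divergence. There $\sum_y p(y)\tra[\rho_y^{1+\alpha}\sigma^{-\alpha}]$ depends on the $\rho_y$ only through the single operator $\sum_y p(y)\rho_y^{1+\alpha}$. That linearity is what gives both the closed form and the tensor factorization of $\sum_{y^n}$ you use. In the sandwiched case the map $\rho_y\mapsto(\sigma^{-\gamma}\rho_y\sigma^{-\gamma})^{1+\alpha}$ is nonlinear and does not commute with $\sigma$. The optimizer is then characterized only implicitly, by a fixed-point equation, and no explicit expression of the kind you sketch exists. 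So your first route, and your closing claim that it settles the ``$\ge$'' direction, do not hold. The paper itself gives no proof here; it only refers to Lemma~4 of Anshu--Hayashi--Warsi.

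Your ``robust alternative'' is the right argument, with one correction. For fixed $z^n$ the conditional state is $\bigotimes_i\rho_{YB|z_i}$, a product of \emph{non-identical} factors. The i.i.d.\ additivity you quote is therefore not the statement you need. Permutation symmetry would not supply it either: it yields only a permutation-invariant optimizer, not a product one. What you need is additivity under arbitrary tensor products for orders $1+\alpha\ge\frac12$, i.e.\ $I_{z^n}=\sum_i I_{z_i}$. This is known (Hayashi--Tomamichel). One proof uses the H\"older-duality/minimax representation of the sandwiched quantity. The minimax swap relies on operator convexity (order $>1$) or concavity (order in $[\frac12,1)$) of the relevant power of $\sigma$, which is exactly where the range $\alpha\in(-\frac12,0)\cup(0,\infty)$ enters. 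Restricting the dual variable to product operators then gives the missing inequality. With this,
\[
\sum_{z^n}\prod_{i=1}^n p(z_i)\,2^{\alpha I_{z_i}}=\Big(\sum_z p(z)\,2^{\alpha I_z}\Big)^n,
\]
and applying $\frac{1}{\alpha}\log$ proves the lemma for both signs of $\alpha$. Drop the closed-form route and present this one.
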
The proof Lemma~\ref{eq:Ryini_Conditional_MI} is similar to the proof of \cite[Lemma~4]{AshnuHayashi2020} and is omitted for brevity. 

\setcounter{equation}{52}
From Theorem~\ref{thm:Achievable_One_Shot} when the transmitters use the channel $\calN_{A_{[T]}\to BE}$ $n$ times independently, there exists a code such that we have \eqref{eq:Pe_Covertness_2}, at the bottom of the page. By Lemma~\ref{eq:Lemma_Pincing_Eigen_Value_Bounds}, the quantities $\mu_{B,[T]}^{(n)}$, $\mu_{B,\calT_\calS^c}^{(n)}$, $\mu_{E,\abs{\mathfrak{B}}}^{(n)}$, and $\mu_{E,i}^{(n)}$, for $i\in[\abs{\mathfrak{B}}]$, are polynomial in $n$. Moreover, by Lemma~\ref{eq:Ryini_Conditional_MI}, the sandwiched R\'enyi mutual information quantities are additive under tensor products:
\begin{align}
\ubar{\I}_{1-\alpha}\pr{\tilde{X}_{\calT_\calS}^n;B^n\big\lvert \tilde{X}_{\calT_\calS^c}^n}=n\ubar{\I}_{1-\alpha}\pr{\tilde{X}_{\calT_\calS};B\big\lvert \tilde{X}_{\calT_\calS^c}},\nonumber
\end{align}
and similarly for the sandwiched R\'{e}nyi relative entropy terms. 
Also, by the continuity of the sandwiched R\'{e}nyi relative entropy and sandwiched R\'{e}nyi mutual information, defined in \eqref{eq:Sand_alpha_diver}, at order one, we have $\lim_{\alpha\to0}
\ubar{\D}_{1+\alpha}(\rho\lVert\sigma)=D(\rho\lVert\sigma)$
and $\lim_{\alpha\to0}
\ubar{\I}_{1-\alpha}(U;B|V)=I(U;B|V)$ \cite{Wilde14,Sandwiched10}. Combining these facts with \eqref{eq:Pe_Covertness_2}, it follows that, for each $\calS\subseteq[S]$, when $n\to\infty$ and $\alpha\to0$, there exists a sequence of codes for which the error probability and the covertness metric vanish provided that
\begin{subequations}
\begin{align}
    \sum\limits_{s\in\calS}R_s&\le I\pr{\tilde{X}_{\calT_\calS};B\lvert\tilde{X}_{\calT_\calS^c}},\label{eq:Initial_Bound_Rel}\\
    \sum\limits_{s\in\calS}R_s&\ge I\pr{X(\calS);E},\label{eq:Initial_Bound_Res}
\end{align}
\end{subequations}where $\calT_\calS$ denotes the set of channel inputs for which each channel input has access to at least one message in the message set $M_\calS$, and $\tilde{X}\pr{\calS}$ denotes the set of channel inputs for which each channel input has only access to a subset of messages $M_{\calS'}$, for some $\calS' \subseteq \calS$. 
From Lemma~\ref{lemma:One_Message_One_Vertex}, each message is at most a private message of one vertex; therefore, every subset $\calS$ of the messages corresponds to a subset of the vertices with private messages. Also, as discussed in Section~\ref{sec:Special_Message_Hierarchy}, for every subset $\calS\subset[S]$ of the messages, the set of all vertices that have access to at least one message that belongs to the subset $\calS$ corresponds to a proper ancestral sub-graph. 
Now if we denote all distinct subsets of the messages that lead to the same proper ancestral sub-graph $\Omega\in\mathfrak{O}$ with $\calS_1,\calS_2,\dots,\calS_j$, for some $j\in\bbN^+$, there exists a subset $\calS_{i'}$, for some  $i'\in[j]$, that contains all messages which are associated, as private messages, with the vertices in the proper ancestral sub-graphs $\Omega$. Note that the \ac{RHS} of the rate constraint in \eqref{eq:Initial_Bound_Rel} for all the subsets in $\calS_i$, with $i\in[j]$, are equal. Therefore, the rate constraints corresponding to all the subsets in $\calS_{[j]}\backslash\calS_{i'}$ are redundant because of $\calS_{i'}$. This shows that for each proper ancestral sub-graph, we have only one active rate constraint in \eqref{eq:Initial_Bound_Rel}. 
For example, consider the \ac{MAC} depicted in Fig.~\ref{fig:4M6T_Example}. The set of all vertices that have access to at least one message that belongs to the message sets $\calS_1\triangleq\br{M_2,M_5}$ and $\calS_2\triangleq\br{M_2}$ is the same proper ancestral sub-graph, denoted by $\Omega'\triangleq\br{v_2,v_5,v_4,v_7,v_0}$. Note that the message set $\calS_1$ includes all messages which are associated, as private messages, with the vertices in the proper ancestral sub-graphs $\Omega'$. The rate constraints in \eqref{eq:Initial_Bound_Rel} corresponding to these two subsets are $R_2+R_5<I(X_2,X_4,X_5,X_7;B|X_1,X_3,X_6,X_8)$ and $R_2<I(X_2,X_4,X_5,X_7;B|X_1,X_3,X_6,X_8)$, respectively. Clearly, the latter constraint is redundant because of the former. Now, since $U_\Omega-X_\Omega-B$, $t\in[T]$, forms a Markov chain, we can rewrite \eqref{eq:Initial_Bound_Rel}~as
\begin{align}
    \sum_{v\in\Omega}R(v)&< I\left(X_\Omega;B\lvert X_{\Omega^c},U_{\Omega^c\cap\bbM}\right),\nonumber
\end{align}for all $\Omega\in\mathfrak{O}$ of the associated message graph $\Gamma$. 

We now argue that \eqref{eq:Initial_Bound_Res} is equivalent to
\begin{align}
        \sum\limits_{v\in\Xi}R(v)\ge I\pr{X_\Xi;E},\label{eq:Initial_Bound_Res_4}
    \end{align}for all proper descendant sub-graphs $\Xi\in\mathfrak{X}$ of the associated message graph $\Gamma$. 
We note the following facts observed from the bound in \eqref{eq:Initial_Bound_Res}:
    \begin{enumerate}[{Fact} 1:]
        \item\label{fct:1}If two subsets $\calS, \calS' \subseteq [S]$ satisfy $\calS \subseteq \calS'$ and $\tilde{X}(\calS) = \tilde{X}(\calS')$, then the \ac{LHS} of the rate constraint associated with $\calS'$ in \eqref{eq:Initial_Bound_Res} is bigger than that associated with $\calS$.
        \item\label{fct:2} For every message set $\calS \subseteq [S]$, if all elements of $\calS$ correspond to private messages of non-leaf vertices, then $X(\calS) = \emptyset$. This follows from the fact that the channel input of every non-leaf vertex is a function of the private message of at least one leaf vertex.
        \item\label{fct:3} If a message set $\calS'\subseteq [S]$ includes at least one private message of a vertex whose descendant set contains a vertex that its private message is not in $\calS'$, then there exists a message set $\calS\subset\calS'$ such that the \ac{LHS} of the rate constraint in \eqref{eq:Initial_Bound_Res} corresponding to $\calS'$ is bigger than that of $\calS$ and $\tilde{X}(\calS) = \tilde{X}(\calS')$. This is because we can construct a smaller set $\calS \subset \calS'$ by removing these messages from $\calS'$, which results in $\tilde{X}(\calS) = \tilde{X}(\calS')$. Therefore, from Fact~\ref{fct:1}, the \ac{LHS} of the rate constraint in \eqref{eq:Initial_Bound_Res} corresponding to $\calS'$ is bigger than that of the set $\calS$. For example, consider the \ac{MAC} depicted in Fig.~\ref{fig:4M6T_Example} and the message set $\calS' \triangleq \{M_1, M_2, M_5\}$. Note that $X(\calS') = \{X_1, X_2\}$, and that $M_5$ is the private message of vertex $v_5$, whose descendant set includes vertex $v_3$. The private message of $v_3$ is $M_3$, which does not belong to $\calS'$. Thus, we can construct the reduced set $\calS \triangleq \{M_1, M_2\}$ by removing $M_5$ from $\calS'$, while still preserving the input set: $X(\calS) = X(\calS') = \{X_1, X_2\}$.
        \item\label{fct:4}If two subsets of vertices $\calV,\calV'\subseteq [T]$ include the same set of vertices with private messages, denoted by $\calS, \calS' \subseteq [S]$, with $\calS=\calS'$, and the set of vertices without private messages in $\calV'$ is a subset of that in $\calV$, then $X_{\calV'}\subseteq X_{\calV}$ and the \ac{LHS} of rate constraint in \eqref{eq:Initial_Bound_Res} corresponding to the private messages of $\calS$ and $\calS'$ satisfy $\sum_{s\in\calS} R(s) = \sum_{s\in\calS'} R(s)$.
    \end{enumerate}We now argue that the rate constraints not corresponding to the proper descendant sub-graphs, as defined in Definition~\ref{defi:Proper_Ance_Leaf_Path}, are redundant. Since every subset $\calS$ of the messages corresponds to a subset of vertices with private messages, we can analyze the problem in terms of subsets of these vertices rather than directly considering subsets of messages. From Fact~\ref{fct:2}, the subsets of vertices with private messages that do not include a leaf are redundant. This directly implies the first condition of Definition~\ref{defi:Proper_Ance_Leaf_Path}, which requires that any valid subset must include at least one leaf. Also, following a similar argument as in Fact~\ref{fct:3}, one can show that for any subset $\calV'$ of vertices, with private messages set $\calS'$, that includes a vertex whose descendant set contains a vertex with a private message not included in $\calS'$, the corresponding rate constraint becomes redundant. This directly implies the second condition of Definition~\ref{defi:Proper_Ance_Leaf_Path}. Similarly, following the argument used in Fact~\ref{fct:4}, one can show that Fact~\ref{fct:4} directly implies the third condition of Definition~\ref{defi:Proper_Ance_Leaf_Path}. 
    We now observe that if a subset $\calV'$ of the vertices with private messages, which include a non-empty subset of the leaves, consists only of vertices whose descendant sets do not contain any vertex with a private message outside $\calS'$, which is the private message set of $\calV'$, then $\calV'$, together with the vertices that do not have a private message and whose descendant sets also do not contain any vertex with a private message outside $\calS'$, forms a proper descendant sub-graph.
    Therefore, \eqref{eq:Initial_Bound_Res} is equivalent to \eqref{eq:Initial_Bound_Res_4}.

 \section{Converse Proof of Theorem~\ref{thm:Achievable_Asymp_cq}}
\label{proof:thm:Achievable_Asymp_cq}
\subsection{Entropy Bounds}
We establish a few preliminary bounds to show that quantum states close to the product states in the trace distance exhibit information-theoretic properties approximating those of truly product states. The following lemma is a consequence of \cite[Exercise~9.1.9]{Wilde_Book}, which states that the trace distance between two sequences of quantum states upper bounds the trace distance between the individual states at a random time index $Q$, where $Q$ is independent of the sequences.
\begin{lemma}[Trace Distance of Random Sample]
\label{lemma:Lemma_Product}
    Let $Q\in[n]$ be a classical random time index distributed according to $p_Q$. Also, let $\rho_{E^n}$ and $\sigma_{E^n}$ be two quantum states independent of $Q$, i.e., $\rho_{E^n,Q}=\rho_{E^n}\otimes\sum_{i=1}^np_Q(i)\den{i}{i}$ and $\sigma_{E^n,Q}=\sigma_{E^n}\otimes\sum_{i=1}^np_Q(i)\den{i}{i}$. Then,
    \begin{align}
    \lVert\rho_{E_Q}-\sigma_{E_Q}\rVert_1\le\lVert\rho_{E^n}-\sigma_{E^n}\rVert_1.\nonumber%
    \end{align}
\end{lemma}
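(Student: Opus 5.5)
The plan is to write the random-index marginals as the images of the $n$-letter states under a single quantum channel, and then use the data-processing inequality for the trace norm. The statement should be read with $\rho_{E_Q}\triangleq\sum_{i=1}^n p_Q(i)\den{i}{i}\otimes\rho_{E_i}$, the classical-quantum state of the index $Q$ together with the $Q$-th output system, where $\rho_{E_i}=\tra_{E^{n}\setminus E_i}\rho_{E^n}$. The definition of $\sigma_{E_Q}$ is analogous.

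\textbf{Step 1 (the channel).} Define the map $\Phi:\calL(\calH_e^{\otimes n})\to\calL(\bbC^n\otimes\calH_e)$ by
\begin{align}
\Phi(\omega)\triangleq\sum_{i=1}^n p_Q(i)\den{i}{i}\otimes\tra_{E^n\setminus E_i}(\omega).\nonumber
\end{align}
Each partial trace is CPTP. Tensoring with the fixed state $\den{i}{i}$ preserves complete positivity and the trace. A convex combination of CPTP maps is again CPTP. Hence $\Phi$ is a quantum channel.

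\textbf{Step 2 (identifying the images).} Because $\rho_{E^n}$ and $\sigma_{E^n}$ are independent of $Q$, we have $\Phi(\rho_{E^n})=\rho_{E_Q}$ and $\Phi(\sigma_{E^n})=\sigma_{E_Q}$. Equivalently, $\Phi$ is the composition of two operations applied to the given product states $\rho_{E^n}\otimes\sum_i p_Q(i)\den{i}{i}$: first the controlled operation that, conditioned on $\ket{i}$, traces out every system except $E_i$; then the identity on $Q$.

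\textbf{Step 3 (monotonicity).} The trace distance is contractive under CPTP maps (see \cite[Sec.~9.1]{Wilde_Book}). Therefore
\begin{align}
\lVert\rho_{E_Q}-\sigma_{E_Q}\rVert_1=\lVert\Phi(\rho_{E^n}-\sigma_{E^n})\rVert_1\le\lVert\rho_{E^n}-\sigma_{E^n}\rVert_1.\nonumber
\end{align}
The first equality uses linearity of $\Phi$.

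As an alternative route, the block-diagonal structure in $Q$ gives $\lVert\rho_{E_Q}-\sigma_{E_Q}\rVert_1=\sum_i p_Q(i)\lVert\rho_{E_i}-\sigma_{E_i}\rVert_1$. Each term is at most $\lVert\rho_{E^n}-\sigma_{E^n}\rVert_1$ by monotonicity under partial trace. The weights then sum to one, giving the same bound.

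The only delicate point is interpreting $\rho_{E_Q}$ correctly as the joint state with the classical index, or as its $E$-marginal. Either interpretation works: the $E$-marginal is obtained by a further partial trace over $Q$, which is also contractive. I do not expect any step to be a real obstacle.
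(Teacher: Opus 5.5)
Your proof is correct and takes essentially the same route as the paper. Both arguments realize the random-index state as the output of a quantum channel that, conditioned on $Q=i$, discards every system except $E_i$, and then apply monotonicity of the trace distance under CPTP maps. The one cosmetic difference is that the paper applies its channel to the joint states $\rho_{E^n,Q}$ and $\sigma_{E^n,Q}$ and then uses the identity $\lVert\rho_{E^n,Q}-\sigma_{E^n,Q}\rVert_1=\lVert\rho_{E^n}-\sigma_{E^n}\rVert_1$, whereas you build $p_Q$ into the channel itself, so that identity is never needed.
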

\begin{proof}
    The proof follows from \cite[Exercise~9.1.9]{Wilde_Book}, by defining a channel that takes $\rho_{E^n,Q}$ and $\sigma_{E^n,Q}$ as inputs and outputs $\rho_{E_Q}$ and $\sigma_{E_Q}$, and from the fact that $\rVert\rho_{E^n,Q}-\sigma_{E^n,Q}\rVert_1=\rVert\rho_{E^n}-\sigma_{E^n}\rVert_1$.
\end{proof}
Now we build on the fact that we can upper-bound the difference in entropy in terms of trace distance for finite states \cite[Theorem~11.10.2]{Wilde_Book}.
\begin{lemma}[Timing mutual information of nearly product states]
\label{lemma:Timing_Lemma}
    For any sequence of quantum states $\rho_{E^n}$ and a quantum state $\sigma_E$, where $\rho_{E_i},\sigma_E\in\calD(\calH)$, for $i\in[n]$, if $\lVert\rho_{E^n}-\sigma_E^{\otimes n}\rVert_1\le\epsilon$, then
    \begin{subequations}
    \begin{align}
        \frac{1}{n}\sum_{i=1}^nI_\rho\pr{E_i;E^{i-1}}&\le2\epsilon\log\pr{\text{dim}\pr{\calH}}+2h_2(\epsilon),\label{eq:Sum_MI_1}\\
        \frac{1}{n}\sum_{i=1}^nI_\rho\pr{E_i;E_{i+1}^n}&\le2\epsilon\log\pr{\text{dim}\pr{\calH}}+2h_2(\epsilon).\label{eq:Sum_MI_2}
    \end{align}
    \end{subequations}Also, for any classical \ac{RV} $Q\in[n]$ independent of $E^n$ we have
    \begin{align}
        I_\rho\pr{Q;E_Q}\le2\epsilon\log\pr{\text{dim}\pr{\calH}}+2h_2(\epsilon).\label{eq:Tim_Indepen}
    \end{align}
\end{lemma}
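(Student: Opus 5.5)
The plan is to reduce all three bounds to the Alicki--Fannes--Winter / Fannes-type continuity of quantum entropy, \cite[Theorem~11.10.2]{Wilde_Book}, applied to states that are $\epsilon$-close in trace distance. Throughout, $\sigma_E^{\otimes n}$ serves as the reference product state, for which all the relevant mutual informations vanish. The reduction uses only two facts: partial traces do not increase trace distance, and Lemma~\ref{lemma:Lemma_Product}.

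For \eqref{eq:Sum_MI_1}, I would first use the chain rule to write
\begin{align*}
\sum_{i=1}^n I_\rho\pr{E_i;E^{i-1}}=\sum_{i=1}^n H(E_i)_\rho-H(E^n)_\rho .
\end{align*}
This is the total correlation of $\rho_{E^n}$, which can also be written as $\bbD\pr{\rho_{E^n}\lVert \rho_{E_1}\otimes\cdots\otimes\rho_{E_n}}$. Next I compare each entropy with its counterpart for $\sigma_E^{\otimes n}$.
\begin{enumerate}[(i)]
\item Since partial trace is contractive, each marginal satisfies $\lVert\rho_{E_i}-\sigma_E\rVert_1\le\epsilon$. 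Continuity then gives $\card{H(E_i)_\rho-H(\sigma_E)}\le \epsilon\log\dim\calH+h_2(\epsilon)$, after adjusting the constant convention for $\epsilon/2$ versus $\epsilon$.
\item The joint entropy satisfies $\card{H(E^n)_\rho-nH(\sigma_E)}\le n\epsilon\log\dim\calH+h_2(\epsilon)$, since the dimension of $\calH^{\otimes n}$ is $(\dim\calH)^n$.
\end{enumerate}
Summing and dividing by $n$ yields at most $2\epsilon\log\dim\calH+h_2(\epsilon)(1+1/n)\le 2\epsilon\log\dim\calH+2h_2(\epsilon)$. The reverse-ordered bound \eqref{eq:Sum_MI_2} follows identically, because $\sum_i I(E_i;E_{i+1}^n)$ is the same total correlation under the reversed chain rule.

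For \eqref{eq:Tim_Indepen}, write $I_\rho(Q;E_Q)=H(E_Q)-H(E_Q\lvert Q)=H\pr{\sum_i p_Q(i)\rho_{E_i}}-\sum_i p_Q(i)H(\rho_{E_i})$. By Lemma~\ref{lemma:Lemma_Product}, applied with $\sigma_{E^n}=\sigma_E^{\otimes n}$, the averaged state satisfies $\lVert\rho_{E_Q}-\sigma_E\rVert_1\le\epsilon$. This gives $\card{H(E_Q)-H(\sigma_E)}\le\epsilon\log\dim\calH+h_2(\epsilon)$. Each $\rho_{E_i}$ is also $\epsilon$-close to $\sigma_E$, so $\card{\sum_ip_Q(i)H(\rho_{E_i})-H(\sigma_E)}\le\epsilon\log\dim\calH+h_2(\epsilon)$. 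The triangle inequality then gives the claim.

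The main obstacle is the joint-entropy comparison in step (ii), where a factor $n$ appears. The argument only works because continuity scales like $\epsilon\log\dim(\calH^{\otimes n})=n\epsilon\log\dim\calH$, while the binary-entropy term stays a single $h_2(\epsilon)$ rather than growing with $n$. I would therefore state carefully the Fannes--Audenaert/AFW form used, namely $\card{H(\rho)-H(\sigma)}\le T\log(d-1)+h_2(T)$ with $T=\frac12\lVert\rho-\sigma\rVert_1$. I would also check that $\epsilon\le 1$ (or $\epsilon\le 1/2$), so that $h_2$ is monotone in the regime used; this lets the bound with $T\le\epsilon/2$ be upper bounded by the stated constants.
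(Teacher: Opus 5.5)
Your proposal is correct and follows essentially the same route as the paper. You write $\sum_i I_\rho(E_i;E^{i-1})$ (and its reversed counterpart) as $\sum_i H(E_i)_\rho-H(E^n)_\rho$ and compare each term with the product reference $\sigma_E^{\otimes n}$ via \cite[Theorem~11.10.2]{Wilde_Book}, noting that the joint term carries only a single $h_2(\epsilon)$. You then bound $I_\rho(Q;E_Q)$ by controlling $H(E_Q)$ through Lemma~\ref{lemma:Lemma_Product} and $H(E_Q\lvert Q)=\sum_i p_Q(i)H(E_i)_\rho$ termwise, followed by the triangle inequality; your extra care with the $\tfrac12\lVert\cdot\rVert_1$ convention and $\epsilon\le\tfrac12$ (where $x\log(\dim\calH-1)+h_2(x)$ is nondecreasing) is a detail the paper glosses over but does not alter the argument.
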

\begin{proof}
    From Lemma~\ref{lemma:Lemma_Product} and \cite[Theorem~11.10.2]{Wilde_Book} we have
    \begin{subequations}
    \begin{align}
        \left\lvert H_\rho\pr{E^n} - H_\sigma\pr{E^n} \right\rvert&\le\epsilon\log\pr{\text{dim}\pr{\calH^n}-1}+h_2(\epsilon),\label{eq:First_FAI}\\
        \left\lvert H_\rho\pr{E_Q} - H_\sigma\pr{E} \right\rvert&\le\epsilon\log\pr{\text{dim}\pr{\calH}-1}+h_2(\epsilon),\label{eq:Second_FAI}\\
        \left\lvert H_\rho\pr{E_i} - H_\sigma\pr{E} \right\rvert&\le\epsilon\log\pr{\text{dim}\pr{\calH}-1}+h_2(\epsilon),\label{eq:Third_FAI}
    \end{align}where $i\in[n]$ and $h_2(\epsilon)=-\epsilon\log(\epsilon)-(1-\epsilon)\log(1-\epsilon)$.
    \end{subequations}
Now we have,
    \begin{align}
        &\frac{1}{n}\sum_{i=1}^nI_\rho\pr{E_i;E^{i-1}}\nonumber\\
        &\mathop=\limits^{(a)}\frac{1}{n}\pr{\pr{\sum_{i=1}^nH_\rho(E_i)}-H_\rho(E^n)}\nonumber\\
        &\mathop=\limits^{(b)}\frac{1}{n}\left(\!\pr{\sum_{i=1}^nH_\rho(E_i)}\!-H_\rho(E^n)+\!H_\sigma(E^n)-\!\sum_{i=1}^nH_\sigma(E_i)\right)\nonumber\\
        &\le\frac{1}{n}\left\lvert H_\rho(E^n)-H_\sigma(E^n) \right\rvert + \frac{1}{n}\sum_{i=1}^n\left\lvert H_\rho(E_i)-H_\sigma(E_i) \right\rvert\nonumber\\
        &\mathop\le\limits^{(c)}\frac{1}{n}\epsilon\log\pr{\text{dim}\pr{\calH^n}-1}+h_2(\epsilon)\nonumber\\
        &\quad+\epsilon\log\pr{\text{dim}\pr{\calH}-1}+h_2(\epsilon)\nonumber\\
        &\le2\epsilon\log\pr{\text{dim}\pr{\calH}}+2h_2(\epsilon),\nonumber
    \end{align}where
\begin{itemize}
    \item[$(a)$] follows since
    \begin{align}
        &\frac{1}{n}\sum_{i=1}^nI_\rho\pr{E_i;E^{i-1}}\nonumber\\
        &\quad=\frac{1}{n}\sum_{i=1}^n\pr{H_\rho\pr{E_i}+H_\rho\pr{E^{i-1}}-H_\rho\pr{E^i}},\nonumber\\
        &\sum_{i=1}^nH_\rho\pr{E^{i-1}}\nonumber\\
        &\quad=H_\rho\pr{E^1}+H_\rho\pr{E^2}+\cdots+H_\rho\pr{E^{n-1}},\nonumber\\
        &\sum_{i=1}^nH_\rho\pr{E^i}=H_\rho\pr{E^1}+H_\rho\pr{E^2}+\cdots+H_\rho\pr{E^n};\nonumber
    \end{align}
    \item[$(b)$] follows since $\sigma_{E^n}=\sigma_E^{\otimes n}$ is a product state and therefore $H_\sigma(E^n)=\sum_{i=1}^nH_\sigma(E_i)$; 
    \item[$(c)$] follows from \eqref{eq:First_FAI} and \eqref{eq:Third_FAI}.
    \end{itemize}Similarly, one can prove \eqref{eq:Sum_MI_2}.

    Now let the distribution of $Q$ be denoted by $p_Q$, from the independence of $E^n$ and $Q$ we have
    \begin{align}
        H_\rho\pr{E_Q\lvert Q}=\sum_{i=1}^np_Q(i)H_\rho\pr{E_i}.\nonumber%
    \end{align}
    \begin{figure*}[b!]
    \hrulefill
    \setcounter{equation}{59}
    \begin{align}
        \rho_{U_{[T]}X_{[T]}BE}&\triangleq\sum_{i=1}^n\sum_{u_{[T],i},x_{[T],i}}\prod_{t=1}^T\frac{1}{n}p_{U_{t,i}X_{t,i}\lvert U_{\bbD_t,i}X_{\bbD_t,i}}(u_{t,i},x_{t,i}\lvert u_{\bbD_t,i},x_{\bbD_t,i})\den{u_{[T],i}x_{[T],i}}{u_{[T],i}x_{[T],i}}\otimes\calN_{X_{[T]}\to BE}\pr{x_{[T],i}};\label{eq:Joint_Time_Sharing_State}
    \end{align}
    \setcounter{equation}{57}
\end{figure*}Therefore,
    \begin{align}
        &\left\lvert H_\rho\pr{E_Q\lvert Q} - H_\sigma(E) \right\rvert\nonumber\\
        &=\left\lvert \sum_{i=1}^np_Q(i)\pr{H_\rho\pr{E_i} - H_\sigma(E)} \right\rvert\nonumber\\
        &\le\sum_{i=1}^np_Q(i)\left\lvert H_\rho\pr{E_i} - H_\sigma(E) \right\rvert\nonumber\\
        &\mathop\le\limits^{(a)}\sum_{i=1}^np_Q(i)\pr{\epsilon\log\pr{\text{dim}\pr{\calH}-1}+h_2(\epsilon)}\nonumber\\
        &=\epsilon\log\pr{\text{dim}\pr{\calH}-1}+h_2(\epsilon)\nonumber\\
        &\le\epsilon\log\pr{\text{dim}\pr{\calH}}+h_2(\epsilon),\label{eq:Last_Timing_MI}
    \end{align}where $(a)$ follows from \eqref{eq:Third_FAI}. Now, combining \eqref{eq:Last_Timing_MI} with \eqref{eq:Second_FAI} completes the proof \eqref{eq:Tim_Indepen}.
\end{proof}

\subsection{Converse}
Consider any sequence of $(2^{nR_1},2^{nR_2},\cdots,2^{nR_S},n)$ codes for a \ac{MAC} with a general message set, that simultaneously satisfies the reliability constraint $\bbP\left\{M_{[S]}\ne\hat{M}_{[S]}\right\}\le\delta_n$, where $\lim\limits_{n\to\infty}\delta_n=0$, and the covertness constraint $\left\lVert\rho_{E^n}-\rho_0^{\otimes n}\right\lVert_1\le\epsilon_n$, where $\lim\limits_{n\to\infty}\epsilon_n=0$. 

\textit{Delta Rate Region:} We first define a $\calQ_\delta$ region, which expands the region in \eqref{eq:CJ_cq}, as follows
\begin{align}
  \calQ_\delta= \left.\begin{cases}(R_1,\cdots,R_S)\geq 0: \exists\, p_{\tilde{X}_{[T]}BE}\in\calG_\delta:\\
  \sum\limits_{v\in\Omega}R(v)\le I\left(X_\Omega;B\lvert X_{\Omega^c},U_{\Omega^c\cap\bbM}\right)+\delta\\
\end{cases}\right\},\nonumber
\end{align}
where,
\begin{align}
  \calG_\delta \triangleq \left.\begin{cases}p_{\tilde{X}_{[T]}BE}:\\
p_{\tilde{X}_{[T]}BE}=\sum\limits_{\tilde{x}_{[S]}}\prod_{t=1}^T p_{\tilde{X}_t\lvert \tilde{X}_{\bbD_t}}\pr{\tilde{x}_t\lvert \tilde{x}_{\bbD_t}}\calN_{X_{[T]}\to BE}\\
\sum\limits_{v\in\Xi}R(v)\ge I\pr{X_\Xi;E}-2\delta\\
\left\lVert\rho_E-\rho_0\right\lVert_1\le\delta\\
\end{cases}\hspace{-3mm}\right\}.\nonumber
\end{align}
Now, for any $\delta>0$, we show that when a tuple $(R_1,\cdots,R_S)$ is achievable, then $(R_1,\cdots,R_S)\in\calQ_\delta$. For any $\epsilon_n,\delta_n>0$, and proper ancestral sub-graph $\Omega\in\mathfrak{O}$, we upper-bound the rate as follows,
\begin{align}
    &n\sum\limits_{v\in\Omega}R(v)\nonumber\\
    &\mathop=\limits^{(a)}H\pr{M(\Omega)}\nonumber\\
    &\mathop=\limits^{(b)}H\pr{M(\Omega)\lvert M(\Omega^c),X_{\Omega^c}^n}\nonumber\\
    &=I\pr{M(\Omega);\hat{M}(\Omega)\lvert M(\Omega^c),X_{\Omega^c}^n}\nonumber\\
    &\quad+H\pr{M(\Omega)\lvert \hat{M}(\Omega), M(\Omega^c),X_{\Omega^c}^n}\nonumber\\
    &\mathop\le\limits^{(c)} I\pr{M(\Omega);B^n\lvert M(\Omega^c),X_{\Omega^c}^n}+n\delta_n\nonumber\\
    &=\sum_{i=1}^n I\pr{M(\Omega);B_i\lvert M(\Omega^c),X_{\Omega^c}^n,B^{i-1}}+n\delta_n\nonumber\\
    &\le\sum_{i=1}^n I\pr{M(\Omega),X_{\Omega,i};B_i\lvert M(\Omega^c),X_{\Omega^c}^n,B^{i-1}}+n\delta_n\nonumber\\
    &=\sum_{i=1}^n \big[H\pr{B_i\lvert M(\Omega^c),X_{\Omega^c}^n,B^{i-1}}\nonumber\\
    &\quad-H\pr{B_i\lvert M(\Omega^c),X_{\Omega^c}^n,B^{i-1},M(\Omega),X_{\Omega,i}}\big]+n\delta_n\nonumber\\
    &\mathop=\limits^{(d)}\sum_{i=1}^n \big[H\pr{B_i\lvert M(\Omega^c),X_{\Omega^c}^n,B^{i-1}}\nonumber\\
    &\quad-H\pr{B_i\lvert M_{\Omega^c\cap\bbM}^\bbD,X_{\Omega,i},X_{\Omega^c,i}}\big]+n\delta_n\nonumber\\
    &\le\sum_{i=1}^n \big[H\pr{B_i\lvert M_{\Omega^c\cap\bbM}^\bbD,X_{\Omega^c,i}}\nonumber\\
    &\quad-H\pr{B_i\lvert M_{\Omega^c\cap\bbM}^\bbD,X_{\Omega,i},X_{\Omega^c,i}}\big]+n\delta_n\nonumber\\
    &=\sum_{i=1}^n I\pr{X_{\Omega,i};B_i\lvert M_{\Omega^c\cap\bbM}^\bbD,X_{\Omega^c,i}}+n\delta_n\nonumber\\
    &\mathop=\limits^{(e)}\sum_{i=1}^n I\pr{X_{\Omega,i};B_i\lvert U_{\Omega^c\cap\bbM,i},X_{\Omega^c,i}}+n\delta_n\nonumber\\
    &\mathop=\limits^{(f)}n\!\sum_{i=1}^n p(Q=i)I\pr{X_{\Omega,Q};B_Q\lvert U_{\Omega^c\cap\bbM,Q},X_{\Omega^c,Q},Q=i}\!+\!n\delta_n\nonumber\\
    &=nI\pr{X_{\Omega,Q};B_Q\lvert U_{\Omega^c\cap\bbM,Q},X_{\Omega^c,Q},Q}+n\delta_n\nonumber\\
    &\mathop=\limits^{(g)} nI\pr{X_\Omega;B\lvert U_{\Omega^c\cap\bbM},X_{\Omega^c}}+n\delta_n\nonumber\\
    &\mathop\le\limits^{(h)}nI\pr{X_\Omega;B\lvert U_{\Omega^c\cap\bbM},X_{\Omega^c}}+n\delta,\label{eq:Upper_R_cq}
\end{align}where
\begin{itemize}
    \item[$(a)$] follows by defining $M(\Omega)$ as the set of private messages of the vertices of the proper ancestral sub-graph $\Omega$, i.e., $\br{m(k):v_k\in\Omega}$;
    \item[$(b)$] follows by defining $M(\Omega^c)$ as the complement of the set $M(\Omega)$ with respect to $M_{[S]}$. By definition, the inputs $X_\Omega^n$ are the channel inputs that depend on at least one message in $M(\Omega)$. Therefore, the inputs $X_{\Omega^c}^n$ depend only on messages in $M(\Omega^c)$, and are thus independent of $M(\Omega)$;
    \item[$(c)$] follows from data processing inequality and Fano's inequality;
    \item[$(d)$] follows since $\pr{M_{[S]},X_{\Omega^c}^{i-1},X_{\Omega^c,i+1}^n,B^{i-1}}-\pr{X_{\Omega^c,i},X_{\Omega,i}}-B_i$ forms a Markov chain and defining $M_\calS^\bbD\triangleq\br{M_{\bbD_t}:v_t\in\calS}\cup\br{M_t:v_t\in\calS}$;
    \item[$(e)$] follows by defining $U_{\Omega^c\cap\bbM,i}\triangleq M_{\Omega^c\cap\bbM}^\bbD$;
    \item[$(f)$] follows by defining $Q$ as \ac{RV} with uniform distribution over $[n]$ and independent of all the other involved \acp{RV};
    \item[$(g)$] follows by defining the joint state $\rho_{U_{[T]}X_{[T]}BE}$ by identifying $X_\Omega\triangleq \pr{X_{\Omega,Q},Q},B\triangleq B_Q,E\triangleq E_Q,U_{\Omega^c}\triangleq U_{\Omega^c,Q}$, and $X_{\Omega^c}\triangleq\pr{X_{\Omega^c,Q},Q}$, as \eqref{eq:Joint_Time_Sharing_State}, provided at the bottom of the page;
    \setcounter{equation}{60}
    \item[$(h)$] follows by defining
    \begin{equation}
        \delta\triangleq\max\br{\delta_n,\epsilon'_n,\sqrt{\frac{8}{n\beta}}\epsilon_n},\label{eq:delta}
    \end{equation}
    where $\beta$ is the minimum eigenvalue of $\rho_0^{\otimes n}$ and $\epsilon'_n\triangleq2\epsilon_n\log\pr{\text{dim}\pr{\calH}}+2h_2(\epsilon_n)$.
\end{itemize}
Now for any proper descendant sub-graph $\Xi\in\mathfrak{X}$, we lower-bound the rate as follows,
\begin{align}
    n\sum\limits_{v\in\Xi}R(v)&\mathop=\limits^{(a)} H\pr{M\pr{\Xi}}\nonumber\\
    &\ge I\pr{M(\Xi);E^n}\nonumber\\
    &\mathop=\limits^{(b)} I\pr{M(\Xi),X_\Xi^n;E^n}\nonumber\\
    &\ge I\pr{X_\Xi^n;E^n}\nonumber\\
    &=\sum_{i=1}^nI\pr{X_\Xi^n;E_i\lvert E^{i-1}}\nonumber\\
    &=\sum_{i=1}^n\sbr{H\pr{E_i\lvert E^{i-1}}-H\pr{E_i\lvert E^{i-1},X_\Xi^n}}\nonumber\\
    &\ge\sum_{i=1}^n\sbr{H\pr{E_i\lvert E^{i-1}}-H\pr{E_i\lvert X_{\Xi,i}}}\nonumber\\
    &\mathop\ge\limits^{(c)}\sum_{i=1}^n\sbr{H\pr{E_i}-H\pr{E_i\lvert X_{\Xi,i}}}-\epsilon'_n\nonumber\\
    &=\sum_{i=1}^nI\pr{X_{\Xi,i};E_i}-\epsilon'_n\nonumber\\
    &\mathop=\limits^{(d)}n\sum_{i=1}^np(Q=i)I\pr{X_{\Xi,Q};E_Q\lvert Q=i}-\epsilon'_n\nonumber\\
    &=nI\pr{X_{\Xi,Q};E_Q\lvert Q}-\epsilon'_n\nonumber\\
    &\mathop\ge\limits^{(e)}nI\pr{X_{\Xi,Q},Q;E_Q}-2\epsilon'_n\nonumber\\
    &\mathop=\limits^{(f)}nI_\rho\pr{X_\Xi;E}-2\epsilon'_n\nonumber\\
    &\mathop\ge\limits^{(g)}nI_\rho\pr{X_\Xi;E}-2\delta,\label{eq:Lower_R_cq}
\end{align}where
\begin{itemize}
    \item[$(a)$] follows by defining $M(\Xi)$ as the set of private messages of the vertices of the proper descendant sub-graph $\Xi$, i.e., $\br{m(k):v_k\in\Xi}$;
    \item[$(b)$] follows since $X_\Xi^n$ is a deterministic function of $M(\Xi)$;
    \item[$(c)$] and $(e)$ follow from Lemma~\ref{lemma:Timing_Lemma} and defining $\epsilon'_n\triangleq2\epsilon_n\log\pr{\text{dim}\pr{\calH}}+2h_2(\epsilon_n)$;
    \item[$(d)$] follows by defining $Q$ as \ac{RV} with uniform distribution over $[n]$;
    \item[$(f)$] follows from the definition of the joint state $\rho_{U_{[T]}X_{[T]}BE}$ as in \eqref{eq:Joint_Time_Sharing_State}, and defining $X_\Xi\triangleq \pr{X_{\Xi,Q},Q}$, and $E\triangleq E_Q$ 
    \item[$(g)$] follows from \eqref{eq:delta}.
\end{itemize}

Now, we have 
\begin{align}
\left\lVert\rho_E-\rho_0\right\rVert_1&\mathop\leq\limits^{(a)}\sqrt{2\bbD\pr{\rho_E\lVert\rho_0}}\nonumber\\
&=\sqrt{2\bbD\pr{\rho_{E_Q}\lVert\rho_0}}\nonumber\\
&=\sqrt{2\bbD\pr{\frac{1}{n}\sum\limits_{q=1}^n\rho_{E_q}\Big\lVert \rho_0}}\nonumber\\
&\mathop\leq\limits^{(b)}\sqrt{\frac{2}{n}\sum\limits_{q=1}^n\bbD\pr{\rho_{E_q}\lVert \rho_0}}\nonumber\\
&\mathop\leq\limits^{(c)}\sqrt{\frac{2}{n}\bbD\pr{\rho_{E^n}\lVert\rho_0^{\otimes n}}}\nonumber\\
&\mathop\leq\limits^{(d)}\sqrt{\frac{8}{n\beta}}\left\lVert\rho_{E^n}-\rho_0^{\otimes n}\right\rVert_1\nonumber\\
&\leq\sqrt{\frac{8}{n\beta}}\epsilon_n\nonumber\\
&\mathop\leq\limits^{(e)}\delta.\label{eq:Covertness_C_Degraded}
\end{align}where
\begin{itemize}
    \item[$(a)$] follows from \cite[Theorem~1.15]{Ohya_Petz_Book};
    \item[$(b)$] follows from the convexity of quantum relative entropy \cite[Corollary~11.9.2]{Wilde_Book} and Jensen's Inequality;
    \item[$(c)$] follows from the superadditivity of the relative entropy \cite[Eq.~(1)]{SuuperAdditivity_RE};
    \item[$(d)$] follows from \cite[Theorem~2]{Audenaert05}, stating that, for $\rho,\sigma\in\calD(\calH)$, we have $\bbD(\rho\lVert\sigma)\le\frac{4}{\beta}\left\lVert\rho-\sigma\right\lVert_1^2$, where $\beta$ is the minimum eigenvalue of $\sigma$;
    \item[$(e)$] follows from \eqref{eq:delta}. 
\end{itemize}

Combining \eqref{eq:Upper_R_cq} and \eqref{eq:Lower_R_cq}, and~\eqref{eq:Covertness_C_Degraded} shows that $\forall \delta_n,\delta>0$, $R\leq \max\{a:a\in\mathcal{Q}_{\delta}\}$. Therefore,
\begin{align}
  R\leq \max\left\{r:r\in\bigcap_{\delta>0}\mathcal{Q}_{\delta}\right\}.\nonumber
\end{align}

The proof of continuity at zero of $\calQ_\delta$ is similar to that of \cite[Appendix~F]{Keyless22} and is omitted.

\end{appendices}
\section{Acknowledgment}
The authors would like to thank R\'{e}mi A. Chou (The University of Texas at Arlington), Zhaoyou Wang (Columbia University), and anonymous reviewers for the helpful discussions and comments.

\bibliographystyle{IEEEtran}
\bibliography{IEEEabrv,bibfile}

@STRING{IEEE_J_STSP       = "{IEEE} J. Sel. Topics Signal Process."}

@STRING{IEEE_J_JSAC       = "{IEEE} J. Sel. Areas Commun."}

@STRING{IEEE_J_WCOM       = "{IEEE} Trans. Wireless Commun."}

@STRING{IEEE_J_IFS        = "{IEEE} Trans. Inf. Forensics Security"}

@STRING{IEEE_J_IT         = "{IEEE} Trans. Inf. Theory"}

@BOOK{Wilde_Book,

AUTHOR    = {Wilde, M. M},
TITLE     = {Quantum Information Theory},
PUBLISHER = {Cambridge University Press},
YEAR      = 2017,
EDITION   = {Second},
ADDRESS   = {\hspace{-0.2cm}Cambridge, U.K},
}

@BOOK{QIT_Hayashi,

AUTHOR    = {Hayashi, M},
TITLE     = {Quantum Information Theory: An Introduction},
PUBLISHER = {\hspace{-0.24cm}Berlin, Germany: Springer},
YEAR      = 2006,
}

@BOOK{Tomamichel16,

AUTHOR    = {Tomamichel, M},
TITLE     = {Quantum Information Processing with Finite Resources},
PUBLISHER = {\hspace{-0.24cm}Springer International Publishing},
YEAR      = 2016,
}

@BOOK{Cover_Book,

AUTHOR    = {Cover, T. M and Thomas, J. A}, 
TITLE     = {Elements of Information Theory},
PUBLISHER = {John Wiley \& Sons, Inc.},
YEAR      = 2001,
EDITION   = {2nd},
}

@ARTICLE{Han_MAC79, 
author={Han, T.-S}, 
journal={Inf. and Contr.}, 
title={The capacity region of general multiple-access channel with certain correlated sources}, 
year={1979}, 
volume={40}, 
pages={37-60}, 
month=Jan,
}

@ARTICLE{Prelov84, 
author={Prelov, V.~V}, 
journal={Probl. Peredachi Inf.}, 
title={Transmission over a multiple access channel with a special source hierarchy}, 
year={1984}, 
volume={20}, 
number={4},
pages={3-10},
}

@ARTICLE{Uhlmann76, 
author={Uhlmann, A}, 
journal={Rep. Math. Phys.}, 
title={The ``transition probability'' in the state space of a *-algebra}, 
year={1976}, 
volume={9}, 
number={2}, 
pages={273-279}, 
}

@ARTICLE{Branum96, 
author={Barnum, H and Caves, C. M and Fuchs, C. A and Jozsa, R and Schumacher, B}, 
journal={Phys. Rev. Lett.}, 
title={Noncommuting mixed states cannot be broadcast}, 
year={1996}, 
volume={76}, 
number={15}, 
pages={2818-2821},
month=apr,
}

@article{Frank13,
    author = {Frank, R. L and Lieb, E. H},
    title = "{Monotonicity of a relative R\'{e}nyi entropy}",
    journal = {J. Math. Phys.},
    volume = {54},
    number = {12},
    pages = {122201},
    year = {2013},
    month = dec,
}

@article{Audenaert05,
    author = {Audenaert, K. M. R and Eisert, J},
    title = {Continuity bounds on the quantum relative entropy},
    journal = {J. Math. Phys.},
    volume = {46},
    number = {10},
    pages = {102104},
    year = {2005},
    month = oct,
}

@article{Sandwiched10,
    author = {M\"uller-Lennert, M and Dupuis, F and Szehr, O and Fehr, S and Tomamichel, M},
    title = {On quantum R\'enyi entropies: A new generalization and some properties},
    journal = {J. Math. Phys.},
    volume = {54},
    number = {12},
    pages = {122203},
    year = {2013},
    month = dec,
}

@ARTICLE{Hayashi02, 
author={Hayashi, M}, 
journal={J. Phys. A Math Gen.}, 
title={Optimal sequence of quantum measurements in the sense of Stein's lemma in quantum hypothesis testing}, 
year={2002}, 
volume={35}, 
number={50}, 
pages={10759}, 
}

@ARTICLE{Salek25, 
author={Salek, F and Hayden, P and Hayashi, M}, 
journal={Ann. Henri Poincar\'{e}}, 
title={Three-Receiver Quantum Broadcast Channels: Classical Communication with Quantum Non-unique Decoding}, 
year={2025},
month = Jun, 
pages={1-73}, 
}

@ARTICLE{Wilde14, 
author={Wilde, M. M and Winter, A and Yang, D}, 
journal={Commun. Math. Phys.}, 
title={Strong converse for the classical capacity of entanglement-breaking and {H}adamard channels via a sandwiched {R}\'{e}nyi relative entropy}, 
year={2014}, 
volume={331}, 
pages={593-622}, 
}

@article{SuuperAdditivity_RE,

author  = {Caper, \'{A} and Lucia, A},
journal = IEEE_J_IT,
title   = {Superadditivity of Quantum Relative Entropy for General States},
volume  = {64},
number  = {7},
month   = jul,
year    = {2018},
pages   = {4758-4765},
}

@ARTICLE{Jozsa94, 
author={Jozsa, R}, 
journal={J. Mod. Opt.}, 
title={Fidelity for mixed quantum states}, 
year={1994}, 
volume={41},
number  = {12},
month   = dec,
pages={2315-2323}, 
}

@ARTICLE{Rastegin02, 
author={Rastegin, A. E}, 
journal={Phys. Rev. A, Gen. Phys.}, 
title={Relative error of state-dependent cloning}, 
year={2002}, 
volume={66},
number  = {4},
month   = oct,
pages={042304}, 
}

@article{Tomamichel10,

author  = {Tomamichel, M and Colbeck, R and Renner, R},
journal = IEEE_J_IT,
title   = {Duality between smooth min- and max-entropies},
volume  = {56},
number  = {9},
month   = sep,
year    = {2010},
pages   = {4674-4681},
}

@article{Hayashi03,

author  = {Hayashi, M  and Nagaoka, H},
journal = IEEE_J_IT,
title   = {General formulas for capacity of classical-quantum channels},
volume  = {49},
number  = {7},
month   = jul,
year    = {2003},
pages   = {1753-1768},
}

@article{Helal_Cribbibg,

author  = {Helal, N and Bloch, M. R and Nosratinia, A},
journal = IEEE_J_IT,
title   = {Cooperative resolvability and secrecy in the cribbing multiple-access channel},
volume  = {66},
number  = {9},
month   = may,
year    = {2020},
pages   = {5429-5447},
}

@article{Bullock_25,

author  = {Bullock, M.~S and Sheikholeslami, A and Tahmasbi, M and Macdonald, R.~C and Guha, S and Bash, B.~A},
journal = IEEE_J_IT,
title   = {Fundamental Limits of Covert Communication Over Classical-Quantum Channels},
volume  = {71},
number  = {4},
month   = apr,
year    = {2025},
pages   = {2741-2762},
}

@article{Bash_15,

author  = {Bash, B.~A and Gheorghe, A.~H and Patel, M and Habif, L.~H and Goeckel, D and Towsley, D and Guha, S},
journal = {Nature Commun.},
title   = {Quantum-secure covert communication on bosonic channels},
volume  = {6},
number  = {1},
month   = oct,
year    = {2015},
pages   = {1-9},
}

@article{MAC_LPD,

author  = {Arumugam, K. S. K and Bloch, M. R},
journal = IEEE_J_IT,
title   = {Covert Communication Over a {$K$}-User Multiple-Access Channel},
volume  = {65},
number  = {11},
month   = jul,
year    = {2019},
pages   = {7020 - 7044},
}

@article{QMAC,

author  = {Winter, A},
journal = IEEE_J_IT,
title   = {The capacity of the quantum multiple access channel},
volume  = {47},
number  = {7},
month   = nov,
year    = {2001},
pages   = {3059 - 3065},
}

@article{Keyless22,

author = {ZivariFard, H and Bloch, M. R and Nosratinia, A},
title  = {Keyless Covert Communication via Channel State Information},
journal = IEEE_J_IT,
volume  = {68},
number  = {8},
month   = Aug,
year    = {2022},
pages   = {5440-5474},
}

@article{4M6Tx,

author  = {ZivariFard, H and Wang, X},
title   = {Covert Communication Over a Quantum MAC with 4 Messages and 6 Transmitters},
journal = {available at \url{https://drive.google.com/file/d/1llHGDhTbfeTTsHpaVMM-VJspvqHBh6Fw/view?usp=sharing}},
month   = Sep,
year    = {2025},
}

@article{2M3Tx,

author  = {ZivariFard, H and Chou, R and Wang, X},
title   = {Covert Communication Over a Quantum MAC with a Helper},
journal = {available at \url{https://arxiv.org/abs/2504.18747}},
month   = Jul,
year    = {2026},
}

@inproceedings{2M3Tx_ISIT,

author    = {ZivariFard, H and Chou, R and Wang, X},
title     = {Covert Communication Over a Quantum MAC with a Helper},
booktitle = {Proc. {IEEE} Int.  Symp. on Info. Theory (ISIT)},
address   = {Ann Arbor, MI, USA},
month     = Jun,
year      = {2025}
}

@article{CsiszarKorner,

author  = {Csisz\'{a}r, I and  K\"{o}rner, J},
title   = {Broadcast Channels with Confidential Messages},
journal = IEEE_J_IT,
volume  = {24},
number  = {3},
month   = May,
year    = {1978},
pages   = {339-348},
}

@inproceedings{Wang16,

author    = {Wang, L},
title     = {Optimal throughput for covert communication over a classical-quantum channel},
booktitle = {Proc. {IEEE} Info. Theory Workshop (ITW)},
address   = {Cambridge, UK},
month     = Sep,
year      = {2016},
pages     = {1-5},
}

@inproceedings{GunduzSimeone10,

author    = {G\"{u}nd\"{u}z, D and Simeone, O},
title     = {On the Capacity Region of a Multiple Access Channel with Common Messages},
booktitle = {Proc. {IEEE} Int.  Symp. on Info. Theory (ISIT)},
address   = {Austin, TX USA},
month     = Jun,
year      = {2010},
pages     = {470-474}
}

@article{Entanglement_MAC,

author  = {Hsieh, M.-H and  Devetak, I and Winter, A},
title   = {Entanglement-Assisted Capacity
of Quantum Multiple-Access Channels},
journal = IEEE_J_IT,
volume  = {54},
number  = {7},
month   = Jul,
year    = {2008},
pages   = {3078-3090},
}

@article{AshnuHayashi2020,

author  = {Anshu, A and Hayashi, M and Warsi, N.~A},
title   = {Secure Communication Over Fully Quantum {G}el'fand-{P}insker Wiretap Channel},
journal = IEEE_J_IT,
volume  = {66},
number  = {9},
month   = Sep,
year    = {2020},
pages   = {5548-5566},
}

@article{UninformedJammer,

author  = {Sobers, T. V and Bash, B.~A and Guha, S and Towsley, D and Goeckel, D},
title   = {Covert Communication in the Presence of an Uninformed Jammer},
journal = IEEE_J_WCOM,
volume  = {16},
number  = {9},
month   = Sep,
year    = {2017},
pages   = {6193-6206},
}

@inproceedings{ISIT21,

author    = {ZivariFard, H and Bloch, M. R and Nosratinia, A},
title     = {Covert Communication via Non-Causal Cribbing from a Cooperative Jammer},
booktitle = {Proc. {IEEE} Int.  Symp. on Info. Theory (ISIT)},
address   = {Melbourne, Australia},
month     = Jul,
year      = {2021},
pages     = {202-207},
}

@inproceedings{ISIT22,

author    = {ZivariFard, H and Bloch, M. R and Nosratinia, A},
title     = {Covert Communication in the Presence of an Uninformed, Informed, and Coordinated Jammer},
booktitle = {Proc. {IEEE} Int.  Symp. on Info. Theory (ISIT)},
address   = {Melbourne, Australia},
month     = Jul,
year      = {2022},
pages     = {306-311},
}

@phdthesis{MyDissertation,
  author       = {ZivariFard, H}, 
  title        = {Secrecy and covertness in the presence of multi-casting, channel state information, and cooperative jamming},
  school       = {Univ. Texas at Dallas, TX, USA},
  year         = 2021,
  month        = Dec,
}

@phdthesis{RennerDissertation,
  author       = {Renner, R}, 
  title        = {Security of quantum key distribution},
  school       = {ETH, Zurich, Switzerland},
  year         = 2005,
  month        = Sep,
}

@article{SlepianWol_MAC,

author  = {Slepian, D and Wolf, J. K},
title   = {A Coding Theorem for Multiple Access Channels with Correlated Sources},
journal = {Bell System Technical Journal},
volume  = {52},
number  = {7},
month   = Sep,
year    = {1973},
pages   = {1037-1076},
}

@BOOK{ElGamalKim,

AUTHOR    = {El Gamal, A and Kim, Y.-H},
TITLE     = {Network Information Theory},
PUBLISHER = {Cambridge University Press},
YEAR      = {2012},
EDITION   = {First},
ADDRESS   = {\hspace{-0.2cm}Cambridge, U.K},
}

@BOOK{Ohya_Petz_Book,

AUTHOR    = {Ohya, M and Petz, D},
TITLE     = {Quantum Entropy and its Use},
PUBLISHER = {Springer-Verlag},
YEAR      = {1993},
ADDRESS   = {\hspace{-0.2cm}Heidelberg, Germany},
}

@BOOK{Bhatia_Book,

AUTHOR    = { Bhatia, R},
TITLE     = {Matrix Analysis},
PUBLISHER = {Springer-Verlag},
YEAR      = {1996},
ADDRESS   = {New York, NY, USA},
}

@article{WillemsCribbing,

author  = {Willems, F. M. J and van der Meulen, E. C},
title   = {The discrete memoryless multiple-access channel with cribbing encoders},
journal = IEEE_J_IT,
volume  = {31},
number  = {3},
month   = May,
year    = {1985},
pages   = {313-327},
}

@article{UziCribbing,

author  = {Pereg, U and Deppe, C and Boche, H},
title   = {The Quantum Multiple-Access Channel
With Cribbing Encoders},
journal = IEEE_J_IT,
volume  = {68},
number  = {6},
month   = Jun,
year    = {2022},
pages   = {3965 - 3988},
}

@article{QMAC_Security,

author  = {Chou, R},
title   = {Private Classical Communication over Quantum Multiple-Access Channels},
journal = IEEE_J_IT,
volume  = {68},
number  = {3},
month   = Mar,
year    = {2022},
pages   = {1782 - 1794},
}

@article{Romero17,

author  = {Romero, H. P and Varanasi, M. K},
title   = {A Unifying Order-Theoretic Framework for Superposition Coding: Polymatroidal Structure and Optimality in the Multiple-Access Channel With General Message Sets},
journal = IEEE_J_IT,
volume  = {63},
number  = {1},
month   = Jan,
year    = {2017},
pages   = {21 - 37},
}

@article{Lee15,

author  = {Lee, S and Baxley, R. J and Weitnauer, M. A and Walkenhorst, B},
journal = IEEE_J_STSP,
title   = {Achieving undetectable communication},
volume  = {9},
number  = {7},
month   = Oct,
year    = {2015},
pages   = {1195-1205},
}

@article{Action_Covert,

author  = {ZivariFard, H and Wang, X},
title   = {Covert Communication via Action-Dependent States},
journal = IEEE_J_IT,
volume  = {71},
number  = {4},
month   = Apr,
year    = {2025},
pages   = {3100-3128},
}

@article{Covert_With_State,

author  = {Lee, S.-H. and  Wang, L and Khisti, A and Wornell, G. W.},
journal = IEEE_J_IFS,
title   = {Covert Communication With Channel-State Information at the Transmitter},
volume  = {13},
number  = {9},
month   = Sep,
year    = {2018},
pages   = {2310-2319},
}

@article{LPD_on_AWGN,

author  = {Bash, B.~A and  Goeckel, D and Towsley, D.},
journal = IEEE_J_JSAC,
title   = {Limits of Reliable Communication with Low Probability of Detection on {AWGN} Channels},
volume  = {31},
number  = {9},
month   = Sep,
year    = {2013},
pages   = {1921-1930},
}

@article{LPD_on_bosonic,

author  = {Bullock, M. S and Gagatsos, C. N and Guha, S and Bash, B.~A},
journal = IEEE_J_JSAC,
title   = {Fundamental limits of quantum-secure covert communication over bosonic channels},
volume  = {38},
number  = {3},
month   = Jan,
year    = {2020},
pages   = {471-482},
}

@inproceedings{Wang23,

author    = {Wang, S.-Y and Erdo\u{g}an, T and Bloch, M. R},
title     = {Towards a characterization of the covert capacity of bosonic channels under trace distance},
booktitle = {Proc. {IEEE} Int.  Symp. on Info. Theory (ISIT)},
address   = {Espoo, Finland},
month     = Jul,
year      = {2023},
pages     = {318-323}
}

@inproceedings{Entangled_Bosonic,

author  = {Bullock, M. S and Gagatsos, C. N and Guha, S and Bash, B.~A},
title   = {Fundamental limits of quantum-secure covert communication over bosonic channels},
booktitle = {OSA Quantum 2.0 Conf.},
year    = {2020},
pages   = {QM6B.5},
}

@article{LPD_over_DMC,

author  = {Wang, L and  Wornell, G. W. and Zheng, L.},
journal = IEEE_J_IT,
title   = {Fundamental limits of communication with low probability of detection},
volume  = {62},
number  = {6},
month   = Jun,
year    = {2016},
pages   = {3493-3503},
}

@article{MAC_Bosonic_05,
  title = {Multiple-access bosonic communications},
  author  = {Yen, B.~J and Shapiro, J.~H},
  journal = {Phys. Rev. A},
  volume = {72},
  issue = {6},
  pages = {062312},
  numpages = {10},
  year = {2005},
  month = {Dec},
}

@article{Guha07,
  title = {Classical capacity of bosonic
broadcast communication and a minimum output entropy conjecture},
  author  = {Guha, S and Shapiro, J.~H and Erkmen, B.~I},
  journal = {Phys. Rev. A, Gen. Phys.},
  volume = {76},
  issue = {3},
  pages = {032303},
  numpages = {11},
  year = {2007},
  month = {Sep},
}

@article{Holevo01,
  title = {Evaluating capacities of bosonic Gaussian channels},
  author  = {Holevo, A.~S and Werner, R.~F},
  journal = {Phys. Rev. A, Gen. Phys.},
  volume = {63},
  issue = {3},
  pages = {032312},
  numpages = {11},
  year = {2001},
  month = {Feb},
}

@article{Eisert07,
  title = {Gaussian Quantum Channels},
  author  = {Eisert, J. and Wolf, M.~M},
  journal = {Quantum Information with Continuous Variables of Atoms and Light. Singapore: World Scientific Report},
  volume = {63},
  issue = {3},
  pages = {23-42},
  year = {2007},
}

@article{RevModPhys12,
  title = {Gaussian quantum information},
  author = {Weedbrook, C and Pirandola, S and Garc\'{\i}a-Patr\'on, R and Cerf, N.~J and Ralph, T.~C and Shapiro, J.~H. and Lloyd, S},
  journal = {Rev. Mod. Phys.},
  volume = {84},
  issue = {2},
  pages = {621-669},
  numpages = {0},
  year = {2012},
  month = {May},
}

@article{Covert_Quantum16,
  title = {Covert Quantum Communication},
  author = {Arrazola, J. M and Scarani, V},
  journal = {Phys. Rev. Lett.},
  volume = {117},
  issue = {25},
  pages = {250503},
  numpages = {5},
  year = {2016},
  month = {Dec},
  publisher = {American Physical Society},
}

@inproceedings{Reliable_Deniable_Comm,

author    = {Che, P. H. and  Bakshi, M. and Jaggi, S.},
title     = {Reliable deniable communication: hiding messages in noise},
booktitle = {Proc. {IEEE} Int.  Symp. on Info. Theory (ISIT)},
address   = {Istanbul, Turkey},
month     = Jul,
year      = {2013},
pages     = {2945-2949},
}

@inproceedings{Deniable_ITW14,

author    = {Che, P. H. and  Bakshi, M. and  Chan, C. and Jaggi, S.},
title     = {Reliable deniable communication with channel uncertainty},
booktitle = {Proc. {IEEE} Info. Theory Workshop (ITW)},
address   = {Hobart, TAS, Australia},
month     = Nov,
year      = {2014},
pages     = {30-34},
}

@inproceedings{ITW24,

author  = {ZivariFard, H and Chou, R. A and Wang, X},
title     = {Covert Communication with Positive Rate Over State-Dependent Quantum Channels},
booktitle = {Proc. {IEEE} Info. Theory Workshop (ITW)},
address   = {Shenzhen, China},
month     = Nov,
year      = {2024},
pages     = {711-716},
}

@article{LPD_by_Resolvability,

author  = {Bloch, M. R.},
journal = IEEE_J_IT,
title   = {Covert communication over noisy channels: A resolvability perspective},
volume  = {62},
number  = {5},
month   = May,
year    = {2016},
pages   = {2334-2354},
}

\end{document}